\title{List Decoding Random Euclidean Codes and Infinite Constellations}
\author{
\IEEEauthorblockN{
Yihan Zhang\IEEEauthorrefmark{1},
Shashank Vatedka\IEEEauthorrefmark{3}
}\\
\IEEEauthorblockA{
\IEEEauthorrefmark{1}Faculty of Computer Science, Technion Israel Institute of Technology \\
\IEEEauthorrefmark{3}Department of Electrical Engineering, Indian Institute of Technology Hyderabad
}
}
\begin{document}
\maketitle
\footnotetext[1]{
This work was done when Shashank Vatedka was at the Chinese University of Hong Kong, where he was supported in part by CUHK Direct Grants 4055039 and 4055077.
He would like to acknowledge funding from a seed grant offered by IIT Hyderabad and the Start-up Research Grant (SRG/2020/000910) from the Science and Engineering Board, India.
Yihan Zhang has received funding from the European Union’s Horizon 2020 research and innovation programme under grant agreement No 682203-ERC-[Inf-Speed-Tradeoff].

This paper was presented in part at 2019 IEEE International Symposium on Information Theory (ISIT), Paris, France \cite{zhang-vatedka-2019-listdec-real-isit}. 
} 

\begin{abstract}
We study the list decodability of different ensembles of codes over the real alphabet under the assumption of an omniscient adversary. It is a well-known result that when the source and the adversary have power constraints $ P $ and $ N $ respectively, the list decoding capacity is equal to $ \frac{1}{2}\log\frac{P}{N} $. 
Random spherical codes achieve constant list sizes, and the goal of the present paper is to obtain a better understanding of the smallest achievable list size as a function of the gap to capacity. 
We show a reduction from arbitrary codes to spherical codes, and derive a lower bound on the list size of typical random spherical codes. We also give an upper bound on the list size achievable using nested Construction-A lattices and infinite Construction-A lattices. We then define and study a class of infinite constellations that generalize Construction-A lattices and prove upper and lower bounds for the same. 
Other goodness properties such as packing goodness and AWGN goodness of infinite constellations are proved along the way. Finally, we consider random lattices sampled from the Haar distribution and show that if a certain number-theoretic conjecture is true, then the list size grows as a polynomial function of the gap-to-capacity.
\end{abstract}


\tableofcontents

\section{Introduction}\label{sec:ld_reals}





In this paper, we study communication in presence of a power-constrained adversary. 
This is a point-to-point communication problem where a sender wants to communicate a message $ m\in\{0,1\}^{nR} $ of $ nR $ bits  to a receiver through a real-valued channel corrupted by a malicious \emph{omniscient} adversary. The transmitter uses $ n $ channel uses to send a signal\footnote{We use underline to denote vectors of length $n$.} $ \vx\in\bR^n $ across the channel.
The adversary can corrupt the transmitted signal by adding a noise vector $ \vs\in\bR^n $, which is allowed to be any noncausal function of the transmitted signal and the transmission protocol. The sender and the adversary have power constraints of $ P $ and $ N $ respectively, i.e., we impose the restriction that $ \Vert\vx\Vert\leq \sqrt{nP} $ and $ \Vert\vs\Vert\leq \sqrt{nN} $\footnote{Without further specification, $\|\cdot\|$ denotes the $ L^2 $-norm.}. The goal is to design a transmission scheme that provides a high data rate $ R $ while ensuring a zero probability of error of decoding at the receiver. This problem turns out to be equivalent to the sphere packing problem, which asks for the maximum $ R $ such that there is a set of $ 2^{nR} $ points within a ball of radius $ \sqrt{nP} $ such that every pair of points is spaced $ 2\sqrt{nN} $ apart. Finding the capacity of this channel remains an open problem, but nonmatching upper and lower bounds are known~\cite{kabatiansky-1978, blachman-1962}.

We study a slight variant of this problem, 
where instead of uniquely decoding the transmitted message $ m $, the receiver attempts to recover a  list of $ L $ codewords with the guarantee that the transmitted codeword lies in this list. This is called the \emph{list decoding} problem, also known as \emph{multiple packing}, which is well studied at least in the context of binary adversarial channels~\cite{guruswami-lncs}. In this paper, we attempt to systematically study upper and lower bounds on achievable list sizes for various ensembles of random codes for the real channel.

List decoding for adversarial channels is an interesting problem in its own right, but can also be a very useful tool in several other problems. For instance, Langberg~\cite{langberg-focs2004} showed that if there exists a coding scheme that achieves a list size that is at most polynomial in the blocklength $ n $, then even a small amount of shared secret key (just about $ \Theta(\log n) $ bits kept secret from the adversary) between the sender-receiver pair suffices to ensure that the true message can be uniquely decoded by the receiver. List decoding can also serve as a useful proof technique for studying other adversarial channels~\cite{dey-sufficiently-tit,zhang-quadratic-arxiv,zhang-2020-twoway}.

For the quadratically constrained adversarial channel, it is known that if the transmission rate $ R $ is greater than $ \frac{1}{2}\log\frac{P}{N} $, then no coding scheme can achieve subexponential (in $ n $) list sizes. On the other hand, it is also known that random spherical codes of rate $ R<\frac{1}{2}\log\frac{P}{N} $ can achieve constant (in $ n $) list sizes. We can therefore call $ \frac{1}{2}\log\frac{P}{N} $ to be the \emph{list decoding capacity} of this channel. Once this is established, it is of interest to find the least possible list sizes that are achievable as a function of $ \delta \coloneqq \frac{1}{2}\log\frac{P}{N}-R $. We will show that in order to find the order-optimal list sizes as a function of the rate gap to capacity, it suffices to only study spherical codes, where all codewords $ \vx $ have norm $ \Vert\vx \Vert = \sqrt{nP} $. It is known that random spherical codes have list sizes upper bounded by $ \cO(\frac{1}{\delta}\log\frac{1}{\delta}) $. We show that ``typical'' random spherical codes have list sizes which grow as $ \Omega(1/\delta) $. 
In an attempt to devise more ``practical'' coding schemes that achieve the list decoding capacity, we look for structured codes that can guarantee small list sizes. Specifically, we investigate a class of nested lattice codes and find lower bounds on the list size.  We show that random nested Construction-A lattices achieve list sizes $ 2^{\cO(\frac{1}{\delta}\log^2\frac{1}{\delta})} $. To the best of our knowledge, this is the first such result which shows that lattice codes can achieve constant list sizes. However, the list sizes are exponentially worse than the list sizes for random spherical codes. We conjecture that there exist lattice codes that achieve list sizes of $ \cO(\frac{1}{\delta}\log\frac{1}{\delta}) $ and provide some heuristic calculations which suggest this.

We then relax the power constraint of the transmitter and study the list decodability of \emph{infinite constellations} (ICs). Infinite constellations  generalize lattices, and to the best of our knowledge, were first studied systematically in the context of channel coding by Poltyrev~\cite{poltyrev1994coding}. Poltyrev showed that there exist ICs that are good codes for the additive white Gaussian noise (AWGN) channel. In this paper, we introduce an ensemble of periodic infinite constellations and study upper and lower bounds on the list size of typical ICs. A list decodable code for the power-constrained (for both  the transmitter and the adversary) adversarial channel can be obtained by taking the intersection of the IC with a ball of radius $ \sqrt{nP} $. We show that the code obtained by taking this intersection achieves list size $ \cO(\frac{1}{\delta}\log\frac{1}{\delta}) $.

\section{Overview of our results}\label{sec:results}
Let us now formally describe the problem.
The sender encodes a message $ m\in \{0,1\}^{nR} $ into a codeword $ \vx $ in $ \bR^n $ which is intended for the receiver. The sender has a transmit power constraint, which is modeled by demanding that the $ L^2 $ norm $ \Vert\vx\Vert $ must be no larger than $ \sqrt{nP} $ for some $ P>0 $. The transmission is observed noncausally by an adversary who corrupts the transmitted vector by adding a noise vector $ \vs $ to $ \vx $. The adversary has a power constraint of $ N $, which means that $ \Vert\vs\Vert\leq \sqrt{nN} $ for some $ N>0 $. However, $ \vs $ is allowed to otherwise be any function of $ \vx $ and the codebook.  The receiver obtains $ \vy=\vx+\vs $. 
The list decoder takes $ \vy $ as input and outputs a  list of $ L $ messages\footnote{We sometimes abuse terminology and interchangeably talk about lists of messages and lists of codewords. This does not introduce confusion since in this paper we are only concerned with codes whose encoder is a one-to-one map, i.e., a deterministic encoder.}, and an error is said to have occurred if the true message $ m $ is not in this list. 

\begin{definition}[List decodability over $\bR$]
	Let $ P,N\in \bR_{>0} $ and $ L\in\bZ_{>0} $.
	We say that a code $ \cC\subset \bR^n $ is \emph{$ (P,N,L) $-list decodable} if
	\begin{itemize}
		\item The code satisfies a maximum power constraint of $ P $, i.e., we have $ \Vert \vx \Vert_2^2\leq nP $ for all $ \vx\in\cC $.
		\item An omniscient adversary with power $ N $ cannot enforce a list size greater than $ L $, i.e., for all $ \vx\in\cC $ and all $ \vs\in \cB(0,\sqrt{nN}) $\footnote{Here $ \cB(\vc, r) $ denotes an $ L^2 $ ball in $ \bR^n $ of radius $r$ centered at $\vc$. We at times also write $ \cB^n(\vc,r) $ to emphasize the ambient dimension.}, we have $ |\cC\cap \cB(\vx+\vs,\sqrt{nN})|\leq L $.
	\end{itemize} 
	The \emph{rate} of $ \cC $ is defined as $ R(\cC)\coloneqq \frac{1}{n}\log |\cC| $\footnote{Here $ \log $ is short for $ \log_2 $.}.
	
	A rate $ R\in\bR $ is said to be \emph{achievable} for $ (P,N,L) $-list decoding if for infinitely many $ n $, there exist codes $ \cC\subset \bR^n $ having rate $ R(\cC)\geq R $ that are $ (P,N,L) $-list decodable.
	\label{defn:PNLlistdecoding}
\end{definition}
In the definition above, we do not prohibit $ L $ from being a function of $ n $. In many applications, it suffices to have list sizes that grow as $ \cO(n^{\gamma}) $ for a suitably small $ \gamma $. However, in this paper, we aim for constant list sizes.

\begin{definition}[List decoding capacity]
	Fix any $ P,N>0 $.
	We say that $ C(P,N) $ is the \emph{list decoding capacity} if for every $ \delta>0 $, there exists a $ \gamma>0 $ such that $ C(P,N)-\delta $ is achievable for $ (P,N,\cO(n^{\gamma})) $-list decoding, and for every $ \delta>0 $, there exist no codes of rate  $ C(P,N)+\delta $ which are $ (P,N,2^{o(n)}) $-list decodable.
\end{definition}

The following result is folklore, and a proof can be found in~\cite{zhang-quadratic-arxiv}.
\begin{theorem}[Folklore,~\cite{zhang-quadratic-arxiv}]
	For any $ P,N>0 $, 
	\[
	C(P,N) = \left[ \frac{1}{2}\log\frac{P}{N} \right]^+.\footnote{Here $ [a]^+ $ is defined as $ \max\{a,0\} $.}
	\]
\end{theorem}



Again, we are in search of structured ensembles of codes achieving ideally the same list decoding performance as random codes. This problem is not as extensively studied as in the finite field case.

The class of problems that we are interested in is the following: 
\begin{itemize}
	\item Suppose that we desire a target rate $ R=C(P,N)-\delta $, for some small $ \delta>0 $. Then what is the smallest list size $ L $ that we can achieve? Specifically, we are interested in the dependence of $ L $ on $ \delta $.
	\item What are the fundamental lower bounds on the list size for a fixed $ \delta $? 
	\item If we restrict ourselves to structured codes, e.g., nested lattice codes~\cite{erez2005lattices}, then what list sizes are achievable?
\end{itemize}
It was shown in~\cite{zhang-quadratic-arxiv} that $ \cO\left(\frac{1}{\delta}\log\frac{1}{\delta}\right) $ list sizes are achievable using random spherical codes.
If we define $ \cS^{n-1}(0,\sqrt{nP})\coloneqq \{ \vx\in\bR^n:\Vert\vx\Vert=\sqrt{nP} \} $ to be the $ (n-1) $-dimensional sphere of radius $ \sqrt{nP} $, then 
\begin{lemma}[\cite{zhang-quadratic-arxiv}]
	Let $ P>N>0 $. Fix any $ \delta>0 $, and let $ R\coloneqq C(P,N)-\delta $. Let $ \cC $ be a random codebook of rate $R$ obtained by choosing the codewords independently and uniformly over $ \cS^{n-1}(0,\sqrt{nP}) $, then
	\[
	\Pr\left[\cC \text{ is not }\left(P,N,\cO\left(\frac{1}{\delta}\log\frac{1}{\delta}\right)\right)\text{-list decodable}\right] \leq 2^{-\Omega(n)}.
	\]
	\label{lemma:achievablelistsize_spherical}
\end{lemma}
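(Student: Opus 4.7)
The plan is the probabilistic method, with the key technical ingredient being a union bound over a fine discretization of possible ball centers. First, I would recast the list-decodability condition into a cleaner geometric form: $\cC$ fails to be $(P,N,L)$-list decodable iff some Euclidean ball of radius $\sqrt{nN}$ in $\bR^n$ contains at least $L{+}1$ codewords. Indeed, given any $L{+}1$ codewords in a common $\cB(\vc,\sqrt{nN})$, one may designate any of them as the transmitted $\vx$ and let $\vs:=\vc-\vx$, which then satisfies $\Vert\vs\Vert\le\sqrt{nN}$. Since only ball centers within $\sqrt{nN}$ of a codeword matter, it suffices to restrict to $\vc\in\cB(0,2\sqrt{nP})$ (using $P\ge N$).

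Next I would fix an $\epsilon$-net $\cN\subset\cB(0,2\sqrt{nP})$ with $\epsilon=\Theta(\delta\sqrt{n})$ and $|\cN|\le 2^{O(n\log(1/\delta))}$, so that every relevant ball $\cB(\vy,\sqrt{nN})$ is contained in $\cB(\vc,\sqrt{nN}+\epsilon)$ for some $\vc\in\cN$. For each such $\vc$, the count $X_\vc:=|\cC\cap\cB(\vc,\sqrt{nN}+\epsilon)|$ is a binomial random variable on $|\cC|=2^{nR}$ trials whose single-trial success probability $\mu_\vc$ equals the normalized surface measure of the spherical cap $\cS^{n-1}(0,\sqrt{nP})\cap\cB(\vc,\sqrt{nN}+\epsilon)$. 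By the standard cap-volume estimate, optimized over $\Vert\vc\Vert$, one gets $\mu_\vc\le (\sin\theta)^{n-1}\cdot\mathrm{poly}(n)$ with $\sin\theta=(\sqrt{nN}+\epsilon)/\sqrt{nP}=\sqrt{N/P}\,(1+O(\delta))$, so that $|\cC|\,\mu_\vc\le 2^{-n\delta(1-O(\delta))}$.

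Applying the Poisson-type tail $\Pr[\mathrm{Bin}(M,p)\ge k]\le(eMp/k)^k$ and union-bounding over $\cN$, the failure probability is at most
\begin{equation*}
|\cN|\cdot\left(\frac{e\,|\cC|\,\mu_\vc}{L+1}\right)^{L+1}\le 2^{O(n\log(1/\delta))-n(L+1)\delta(1-o(1))},
\end{equation*}
which is $2^{-\Omega(n)}$ as soon as $L+1=\Omega\!\left(\frac{1}{\delta}\log\frac{1}{\delta}\right)$. The main obstacle is the choice of net granularity: a smaller $\epsilon$ tightens the cap bound (keeping $|\cC|\mu_\vc$ near $2^{-n\delta}$) but inflates $|\cN|$, while a larger $\epsilon$ has the opposite effect. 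The balance $\epsilon=\Theta(\delta\sqrt{n})$ is precisely what generates the $\log(1/\delta)$ factor in the final bound on $L$; with a cruder pairwise-distance argument one cannot even reach $L=\cO(1/\delta)$ in general, because the cap measure around a single codeword has exponent strictly smaller than $C(P,N)$. The remaining ingredients---cap-volume asymptotics, the binomial tail, and the net construction---are routine.
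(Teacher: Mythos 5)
Your proposal is correct and follows essentially the same route the paper itself uses for the analogous result on infinite constellations (Proposition~\ref{prop:ic_listupperbound}) and that underlies the cited proof: recast failure as $L{+}1$ codewords in a common ball, take an $\epsilon$-net of granularity $\Theta(\delta\sqrt{n})$, bound the cap measure so that $|\cC|\mu \le 2^{-\Theta(n\delta)}$, and trade the $2^{O(n\log(1/\delta))}$ net size against the $2^{-n(L+1)\Theta(\delta)}$ first-moment bound, which is exactly where the $\frac{1}{\delta}\log\frac{1}{\delta}$ arises. The only cosmetic slip is writing $2^{-n\delta(1-O(\delta))}$ where the inflation from $\epsilon=\Theta(\delta\sqrt n)$ actually costs a constant fraction of $\delta$ in the exponent, i.e.\ $2^{-n\delta(1-c)}$ for a small constant $c<1$; this does not affect the conclusion.
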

Our contributions for $ (P,N,L) $-list decoding are summarized as follows.
\begin{itemize}
	\item We derive lower bounds on the list size of random spherical codes. We show that if $ R=C(P,N)-\delta $, then $ L $ grows as $ \Omega(1/\delta) $ with high probability (whp).
	\item We then investigate the achievable list sizes for random nested  lattice codes, and show that if $ R=C(P,N)-\delta $, then $ L=2^{\cO\left(\frac{1}{\delta}\log^2\frac{1}{\delta}\right)} $ is achievable using Construction-A lattices.
	\item Conditioned on a conjecture for random lattices, we provide heuristic calculations which suggest that lattice codes might achieve list sizes that grow as $ \cO(\mathrm{poly}(1/\delta)) $.
\end{itemize}

We then perform a systematic study of the problem of list decoding infinite constellations in $ \bR^n $. An \emph{infinite constellation} is defined as a countable subset of $ \bR^n $. 
\begin{definition}\label{def:ic}
	An infinite constellation $ \cC \subset \bR^n$ is said to be \emph{$ (N,L) $-list decodable} if for every $ \vy\in\bR^n $, we have 
	\[
	|\cC\cap \cB(\vy,\sqrt{nN})|\leq L.
	\]
	The \emph{density} of the constellation is defined as
	\[ \Delta(\cC)\coloneqq \limsup_{a\to\infty}\frac{|\cC\cap [-a/2,a/2]^n|}{a^n}. \]
	The \emph{normalized logarithmic density}, defined as $ R(\cC)\coloneqq\frac{\log\Delta(\cC)}{n} $, is a measure of the ``rate'' of an infinite constellation. The \emph{effective volume} of $ \cC $ is defined as $ V(\cC)=1/\Delta(\cC) $, and the \emph{effective radius} 
	$\reff(\cC)$ is defined as the radius of a ball having volume equal to $ V(\cC) $.
\end{definition}

\begin{remark}
In general, one cannot replace $\limsup$ with $\lim$ in the definition of $ \Delta(\cC) $. 
However, in this paper, we are only concerned with periodic ICs, i.e., ICs that are unions of translations of a finite set. 
For such ICs, the limit does exist. 
See~\cite{groemer-1963-packing-density} and~\cite{cohn-2003-sphere-packing} for more discussions on the definition of density. 
\end{remark}

\begin{remark}
For a periodic IC $ \cC $, the NLD of $\cC$ remains the same if we replace the cube $ [-a/2,a/2]^n $ with any centrally symmetric connected compact set $ a\cB\subset\bR^n $ with nonempty interior and compute the NLD in the following way
\begin{align}
\Delta(\cC) &= \limsup_{a\to\infty} \frac{\card{\cC\cap a\cB}}{a^n\vol(\cB)}. \notag 
\end{align}
\end{remark}

Clearly, every lattice is an infinite constellation. We show that if $ \Lf $ is a random Construction-A lattice with $~\reff(\Lf)\geq \sqrt{nN}2^{\delta} $, then $ \Lf $ is $ (N,2^{\cO(\frac{1}{\delta}\log^2\frac{1}{\delta})}) $-list decodable. We also introduce a class of random periodic infinite constellations $ \cC $ with $~\reff(\cC)= \sqrt{nN}2^{\delta} $ which have list sizes that grow as $ \cO(\frac{1}{\delta}\log\frac{1}{\delta}) $.
Additionally, we show a matching lower bound on the list size for these random infinite constellations.

\begin{remark}
A code satisfying the requirements in Definition~\ref{defn:PNLlistdecoding} can be list decoded when used on a channel governed by an \emph{omniscient} adversary with a maximum probability of error constraint. This is because the decoder will output a small list for every transmitted codeword $ \vx\in\cC $ and every attack vector $ \vs\in\cB(0,\sqrt{nN}) $. For $ L=1 $, our problem  reduces to the problem of packing nonintersecting balls of radius $ \sqrt{nN} $ such that their centers lie within $ \cB(0,\sqrt{nP}) $. 

We could relax the problem by assuming that the decoder uses an average probability of error criterion (where the average is over messages picked uniformly at random) and the adversary knows only the codebook but not the transmitted codeword. 
This models an \emph{oblivious} adversary and the problem was studied by Hosseinigoki and Kosut~\cite{hosseinigoki-kosut-2018-oblivious-gaussian-avc-ld}.
They showed that the list decoding capacity for this problem is $ \frac{1}{2}\log(1+P/N)\indicator{L>N/P} $.

An intermediate model that lies between the omniscient and the oblivious models is the \emph{myopic} model. 
In this model, we assume that a myopic adversary sees a noncausal noisy version of the transmitted codeword. This problem was studied in~\cite{zhang-quadratic-arxiv}.
\end{remark}

\section{Organization of the paper}\label{sec:org}
In Sec.~\ref{sec:finitefield_prior_work}, we survey the literature on list decoding over finite fields. 
Notation and prerequisite facts and lemmas are listed in Sec.~\ref{sec:notation} and Sec.~\ref{sec:prelim}, respectively. A table of frequently used notation can be found in Appendix~\ref{sec:tab_notation}. A lower bound on list sizes of random spherical codes is provided in  Sec.~\ref{sec:lb_ls_spherical} while some of the calculations are deferred to Appendix~\ref{sec:calc_prop_lbound_randomspherical}. In Sec.~\ref{sec:ld_constr_a}, we turn to study list decodability of random nested Construction-A lattice codes. For the benefit of the readers who are not familiar with lattices,  a quick primer is provided in Appendix~\ref{sec:primer_lattices}. We define infinite constellations in Sec.~\ref{sec:ic}, and give matching upper and lower bounds on list sizes of an ensemble of regular infinite constellations in Sec.~\ref{sec:regular_ic}. 
Results on other goodness properties of ICs are presented in Appendix~\ref{sec:ic_goodness}.
Finally we give some heuristic results on the list sizes achieved by lattice codes. We recall the Haar distribution on the space of lattices in Sec.~\ref{sec:haar_meas_on_lattices}, then introduce two important integration formulas by Siegel and Rogers in Sec.~\ref{sec:siegel_rogers_avg_formulas} and their improvements in Sec.~\ref{sec:improvement_on_rogers}. We prove a list size upper bound conditioned on a conjecture in Sec.~\ref{sec:cond_ld_haar}. We conclude the paper in Sec.~\ref{sec:rk_open_prob} with several open problems. 

\section{Prior work}\label{sec:finitefield_prior_work}








\subsection{Prior work on list decoding over finite fields}\label{sec:ld_finite}
Given a prime power $ q $ and $ R\in (0,1) $, how to construct a subset $\cC$ of $\bF_q^n$ of size $q^{nR}$ such that the points in $ \cC $ are as far apart (in Hamming distance) as possible?  This question is motivated by communication through noisy channels and is studied under different notions of ``far apart''. Consider a transmitter who wishes to convey an arbitrary $q$-ary message of length $nR$ to a receiver through a noisy channel. To protect the information against noise, the transmitter can add some redundancy and send a coded version of the message through the channel. Classical coding theory is devoted to the study of the situation where the codeword is a length-$n$ vector over $\bF_q$ and the adversary who has access to the transmitted codeword is allowed to change any $np$  symbols where $0<p<1$ is a constant. The receiver is then required to figure out  the original message given a maliciously corrupted word. For fixed $q$ and $p$, the goal is to design an as-large-as-possible set of codewords so as to get as-much-as-possible information through in $n$ uses of the channel,  while ensuring that the receiver can  decode the message correctly under any legitimate attack by the adversary. We are interested in the asymptotic behaviour of the throughput as the blocklength $n$ grows.

It is not hard to see that the above question is equivalent to the question of determining the optimal density of packing Hamming balls of radii $np$ in $\bF_q^n$. The best possible density $R$ is widely open. People thus consider relaxed versions of this problem. Instead of asking the receiver to output a unique correct message, we allow him to output a list of $L$ messages which is guaranteed to contain the correct one. For fixed $q$, there is now a tradeoff among three quantities: $R$, $p$ and $L$. The question is nontrivial only when $L$ is required to be small, otherwise outputting all $q^{nR}$ messages is always a valid scheme. This problem is called \emph{list decoding}.
It is sometimes also referred to as \emph{multiple packing} since it can be alternatively thought of as packing balls such that they can overlap but there are no more than $L$ balls on top of any point in the space. Let $ d_{\mathrm{H}} $ denote the Hamming distance and let $ \bham^n(\vy,np) \coloneqq \{ \vx\in\bF_q^n: d_{\mathrm H}(\vy,\vx)\leq np \} $ denote the Hamming ball of radius $ np $ centered at $ \vy $.

\begin{definition}[List decodability over $\bF_q$]\label{def:list_dec_fq}
A code $\cC\subseteq\bF_q^n$ is said to be \emph{$(p,L)$-list decodable} if for any $\vy\in\bF_q^n$, $\card{\cC\cap\bham^n(\vy,np)}\le L$. 
\end{definition}

It turns out that such a relaxation indeed makes the problem easier. In this case, we entirely understand the information-theoretic limit of list decoding. Specifically, Zyablov and Pinsker~\cite{zyablov-pinsker-1981-list-dec-cap} showed:
\begin{theorem}[List decoding capacity  over $\bF_q$,~\cite{zyablov-pinsker-1981-list-dec-cap}]\label{thm:list_dec_cap_fq}
For any constant $\delta>0$, any $0<p<1-1/q$ and any $n$ large enough,  there exists a $(p,1/\delta)$-list decodable code $\cC \subseteq\bFq^n $ of rate $1-H_q(p)-\delta$; on the other hand,  any code $\cC \subseteq\bFq^n $ of rate $1-H_q(p)+\delta$ is $(p,q^{\Omega(n\delta)})$-list decodable.
\end{theorem}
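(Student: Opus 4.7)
The plan is to prove the two directions of Theorem~\ref{thm:list_dec_cap_fq} separately: achievability by the probabilistic method applied to random codes, and the converse by a simple averaging (pigeonhole) argument. This is a folklore result, so the main ingredient is the standard two-sided estimate $V_q(n,np) \coloneqq |\bham^n(\vy,np)| = q^{nH_q(p)\pm o(n)}$ for the $q$-ary Hamming ball volume (translation-invariant, valid by Stirling for $0<p<1-1/q$).

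For the achievability direction, I would draw a random code $\cC\subseteq\bF_q^n$ of size $q^{nR}$ with $R=1-H_q(p)-\delta$ whose codewords are i.i.d.\ uniform on $\bF_q^n$, and set $L=\lceil 1/\delta\rceil$. A list-decoding failure requires some $\vy\in\bF_q^n$ and some $(L+1)$-subset $\{\vx_1,\ldots,\vx_{L+1}\}\subseteq\cC$ with every $\vx_i\in\bham^n(\vy,np)$; a single such event has probability at most $(V_q(n,np)/q^n)^{L+1}\le q^{-n(1-H_q(p))(L+1)+o(n)}$. Union-bounding over the $q^n$ centers and the $\binom{q^{nR}}{L+1}\le q^{nR(L+1)}$ subsets of codewords yields
\[
\Pr[\cC\text{ is not }(p,L)\text{-list decodable}] \le q^{n[1-(L+1)(1-H_q(p)-R)]+o(n)} = q^{n[1-(L+1)\delta]+o(n)},
\]
which vanishes as soon as $(L+1)\delta>1$. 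Hence a random code of rate $R$ is $(p,\lceil 1/\delta\rceil)$-list decodable with high probability, and the probabilistic method yields an explicit such code.

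For the converse, given a code $\cC$ with $|\cC|=q^{nR}$ and $R=1-H_q(p)+\delta$, I would average $|\cC\cap\bham^n(\vy,np)|$ over $\vy$ drawn uniformly from $\bF_q^n$. By linearity and translation invariance this average equals $|\cC|\cdot V_q(n,np)/q^n\ge q^{n\delta-o(n)}$, using the matching lower bound on the Hamming volume. Consequently some $\vy^\star\in\bF_q^n$ witnesses at least $q^{\Omega(n\delta)}$ codewords inside $\bham^n(\vy^\star,np)$, forcing the list size of any such code to be at least $q^{\Omega(n\delta)}$. There is no genuine obstacle in either direction; the only technical ingredient is the two-sided Stirling-based volume estimate, and the hypothesis $p<1-1/q$ is exactly what keeps the capacity $1-H_q(p)$ positive so that the gap $\delta$ is meaningful.
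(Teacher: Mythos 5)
The paper states this theorem as folklore and gives no proof, so there is no in-paper argument to compare against; your argument—random coding with a union bound over all centers $\vy$ and all $(L+1)$-tuples of codewords for achievability, and an averaging/pigeonhole argument on $\bE_\vy\bigl[|\cC\cap\bham^n(\vy,np)|\bigr]$ for the converse, both driven by the Stirling estimate $V_q(n,np)=q^{nH_q(p)\pm o(n)}$—is exactly the standard textbook proof and is correct. One small wording nit: the probabilistic method shows \emph{existence} of such a code, not an \emph{explicit} one.
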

The sharp threshold $1-H_q(p)$ around which the list size exhibits a phase transition is known as the list decoding capacity, denoted by $C$.  The stunning point of the above theorem is that the list size can be made a constant, independent of $n$, while a rate close to the list decoding capacity is still achieved.

Throughout the paper, we use $\delta$ to denote the gap between the rate that the code is  operating at and the list decoding capacity, i.e., $R=C-\delta$.

Let $\tau$ denote the gap between the adversary's power and the list decoding radius $1-1/q$, i.e., $p=1-1/q-\tau$.

First note that expressing the rate as a function of the list size is equivalent to expressing the list size as a function of the gap to capacity. Lower (resp.\ upper) bounds on the rate naturally translate to upper (resp.\ lower) bounds on the list size and vice versa. Indeed, for any  increasing function $f$, claiming that a rate $R\ge C-\frac{1}{f(L)}$ can be achieved by a $(p,L)$-list decodable code is equivalent to claiming the existence of a rate $R=C-\delta$ code whose list size is at most $L\le f^{-1}(1/\delta)$ under an adversary with power budget $p$. We will state prior results and our results only in the second form.

The aforementioned list decoding capacity theorem (Theorem~\ref{thm:list_dec_cap_fq}) is obtained via standard random coding argument. Indeed, it is well-known and easy to show that the list size of a random code (of which each codeword is sampled uniformly and independently) of rate $1-H_q(p)-\delta$ against a power-$p$ adversary is at most $1/\delta$ with high probability (whp). It also turns out~\cite{guruswami2013combinatorial, li-wootters-2018-ld-rand-lin} that $1/\delta$ is the correct scaling for the list size of a random code. Namely, there is an essentially matching\footnote{Actually, Li and Wootters~\cite{li-wootters-2018-ld-rand-lin} showed that for any constant $\gamma>0$, the list size of a random code is bounded from \emph{below} by $\frac{1-\gamma}{\delta}$ whp.} lower bound $1/\delta$ via second moment method.   

Note that $1-H_q(p)$ is also equal to the Shannon's channel capacity of a Binary Symmetric Channel with crossover probability $p$ (BSC($p$)).  
However, as we will elaborate in subsequent sections, this is not the case over the reals. 

The above list decoding capacity theorem pinpointed the information-theoretic limit of list decoding which is attained by random codes. In computer science, people are generally interested in finding structured or even explicit ensembles of objects with the same asymptotic behavior as the uniformly random ensemble. In the setting of list decoding, given the  threshold up to which constant-in-$n$ list size is possible, the ultimate goal is to construct explicit\footnote{Rigorously, there are two commonly used definitions of explicitness in the literature. To give an explicit linear code, it suffices to 
\begin{enumerate}
    \item either construct its generator matrix in $\poly(n)$ time deterministically; 
    \item or compute each entry of its generator matrix in $\poly\log(n)$ time deterministically.
\end{enumerate}
} codes with the same list decoding performance as random codes. As an intermediate step which is also interesting in its own right, people aim to reduce the amount of randomness used in the construction and shoot  for more ``structured" ensembles of codes. A natural candidate is linear codes. However, sadly, even if we  restrict our attention to linear codes, its  list decodability is still not completely understood. Specifically, a random linear code over $\bF_q$ of rate $R$ refers to a random subspace of $\bF_q^n$ uniformly sampled from all subspaces of a fixed dimension $nR$.

\begin{conjecture}\label{conj:list-dec-rand-lin}
For any $\delta>0$, prime power $q$ and $0<p<1-1/q$, a random linear code of rate $1-H_q(p)-\delta$ over $\bF_q$ is $(p,1/\delta)$-list decodable whp.
\end{conjecture}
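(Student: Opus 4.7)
The plan is to attack the conjecture with a first-moment union-bound argument, which succeeds for fully random codes and, in its crudest form, falls just short of proving Conjecture~\ref{conj:list-dec-rand-lin}. Let $G \in \bF_q^{n\times nR}$ denote the random generator matrix with i.i.d.\ uniform entries, so that $\cC = \{G\bfm : \bfm \in \bF_q^{nR}\}$, and fix $L = \lceil 1/\delta \rceil$. The code fails to be $(p,L)$-list decodable iff there exist a center $\vy \in \bF_q^n$ and $L+1$ distinct messages $\bfm_1,\ldots,\bfm_{L+1}$ with $G\bfm_i \in \bham^n(\vy, np)$ for every $i$. Writing $X_{\vy,\bfm_1,\ldots,\bfm_{L+1}}$ for the indicator, the plan is to establish $\sum \mathbb{E}[X_{\vy,\bfm_1,\ldots,\bfm_{L+1}}] = o(1)$ and conclude by Markov's inequality.

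The next step is to stratify the sum by $k := \dim\,\mathrm{span}(\bfm_1,\ldots,\bfm_{L+1}) \in \{1,\ldots,L+1\}$. For any tuple of span-dimension $k$, the codeword vector $(G\bfm_1,\ldots,G\bfm_{L+1})$ is the image of a uniformly distributed $k$-tuple in $\bF_q^n$ under a fixed $\bF_q$-linear map of rank $k$, so retaining only the constraints on a basis gives
\[
\Pr[G\bfm_i \in \bham^n(\vy,np)\ \forall i] \;\le\; \left(\frac{|\bham^n(\vy,np)|}{q^n}\right)^{k} \;\le\; q^{-nk(1-H_q(p))}.
\]
A crude count puts the number of ordered $(L+1)$-tuples of span-dimension exactly $k$ at $\le q^{nRk}\cdot q^{k(L+1-k)}\cdot \binom{L+1}{k}$. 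Union-bounding over $q^n$ centers $\vy$,
\[
\sum \mathbb{E}[X_{\vy,\bfm_1,\ldots,\bfm_{L+1}}] \;\le\; \sum_{k=1}^{L+1} q^{n(1-k\delta) + O(L^2)}.
\]
The $k = L+1$ term has exponent $n(1-(L+1)\delta) + O(L^2) \to -\infty$ for $L \ge 1/\delta$ and fixed $\delta$. Thus the ``generic'' stratum of linearly independent tuples is already controlled at the conjectured list size.

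The main obstacle, and the reason the conjecture remains open, is the low-$k$ regime. At $k=1$ the bound above is $q^{n(1-\delta)+O(L^2)}$, which is exponentially large: discarding the $L$ dependent constraints throws away essentially all the information imposed by linearity. To fix this one must exploit all $L+1$ ball-membership constraints simultaneously. The natural structural observation is that if $L+1$ codewords of a $k$-dimensional subcode lie in $\bham^n(\vy, np)$, then every one of the $q^k-1$ nonzero codewords of that subcode has Hamming weight $\le 2np$ (using $\bham^n(\vy,np) - \bham^n(\vy,np) \subseteq \bham^n(\mathbf{0}, 2np)$), so the random code has an atypically heavy collection of low-weight elements. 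Converting this structural observation into a probabilistic bound tight enough to beat the $q^n$ union bound over $\vy$, \emph{simultaneously for every} $k \in \{1,\ldots,L\}$ and at the optimal $L = \lceil 1/\delta \rceil$, is the hardest step. This is precisely where current techniques --- weight-enumerator moment methods, local-weight counting, and the row-pruning arguments of Guruswami--H{\aa}stad--Sudan--Kopparty and successors --- either fall short or yield only suboptimal list sizes such as $O((1/\delta)\log(1/\delta))$ or $\poly(1/\delta)$.
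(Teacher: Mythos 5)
This statement is an open conjecture, not a theorem the paper proves. The paper explicitly presents Conjecture~\ref{conj:list-dec-rand-lin} as unresolved, noting (in the surrounding text and Table~\ref{tab:ld_rand_lin_state_of_the_art}) that it is known only over $\bF_2$ via Li--Wootters \cite{li-wootters-2018-ld-rand-lin}, with weaker or parameter-dependent bounds available in general (e.g., $L \le C_{p,q}/\delta$ from \cite{guruswami-hastad-kopparty-2010-ld-rand-lin}, $L = q^{\cO(1/\delta)}$ from the folklore argument, and $L = \cO(1/\tau^2)$ in the low-rate regime). There is therefore no ``paper's own proof'' to compare against.

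Your write-up is correspondingly not a proof either, and you say as much: it is an accurate reconstruction of the folklore first-moment argument, stratified by the rank $k$ of the span of the $L+1$ messages, together with a correct diagnosis of exactly where it fails. Your bound $\Pr[G\bfm_i \in \bham^n(\vy,np)\ \forall i] \le q^{-nk(1-H_q(p))}$ for a rank-$k$ tuple is right (only $k$ of the membership events are independent), your tuple count $\le q^{nRk} q^{k(L+1-k)}\binom{L+1}{k}$ is right, and you correctly observe that the $k = L+1$ stratum vanishes at $L \ge 1/\delta$ while the low-rank strata (already $k=1$) contribute $q^{n(1-\delta)+O(L^2)} \to \infty$, because discarding the $L$ dependent constraints throws away all the information one needs. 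Your structural remark that a full list of span-dimension $k$ inside a ball of radius $np$ forces every nonzero codeword of the $k$-dimensional subcode to have weight $\le 2np$ is the correct pivot that all the cited works exploit in one form or another. What you have, then, is a faithful account of the obstruction rather than a resolution of it; this matches the paper's treatment, which states the conjecture precisely because the low-rank regime (especially for $q$ large and $p$ close to $1-1/q$) is not controlled by current techniques. There is no error in what you wrote, but there is also no proof, and you should be explicit that the argument as given establishes only $L = q^{\cO(1/\delta)}$ (taking $\ell = \cO(1/\delta)$ independent messages from any list of size $q^{\ell}$), not the conjectured $L = 1/\delta$.
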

The conjecture is known to be true over $\bF_2$~\cite{li-wootters-2018-ld-rand-lin}. However, it is open in other cases 
if we insist the universal constant in the list size to be one. In particular, the conjecture becomes more challenging when we  work in the high-noise low-rate regime and in large fields. For instance, consider the following scenario, the adversary's power $p$ is so large that close to $1-1/q$, say $p=1-1/q-\tau$ for a very small $\tau>0$ which can even scale with $\delta$. Then by the continuity of the entropy function, the capacity is very low and in particular vanishes as $\tau\to0$. In this case, many existing list decodability results for random linear codes degenerate in the sense that the list size blows up when $\tau\to0$. Another extreme case which is tricky to handle is when the field size $q$ is very large and is potentially an increasing function of $\delta\to0$ and/or $ n\to\infty $. In this case, many techniques in the literature also fail.

From now on, when we talk about  large $q$ (i.e., the large field size regime), we refer to $q$ which can scale with $1/\delta$ or $n$; when we talk about large $p$  or small rate  (i.e., the high-noise low-rate regime), we refer to $\tau$ which can be a function of $\delta$.

Now we survey a (potentially non-exhaustive) list of work regarding the combinatorial list decoding performance of random linear codes. 
\begin{enumerate}
    \item A classical work by Zyablov and Pinsker~\cite{zyablov-pinsker-1981-list-dec-cap} showed that a random linear code of rate $1-H_q(p)-\delta$ is $(p,q^{\cO(1/\delta)})$-list decodable whp. (See also, e.g.,~\cite{guruswami-hastad-kopparty-2010-ld-rand-lin} for a proof sketch.)
    \item Guruswami, H\aa stad, Sudan and Zuckerman~\cite{ghsz-2002} showed the \emph{existence} of a \emph{binary} linear code of rate $1-H(p)-\delta$ which is $(p,1/\delta)$-list decodable. To this end, they defined a potential function as a witness of non-list decodability and analyzed its evolving dynamics during the process of sampling a basis of the random linear code. 
    \item Guruswami, H\aa stad and Kopparty~\cite{guruswami-hastad-kopparty-2010-ld-rand-lin} showed that a random linear code of rate $1-H_q(p)-\delta$ is $(p,C_{p,q}/\delta)$-list decodable whp. However $C_{p,q}$ blows up when $p$ gets close to $1-1/q$ or $q$ is large. They used Ramsey-theoretic  tools to control low-rank lists. As for high-rank lists, naive bounds suffice.

    \item Cheraghchi, Guruswami and Velingker~\cite{cheraghchi-guruswami-velingker-2013-rip-ld-rand-lin} showed that a random linear code of rate $\Omega\paren{\frac{\tau^2}{\log^3(q/\tau)\log q}}$ is $((1-1/q)(1-\tau),\cO(1/\tau^2))$-list decodable with \emph{constant} probability. These parameters are optimal in the low-rate regime \emph{up to polylog factors in $1/\tau$ and $q$}. In their paper, ideas such as  average-radius relaxation, connections to restricted isometry property (RIP) and chaining method were brought into view. These techniques were later extensively explored and significantly developed. 
    \item Wootters~\cite{wootters-2013-ld-rand-lin-low-rate} showed that a random  linear code of rate $\Omega(\tau^2/\log q)$ is $((1-1/q)(1-\tau),\cO(1/\tau^2))$-list decodable whp. This is an improvement on~\cite{cheraghchi-guruswami-velingker-2013-rip-ld-rand-lin} via similar techniques and also fills in the gap in~\cite{guruswami-hastad-kopparty-2010-ld-rand-lin} for large $p$.
    
    \item Rudra and Wootters~\cite{rudra-wootters-2014-puncturing, rudra-wootters-2015-rand-op, rudra-wootters-2018-avg-rad-lr-rand-lin} employed the heavy machinery of generic chaining to provide improved bounds when the field size is very large, say, scaling with $1/\delta$ and even $n$. The parameter regimes become complicated in this situation and we do not copy their results here. 
    
    \item Li and Wootters~\cite{li-wootters-2018-ld-rand-lin} showed that a random \emph{binary} linear code of rate $1-H(p)-\delta$ is $(p,H(p)/\delta+2)$-list decodable whp for \emph{any} $p\in(0,1/2)$ and $\delta>0$. They did so by lifting the existential result in~\cite{ghsz-2002} to a high-probability one.

    \item Most recently, there is an exciting line of work~\cite{mosheiffetal-2019-ldpc,mosheiffetal-2020-bounds-list-dec-rand-lin,gmrsw-2020-sharp-threshold} which makes significant progress on understanding the list sizes of random linear codes. 
    In particular, the authors of these papers characterized the threshold rate (which is difficult to evaluate) of list decodable random linear codes. 
    For \emph{binary} random linear codes, they showed that the list size is (essentially) exactly $ H(p)/\delta $. 
\end{enumerate}

One can find a summary of aforementioned results in Table~\ref{tab:ld_rand_lin_state_of_the_art}.

\begin{table}
    \centering
    \caption{A non-exhaustive list of results on list decodability of random linear codes.}
    \label{tab:ld_rand_lin_state_of_the_art}
    \begin{tabular}{|p{0.07\textwidth}|p{0.15\textwidth}|p{0.17\textwidth}|p{0.12\textwidth}|p{0.16\textwidth}|p{0.07\textwidth}|}
    \hline
        \textbf{Field size} & \textbf{Noise level} & \textbf{Rate} & \textbf{List size} & \textbf{whp / with constant probability / existential} & \textbf{Reference}   \\\hline
        $q\ge2$ & $p\in(0,1-1/q)$ & $R=1-H_q(p)-\delta$ & $L=q^{\cO(1/\delta)}$ & whp &\cite{zyablov-pinsker-1981-list-dec-cap}   \\\hline
        $q=2$ & $p\in(0,1/2)$ & $R=1-H(p)-\delta$ & $L\le1/\delta$ & existential &\cite{ghsz-2002}    \\\hline
        $q\ge2$ & $p\in(0,1-1/q)$ & $R=1-H_q(p)-\delta$ & $L\le C_{p,q}/\delta$ & whp &\cite{guruswami-hastad-kopparty-2010-ld-rand-lin}    \\\hline
        $q\ge2$ & $p=(1-1/q)(1-\tau)$ & $R=\Omega\paren{\frac{\tau^2}{\log^3(q/\tau)\log q}}$ & $L=\cO(1/\tau^2)$ & with constant probability &\cite{cheraghchi-guruswami-velingker-2013-rip-ld-rand-lin}    \\\hline
        $q\ge2$ & $p=(1-1/q)(1-\tau)$ & $R=\Omega(\tau^2/\log q)$ & $L=\cO(1/\tau^2)$ & whp &\cite{wootters-2013-ld-rand-lin-low-rate}    \\\hline
        $q=2$ & $p\in(0,1/2)$ & $R=1-H(p)-\delta$ & $L\le H(p)/\delta+2$ & whp &\cite{li-wootters-2018-ld-rand-lin}    \\\hline
        $ q=2 $ & $ p\in(0,1/2) $ & $ R = 1-H(p) - \delta $ & $ L\le\floor{H(p)/\delta} + 1 $ & whp & \cite{mosheiffetal-2020-bounds-list-dec-rand-lin} \\\hline
    \end{tabular}
\end{table}

Despite a long line of research regarding list decoding, we are far from a complete understanding. Besides attempts towards Conjecture~\ref{conj:list-dec-rand-lin}, on the negative side, it turns out~\cite{guruswami2013combinatorial} that there is a matching $\Omega(1/\delta)$ lower bound on the list size of random linear codes. Namely, if we sample a linear code uniformly at random, its list size is  $\Omega(1/\delta)$ \emph{whp}. Nonetheless, in general, the best lower bound on list size for an \emph{arbitrary} code is still $\Omega(\log(1/\delta))$~\cite{blinovsky-1986-ls-lb-binary, blinovsky-2005-ls-lb-qary, blinovsky-2008-ls-lb-qary-supplementary,guruswami-vadhan-2005-ls-lb,  guruswami2013combinatorial}.
There is an exponential gap between the best upper and lower bounds even over $\bF_2$. Closing this gap is also a long standing open problem.
For arbitrarily list decodable codes with list size $L$, Blinovsky's bound was improved by Ashikhmin, Barg and Litsyn~\cite{abl-2000-list-size-2} for the $L=2$ case and by Polyanskiy~\cite{polyanskiy-2016-list-dec} for the case where $L\ge3$ is odd. 
For general omniscient adversarial channels beyond bitflip and erasure channels, the critical $L^* $ at which the list-$L$ capacity vanishes has recently been determined by Zhang, Budkuley and Jaggi~\cite{zhang-2019-list-dec-general}.

\subsection{Prior work on erasure list decoding over finite fields}\label{sec:erasure_ld_finite}
Similar questions were also posed and studied under the erasure model. In this case, the adversary is allowed to replace any $np$ coordinates of the codeword with question marks. 
\begin{definition}[List decodability under erasures over $\bF_q$]\label{def:list_dec_erasure}
    A code $\cC\subseteq\bF_q^n$ is said to be $(p,L)$-erasure list decodable if for any $\cT\subseteq[n]$ of cardinality $(1-p)n$ and any $\vy\in\bF_q^{(1-p)n}$, $\card{\curbrkt{\vx\in\cC\colon \vx|_\cT=\vy}}\le L$. 
\end{definition}
It is known that the erasure list decoding capacity is $1-p$, coinciding with the capacity of a  Binary Erasure Channel with erasure probability $p$ (BEC($p$)).
\begin{theorem}[List decoding capacity under erasures over $\bF_q$,~\cite{guruswami-lncs}, Theorem 10.3, 10.8]\label{thm:list_dec_cap_erasure}
For any small constant $\delta>0$, any $0<p<1$ and any $n$ large enough,  there exists a $(p,\cO(1/\delta))$-erasure list decodable code $\cC \subseteq\bFq^n $ of rate $1-p-\delta$; on the other hand,  any code $\cC \subseteq\bFq^n $ of rate $1-p+\delta$ is $(p,q^{\Omega(n\delta)})$-erasure list decodable.
\end{theorem}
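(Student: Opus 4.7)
The plan is to prove the two halves by standard random coding and pigeonhole arguments, respectively, paralleling the proof of Theorem~\ref{thm:list_dec_cap_fq} but with the erasure structure in place of the Hamming structure.

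For achievability, I would sample a codebook $\cC$ of $q^{nR}$ codewords independently and uniformly at random from $\bF_q^n$ with $R = 1 - p - \delta$, then union bound the failure probability. Fix an erasure pattern $\cT \subseteq [n]$ with $|\cT| = (1-p)n$ and a potential surviving-coordinate value $\vy \in \bF_q^{(1-p)n}$. Any specified $(L+1)$-tuple of distinct codewords satisfies $\vx_i|_\cT = \vy$ for all $i$ simultaneously with probability exactly $q^{-(L+1)(1-p)n}$, by independence and uniformity of the entries. Union bounding over the $\binom{q^{nR}}{L+1}$ tuples, the $\binom{n}{(1-p)n} \leq 2^n$ patterns $\cT$, and the $q^{(1-p)n}$ strings $\vy$ gives failure probability at most
\[
2^n \cdot q^{(1-p)n} \cdot q^{nR(L+1)} \cdot q^{-(L+1)(1-p)n} \;=\; 2^n \cdot q^{n\left[(L+1)R \,-\, L(1-p)\right]}.
\]
With $R = 1-p-\delta$, the bracketed exponent simplifies to $(1-p) - (L+1)\delta$, which becomes negative (and dominates the $2^n$ factor for $n$ large) as soon as $L+1 > (1-p)/\delta + o(1)$. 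Thus $L = \cO(1/\delta)$ suffices and the random code is $(p,\cO(1/\delta))$-erasure list decodable with high probability.

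For the converse I would use a one-line pigeonhole. Given any $\cC \subseteq \bF_q^n$ with $|\cC| = q^{n(1-p+\delta)}$, fix any $\cT \subseteq [n]$ with $|\cT| = (1-p)n$. The projection $\vx \mapsto \vx|_\cT$ sends $\cC$ into a set of cardinality at most $q^{(1-p)n}$, so by pigeonhole some $\vy \in \bF_q^{(1-p)n}$ has at least $|\cC|/q^{(1-p)n} = q^{n\delta}$ preimages. The adversary who erases the complement of $\cT$ and delivers $\vy$ therefore forces a list of size at least $q^{\Omega(n\delta)}$, establishing the claimed lower bound.

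The argument is essentially mechanical and I do not foresee a serious obstacle. The one conceptual remark worth making is that erasure list decoding is intrinsically simpler than Hamming list decoding because the decoder sees the unerased coordinates exactly; the combinatorial quantity of interest reduces to the maximum multiplicity of any projection of $\cC$ onto a $(1-p)n$-coordinate subcube, rather than the intersection of $\cC$ with a Hamming ball. This cleanness is precisely why the list decoding capacity under erasures is the straightforward $1-p$, matching the Shannon capacity of BEC($p$), instead of the entropy-based expression $1-H_q(p)$ that governs the symbol-error case.
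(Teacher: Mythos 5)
The paper does not prove this theorem; it is stated as a cited fact (Theorems 10.3 and 10.8 of~\cite{guruswami-lncs}), so there is no in-paper proof to compare against. Your argument is the standard and correct one, and it is essentially the proof that appears in the cited monograph: random coding with a union bound over tuples, erasure patterns, and surviving-coordinate strings for achievability, and a pigeonhole on the projection onto the unerased coordinates for the converse.

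One small imprecision worth flagging: after taking $\log_q$ of your union bound, the $2^n$ factor contributes $n\log_q 2$, so the exponent is $n\bigl[\log_q 2 + (1-p) - (L+1)\delta\bigr]$ and the threshold is $L+1 > (1-p+\log_q 2)/\delta$, not $(1-p)/\delta + o(1)$ --- the extra term is a constant over $\delta$, not lower order. If you replace $2^n$ by the sharper $\binom{n}{pn} \approx 2^{nH(p)}$, the binary case gives $L \le (1-p+H(p))/\delta - 1$, which is exactly the entry the paper records in Table~\ref{tab:ld_error_erasure} for random binary codes. Either way the conclusion $L=\cO(1/\delta)$ is unaffected, and the converse via pigeonhole is clean and correct.
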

The achievability part again follows from a random coding argument and the scaling $\Theta(1/\delta)$ of the list size of a random code is tight whp via a second moment computation~\cite{guruswami2013combinatorial}. For general codes, it can be shown that the list size is at least $\Omega(\log (1/\delta))$ using an erasure version of the punctured Plotkin-type bound (see, e.g.,~\cite{guruswami-lncs}, Theorem 10.8).

On the other hand, if we restrict our attention to \emph{linear} codes, the situation seems a little worse. The list size of a random linear code turns out to be $2^{\Theta(1/\delta)}$ whp (see, e.g.,~\cite{guruswami-lncs}, Theorem 10.6 for an upper bound and~\cite{guruswami2013combinatorial} for a matching lower bound). Intuitively, for a linear code $\cC$, the list $\curbrkt{\vx\in\cC\colon \vx|_\cT=\vy|_\cT}$ corresponding to a received word $\vy\in(\bF_q\cup\curbrkt{?})^{n}$ with $(1-p)n$ unerased locations in $\cT\subseteq[n]$ forms an affine subspace. The list size is therefore exponential in the rank of the list. For general linear codes, it can be shown that the list size is at least $\Omega(1/\delta)$ using a connection to generalized Hamming weights~\cite{cohen-etal-it1994}. Although we do not have a provable separation working uniformly in all parameter regimes, it is believed that the list size of linear codes is larger than that of nonlinear codes under erasure list decoding.

Narrowing down the exponential gap for either general codes or linear codes seems to be a particularly tricky task. 

Upper and lower bounds on list sizes of ensembles of random codes and arbitrary codes are listed in Table~\ref{tab:ld_error_erasure} for comparison.

\begin{table}
    \centering
    \caption{Upper and lower bounds on list sizes of random codes and arbitrary codes.}
    \label{tab:ld_error_erasure}
    \begin{tabular}{|p{0.12\textwidth}|p{0.18\textwidth}|p{0.21\textwidth}|p{0.16\textwidth}|}
    \hline
        \textbf{Channel model} & \textbf{Code}  & \textbf{List size} & \textbf{Reference} \\\hline
        \multirow{3}{0.12\textwidth}{Error} & Random  codes & $L\le1/\delta$ whp & Folklore  \\\cline{2-4}
         & Random binary codes & $L\ge\frac{1-2^{-\Omega_p(1/\delta)}}{\delta}$ whp &\cite{li-wootters-2018-ld-rand-lin} \\\cline{2-4}
         & Random binary linear codes & $L\le H(p)/\delta+2$ whp &\cite{li-wootters-2018-ld-rand-lin} \\\cline{2-4}
         & Random binary linear codes & $ L\le\floor{H(p)/\delta} + 1 $ whp & \cite{mosheiffetal-2020-bounds-list-dec-rand-lin} \\\cline{2-4}
         & Random linear codes & $L=\cO_{p,q}(1/\delta)$ whp &\cite{guruswami-hastad-kopparty-2010-ld-rand-lin} \\\cline{2-4}
         & Random linear codes & $ L\ge \floor{H_q(p)/\delta+0.99}-1 $ whp & \cite{mosheiffetal-2020-bounds-list-dec-rand-lin} \\\cline{2-4}
         & Arbitrary codes & $L=\Omega_{p,q}(\log(1/\delta))$ &\cite{blinovsky-1986-ls-lb-binary, blinovsky-2005-ls-lb-qary, blinovsky-2008-ls-lb-qary-supplementary,guruswami-vadhan-2005-ls-lb,  guruswami2013combinatorial} \\\hline
        \multirow{6}{0.12\textwidth}{Erasure} & Random {binary} codes & $L\le \frac{1-p+H(p)}{\delta}-1$ whp &\cite{guruswami-lncs}, Theorem 10.9 \\\cline{2-4}
         & Random codes & $L\ge\frac{1-p}{2\delta}$ whp &\cite{guruswami2013combinatorial} \\\cline{2-4}
         & Arbitrary {binary} codes & $L\ge\log(1+p/\delta)$ &\cite{guruswami-lncs}, Theorem 10.14 \\\cline{2-4}
         & Random {binary} linear codes & $L\le 2^{H(p)/\delta}-1$ whp &\cite{guruswami-lncs}, Theorem 10.11 \\\cline{2-4}
         & Random linear codes & $L\ge\frac{1}{q}2^{\frac{p(1-p)}{16\delta}}$ whp &\cite{guruswami2013combinatorial}  \\\cline{2-4}
         & Arbitrary {binary} linear codes & $L\ge1+p/\delta$ &\cite{cohen-etal-it1994} \\\hline
    \end{tabular}
\end{table}

It is worth mentioning that recently there are several breakthroughs towards explicit constructions of ``good" codes in the high-noise low-rate regime using tools from pseudorandomness.  Ta-Shma~\cite{ta-shma-2017-explicit-gv} constructed an \emph{explicit} $\delta$-balanced\footnote{A binary linear code is said to be \emph{$\delta$-balanced} if the weight of each codeword is between $\frac{1-\delta}{2}n$ and $\frac{1+\delta}{2}n$.} binary linear code of rate $\Omega(\delta^{2+\gamma})$ where $\gamma=\Theta\paren{\paren{\frac{\log\log\frac{1}{\delta}}{\log\frac{1}{\delta}}}^{1/3}}=o(1)$, \emph{almost} matching the Gilbert--Varshamov bound in this regime. Ta-Shma's beautiful and ingenious construction is done by casting the problem in the language of $\delta$-balanced sets and analyzing a \emph{long} random walk on a carefully constructed expander graph. His result is concerned with explicit codes with good distances. A more relevant result to our topic is an explicit construction of a \emph{near-optimal} erasure list decodable code~\cite{ben-aroya-doron-ta-shma-2018-explicit-erasure-ld}. The authors viewed the problem as constructing explicit dispersers and managed to construct an explicit binary \emph{nonlinear} $(1-\tau,\poly\log\frac{1}{\tau})$-erasure list decodable code of rate $\tau^{1+\gamma}$ (for any small constant $\gamma>0$), borrowing tools from the theory of extractors. The list size and the rate they got are both near-optimal. 

Going beyond combinatorial bounds and constructions, there is also research regarding efficient list decoding algorithms. For instance, recently Dinur et al.~\cite{dinur-harsha-kaufman-navon-ta-shma-2018-eff-ld} showed how \emph{double samplers} give rise to a generic way of amplifying distance so as to enable efficient list decoding algorithms. 
Followup works by Alev et al.~\cite{alev-2020-list-dec-explicit}, Jeronimo et al.~\cite{jeronimo-unique-dec-explicit} and Jeronimo et al.~\cite{jeronimo-etal-2021-linear-time-dec-tashma} equipped Ta-Shma's codes (and their variants) with efficient list decoding and unique decoding algorithms using connections to high-dimensional expanders and the Sum-of-Square hierarchy. 


As we saw, the list size problem is not well understood under the adversarial model. However, it is completely characterized if we are willing to further relax the problem by limiting the adversary to be \emph{oblivious}. Specifically, we call the adversary \emph{omniscient} if the error pattern is a function of the transmitted codeword, i.e., the adversary sees the codeword before he designs the attack vector. Otherwise, an adversary is said to be \emph{oblivious} if the error pattern is independent of the transmitted codewords, i.e., the adversary knows nothing more than the codebook and has to design the attack vector before the codeword is transmitted. The list size-vs.-rate tradeoff is known to a fairly precise extent for \emph{general} discrete memoryless oblivious adversarial channels.

For a general oblivious discrete memoryless Arbitrarily Varying Channel (AVC) $W(y|x,s)$ \emph{without} constraints, Hughes~\cite{hughes-1997-list-avc} completely characterized its list decoding capacity under \emph{any} $L$. Specifically, 
the list-$L$ capacity $C(L)$ equals 
\begin{align}
C(L) = \max_{P_\bfx}\min_{P_\bfs}I(\bfx;\bfy) \label{eqn:list_l_cap}
\end{align}
if 
\begin{align}
L >& L^* \coloneqq \max\curbrkt{ \ell\in\bZ_{\ge0}\colon 
\begin{array}{l}
\exists U(s|x_1,\cdots,x_\ell),\forall x_0,x_1,\cdots,x_\ell,\forall\pi\in S_{\ell+1},\\
\sum_sU(s|x_1,\cdots,x_\ell)W(y|x_0,s) = \sum_sU(s|x_{\pi(1)},\cdots,x_{\pi(\ell)})W(y|x_{\pi(0)},s) 
\end{array}
}; \label{eqn:sym}
\end{align}
and $ C(L) = 0 $ otherwise. 
For oblivious AVCs \emph{under} constraints, the critical list size $ L^* $ is known~\cite{zhang-2020-obli-list-dec} though the exact capacity $C(L)$ is open. 

In this paper, we will focus on combinatorial/information-theoretic limits of list decoding various ensembles of random codes over $\bR$ against omniscient adversaries.


\subsection{Prior work on list decoding over the reals}
In this section, we briefly recall what is known about list decoding over the reals. As we will see, it is much less studied and understood, which motivates this work. 
\begin{enumerate}
	\item Shlosman and Tsfasman~\cite{shlosman-tsfasman-2000-random-packing} studied the sphere packing density of a) a random lattice sampled from the Haar distribution; b) a Poisson point process (PPP).
	\item Blinvosky~\cite{blinovsky-2005-random-packing} later generalized their results on PPP to list-$L$ packing for any constant $L\in\bZ{{\ge1}}$. 
	However, the proof therein is problematic and is recently corrected in~\cite{zhang-vatedka-2021-multipack} without affecting the result. 
	PPPs are a natural family of ICs. 
	The heuristic results for Haar lattices in Sec.~\ref{sec:cond_ld_haar_dist} of this paper can be cast as (rigorous) list size bounds for PPPs. 
	The bounds in \cite{blinovsky-2005-random-packing,zhang-vatedka-2021-multipack} provide finite list size bounds for PPPs which subsume our bounds in Sec.~\ref{sec:cond_ld_haar_dist}. 
	However, our proof for $ \cO\paren{\frac{1}{\delta}\log\frac{1}{\delta}} $ list sizes is much simpler. 
	\item Some bounds~\cite{blinovsky-1997-list-dec-real,blinovsky-litsyn-2009-list-dec-real} on the list-$L$ capacity of arbitrary spherical codes are also known.  
	Note that the proof in~\cite{blinovsky-1997-list-dec-real} is problematic and is recently corrected in~\cite{zhang-vatedka-2021-multipack} without affecting the result. 
	\item In computer science, there have been results~\cite{grigorescu-peikert-2012-list-dec-barnes-wall,mook-peikert-2020-list-dec-lattice} on efficient list decoding algorithms for explicit lattices, e.g., Barnes--Wall lattices, Barnes--Sloane lattices, etc. 
	However, these lattices have rate/NLD way below the capacity. 
	\item Recently, the authors of the current paper systematically revisited the list decoding problem over $ \bR $ with \emph{constant} list sizes. 
	Various upper and lower bounds~\cite{zhang-vatedka-2021-multipack} were derived for this problem. 
\end{enumerate}



\section{Notation}\label{sec:notation}
\noindent\textbf{General notation.}
We use standard Bachmann-Landau (Big-Oh) notation for asymptotic functions. 

For any $q\in\bR_{>0}$, we write $\log_q(\cdot)$ for the logarithm to the base $q$. In particular, let $\log(\cdot)$ denote the logarithm to the base two and let $\ln(\cdot)$ denote the logarithm to the base $e$.

\noindent\textbf{Sets.}
For any two sets $\cA$ and $\cB$ with additive and multiplicative structures, let $\cA+\cB$ and $\cA\cdot\cB$ denote the Minkowski sum and Minkowski product of them which are defined as
\[\cA+\cB\coloneqq\curbrkt{a+b\colon a\in\cA,b\in\cA},\quad\cA\cdot\cB\coloneqq\curbrkt{a\cdot b\colon a\in\cA,b\in\cB}.\]
If $\cA=\{x\}$ is a singleton set, we write $x+\cB$ and $x\cB$ for $\{x\}+\cB$ and $\{x\}\cdot\cB$.
For any finite set $\cX$ and any integer $0\le k\le |\cX|$, we use $\binom{\cX}{k}$ to denote the collection of all subsets of $\cX$ of size $k$, i.e., 
\[\binom{\cX}{k}\coloneqq\curbrkt{\cY\subseteq\cX\colon\card{\cY}=k}.\]

For $M\in\bZ_{>0}$, we let $[M]$ denote the set of the first $M$ positive integers $\{1,2,\cdots,M\}$. 

For a subset $\cT=\curbrkt{i_1,\cdots,i_t}\subseteq[n]$ of $t$ coordinates and a vector $\vx\in\cX^n$ over some alphabet $\cX$, we use $\vx|_\cT$ to denote the vector obtained by restricting $\vx$ to the coordinates in $\cT$, i.e., 
\[\vx|_\cT\coloneqq[\vx_{i_1},\cdots,\vx_{i_t}]^\top.\]
Similar notation can be defined for a subset $\cA$ of $\cX^n$
\[\cA|_\cT\coloneqq\curbrkt{\vx|_\cT\colon \vx\in\cA}.\]

For any $\cA\subseteq\cX$, the indicator function of $\cA$ is defined as, for any $x\in\cX$, 
\[\ind{\cA}(x)=\begin{cases}1,&x\in \cA\\0,&x\notin \cA\end{cases}.\]
At times, we will slightly abuse notation by saying that $\ind{\mathsf{A}}$ is $1$ when event $\mathsf{A}$ happens and zero otherwise.

Let $\|\cdot\|_2$ denote the Euclidean/$L^2$-norm. Specifically, for any $\vx\in\bR^n$,
\[\|\vx\|_2\coloneqq\paren{\sum_{i=1}^n\vx_i^2}^{1/2}.\]
For brevity, we also write $\|\cdot\|$ for the $L^2$-norm.

Let $\vol_n(\cdot)$ denote the $n$-dimensional Lebesgue volume of an Euclidean body. Specifically, for any Euclidean body $\cA\subseteq\bR^n$,
\[\vol_n(\cA)=\int_\cA\diff \vx=\int_{\bR^n}\ind{\cA}(\vx)\diff\vx,\]
where $\diff\vx$ denotes the differential of $\vx$ with respect to (wrt) the Lebesgue measure on $\bR^n$. When the dimension $n$ is obvious from the context, we will also use the shorthand notation $|\cdot|$ for $\vol_n(\cdot)$. 
If $\cA\subseteq\bR^n$ is an $n$-dimensional body with nonempty interior, we write $\vol(\cA)$ for $\vol_n(\cA)$; if $\cA\subseteq\bR^n$ is an $(n-1)$-dimensional hypersurface, we write $\area(\cA)$ for $\vol_{n-1}(\cA)$.

Sets are denoted by capital letters in calligraphic typeface, e.g., $\cC,\cI$, etc. In particular, let $\cS_2^{n-1}$ denote the $(n-1)$-dimensional unit Euclidean sphere wrt $L^2$-norm, i.e., 
\[\cS_2^{n-1}\coloneqq\curbrkt{\vy\in\bR^n\colon \|\vy\|_2=1}.\]
Let $\cB_2^n$ denote the $n$-dimensional unit Euclidean ball wrt $L^2$-norm, i.e., 
\[\cB_2^n\coloneqq\curbrkt{\vy\in\bR^n\colon \|\vy\|_2\le1}.\]
An $(n-1)$-dimensional Euclidean sphere centered at $\vx$ of radius $r$ is denoted by
\[\cS_2^{n-1}(\vx,r)=\vx+r\cS_2^{n-1}=\{\vy\in\bR^n:\|\vy\|_2=r\}.\]
An $n$-dimensional Euclidean ball centered at $\vx$ of radius $r$ is denoted by
\[\cB_2^n(\vx,r)=x+r\cB_2^n=\{\vy\in\bR^n:\|\vy\|_2\le r\}.\]

For any $\vx\in\bF_q^n$, let $\wth{\vx}$ denote the Hamming weight of $\vx$, i.e., the number of nonzero entries of $\vx$:
\[\wth{\vx}\coloneqq\curbrkt{i\in[n]\colon \vx_i\ne0}.\]
For any $\vx,\vy\in\bF_q^n$, let $\disth(\vx,\vy)$ denote the Hamming distance between $\vx$ and $\vy$, i.e., the number of locations where they differ:
\[\disth(\vx,\vy)\coloneqq\wth{\vx-\vy}=\curbrkt{i\in[n]\colon \vx_i\ne\vy_i}.\]
We can  define balls and spheres in $\bF_q^n$ centered around some point of certain radii wrt Hamming metric as well:
\[\sham^n(\vx,r)\coloneqq\curbrkt{\vy\in\bF_q^n\colon \disth(\vx,\vy)=r},\quad\bham^n(\vx,r)\coloneqq\curbrkt{\vy\in\bF_q^n\colon \disth(\vx,\vy)\le r}.\]

We will drop the subscript and superscript for the associated metric and dimension  when they are clear from the context.

\noindent\textbf{Probability.}
Random variables are denoted by lower case letters in boldface or capital letters in plain typeface, e.g., $\bfm,\bfx,\bfs,U,W$, etc. Their realizations are denoted by corresponding lower case letters in plain typeface, e.g., $m,x,s,u,w$, etc. Vectors of length $n$, where $n$ is the blocklength, are denoted by lower case letters with underlines, e.g., $\vbfx,\vbfs,\vx,\vs$, etc. The $i$-th entry of a vector is denoted by a subscript $i$, e.g., $\vbfx_i,\vbfs_i,\vx_i,\vs_i$, etc. Matrices are denoted by capital letters in boldface, e.g., $\bfI,\mathbf{\Sigma}$, etc. The $(i,j)$-th entry of a matrix $\bfM$ is denoted by $\bfM_{ij} $.

The probability mass function (pmf) of a discrete random variable $\bfx$ or a random vector $\vbfx$ is denoted by $p_{\bfx}$ or $p_{\vbfx}$ respectively. Here with a slight abuse of notation, we use the same to denote the probability density function (pdf) of $\bfx$ or $\vbfx$ if they are continuous.
If every entry of $\vbfx$ is independent and identically distributed (iid) according to $p_{\bfx}$, then we write $\vbfx\sim p_{\bfx}^{\tn}$. In other words,
\[p_{\vbfx}(\vx)=p_{\bfx}^{\tn}(\vx)\coloneqq\prod_{i=1}^np_{\bfx}(\vx_i).\]
Let $\cU(\Omega)$ denote the uniform distribution over some probability space $\Omega$. Let $\cN(\underline{\mu},\mathbf\Sigma)$ denote the $n$-dimensional Gaussian distribution with mean vector $\underline{\mu}$ and covariance matrix $\mathbf\Sigma$. 

We use $H(\cdot)$ to denote interchangeably Shannon entropy and differential entropy; the exact meaning  will usually be clear from context.
In particular, if $p_{\vbfx}\colon \bR^n\to\bR_{\ge0}$ is a pdf of a random vector $\vbfx$ in $\bR^n$ , $H(\vbfx)$ denotes the differential entropy of $\vbfx\sim p_{\vbfx}$,
\[H(\vbfx)=-\int_{\bR^n} p_{\vbfx}(\vx)\log p_{\vbfx}(\vx)\diff \vx.\]
For any $p\in[0,1]$, $H(p)$ denotes the binary entropy 
\[H(p)=p\log\frac{1}{p}+(1-p)\log\frac{1}{1-p}.\]
For any $q\in\bZ_{\ge2}$ and any $p\in[0,1]$, $H_q(p)$ denotes the $q$-ary entropy 
\[H_q(p)=p\log_q(q-1)+p\log_q\frac{1}{p}+(1-p)\log_q\frac{1}{1-p}.\]

\noindent\textbf{Algebra.}
For any field $F$, we use $\speclin(n,F)$ and $\genlin(n,F)$ to denote the special linear group and the general linear group over $F$ of degree $n$, i.e., 
\[\speclin(n,F)\coloneqq\curbrkt{\bfM\in F^{n\times n}\colon \det(\bfM)=1},\quad \genlin(n,F)\coloneqq\curbrkt{\bfM\in F^{n\times n}\colon \det(\bfM)\ne 0}.\]

For any vector space $V$ of dimension $n$ and any integer $0\le k\le n$, the Grassmannian $\gr(k,V)$ is the collection of all $k$-dimensional subspaces of $V$, i.e., 
\[\gr(k,V)\coloneqq\curbrkt{U\le V\text{ subspace}\colon \dim_F U=k}.\]

\section{Preliminaries}\label{sec:prelim}
\noindent\textbf{Probability.} 
We will need the following form of Chernoff bound.
\begin{lemma}
\label{lem:chernoff}
Let $ X_1, \cdots, X_N $ be independent Bernoulli random variables and let $ X\coloneqq\sum_{i = 1}^NX_i $. 
Then for any $ 0\le \delta\le1 $, we have
\begin{align}
\prob{X\ge(1+\delta)\expt{X}} \le& \exp\paren{-\frac{\delta^2}{3}\expt{X}}, \notag \\
\prob{X\le(1-\delta)\expt{X}} \le& \exp\paren{-\frac{\delta^2}{2}\expt{X}}. \notag 
\end{align}
\end{lemma}

The following lemma is an easy corollary of Chebyshev inequality.
\begin{lemma}\label{lem:cheb_cor}
	For any nonnegative random variable $X$, $\prob{X=0}\le\var{X}/\expt{X}^2$.
\end{lemma}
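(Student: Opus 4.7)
The plan is a one-line invocation of Chebyshev's inequality, choosing the threshold equal to the mean. First I would dispose of the degenerate case $\expt{X}=0$: since $X\ge 0$, this forces $X=0$ almost surely, and the stated inequality either holds trivially or can be read with the usual $0/0$ convention, so we may assume $\expt{X}>0$ in the remainder.

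Under that assumption, the key observation is the set inclusion
\[
\{X=0\}\;\subseteq\;\{\,|X-\expt{X}|\ge \expt{X}\,\},
\]
which holds because on the left-hand event, $|X-\expt{X}|$ equals exactly $\expt{X}$. Monotonicity of probability then gives $\prob{X=0}\le \prob{|X-\expt{X}|\ge \expt{X}}$, and Chebyshev's inequality applied with threshold $t=\expt{X}$ bounds the right-hand side by $\var{X}/\expt{X}^{2}$. Chaining the two inequalities yields the claim.

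There is essentially no obstacle here; the only micro-decision is whether to quote Chebyshev directly or instead derive the bound via Cauchy--Schwarz (writing $\expt{X}^{2}=\expt{X\cdot\ind{X>0}}^{2}\le \expt{X^{2}}\cdot\prob{X>0}$, which gives the even stronger bound $\prob{X=0}\le \var{X}/\expt{X^{2}}$ and implies the stated inequality since $\expt{X^{2}}\ge\expt{X}^{2}$). I would go with the Chebyshev route for brevity, as it matches the statement exactly.
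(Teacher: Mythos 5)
Your proof is correct and matches the paper's intended approach exactly: the paper does not give a proof at all, remarking only that the lemma "is an easy corollary of Chebyshev inequality," which is precisely the one-line argument you carry out (via the inclusion $\{X=0\}\subseteq\{|X-\expt{X}|\ge\expt{X}\}$ and Chebyshev with threshold $\expt{X}$). Your side remark about the Cauchy--Schwarz route giving the slightly stronger $\var{X}/\expt{X^2}$ bound is also correct, though not needed here.
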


Recall two facts about the moments of Gaussian and Poisson random variables.
\begin{fact}\label{fact:gaussian_mmt}
	Let $\bfg\sim\cN(0,1)$, then 
	\[\expt{\bfg^k}=\begin{cases}
	0,&k\text{ odd}\\
	(k-1)!!,&k\text{ even}
	\end{cases},\]
	where $\ell!!\coloneqq \ell(\ell-2)(\ell-4)\cdots3\cdot 1$ denotes the double factorial of $\ell$ odd.
\end{fact}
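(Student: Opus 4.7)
The plan is to verify the two cases separately, using symmetry for the odd moments and an integration-by-parts recursion for the even moments.

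For odd $k$, I would observe that the density $\frac{1}{\sqrt{2\pi}}e^{-x^2/2}$ is an even function of $x$, so the integrand $x^k \cdot \frac{1}{\sqrt{2\pi}}e^{-x^2/2}$ is odd. Since $\expt{|\bfg|^k}<\infty$ (which follows from, e.g., the rapid decay of the Gaussian density), the integral over $\bR$ converges absolutely, and by symmetry it must equal $0$.

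For even $k\geq 2$, the plan is to establish the recursion $\expt{\bfg^k}=(k-1)\,\expt{\bfg^{k-2}}$ via integration by parts. Writing
\[
\expt{\bfg^k}=\frac{1}{\sqrt{2\pi}}\int_{-\infty}^{\infty} x^{k-1}\cdot x\,e^{-x^2/2}\diff x,
\]
I would take $u=x^{k-1}$ and $\diff v=x\,e^{-x^2/2}\diff x$, so $\diff u=(k-1)x^{k-2}\diff x$ and $v=-e^{-x^2/2}$. The boundary term $[-x^{k-1}e^{-x^2/2}]_{-\infty}^{\infty}$ vanishes because the Gaussian factor kills any polynomial, and we are left with $(k-1)\expt{\bfg^{k-2}}$.

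Together with the base case $\expt{\bfg^0}=1$ (which is the fact that the Gaussian density integrates to $1$), unrolling the recursion yields
\[
\expt{\bfg^k}=(k-1)(k-3)\cdots 3\cdot 1=(k-1)!!
\]
for every even $k$, which is exactly the claim. There is no real obstacle here; the only point that needs a word is the justification that the boundary terms in the integration by parts vanish, but this is immediate from the super-polynomial decay of $e^{-x^2/2}$. As a sanity check, one could alternatively derive the same formula by extracting the coefficient of $t^k/k!$ in the moment generating function $\expt{e^{t\bfg}}=e^{t^2/2}=\sum_{n\geq 0}\frac{t^{2n}}{2^n n!}$, which gives $\expt{\bfg^{2n}}=\frac{(2n)!}{2^n n!}=(2n-1)!!$, agreeing with the recursion.
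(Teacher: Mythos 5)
Your proof is correct. The paper states this as a \emph{Fact} without proof (it is a textbook identity about Gaussian moments), so there is no internal argument to compare against; your integration-by-parts recursion together with the odd-symmetry observation and the base case $\expt{\bfg^0}=1$ is the standard derivation, and the moment-generating-function cross-check you append is also sound.
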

\begin{fact}\label{fact:pois_mmt}
	Let $\bfp\sim\pois(\lambda)$, then
	\[\expt{\bfp^k}=e^{-\lambda}\sum_{i=0}^\infty\frac{i^k}{i!}\lambda^i .
	\]
\end{fact}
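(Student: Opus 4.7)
The statement is essentially a direct application of the definition of expectation to the Poisson probability mass function, so the plan is short. First I would recall that if $\bfp\sim\pois(\lambda)$, then $\bfp$ is supported on $\bZ_{\ge 0}$ with pmf $\Pr[\bfp=i]=e^{-\lambda}\lambda^{i}/i!$ for each $i\ge 0$. By the law of the unconscious statistician, for any function $f\colon \bZ_{\ge 0}\to\bR$ for which the sum converges absolutely, $\E[f(\bfp)]=\sum_{i=0}^{\infty}f(i)\Pr[\bfp=i]$. Taking $f(i)=i^{k}$ and factoring the constant $e^{-\lambda}$ out of the sum produces
\begin{equation*}
\E[\bfp^{k}]\;=\;\sum_{i=0}^{\infty}i^{k}\,e^{-\lambda}\frac{\lambda^{i}}{i!}\;=\;e^{-\lambda}\sum_{i=0}^{\infty}\frac{i^{k}}{i!}\lambda^{i},
\end{equation*}
which is exactly the claimed identity.

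The only point that deserves a sentence of justification is that the infinite sum is well defined. Since all summands are nonnegative, Tonelli's theorem legitimizes the interchange of sum and expectation without any hypothesis beyond measurability. Moreover, convergence of $\sum_{i\ge 0}i^{k}\lambda^{i}/i!$ for every $\lambda\ge 0$ and every $k\in\bZ_{\ge 0}$ is automatic because $i^{k}/i!$ decays super-geometrically in $i$; equivalently, the moment generating function $\E[e^{t\bfp}]=\exp(\lambda(e^{t}-1))$ is finite in a neighbourhood of $0$, so $\bfp$ has finite moments of all orders. There is no real obstacle in the argument—the entire content is bookkeeping—so the proof fits on a single line, and the statement is recorded here only to be invoked later as a ready reference for moment computations (e.g.\ Bell/Stirling-number expansions) when analyzing Poissonized counts of codewords inside decoding balls.
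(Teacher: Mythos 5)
Your proof is correct and is the standard direct computation: apply the law of the unconscious statistician to the Poisson pmf and factor out $e^{-\lambda}$. The paper records this as an unproved fact, so there is no alternative argument to compare against; your justification of absolute convergence (nonnegative terms, super-geometric decay of $i^{k}/i!$) is all that needs saying.
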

Poisson random variables are additive.
\begin{fact}\label{fact:pois_add}
	If $\bfp_1\sim\pois(\lambda_1)$ and $\bfp_2\sim\pois(\lambda_2)$ are independent, then $\bfp_1+\bfp_2\sim\pois(\lambda_1+\lambda_2)$.
\end{fact}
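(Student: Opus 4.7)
The plan is to prove the convolution identity directly. Since $\bfp_1$ and $\bfp_2$ are independent and integer-valued, for any $n\in\bZ_{\geq 0}$ I would write
\[
\Pr[\bfp_1+\bfp_2 = n] = \sum_{k=0}^n \Pr[\bfp_1 = k]\Pr[\bfp_2 = n-k] = \sum_{k=0}^n \frac{e^{-\lambda_1}\lambda_1^k}{k!}\cdot\frac{e^{-\lambda_2}\lambda_2^{n-k}}{(n-k)!},
\]
factor out $e^{-(\lambda_1+\lambda_2)}/n!$, and collect the remaining sum as $\sum_{k=0}^n \binom{n}{k}\lambda_1^k\lambda_2^{n-k}$, which by the binomial theorem equals $(\lambda_1+\lambda_2)^n$. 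This yields $\Pr[\bfp_1+\bfp_2 = n] = e^{-(\lambda_1+\lambda_2)}(\lambda_1+\lambda_2)^n/n!$, which is exactly the pmf of $\pois(\lambda_1+\lambda_2)$.

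An alternative I would mention as a sanity check is the moment generating function approach: by independence, $\E[e^{t(\bfp_1+\bfp_2)}] = \E[e^{t\bfp_1}]\E[e^{t\bfp_2}] = e^{\lambda_1(e^t-1)}\cdot e^{\lambda_2(e^t-1)} = e^{(\lambda_1+\lambda_2)(e^t-1)}$, which is the MGF of $\pois(\lambda_1+\lambda_2)$, so uniqueness of MGFs gives the same conclusion. Since the two approaches agree, either can serve as the formal proof.

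There is no real obstacle here; the only step that requires any care is the algebraic manipulation that extracts the common exponential and applies the binomial theorem, and both of those are mechanical. Because this is a classical textbook fact stated as a Fact rather than a Theorem, I would keep the proof to a single short paragraph using the convolution argument above.
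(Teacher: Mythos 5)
Your convolution argument is correct and complete: independence gives the convolution sum, and factoring out $e^{-(\lambda_1+\lambda_2)}/n!$ and applying the binomial theorem yields the $\pois(\lambda_1+\lambda_2)$ pmf exactly. The paper states this as a known fact without proof, so there is nothing to compare against; your one-paragraph treatment is entirely appropriate for a result of this kind.
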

We know the following tail bounds for Poisson random variables.
\begin{lemma}\label{lem:pois_tail}
	Let $\bfp\sim\pois(\lambda)$ and $\ell>\lambda,m<\lambda$, then
	\begin{align*}
	\prob{\bfp> \ell}<& {\frac {e^{-\lambda }(e\lambda )^{\ell}}{\ell^{\ell}}},\quad
	\prob{\bfp<m}< {\frac {e^{-\lambda }(e\lambda )^{m}}{m^{m}}}.
	\end{align*}
\end{lemma}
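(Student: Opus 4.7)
The plan is to establish both tails via the standard Chernoff/exponential-moment method, exploiting the closed-form moment generating function of a Poisson random variable: $\mathbb{E}[e^{t\bfp}] = \exp(\lambda(e^t-1))$ for every $t\in\bR$, which is elementary from the Taylor series of the exponential.

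For the upper tail, I would fix $t>0$ and apply Markov's inequality to the non-negative random variable $e^{t\bfp}$:
\[
\Pr[\bfp > \ell] \;\le\; e^{-t\ell}\,\mathbb{E}[e^{t\bfp}] \;=\; \exp\!\bigl(-t\ell + \lambda(e^t - 1)\bigr).
\]
The exponent is a smooth convex-in-$t$ expression, and the first-order condition $\lambda e^t = \ell$ yields the optimizer $t^\star = \ln(\ell/\lambda)$, which is strictly positive exactly because the hypothesis $\ell > \lambda$ is in force. Substituting $t^\star$ and simplifying the exponent rearranges into $e^{-\lambda}(e\lambda)^\ell/\ell^\ell$, which is the stated bound.

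For the lower tail I would use the dual trick: for $t>0$, Markov applied to $e^{-t\bfp}$ gives
\[
\Pr[\bfp < m] \;\le\; e^{tm}\,\mathbb{E}[e^{-t\bfp}] \;=\; \exp\!\bigl(tm + \lambda(e^{-t}-1)\bigr).
\]
Optimizing over $t>0$ yields the condition $\lambda e^{-t} = m$, so $t^\star = \ln(\lambda/m)$, which is positive precisely because $m<\lambda$. Substituting back produces $e^{-\lambda}(e\lambda)^m/m^m$.

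There is really no obstacle here: the only subtlety is checking that the optimizing $t^\star$ lies in the admissible range $(0,\infty)$, which the two hypotheses $\ell>\lambda$ and $m<\lambda$ guarantee exactly. Both bounds are the specializations of the standard large-deviation rate function $\lambda\cdot D(x/\lambda\,\|\,1) = x\ln(x/\lambda) - x + \lambda$ for the Poisson distribution, evaluated at $x=\ell$ and $x=m$ respectively; strictness of the inequality comes from the fact that Markov's bound is non-tight once the MGF is finite and $\bfp$ is not deterministic.
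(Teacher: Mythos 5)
The paper states this lemma without proof in its Preliminaries section, treating it as a well-known fact. Your Chernoff/MGF argument is the standard textbook derivation and is correct: the MGF computation $\mathbb{E}[e^{t\bfp}]=\exp(\lambda(e^t-1))$ is right, the optimizers $t^\star=\ln(\ell/\lambda)$ and $t^\star=\ln(\lambda/m)$ do lie in $(0,\infty)$ under the stated hypotheses, and substituting them back yields exactly $e^{-\lambda}(e\lambda)^\ell/\ell^\ell$ and $e^{-\lambda}(e\lambda)^m/m^m$ respectively. The only point worth tightening is the strictness claim: Markov's inequality $\Pr[e^{t\bfp}\ge a]\le \mathbb{E}[e^{t\bfp}]/a$ holds with equality only when $e^{t\bfp}$ is supported on $\{0,a\}$, which is impossible here since $e^{t\bfp}$ is strictly positive and $\bfp$ has full support on $\bZ_{\ge0}$; one sentence making that explicit would close the small gap in "Markov's bound is non-tight." Otherwise the argument is complete and matches what any reasonable source for this lemma would give.
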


\begin{lemma}[\cite{canonne-poisson-tail}]
\label{lem:pois-tail-clement}
Let $ \bfp\sim\pois(\lambda) $ and $ \Delta>0 $, then
\begin{align}
\prob{\bfp - \lambda\ge\Delta} &\le e^{\frac{\Delta^2}{2(\lambda+\Delta)}}, \notag \\
\prob{\bfp - \lambda\le-\Delta} &\le e^{\frac{\Delta^2}{2(\lambda+\Delta)}}, \notag \\
\prob{|\bfp - \lambda|\ge\Delta} &\le 2e^{\frac{\Delta^2}{2(\lambda+\Delta)}}. \notag 
\end{align}
\end{lemma}

\noindent\textbf{Geometry.}
It is well-known that Stirling's approximation gives an asymptotic expression for factorials.
\begin{lemma}\label{lem:stirling}
	For any $n\in\bZ_{>0}$, $n!=\sqrt{2\pi n}(n/e)^n(1+o(1))$.
\end{lemma}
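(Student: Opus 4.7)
The plan is to derive Stirling's formula via the Laplace (saddle-point) method applied to the integral representation $n! = \Gamma(n+1) = \int_0^\infty x^n e^{-x}\,dx$. First I would locate the maximum of the integrand $x^n e^{-x} = e^{n \ln x - x}$, which occurs at $x = n$ with value $n^n e^{-n}$. The second derivative of the exponent at the maximum is $-1/n$, so the natural width of the peak is $\sqrt{n}$; this motivates the substitution $x = n + \sqrt{n}\,t$, so that $dx = \sqrt{n}\,dt$ and the integral becomes
\[
n! \;=\; n^n e^{-n}\sqrt{n}\int_{-\sqrt{n}}^\infty \exp\!\bigl(n\ln(1 + t/\sqrt{n}) - \sqrt{n}\,t\bigr)\,dt.
\]

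Next I would Taylor-expand $n\ln(1+t/\sqrt{n}) = \sqrt{n}\,t - t^2/2 + t^3/(3\sqrt{n}) - \cdots$, so that the linear terms cancel $-\sqrt{n}\,t$ and the exponent reduces to $-t^2/2$ plus corrections that vanish pointwise as $n\to\infty$. To promote this pointwise statement to convergence of the integral, I would split the range into a bulk $|t|\le n^{1/6}$, where the cubic remainder is uniformly $o(1)$ and the integrand converges boundedly to $e^{-t^2/2}$, plus two tails that must be shown to contribute $o(1)$. Dominated convergence on the bulk then gives the Gaussian integral $\int_{-\infty}^\infty e^{-t^2/2}\,dt = \sqrt{2\pi}$, and multiplying by the prefactor $n^n e^{-n}\sqrt{n}$ yields $n! = (n/e)^n\sqrt{2\pi n}\,(1+o(1))$.

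The main obstacle is the tail control, since pointwise Taylor expansion is not enough; uniform bounds on the exponent over the full range $(-\sqrt{n},\infty)$ are needed. For this I would use the elementary inequality $\ln(1+u) - u \le -u^2/(2(1+u))$ for $u\ge 0$ with $u = t/\sqrt{n}$, which produces an integrable bound like $\exp(-c t^2/(1 + t/\sqrt{n}))$ uniformly in $n$ and hence a negligible right tail; for the left tail $-\sqrt{n} \le t \le -n^{1/6}$ one uses $\ln(1+u) - u \le -u^2/2$ for $u\in(-1,0]$, giving an exponentially small contribution from $|t|\ge n^{1/6}$. An alternative route I would keep in reserve, in case the tail bookkeeping proves awkward, is the Euler--Maclaurin expansion of $\ln n! = \sum_{k=1}^n \ln k$ compared to $\int_1^n \ln x\,dx$; this recovers the same leading asymptotics (and in fact the full asymptotic series) with slightly different bookkeeping, and one fixes the constant $\sqrt{2\pi}$ at the end by invoking the Wallis product.
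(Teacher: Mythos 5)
The paper states Lemma~\ref{lem:stirling} as a well-known fact and does not supply a proof, so there is no proof of record to compare against; your saddle-point derivation is a correct and standard route to Stirling's formula. One small bookkeeping slip worth flagging: with the cutoff $|t|\le n^{1/6}$, the cubic remainder $t^3/(3\sqrt{n})$ equals $1/3$ at the edge $|t|=n^{1/6}$, so it is $O(1)$, not ``uniformly $o(1)$'' as you claim. This does not break the argument, but the reasoning should be adjusted: either shrink the bulk to $|t|\le n^{\alpha}$ with $\alpha<1/6$ so the remainder genuinely vanishes uniformly, or keep $n^{1/6}$ but observe that on this range the remainder is bounded by $|t|^3/(3\sqrt{n})\le t^2\,n^{-1/3}/3$, giving the uniform bound $\exp(-t^2/2 + t^2 n^{-1/3}/3)\le e^{-t^2/4}$ for $n$ large; this supplies both pointwise convergence to $e^{-t^2/2}$ and an integrable dominating function, so dominated convergence applies. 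With that repair the outline is sound, and your tail estimates and the Euler--Maclaurin fallback are both standard and fine.
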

We can use the above lemma to obtain the asymptotic behaviour of binomial coefficients. At times, we also resort to the following cheap yet convenient bounds.
\begin{lemma}\label{lem:bd_binom_coeff}
	For any $n\in\bZ_{>0}$ and $0\le k\le n$, $(n/k)^k\le\binom{n}{k}\le(en/k)^k$.
\end{lemma}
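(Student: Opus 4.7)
The plan is to prove both inequalities by elementary manipulation of the product expansion
\[
\binom{n}{k} \;=\; \prod_{i=0}^{k-1}\frac{n-i}{k-i},
\]
with the conventions that the empty product at $k=0$ equals $1$ and $0^0 = 1$, so the boundary case $k=0$ is trivial on both sides. The case $k=n$ is also immediate since $\binom{n}{n}=1 = (n/n)^n \le (e\cdot n/n)^n$, so the substantive range is $1 \le k \le n-1$.

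For the lower bound $(n/k)^k \le \binom{n}{k}$, I would argue termwise that each factor $\frac{n-i}{k-i}$ in the product above is at least $n/k$ whenever $0 \le i \le k-1$ and $k \le n$. Cross-multiplying, the inequality $\frac{n-i}{k-i} \ge \frac{n}{k}$ is equivalent to $k(n-i) \ge n(k-i)$, which rearranges to $i(n-k) \ge 0$, and this is manifestly true since $n \ge k$ and $i \ge 0$. Taking the product of $k$ such bounds yields $\binom{n}{k} \ge (n/k)^k$.

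For the upper bound $\binom{n}{k} \le (en/k)^k$, I would chain two standard estimates. First, the same product formula gives $\binom{n}{k} \le n^k/k!$ since $n(n-1)\cdots(n-k+1) \le n^k$. Second, the Taylor series $e^k = \sum_{j \ge 0} k^j/j! \ge k^k/k!$ yields $k! \ge (k/e)^k$ upon rearrangement. Combining these gives $\binom{n}{k} \le n^k/(k/e)^k = (en/k)^k$, as desired. Since every step is a routine one-line manipulation, I do not foresee any genuine obstacle beyond keeping track of the degenerate boundary cases noted above.
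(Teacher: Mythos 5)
Your proof is correct, and it is the standard textbook argument: the lower bound by a termwise comparison $\frac{n-i}{k-i}\ge\frac{n}{k}$ in the product expansion, and the upper bound by chaining $\binom{n}{k}\le n^k/k!$ with $k!\ge(k/e)^k$ from the Taylor series of $e^k$. The paper itself states Lemma~\ref{lem:bd_binom_coeff} in the preliminaries without proof, treating it as a well-known fact, so there is nothing to compare against; your write-up fills that gap cleanly, including the boundary cases $k=0$ and $k=n$.
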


Recall the formulas and asymptotics of the volume of a unit Euclidean ball and the area of a unit Euclidean sphere.
\begin{fact}\label{fact:vol_ball}
	$V_n\coloneqq\vol(\cB_2^n)=\frac {\pi ^{\frac {n}{2}}}{\Gamma ({ {n}/{2}}+1)}= \frac{1}{\sqrt{\pi n}}\paren{\frac{2\pi e}{n}}^{n/2}(1+o(1))$.
\end{fact}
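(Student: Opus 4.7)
The plan is to derive the closed-form expression via the classical Gaussian-integral trick, and then extract the asymptotic expansion by feeding Stirling's formula (Lemma~\ref{lem:stirling}) into $\Gamma(n/2+1)$.

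First, I would compute the Gaussian integral $I_n \coloneqq \int_{\bR^n} e^{-\|\vx\|_2^2}\diff \vx$ in two different ways. On the one hand, by separability (Fubini), $I_n = \left(\int_{\bR} e^{-x^2}\diff x\right)^n = \pi^{n/2}$. On the other hand, passing to polar/spherical coordinates and using the fact that the $(n-1)$-dimensional surface area of the radius-$r$ sphere equals $n V_n r^{n-1}$ (which is just the derivative of $V_n r^n$ with respect to $r$), I get
\[
I_n = \int_0^\infty e^{-r^2}\cdot n V_n r^{n-1}\diff r = n V_n \cdot \tfrac{1}{2}\int_0^\infty e^{-t}\, t^{n/2-1}\diff t = \tfrac{n}{2} V_n\, \Gamma(n/2),
\]
after the substitution $t = r^2$. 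Equating the two expressions and using $\tfrac{n}{2}\Gamma(n/2) = \Gamma(n/2+1)$ yields the first equality $V_n = \pi^{n/2}/\Gamma(n/2+1)$.

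For the asymptotic, I would split into the two parities of $n$ and apply Stirling in the form $\Gamma(z+1) = \sqrt{2\pi z}\,(z/e)^z(1+o(1))$ (valid for real $z\to\infty$, which is the continuous version of Lemma~\ref{lem:stirling}). Setting $z = n/2$ gives
\[
\Gamma(n/2+1) = \sqrt{\pi n}\left(\frac{n}{2e}\right)^{n/2}(1+o(1)),
\]
and hence
\[
V_n = \frac{\pi^{n/2}}{\Gamma(n/2+1)} = \frac{1}{\sqrt{\pi n}}\left(\frac{2\pi e}{n}\right)^{n/2}(1+o(1)),
\]
as claimed.

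There is no real obstacle here: the only mild care needed is justifying Stirling's approximation at half-integer arguments (rather than integer arguments as stated in Lemma~\ref{lem:stirling}), which can be done either by citing the continuous version of Stirling or, for integer $n$, by invoking $\Gamma(k+1) = k!$ when $n$ is even and $\Gamma(k+1/2) = \frac{(2k)!\sqrt{\pi}}{4^k k!}$ when $n$ is odd and simplifying. Either way the same leading-order expression results, so the asymptotic is uniform in $n$.
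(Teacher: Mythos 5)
Your derivation is correct and is the standard one (Gaussian integral in Cartesian vs.\ spherical coordinates for the closed form, then Stirling at half-integer arguments for the asymptotic). The paper states this as a Fact without proof, so there is nothing to compare against; your argument, including the remark that Stirling needs to be invoked at real (half-integer) arguments rather than only at integers, is a complete and correct justification.
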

\begin{fact}\label{fact:area_sphere}
	$A_{n-1}\coloneqq\area( \cS^{n-1}_2)={\frac {n\pi ^{\frac {n}{2}}}{\Gamma ({ {n}/{2}}+1)}}= \sqrt{\frac{n}{\pi}} \paren{\frac{2\pi e}{n}}^{n/2}(1+o(1))$.
\end{fact}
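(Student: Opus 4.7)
The plan is to prove both the closed-form expression for $A_{n-1}$ and its asymptotic expansion by leveraging Fact \ref{fact:vol_ball} and Lemma \ref{lem:stirling}, which are already established earlier in the preliminaries.

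First I would establish the closed form $\area(\cS_2^{n-1}) = n\pi^{n/2}/\Gamma(n/2+1)$. The cleanest route is the differential relation between volume and surface area: since $\vol(r\cB_2^n) = r^n V_n$, differentiating with respect to $r$ at $r=1$ gives $A_{n-1} = n V_n$, and Fact \ref{fact:vol_ball} supplies $V_n = \pi^{n/2}/\Gamma(n/2+1)$, yielding the closed form immediately. Alternatively, one can compute the Gaussian integral $\int_{\bR^n} e^{-\|\vx\|^2}\diff\vx$ in two ways---as a product of one-dimensional Gaussians yielding $\pi^{n/2}$, and in polar coordinates yielding $\frac{1}{2}A_{n-1}\Gamma(n/2)$---and invoke the functional equation $\Gamma(n/2+1) = (n/2)\Gamma(n/2)$ to reconcile the two expressions.

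Next I would derive the asymptotic expansion by applying Stirling's approximation to the denominator. Lemma \ref{lem:stirling} gives the factorial form directly for even $n$, namely $(n/2)! = \sqrt{\pi n}\,(n/(2e))^{n/2}(1+o(1))$; for odd $n$, the analogous Stirling expansion for $\Gamma$ produces the same leading asymptotic. Substituting into the closed form and regrouping the factor $\pi^{n/2}\cdot(2e/n)^{n/2}$ as $(2\pi e/n)^{n/2}$, together with $n/\sqrt{\pi n} = \sqrt{n/\pi}$, gives
\[A_{n-1} = \sqrt{\frac{n}{\pi}}\paren{\frac{2\pi e}{n}}^{n/2}(1+o(1))\]
as claimed.

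Since every ingredient is either an earlier fact in the manuscript or elementary rearrangement of Stirling's formula, there is essentially no serious obstacle. The only mild subtlety is that Lemma \ref{lem:stirling} is stated for integer factorials, so for odd $n$ I would either appeal directly to the Stirling expansion of the Gamma function, or reduce to the even case by using the recursion $\Gamma(z+1) = z\Gamma(z)$ to shift by a half-integer before applying the integer version of Stirling.
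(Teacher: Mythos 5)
Your proof is correct, and it proceeds by the standard route. The paper itself states this as a preliminary Fact without supplying a proof, so there is no paper argument to compare against; your derivation via $A_{n-1} = nV_n$ (from differentiating $\vol(r\cB_2^n) = r^n V_n$ at $r=1$), substitution of Fact \ref{fact:vol_ball}, and Stirling's approximation applied to $\Gamma(n/2+1)$ is the natural and correct one, and your remark about handling odd $n$ through the Stirling expansion of the Gamma function is the right way to patch the minor technicality in Lemma \ref{lem:stirling}.
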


\section{List decodability of spherical codes}\label{sec:lb_ls_spherical}
We now investigate lower bounds on the list size $ L $ for codes that operate at rate $ R=C(P,N) -\delta$.

\subsection{A reduction from an arbitrary code to a spherical code}
We first show that it suffices to prove a lower bound on list size for spherical codes.

\begin{lemma}
	Suppose there exists a $ (P,N,L) $-list decodable code $ \cC \subset\cB_2^n(0,\sqrt{nP}) $ of rate $ R $. Then, there exists a $\paren{P,N,\frac{P}{4N}L}$-list decodable code $ \cC' \subset \cS_2^{n-1}(0,\sqrt{nP})$ of asymptotically the same rate.
	\label{lemma:reduction_balltospherecode}
\end{lemma}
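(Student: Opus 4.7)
The plan is a radial-projection reduction: define $\cC'$ as the image of $\cC$ under the map $\vx \mapsto \sqrt{nP}\,\vx/\|\vx\|$, viewed as a subset of $\cS_2^{n-1}(0,\sqrt{nP})$, and show both that the rate is preserved asymptotically and that the list size grows by at most a factor of $P/(4N)$. Two codewords of $\cC$ collapse to the same image in $\cC'$ only when they lie on a common ray through the origin, so the key accounting is (i) how many codewords can live on one ray, and (ii) how the preimage of a small sphere ball is shaped.

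For the rate, any ray through the origin meets $\cB_2^n(0,\sqrt{nP})$ in a segment of length at most $\sqrt{nP}$. This segment can be tiled by at most $\lceil\sqrt{P/N}/2\rceil$ intervals of length $2\sqrt{nN}$, each of which embeds in an $\sqrt{nN}$-ball in $\bR^n$; by $(P,N,L)$-list decodability of $\cC$ each such ball carries at most $L$ codewords, so any ray carries at most $\lceil L\sqrt{P/N}/2\rceil$ codewords of $\cC$. Consequently $|\cC'|\geq |\cC|/\lceil L\sqrt{P/N}/2\rceil$, and since $L\sqrt{P/N}$ is a constant in $n$ the rates of $\cC$ and $\cC'$ agree asymptotically.

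For the list size, fix $\vy\in\bR^n$ and pull $\cC'\cap\cB_2^n(\vy,\sqrt{nN})$ back to its preimage in $\cC$. This preimage is contained in the ``pencil'' of rays from the origin whose direction lies in the spherical cap $\cS_2^{n-1}(0,\sqrt{nP})\cap\cB_2^n(\vy,\sqrt{nN})$. The cap's half-angle $\theta$ satisfies $\cos\theta=1-N/(2P)$, so $\sqrt{nP}\sin\theta\leq\sqrt{nN}$, placing the pencil inside a cylinder of axial length $\sqrt{nP}$ and cross-sectional radius at most $\sqrt{nN}$. I would then count codewords of $\cC$ in this pencil by combining (i) an axial cover of the cylinder by roughly $\sqrt{P/N}/2$ carefully placed $\sqrt{nN}$-balls, (ii) the $(P,N,L)$ property on each ball, and (iii) the 1-D per-ray bound from the rate step. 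The resulting estimate yields $(P/(4N))L$ codewords in the pencil, hence $|\cC'\cap\cB_2^n(\vy,\sqrt{nN})|\leq (P/(4N))L$.

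The main obstacle is the covering/counting step. Since the cylinder's cross-sectional radius equals the covering-ball radius, a naive axis-aligned cover is not efficient (a single $\sqrt{nN}$-ball centered on the axis cannot cover any $\sqrt{nN}$-radius cross section while extending axially). The clean factor $P/(4N)=(\sqrt{nP}/(2\sqrt{nN}))^2$ strongly suggests a two-directional decomposition: one factor of $\sqrt{P/N}/2$ from axial slicing (pencil length over $\sqrt{nN}$-ball diameter) and another factor of $\sqrt{P/N}/2$ from the ray-wise 1-D list bound, whose product yields the claimed constant; carefully decoupling these two counts so as not to over- or under-count is the crux of the argument.
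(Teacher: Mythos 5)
Your radial-projection construction matches the paper's, and the per-ray tiling argument is a reasonable way to control collisions under the projection. The gap is in the list-size accounting; you have located the right difficulty but not resolved it. The geometric fact you need, and are missing, is that the radial preimage of a spherical cap is not merely ``some subset of a cylinder'': it is a \emph{cone} with apex at the origin, and that cone tapers to a point. The taper is what rescues the covering. Near the cap the cone's cross-section has radius $\approx\sqrt{nN}$, so --- exactly as you observe --- a ball of radius $\sqrt{nN}$ centered on the axis there covers almost no axial extent at the cross-section's edge. But as you slide toward the apex the cross-section shrinks, and each successive axis-centered ball can cover a progressively longer slab of the cone. The paper makes this quantitative with a similar-triangles computation using the cone's half-angle $\theta$ (here $\cos\theta=\sqrt{N/P}$): placing balls of radius $\sqrt{nN}$ along the axis with the geometrically allowed spacing, at most $\frac{P}{4N}$ balls cover the entire cone. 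Each covering ball contains at most $L$ codewords of $\cC$ by the $(P,N,L)$ hypothesis, so the cap contains at most $\frac{P}{4N}L$ points of $\cC'$ --- with no per-ray bookkeeping at all.

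Your fallback ``two-directional decomposition'' does not appear to survive scrutiny: an axial slab of the cylinder near its wide end, of thickness $2\sqrt{nN}$ and cross-sectional radius $\sqrt{nN}$, cannot be covered by a bounded number of $\sqrt{nN}$-balls, and summing a per-ray count over uncountably many rays gives nothing. The factor $\frac{P}{4N}=\bigl(\tfrac12\sqrt{P/N}\bigr)^2$ is not the product of an axial count and a radial count; it is the single count of how many axis-aligned $\sqrt{nN}$-balls it takes to cover the tapering cone, which exceeds your $\tfrac12\sqrt{P/N}$ axis-line estimate precisely because the spacing must tighten near the wide end. The fix is to keep your pencil-of-rays picture but replace the enclosing cylinder with the cone it actually is, and let the cone's taper determine the ball spacing.
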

\begin{proof}
	\begin{figure}
		\centering
		\includegraphics{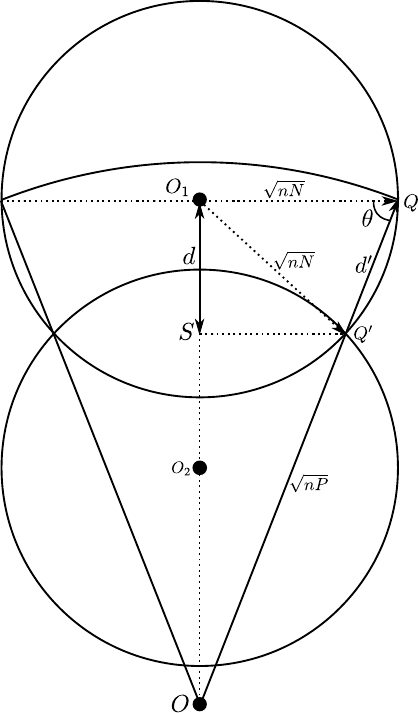}
		\caption{A covering of the cone using balls.}
		\label{fig:cone_covering}
	\end{figure}
	Given any $\paren{P,N,L}$-list decodable (ball) code $\cC$ in $\cB_2^n(0,\sqrt{nP})$, we can construct a $\paren{P,N,\frac{P}{4N}L}$-list decodable spherical code $\cC'$ on $ \cS_2^{n-1}(0,\sqrt{nP}) $. Indeed, we just project all codewords radially onto $\cS^{n-1}\paren{0,\sqrt{nP}}$. Then we know that for any direction $ \vtheta\in\cS^{n-1} $, 
	\begin{align}
	\card{\cC'\cap\C^{n-1}(\vtheta,\sqrt{nN})}\le\card{\cN}L, \label{eqn:bound_spherical_code}
	\end{align}
	where 
	$\C^{n-1}(\vtheta,\sqrt{nN})$ is a cap of radius $\sqrt{nN}$ on the sphere $ \cS^{n-1}(0,\sqrt{nP}) $ along direction $\vtheta$,
	\[\C^{n-1}(\vtheta,\sqrt{nN})\coloneqq\curbrkt{\vx\in\cS^{n-1}(0,\sqrt{nP})\colon \inprod{\vx}{\vtheta}\ge\sqrt{n(P-N)}};\]
	and
	$\cN$ is a $\sqrt{nN}$-covering\footnote{A \emph{$\Delta$-covering} (a.k.a. a $\Delta$-net) $\cN$ of a metric space $ (\cX,d) $ is a subset $ \cN\subset\cX $ satisfying that for any $ x\in\cX $, there exists an $ x'\in\cN $ such that $ d(x,x')\le\Delta $.} of the cone 
	\[\cK(\vtheta)\coloneqq\curbrkt{\lambda\vx\colon \vx\in\C^{n-1}(\vtheta,\sqrt{nN}),\lambda\in[0,1]}\]
	induced by the cap. 

	We can upper bound $\card{\cN}$ by  
	\begin{align}
	|\cN| \le& \frac{\sqrt{P-N}}{2d}, \label{eqn:bound_covering_size}
	\end{align}
	where $d$ is shown in Fig.~\ref{fig:cone_covering} and will be computed momentarily. This can be seen by staring at the geometry of a covering as shown in Figure~\ref{fig:cone_covering}. 
	One way to cover the cone $\cK(\vtheta)$ is to align the centers of the balls $ \cB(\cdot,\sqrt{nN}) $ on the ray rooted at the center $O$ of $ \cS^{n-1}(0,\sqrt{nP}) $ in the direction $ \vtheta $.
	We enumerate such ball in ascending order from the surface to the center of the sphere $ \cS^{n-1}(0,\sqrt{nP}) $. 
	That is, the ball whose center is closest to the surface of $ \cS^{n-1}(0,\sqrt{nP}) $ is the 1-st one and the ball whose center is closest to the center of $ \cS^{n-1}(0,\sqrt{nP}) $ is the $ |\cN| $-th one.
	Let $ 2d_i $ denote the distance between the centers of the $i$-th and the $(i+1)$-st balls. 
	Since all centers are on the segment $ OO_1 $ of length $ \sqrt{n(P-N)} $ and we apparently have $ d_1<d_2<\cdots<d_{|\cN|-1} $, we can upper bound $|\cN|$ by $ \frac{\sqrt{P-N}}{2d} $ where $ d \coloneqq d_1 $. 

	We now compute $d$.
	By symmetry, the distance between the centers $O_1$ and $O_2$ of the first two balls is equal to $2d$ where $d\coloneqq{SO_1}/\sqrt{n}={SO_2}/\sqrt{n}$. Since the triangles $\Delta OSQ'$ and $\Delta OO_1Q$ are similar,  $d$ is given by the following equation
	\begin{equation}
	\frac{\card{OS}}{\card{OO_1}}=\frac{\card{OQ'}}{\card{OQ}} \iff \frac{\sqrt{P-N}-d}{\sqrt{P-N}}=\frac{\sqrt{P}-d'}{\sqrt{P}},
	\label{eqn:sim_tri}
	\end{equation}
	where $d'\coloneqq Q'Q/\sqrt{n}$. On the other hand, the triangle $\Delta O_1Q'Q$ is isosceles with side length $O_1Q'=O_1Q=\sqrt{nN}$. Let $ \theta\coloneqq\angle O_1QQ' $. It is immediate that $d'=2\sqrt{N}\cos\theta=2N/\sqrt{P}$ since $\cos\theta=QO_1/QO=\sqrt{N/P}$ in $\Delta QO_1O$.
	Plugging it into  Eqn.~\eqref{eqn:sim_tri} and solving $d$, we have $d=2\frac{N}{P}\sqrt{P-N}$. Hence by Eqn.~\eqref{eqn:bound_covering_size}, $\card{\cN}\le\frac{P}{4N}$.
	Substituting it to Eqn.~\eqref{eqn:bound_spherical_code} finishes the proof. 
	
\end{proof}

\begin{remark}
In fact, since we are covering a cone rather than a cylinder, the most economical way of covering is not to align the balls with consecutive distance $2d$. Indeed, the optimal covering $\cN^*$ has strictly increasing distances $2d=d_{1}<d_{2}<\cdots<d_{|\cN^*|-1}$, where $d_{i}$ is the half distance between the centers $O_i$ and $O_{i+1}$ of the $i$-th and the $(i+1)$-st balls. One can compute each $d_i$ explicitly. Although our bound is crude, it  is still a valid and simple upper bound and is tight for covering a cylinder.
\end{remark}

\subsection{List size lower bound for uniformly random spherical codes}
Although we are not able to obtain a lower bound for arbitrary spherical codes as in \cite{blinovsky-1997-list-dec-real,blinovsky-litsyn-2009-list-dec-real}, we can obtain a lower bound for uniformly random spherical codes.

\begin{proposition}
	Fix $ P>N>0 $, and let $ C=\frac{1}{2}\log\frac{P}{N} $. For every $ \delta>0 $, if $ \cC $ is a random spherical code on $ \cS^{n-1}(0,\sqrt{nP}) $ of rate $ C-\delta $, then
	\[
	\Pr\left[\cC\text{ is }\paren{P,N,\frac{c'}{\delta}-1}\text{-list decodable}\right] \leq 2^{-\Theta(n)},
	\]
	for every $ c'>C $.
	\label{prop:lbound_randomspherical}
\end{proposition}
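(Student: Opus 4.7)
The plan is a second-moment argument via Lemma~\ref{lem:cheb_cor}, applied to a counting random variable $Y$ whose positivity witnesses failure of $(P,N,L)$-list decodability with $L \coloneqq c'/\delta$. Set $\phi^* \coloneqq \arcsin\sqrt{N/P}$. First I note the geometric reduction: $\cC$ fails to be $(P,N,L)$-list decodable iff $L+1$ codewords fit in a common ball $\cB(\vy, \sqrt{nN})$, and since every codeword lies on $\cS^{n-1}(0,\sqrt{nP})$, the intersection $\cS^{n-1}(0,\sqrt{nP}) \cap \cB(\vy, \sqrt{nN})$ is a spherical cap whose minimum achievable half-angle---optimized over $\|\vy\|$, via essentially the same computation as in the proof of Lemma~\ref{lemma:reduction_balltospherecode}---equals exactly $\phi^*$. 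Consequently, any codeword $\vx_0$ together with $L$ other codewords all within angle $\phi^*$ of $\vx_0$ exhibits $L+1$ codewords in the cap of half-angle $\phi^*$ centered at $\vx_0/\|\vx_0\|$, witnessing the failure.

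I would accordingly take $Y$ to be the number of ordered $(L+1)$-tuples $(\vbfx_0,\vbfx_1,\ldots,\vbfx_L)$ of distinct codewords satisfying the event $E_T \coloneqq \{\inprod{\vbfx_j}{\vbfx_0} \geq nP\cos\phi^* \text{ for } j=1,\ldots,L\}$. Writing $q$ for the normalized area of a spherical cap of half-angle $\phi^*$ on $\cS^{n-1}$, standard asymptotics together with $\sin\phi^* = \sqrt{N/P}$ give $q = 2^{-nC + O(\log n)}$. With $M \coloneqq |\cC| = 2^{nR}$ and $R = C-\delta$,
\[
\expt{Y} = M(M-1)\cdots(M-L)\, q^L = 2^{n[(L+1)R - LC] + O(\log n)} = 2^{n(C - c' - \delta) + O(\log n)},
\]
which is exponentially large whenever $\delta < C - c'$ (the only relevant regime). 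By Lemma~\ref{lem:cheb_cor} it suffices to show $\var{Y}/\expt{Y}^2 = 2^{-\Omega(n)}$.

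The hard part will be the variance bound, which I would obtain by expanding $\expt{Y^2} = \sum_{T,T'} \prob{E_T \cap E_{T'}}$ over ordered tuple pairs and partitioning by overlap pattern. In the representative subcase of a shared pivot with $b$ shared non-pivots, the pair count is $\Theta(M^{2L+1-b})$ while $\prob{E_T \cap E_{T'}} = q^{2L-b}$, so the ratio of this contribution to $\expt{Y}^2 \approx M^{2L+2} q^{2L}$ is $\Theta(1/(M(Mq)^b)) = 2^{-n(R-b\delta)}$; since $0 \leq b \leq L = c'/\delta$ forces $R - b\delta \geq R - c' = C - c' - \delta > 0$ (under $\delta < C - c'$), each such contribution is $2^{-\Omega(n)}$. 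The remaining patterns, with distinct pivots $\vbfx_0 \neq \vbfx'_0$ and various cross-incidences among the two tuples' elements, yield joint probabilities of the form $q^{2(L-b)} \expt{p(\vbfx_0,\vbfx'_0)^b}$, where $p$ is the normalized area of the intersection of the two caps centered at the two pivots; the estimate $\expt{p^b} \leq q^{b-1}\expt{p} = q^{b+1}$ (from $p \leq q$ pointwise and $\expt{p} = q^2$) yields an analogous exponentially suppressed ratio $\leq 2^{-n(C-c')}$ in each subcase. Since $L = O(1)$ in $n$, there are only $O(1)$ overlap patterns, so summing gives $\var{Y}/\expt{Y}^2 \leq 2^{-\Omega(n)}$, whence $\prob{Y = 0} \leq 2^{-\Omega(n)}$. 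The main obstacle is the case analysis of cross-pivot overlaps; the unifying mechanism is that $Mq = 2^{-n\delta}$ is exponentially small, so any identification of codewords suppresses the contribution to the variance by an exponential factor.
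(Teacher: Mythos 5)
Your proposal is correct and rests on the same underlying technique as the paper's proof — a Chebyshev/second-moment argument (Lemma~\ref{lem:cheb_cor}) applied to a count of witnesses of non-list-decodability, with the two governing quantities being the normalized cap area $q=\mu\approx 2^{-nC}$ and the product $M\mu\approx 2^{-n\delta}$ — but your counting variable is genuinely different. The paper discretizes the attack points by a $\sqrt{n\eps}$-net $\cY$ on $\cS^{n-1}(0,\sqrt{n(P-N)})$ and counts pairs (net point, $L$-subset of codewords landing in the ball around it); the $|\cY|^2$ factor cancels in $\var{W}/\expt{W}^2$, and the only overlap structure to analyze is the number $\ell$ of shared codewords between two $L$-subsets (plus the event $\cE_1$ that both centers fall near a shared codeword). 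You instead dispense with the net by using codewords themselves as cap centers: your optimal attack $\vy=\sqrt{1-N/P}\,\vx_0$ lies exactly on the paper's net sphere of radius $\sqrt{n(P-N)}$, and your half-angle $\phi^*$ with $\sin\phi^*=\sqrt{N/P}$ reproduces the paper's cap, so your $q$ is the paper's $\mu$. What your route buys is the elimination of the net and a direct, self-contained witness of failure; what it costs is a richer overlap taxonomy in the variance, since pivots and non-pivots can be cross-identified across the two tuples. Your representative computations are right (in particular $\expt{p^b}\le q^{b-1}\expt{p}=q^{b+1}$ and the dominant shared-pivot term $1/(M(M q)^L)$), though note that the cross-incidence patterns (a pivot of one tuple reappearing as a non-pivot of the other) do not literally have the form $q^{2(L-b)}\expt{p^b}$ you wrote for them and need their own short bounds obtained by conditioning on both pivots first; since $L=O(1)$ there are $O(1)$ such patterns and each codeword identification costs a factor $M$ in the pair count while saving at most one factor of $q$ in the probability, so they are all suppressed as you claim. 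Both routes land on essentially the same threshold: the paper's final bound degrades as $2^{n(L\delta-C)}$ and yours as $2^{n((L+1)\delta-C)}$, so both implicitly need $\delta$ somewhat smaller than $C-c'$, a restriction the proposition's statement glosses over in either case.
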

\begin{proof}
	The proof follows a  second-moment method as in Guruswami and Narayanan~\cite{guruswami2013combinatorial} for binary codes. 
	
	Choose a $\sqrt{n\eps}$-net $\cY$ for $\cS^{n-1}\paren{0,\sqrt{n\paren{P-N}}}$ for some constant $ \eps>0 $. In other words, $ \cY\subset \cS^{n-1}\paren{0,\sqrt{n\paren{P-N}}} $ and for all $ \vy\in \cS^{n-1}\paren{0,\sqrt{n\paren{P-N}}} $, we have $ \min_{\vu\in \cY}\Vert\vy-\vu\Vert \leq \sqrt{n\epsilon} $.
	
	For any spherical code $ \cC $, define
	\begin{equation}
	    W\coloneqq\sum_{\vy\in\cY}\sum_{\curbrkt{m_1,\cdots,m_L}\in\binom{\cM}{L}}\indicator{\psi\paren{m_1},\cdots,\psi\paren{m_L}\in\cB^n\paren{\vy,\sqrt{nN}}},
	    \label{eqn:w_spherical}
	\end{equation}
	where $\cM\coloneqq\curbrkt{0,1,\cdots,2^{nR}-1}$ is the set of messages and $\psi\paren{m}$ denotes the codeword corresponding to $m$. Let $M\coloneqq\card{\cM}=2^{nR}$. Clearly, $ W=0 $ if and only if (iff) $\cC$ is $ (P,N,L-1) $-list decodable. 
	\begin{align}
	\prob{\cC\text{ is }(P,N,L-1)\text{-list decodable}}&=\prob{\bigcap_{\vy\in\cB^n\paren{0,\sqrt{nP}+\sqrt{nN}}\setminus\cB^n\paren{0,\sqrt{nP}-\sqrt{nN}}}\curbrkt{\card{\cC\cap\cB^n\paren{\vy,\sqrt{nN}}}<L}}\notag\\
	&\le\prob{\bigcap_{\vy\in\cY}\curbrkt{\card{\cC\cap\cB^n\paren{\vy,\sqrt{nN}}}<L}}\notag\\
	&=\prob{W=0}\notag\\
	&\le\var{W}/\expt{W}^2,\label{eq:p_ld_bound1}
	\end{align}
	where the last inequality~\eqref{eq:p_ld_bound1} follows from Lemma~\ref{lem:cheb_cor}.
	Let 
	\[\mu\coloneqq\frac{\area\paren{\C^{n-1}\paren{\sqrt{nN}}}}{\area\paren{\cS^{n-1}\paren{\sqrt{nP}}}}, \quad 
	\nu \coloneqq \frac{\card{\cY\cap\C^{n-1}(\sqrt{nN(P-N)/P})}}{|\cY|}.\]
	Then, we show that 
	\begin{align}
	\bE[W] \geq \paren{{M}/{L}}^L |\cY|\mu^L,
	\label{eqn:expt-toshow}
	\end{align}
	and
	\begin{align}
	\var{W} \leq |\cY|^2L\nu^2M^L\mu^{L-1}.
	\label{eqn:var-toshow}
	\end{align}
	See Appendix~\ref{sec:lbound_EW} and~\ref{sec:ubound_varW} for the details.
	Plugging these in Eqn.~\eqref{eq:p_ld_bound1}, we get
	\begin{align}
	\prob{\cC\text{ is }(P,N,L-1)\text{-list decodable}}&\leq L^{2L+1}\nu^2\mu^{-L-1}M^{-L}.\label{eqn:bound}
	\end{align}
	We need an upper bound on $ \nu $ which is given by Eqn.~\eqref{eqn:nu-bound}. 
	Let $ c_2 \coloneqq \frac{3\sqrt{P}}{2(\sqrt{N(P-N)/P} + 3\sqrt{\eps}/2)} $. 
	Eqn.~\eqref{eqn:nu-bound} implies
	\begin{align}
	\nu &\le c_2\paren{\frac{\sqrt{N(P-N)/P} + 3\sqrt{\eps}/2}{\sqrt{P-N}}}^{n} \notag \\
	&= c_22^{n\log\paren{\sqrt{N/P} + \frac{3\sqrt{\eps}}{2\sqrt{P-N}}}} \notag \\
	&= c_22^{-\frac{n}{2}\log{\frac{P}{N}} + n\log\paren{1 + \frac{3\sqrt{\eps}\sqrt{P/N}}{2\sqrt{P-N}}}}  \notag \\
	&\le c_22^{-n\paren{\frac{1}{2}\log\frac{P}{N} - \eps'}}, \notag 
	\end{align}
	where the last inequality follows since $ \log(1+x)\le2x $ for $ x\ge0 $ and $ \eps'\coloneqq 3\sqrt{\frac{P\eps}{N(P-N)}} $.

	We also need a lower bound on $ \mu $: 
	\begin{align}
	\mu&\ge \frac{\vol\paren{\cB^{n-1}\paren{0,\sqrt{nN}}}}{\area\paren{\cS^{n-1}\paren{0,\sqrt{nP}}}}\notag\\
	&=c_3 2^{-n\left(\frac{1}{2}\log\frac{P}{N}+o(1)\right)},\label{eqn:mu_lb}
	\end{align}
	for some constant $ c_3>0 $. 
	The probability~\eqref{eqn:bound} we want to upper bound is at most
	\begin{align}
	\prob{\cC\text{ is }(P,N,L-1)\text{-list decodable}}&\leq L^{2L+1}c_2^2c_3^{-L-1} 2^{-2n\paren{\frac{1}{2}\log\frac{P}{N} - \eps'}} 2^{-n\paren{-L-1}\paren{\frac{1}{2}\log\frac{P}{N}+o(1)}-nRL} \notag \\ 
	&=L^{2L+1}c_2^2c_3^{-L-1}2^{n\paren{\delta L-\frac{1}{2}\log\frac{P}{N}+2\eps'+o(1)}} \notag \\
	&= L^{2L+1}c_2^2c_3^{-L-1} 2^{n\paren{\delta L - \frac{1}{2}\log\frac{P}{N} + \delta + o(1)}}. \notag 
	\end{align}
	In the last equation, we set $ \eps' = \delta/2 $, i.e., $ \eps = \frac{N(P-N)}{P}\paren{\frac{\delta}{6}}^2 $. 
	The constant-in-$n$ terms downstairs and the $o(n)$ term in the exponent are not important. The probability that $\cC$ is list decodable vanishes in $n$ when $L<\frac{\frac{1}{2}\log\frac{P}{N}}{\delta} - 1$. That is to say, for a uniformly random spherical code to be $\paren{P,N,L-1}$-list decodable with high probability, $L$ has to be at least $C/\delta - 1$, where $C=\frac{1}{2}\log\frac{P}{N}$.	
\end{proof}

We would like to emphasize that the above result only implies that a typical random code is not $ (P,N,c'/\delta-1) $-list decodable \emph{with high probability}. This does not claim the \emph{non-existence} of $ (P,N,c'/\delta-1) $-list decodable codes of rate $ C-\delta $.

\section{List decodability of nested Construction-A lattice codes}\label{sec:ld_constr_a}

\subsection{Nested lattice codes}

Recall that a lattice  $ \Lf $ is a discrete subgroup of $ \bR^n $, and can be written as $\bfG\bZ^n$ where $\bfG$ is called a generator matrix of $\Lf$. For a quick introduction to lattices and related definitions, see Appendix~\ref{sec:primer_lattices}. The concepts we use are quite standard in the literature on lattices~\cite{erez2005lattices,zamir2014latticebook}. 

We say that a lattice $ \Lc $ is nested in $ \Lf $ if $ \Lc\subsetneq \Lf $. Let $ Q_{\Lf}(\vx) $ denote the closest point in $ \Lf $ to $ \vx $, and $ [\vx]\bmod\Lf\coloneqq \vx-Q_{\Lf}(\vx) $. Let $ \cV(\Lc) \coloneqq Q_{\Lf}^{-1}(0) $ denote the fundamental Voronoi region of $ \Lc $. Further, let $ \rcov(\Lf) ,\reff(\Lf),\rpack(\Lf)$ respectively denote the covering, effective, and packing radii of $ \Lf $. The determinant (or covolume) of $ \Lf $ is equal to the volume of $\cV(\Lf)  $ and is denoted by $ \det \Lf $.

Our goal is to construct good nested lattice pairs $ (\Lf,\Lc) $ with $ \Lc\subset \Lf $, and our nested lattice code will be defined as $ \cC \coloneqq \Lf\cap\cV(\Lc) $. The nested lattice code satisfies the power constraint if $ \rcov(\Lc)\leq \sqrt{nP} $. 

We now prove an upper bound on the list size for nested lattice codes.
Our goal is to show the following:
\begin{theorem}
	Let $0<\delta<0.9$ and $P>N$.
	There exist nested lattice codebooks of rate $\frac{1}{2}\log_2\frac{P}{N}-\delta$ that are $(P,N,2^{\cO(\frac{1}{\delta}\log_2^2\frac{1}{\delta})})$-list decodable.
	\label{thm:nestedconstructiona_listsize}
\end{theorem}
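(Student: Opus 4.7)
The plan is to construct the nested pair $\Lc \subset \Lf$ by Construction-A and then bound the list size by a high-moment method on the fine lattice, combined with a net argument over ball centers. Fix a prime $p$ (to be optimized as a polynomial function of $1/\delta$) and a scaling $\alpha > 0$, and draw nested random linear codes $C_c \subset C_f \subset \bF_p^n$ of appropriate dimensions $k_c < k_f$. Setting $\Lf = \alpha(p\bZ^n + \phi^{-1}(C_f))$ and $\Lc = \alpha(p\bZ^n + \phi^{-1}(C_c))$, where $\phi : \bZ^n \to \bF_p^n$ denotes reduction mod $p$, the parameters can be tuned so that $\reff(\Lc) = \sqrt{nP}$ and $\reff(\Lf) = \sqrt{nN}\, 2^\delta$; then $\cC := \Lf \cap \cV(\Lc)$ has rate $\log(\reff(\Lc)/\reff(\Lf)) = \frac{1}{2}\log(P/N) - \delta$. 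Standard covering-goodness results for random Construction-A lattices (Loeliger, Krithivasan--Pradhan, Ordentlich--Erez) then give $\rcov(\Lc) \le (1+o(1))\sqrt{nP}$ with probability $1 - 2^{-\Omega(n)}$, so the power constraint $\|\vx\| \le \sqrt{nP}$ holds for every codeword.

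Since $\cC \subset \Lf$, the list size at any center $\vy$ is at most $|\Lf \cap \cB_2^n(\vy, \sqrt{nN})|$. I would take an $\sqrt{n\eps}$-net $\cY$ of $\cB_2^n(0, \sqrt{n(P+N)})$ of size $|\cY| = 2^{O(n)}$; every ball of radius $\sqrt{nN}$ with center in this region is contained in $\cB_2^n(\vy', \sqrt{nN} + \sqrt{n\eps})$ for some $\vy' \in \cY$. It therefore suffices to bound $M(\vy') := |\Lf \cap \cB_2^n(\vy', \sqrt{nN} + \sqrt{n\eps})| \le L$ for every $\vy' \in \cY$ with probability $1 - 2^{-\Omega(n)}$, and then union bound. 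For each fixed $\vy'$, apply the Markov-type bound $\Pr[M(\vy') \ge L] \le \bE[M(\vy')^L]/L^L$. The $L$-th moment expands as a sum over ordered $L$-tuples $(v_1, \ldots, v_L) \in \bZ^n$ of integer points in the appropriately dilated ball, weighted by $\Pr_{C_f}[\phi(v_1), \ldots, \phi(v_L) \in C_f]$; this probability depends only on the $\bF_p$-rank $d$ of $\{\phi(v_i)\}$ and equals approximately $p^{-d(n-k_f)}$.

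The main technical obstacle is bounding the contribution of low-rank tuples. The generic case $d = L$ gives a contribution of order $2^{-nL\delta}$ by the usual volume-to-determinant matching, which is harmless. Lower-rank tuples, however, correspond to integer points clustering in rational affine subspaces modulo $p$, and their combinatorics depend delicately on the relative sizes of the dilated ball, $p$, and those subspaces. Grouping terms by rank and estimating the integer-point count in each affine coset should give a moment bound roughly of the form $\bE[M(\vy')^L] \lesssim 2^{-nL\delta}\, p^{O(L)}$. Combined with the net size $2^{O(n)}$, this forces a condition of the shape $L\delta \gtrsim \log p$. Optimizing $p$ is a two-sided trade-off: taking $p$ too small lets low-rank tuples blow up the moment bound, while $p$ too large inflates $\log p$. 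Balancing at $p = \mathrm{poly}(1/\delta)$ produces $L = 2^{O(\delta^{-1} \log^2(1/\delta))}$, where the two factors of $\log(1/\delta)$ arise from $\log p = O(\log(1/\delta))$ compounded with the $1/\delta$ exponent set by the moment-Markov threshold.
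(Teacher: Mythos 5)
Your proposal takes a genuinely different route from the paper's, and the route you chose has an unresolved technical gap that you yourself flag.

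The paper does not use an $L$-th moment / Markov argument. Instead, it uses a sharper and more elementary combinatorial observation: among any $L+1$ \emph{distinct} messages in $\bF_q^\kappa$, there always exist $\ell = \log_q(L+1)$ that are $\bF_q$-linearly independent, and for such a linearly independent set the corresponding codewords $\psi(m_1),\dots,\psi(m_\ell)$ are \emph{statistically independent} under the random choice of the Construction-A generator. This turns ``$L+1$ codewords in a ball'' into ``$\ell$ independent codewords in a ball'' at the cost of a $\binom{2^{nR}}{\ell}$ union-bound factor, and then a single-point probability estimate (your rank-$1$ case, via the Ordentlich--Erez-type point count of Lemma~\ref{lemma:count_latticepoints}) suffices. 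This completely sidesteps the rank stratification of $L$-tuples, which is precisely the step you identify as the ``main technical obstacle'' and leave at the level of ``should give a moment bound roughly of the form $\bE[M^L]\lesssim 2^{-nL\delta}p^{\cO(L)}$.'' That claim is not established, and it is not obviously true: the low-rank contributions depend on how many integer points of the dilated ball land in a given coset of a rational subspace mod $p$, and controlling this uniformly over subspaces and over the net of centers is a substantial piece of work that your sketch does not supply. Your own final balancing is also slightly off-narrative --- the condition ``$L\delta\gtrsim\log p$'' would give a polynomial $L$, not the exponential one you (correctly) end up claiming; in the paper's argument the exponential arises because it is $\ell=\log_q(L+1)$ that must be $\Theta\bigl(\tfrac{1}{\delta}\log\tfrac{1}{\delta}\bigr)$, whence $L\approx q^{\ell}=2^{\Theta(\delta^{-1}\log^2(1/\delta))}$.

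A second, smaller structural difference: the paper fixes a single \emph{deterministic} coarse lattice $\Lc$ that is simultaneously covering-good ($\rcov/\reff\le 2^{\delta/8}$) and packing-good ($\rpack/\reff>1/4$, guaranteed by \cite{erez2005lattices}), scaled so $\rcov(\Lc)=\sqrt{nP}$, and only the \emph{fine} lattice is random (over a suitably chosen prime $q=\Theta(1/\delta)$). The fixed coarse lattice makes the net/quantization step clean: one union-bounds over the deterministic set $\tfrac{1}{\alpha}\Lc\cap 4\cV(\Lc)$ of cardinality $(4\alpha)^n$. Your proposal draws both the coarse and fine codes at random; this is workable (you cite the right covering-goodness results), but it couples the randomness of the power constraint with the randomness of the list-size event and makes the union bound over centers depend on a random object, an additional complication you would need to address. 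In short: the end bound $2^{\cO(\delta^{-1}\log^2(1/\delta))}$ you target is correct, and your rank-of-codewords probability formula is the right primitive, but you are missing the ``pull out $\log_q(L+1)$ linearly independent messages'' reduction that lets the paper avoid the low-rank moment analysis entirely, and as written your argument does not close.
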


\subsection{List size upper bound for nested Construction-A lattice codes}\label{sec:nestedlatticeensemble}

We start with a (full rank) coarse lattice $\Lc$ that satisfies
\begin{equation}
\frac{\rcov(\Lc)}{\reff(\Lc)}\leq 2^{\delta/8}
\label{eq:rcov_condition}
\end{equation}
and
\begin{equation}
\frac{\rpack(\Lc)}{\reff(\Lc)} > \frac{1}{4}.
\label{eq:rpack_condition}
\end{equation}
Such lattices are guaranteed to exist (for sufficiently large $n$) by~\cite{erez2005lattices} (See Appendix~\ref{sec:primer_lattices}). 
The lattice is suitably scaled so that $\rcov(\Lc) = \sqrt{nP}$ and this will ensure that the codebook satisfies the power constraint.
Note that scaling the lattice by a constant factor scales $\rpack,\reff$ and $\rcov$ by the same amount, and the ratios in Eqn.~\eqref{eq:rcov_condition} and~\eqref{eq:rpack_condition} remain unchanged.
Let $\bfG_{\Lc}$ be a generator matrix for $\Lc$, and $q$ be the smallest prime number 
that satisfies
\begin{equation}
1+\frac{\sqrt{P}}{q\sqrt{N}} \leq 2^{\delta/8}.
\label{eq:prime_condition}
\end{equation}
Note that $q$ is independent of $n$ and is of order $ q=\Omega(1/\delta) $.
Bertrand's postulate guarantees that for every positive integer $m$, there exists a prime number between $m$ and $2m$.
Therefore, 
\begin{equation}
\frac{\sqrt{P/N}}{2^{\delta/8}-1} \leq q\leq 2\frac{\sqrt{P/N}}{2^{\delta/8}-1} +2.
\label{eq:primecondition_2}
\end{equation}
Let $R = \frac{1}{2}\log_2\frac{P}{N}-\delta$, and $\kappa$ be an integer such that\footnote{More accurately, $\kappa$ is the integer closest to $nR/\log_2q$. But we assume that $\kappa$ as defined above is an integer so that our proofs are cleaner.} 
\begin{align}
\frac{\kappa}{n}\log_2 q = R. \label{eqn:cond_kappa}
\end{align}

We define an ensemble of fine lattices as follows: Choose an $n\times \kappa$ generator matrix $\bfG_\lin$ uniformly over $\bFq^{n\times \kappa}$. This defines a linear code $\cC(\bfG_\lin)=\bfG_\lin\bFq^\kappa $ where the arithmetics are over $\bFq$.
Let $\Lf'\coloneqq \frac{1}{q}\Phi(\cC(\bfG_\lin)) + \bZ^n$, where $ \Phi $ is the natural embedding of $ \bFq^n $ into $ \bR^n $ and the arithmetics are over $\bR$. In other words, $ \Phi $ operates componentwise on vectors, and maps $ 0,1,\ldots,q-1\in\bF_q $ to $ 0,1,\ldots,q-1\in\bR $. 
Note that $ \bZ^n\subset\Lf'\subset q^{-1}\bZ^n $. 
Our fine lattice is $\Lf \coloneqq \bfG_{\Lc}\Lf'$. It is easy to verify that $\Lc$ is always a sublattice of $\Lf$.
In fact, $ \Lc\subset\Lf\subset q^{-1}\Lc $ forms a chain of nested lattices. 
Our nested lattice codebook is then $\cC\coloneqq\Lf\cap\cV(\Lc)$.

We will show the following result, which implies Theorem~\ref{thm:nestedconstructiona_listsize}.
\begin{theorem}
	If $P>N$, then
	\[
	\Pr[\Lf\cap \cV(\Lc) \text{ is not }(P,N,2^{\cO((\log_2^2\delta)/\delta)})\text{-list decodable}] = 2^{-\Omega(n)}.
	\]
	\label{thm:nestedlattice_listdecoding}
\end{theorem}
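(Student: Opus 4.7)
I would establish the stronger statement that with high probability over $\bfG_\lin$, the \emph{entire} fine lattice $\Lf$ satisfies $\max_{\vy\in\bR^n}|\Lf\cap\cB(\vy,\sqrt{nN})|\le L$; since $\cC\subseteq\Lf$, this immediately controls the list size of the nested code.

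The first ingredient is a reduction from a continuum of centers to a discrete set. Because $\Lf$ is a lattice, the count $|\Lf\cap\cB(\vy,\sqrt{nN})|$ depends only on $\vy\bmod\Lf$, so one can restrict to $\vy\in\cV(\Lf)$ and then cover $\cV(\Lf)$ by a $\tau$-net $\cY$ with $\tau$ a suitable sub-multiple of $\sqrt{nN}$. Any ball $\cB(\vy,\sqrt{nN})$ is then contained in the slightly enlarged ball $\cB(\vy',\sqrt{nN}+\tau)$ centred at some $\vy'\in\cY$, and a volumetric estimate gives $|\cY|\le\paren{3\reff(\Lf)/\tau}^n$.

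The second ingredient is the expected number of fine lattice points in a fixed ball. Writing $\Lf=\bigcup_{\vm\in\bFq^\kappa}(\Lc+\vu(\vm))$ with $\vu(\vm)\coloneqq\bfG_{\Lc}[(1/q)\Phi(\bfG_\lin\vm)]\bmod\Lc$, the key random-coding fact is that for every nonzero $\vm$, $\vu(\vm)$ is distributed uniformly over the fine grid $\bfG_{\Lc}\cdot(1/q)\bZ^n\bmod\Lc$ inside $\cV(\Lc)$, and for distinct linearly independent $\vm,\vm'$ the pair $(\vu(\vm),\vu(\vm'))$ is approximately uniform on the product. Under the packing condition $\sqrt{nN}+\tau\le\rpack(\Lc)$ (which follows from (\ref{eq:rpack_condition}) and the choice of $q$ in (\ref{eq:prime_condition}) in the high-SNR regime; the low-SNR regime is handled by refining the per-coset bound using covering-goodness of $\Lc$), each coset contributes at most one point to the enlarged ball and does so with probability at most $\vol(\cB(0,\sqrt{nN}+\tau))/\det\Lc$. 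Summing over $\vm$ and using $q^\kappa=2^{nR}$, $R=\tfrac{1}{2}\log_2(P/N)-\delta$, and the lower bound $\reff(\Lc)\ge\sqrt{nP}/2^{\delta/8}$ from (\ref{eq:rcov_condition}) yields
\[ \bE\bigl[|\Lf\cap\cB(\vy',\sqrt{nN}+\tau)|\bigr] \le 1 + 2^{n\paren{\log_2(1+\tau/\sqrt{nN})-7\delta/8}}. \]

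The final step is an $L$-th moment tail bound at each net point combined with a union bound over $\cY$. Exploiting the approximate independence of $\vu(\vm),\vu(\vm')$ for linearly independent $\vm,\vm'$, a rank-stratified moment computation gives $\Pr[|\Lf\cap\cB(\vy',\sqrt{nN}+\tau)|>L]\le(e\bE/L)^L$ up to contributions from dependent tuples of rank $K<L$. Optimising $\tau$ then requires $\log_2(1+\tau/\sqrt{nN})$ to be strictly below $7\delta/8$ so that each per-point tail decays exponentially in $n$, while keeping $\log_2(\reff(\Lf)/\tau)$ small enough that the $|\cY|$ factor in the union bound is overwhelmed. \emph{This balancing is the main obstacle}: a Markov-only analysis already forces $L$ to be exponential in $n$, so the $L$-th moment estimate is essential, and the corrections from low-rank tuples of $\vm$'s add a factor of order $\log q=\Theta(\log(1/\delta))$ in the exponent. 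Choosing $\tau/\sqrt{nN}=\Theta(\delta/\log(1/\delta))$ balances all three pieces and yields the claimed list size $L=2^{\cO\paren{\log_2^2(1/\delta)/\delta}}$, with failure probability $2^{-\Omega(n)}$.
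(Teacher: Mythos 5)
Your overall architecture matches the paper's (reduce to a discrete net of centers, bound the single-codeword probability via a lattice-point-counting lemma, exploit the limited independence of the Construction-A map, tune $\delta$-dependent parameters), and you correctly identify that the $\log^2(1/\delta)$ in the exponent comes from the $\Theta(\log(1/\delta))$ size of $\log_2 q$. But the crucial final step is mis-framed in a way that leaves a real gap.

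You write that the ``$L$-th moment estimate is essential'' and that low-rank tuples merely ``add a factor of order $\log q$.'' This does not go through. For $X\coloneqq|\Lf\cap\cB(\vy',r)|$, the moment $\bE[X^L]=\sum_{(m_1,\ldots,m_L)}\Pr[\psi(m_1),\ldots,\psi(m_L)\in\cB]$ is in fact \emph{dominated} by low-rank tuples: a tuple of rank $K$ contributes $\approx p^K$ (not $p^L$), and the number of rank-$K$ $L$-tuples is $\approx M^K q^{K(L-K)}$, so the $K=1$ term alone is $\approx M q^{L-1}p$, which is not a small correction and makes $\bE[X^L]/L^L$ useless for constant $L$. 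The device that actually closes the argument in the paper is a combinatorial extraction: \emph{any $L+1$ distinct vectors in $\bFq^\kappa$ contain a subset of $\ell\coloneqq\log_q(L+1)$ linearly independent vectors.} So $\{X>L\}$ forces some $\ell$-tuple of \emph{linearly independent} messages to all map into $\cB$; the corresponding $\psi(m_1),\ldots,\psi(m_\ell)$ are exactly (not approximately) independent, so a first-moment union bound over the at most $M^\ell=2^{nR\ell}$ such tuples gives per-center failure probability $2^{n\ell(R-\log_2(\sqrt{nP}/r)+\ldots)}$. Choosing $\ell\gtrsim\log_2\alpha/\delta$ kills the $|\cY|=(4\alpha)^n$ factor, and then $L=q^\ell-1$ picks up the $\log_2 q=\Theta(\log(1/\delta))$ factor, yielding $L=2^{\Theta(\log^2(1/\delta)/\delta)}$. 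Without the extraction lemma, neither a Chernoff-type bound nor a ``rank-stratified $L$-th moment'' with hand-waved corrections gets you there; you should replace that step with the union bound over $\ell$-tuples of linearly independent messages.

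A secondary point: by targeting the entire infinite lattice $\Lf$ rather than the finite nested code $\Lf\cap\cV(\Lc)$ directly (which is the statement of this theorem; the infinite-lattice version is derived afterwards via Lemma~\ref{lemma:nestedld_implies_ldlattice}), you need the per-coset ``at most one point'' packing condition, which only holds at high SNR, and you punt on the low-SNR case. The paper avoids this entirely by working with the single coset representative $\psi(m)\in\cV(\Lc)$ and bounding $\Pr[\psi(m)\in\cB(\vy,r)]$ directly through Lemma~\ref{lemma:count_latticepoints}, with no packing hypothesis on $\Lc$ relative to $\sqrt{nN}$.
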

Note that the only randomness involved is in the choice of the generator matrix $\bfG_\lin$ that is used to construct the fine lattice $\Lf$.

We now discuss some intermediate lemmas which will be used to prove Theorem~\ref{thm:nestedconstructiona_listsize}. The formal proofs will be given in Sec.~\ref{sec:prf_constructiona_list}.

Fix any $\vy\in\bR^n$.
Fundamental to the proof is counting the number of lattice points within a ball of radius $r$ around $\vy$.
We will need bounds on $\card{\frac{1}{q}\Lc\cap \cB(\vy,r)}$. We can write it as $\card{\curbrkt{ \vx\in\bZ^n: \normtwo{ \vy-\frac{1}{q}\bfG_{\Lc}\vx }\leq r }}$.
A simple argument generalizing~\cite[Lemma 1]{ordentlich2016simple} can be used to show that this is upper (resp. lower) bounded by a the volume ratio between the ball (whose radius is lengthened (resp. shortened) by the covering radius of $q^{-1}\Lc $) and the fundamental Voronoi region of $q^{-1}\Lc$.
This can be formally stated as follows:
\begin{lemma}
	Let $V_n$ denote the volume of the unit ball in $\bR^n$, and $\Lc$ be a full-rank lattice in $\bR^n$. Then, for any $r>\rcov(\Lc)/q=\rcov(q^{-1}\Lc)$ and $\vy\in\bR^n$, we have
	\[
	\frac{q^n V_n}{\vol(\cV(\Lc))}\left( r-\frac{\rcov(\Lc)}{q} \right)^n\leq \left|\frac{1}{q}\Lc\cap \cB(\vy,r)\right| \leq \frac{q^n V_n}{\vol(\cV(\Lc))}\left( r+\frac{\rcov(\Lc)}{q} \right)^n.
	\]
	\label{lemma:count_latticepoints}
\end{lemma}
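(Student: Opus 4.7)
The plan is to carry out a standard Voronoi-cell sandwiching argument for the scaled lattice $\Lambda' := \frac{1}{q}\Lc$. First I would record the two basic facts about scaling: $\cV(\Lambda') = \frac{1}{q}\cV(\Lc)$, so $\vol(\cV(\Lambda')) = \vol(\cV(\Lc))/q^n$, and $\rcov(\Lambda') = \rcov(\Lc)/q$. In particular, every translate $\vlambda + \cV(\Lambda')$ with $\vlambda \in \Lambda'$ is contained in the closed ball $\cB(\vlambda, \rcov(\Lc)/q)$, and the collection $\{\vlambda + \cV(\Lambda')\}_{\vlambda \in \Lambda'}$ tiles $\bR^n$ up to measure zero.

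For the upper bound, I would observe that if $\vlambda \in \Lambda' \cap \cB(\vy,r)$, then the Voronoi cell $\vlambda + \cV(\Lambda')$ lies inside $\cB(\vy, r + \rcov(\Lc)/q)$ by the triangle inequality. Since the cells are essentially disjoint and each has volume $\vol(\cV(\Lc))/q^n$, summing volumes yields
\[
\bigl|\Lambda' \cap \cB(\vy,r)\bigr| \cdot \frac{\vol(\cV(\Lc))}{q^n} \;\le\; V_n\!\left(r + \frac{\rcov(\Lc)}{q}\right)^{n},
\]
which rearranges to the claimed upper bound.

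For the lower bound, I would go the other direction: I claim that $\cB(\vy, r - \rcov(\Lc)/q) \subseteq \bigcup_{\vlambda \in \Lambda' \cap \cB(\vy,r)} \bigl(\vlambda + \cV(\Lambda')\bigr)$. Indeed, any point $\vz$ in this smaller ball lies in some Voronoi cell $\vlambda + \cV(\Lambda')$, and then $\|\vlambda - \vz\| \le \rcov(\Lc)/q$, so $\|\vlambda - \vy\| \le \|\vlambda - \vz\| + \|\vz - \vy\| \le r$, putting $\vlambda$ in $\Lambda' \cap \cB(\vy,r)$ (this is where the hypothesis $r > \rcov(\Lc)/q$ is needed so the smaller ball is nondegenerate). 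Comparing volumes on both sides gives
\[
\bigl|\Lambda' \cap \cB(\vy,r)\bigr| \cdot \frac{\vol(\cV(\Lc))}{q^n} \;\ge\; V_n\!\left(r - \frac{\rcov(\Lc)}{q}\right)^{n},
\]
which yields the lower bound after rearrangement.

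There is no real obstacle here; the only points to be careful about are (i) invoking the tiling property of Voronoi cells correctly (the boundary overlaps have measure zero and do not affect volume sums), and (ii) verifying the direction of the triangle inequality in each of the two containments. The argument is essentially the one used in Ordentlich--Erez \cite{ordentlich2016simple} and generalizes verbatim by replacing $\Lc$ with $\frac{1}{q}\Lc$.
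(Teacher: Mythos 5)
Your proof is correct and is exactly the argument the paper intends: the paper does not write out a proof of this lemma at all, merely noting that it follows from "a simple argument generalizing [Ordentlich--Erez, Lemma 1]," and your Voronoi-cell sandwiching (enlarged ball contains the cells of points in $\cB(\vy,r)$; shrunk ball is covered by them) is precisely that argument applied to the scaled lattice $\frac{1}{q}\Lc$. Both containments and the scaling identities $\vol(\cV(\frac{1}{q}\Lc)) = \vol(\cV(\Lc))/q^n$ and $\rcov(\frac{1}{q}\Lc) = \rcov(\Lc)/q$ check out.
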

Observe that there is a bijection between $ \bFq^\kappa $ and $ \Lf\cap\cV(\Lc) $.
The encoder maps $m\in \bFq^\kappa$ to a nested lattice codeword (with slight abuse of notation\footnote{Here, we use the natural embedding of $\bFq$ in $\bZ$,  $\Phi$, to identify elements in $\bfG_\lin$ with the corresponding values in $\bZ$. To be rigorous, we should have written $ \Phi(\bfG_\lin) $ instead of $\bfG_\lin $.}) 
\begin{align}
\psi(m)\coloneqq \left[ \frac{1}{q}\bfG_{\Lc}\paren{[\bfG_\lin m]\bmod(q\bZ^n)} \right]\bmod\Lc, \notag
\end{align}
where all arithmetics are over $\bR$.

\begin{lemma}
	Fix $m\in\bFq^\kappa\backslash\{ {0}\}$ and $\vy\in\bR^n$. We have
	\begin{equation}
	\Pr[\psi(m)\in \cB(\vy,r)] \leq \left(\frac{r}{\sqrt{nP}} 2^{\delta/8}\left( 1+\frac{\sqrt{nP}}{qr} \right)\right)^n. \label{eq:prf1}
	\end{equation}
\end{lemma}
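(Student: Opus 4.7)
The only randomness is in $\bfG_\lin\sim\cU(\bFq^{n\times\kappa})$, so the plan is to first identify the distribution of $\psi(m)$, then invoke the lattice point count from Lemma~\ref{lemma:count_latticepoints}, and finally plug in the geometric constraints on $\Lc$.

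\emph{Step 1: Distribution of $\bfG_\lin m \bmod q$.} Fix $m\in\bFq^\kappa\setminus\{0\}$ and pick an index $j$ with $m_j\ne 0$. Since $q$ is prime, $m_j$ is a unit in $\bFq$. Writing $\bfG_\lin m=\sum_i m_i\vg_i$ where $\vg_i$ is the $i$th column of $\bfG_\lin$, and conditioning on $\{\vg_i\}_{i\ne j}$, the map $\vg_j\mapsto m_j\vg_j$ is a bijection on $\bFq^n$, so $\bfG_\lin m \bmod q$ is uniformly distributed on $\bFq^n$. Hence $\psi(m)=\bigl[\tfrac{1}{q}\bfG_{\Lc}\vu\bigr]\bmod\Lc$ with $\vu\sim\cU(\bFq^n)$ is uniformly distributed over the $q^n$ coset representatives of $\Lc$ in $\tfrac{1}{q}\Lc$ that lie in $\cV(\Lc)$.

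\emph{Step 2: Reduce to a lattice point count.} Since each of the $q^n$ cosets of $\Lc$ in $\tfrac{1}{q}\Lc$ has exactly one representative in $\cV(\Lc)$,
\[
\Pr[\psi(m)\in\cB(\vy,r)] \;=\; \frac{\bigl|\tfrac{1}{q}\Lc\cap\cV(\Lc)\cap\cB(\vy,r)\bigr|}{q^n}\;\le\;\frac{\bigl|\tfrac{1}{q}\Lc\cap\cB(\vy,r)\bigr|}{q^n}.
\]
Now apply the upper bound of Lemma~\ref{lemma:count_latticepoints} (the case $r\le\rcov(\Lc)/q$ is handled by noting the RHS of the desired inequality already exceeds $1$ in that range) to obtain
\[
\Pr[\psi(m)\in\cB(\vy,r)] \;\le\; \frac{V_n\bigl(r+\rcov(\Lc)/q\bigr)^n}{\vol(\cV(\Lc))}.
\]

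\emph{Step 3: Geometric substitutions.} Use $\vol(\cV(\Lc))=V_n\,\reff(\Lc)^n$ to cancel $V_n$, obtaining $\bigl(\tfrac{r}{\reff(\Lc)}(1+\tfrac{\rcov(\Lc)}{qr})\bigr)^n$. Finally, the scaling $\rcov(\Lc)=\sqrt{nP}$ together with the covering-goodness inequality~\eqref{eq:rcov_condition}, $\rcov(\Lc)/\reff(\Lc)\le 2^{\delta/8}$, gives $1/\reff(\Lc)\le 2^{\delta/8}/\sqrt{nP}$, yielding the claimed bound.

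\emph{Main obstacle.} Everything reduces to a routine count once Step~1 is in place. The only genuine content is the uniformity of $\psi(m)$ over the coset representatives; once we have that, Lemma~\ref{lemma:count_latticepoints} and the defining properties of $\Lc$ do the rest. The subtle bookkeeping point is making sure the restriction to $\cV(\Lc)$ in the numerator is discarded correctly (monotonicity, as above) rather than invoking a tiling identity, since $\cB(\vy,r)$ need not be contained in $\cV(\Lc)$.
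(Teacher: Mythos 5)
Your argument is correct and follows essentially the same route as the paper: reduce to counting points of $\tfrac{1}{q}\Lc$ in $\cB(\vy,r)$ via uniformity of $\psi(m)$ over the $q^n$ coset representatives in $\cV(\Lc)$, apply Lemma~\ref{lemma:count_latticepoints}, and substitute $\vol(\cV(\Lc))=V_n\reff(\Lc)^n$ together with the covering-goodness condition~\eqref{eq:rcov_condition}. (The paper routes through $[\cB(\vy,r)]\bmod\Lc$ where you intersect with $\cV(\Lc)$ and then drop the restriction; these are the same maneuver, and your explicit Step~1 for uniformity is only left implicit in the paper.) One small caveat: your parenthetical that the RHS exceeds $1$ whenever $r\le\rcov(\Lc)/q$ is not actually true --- for $q$ large the bound can still be well below $1$ in that range --- but this is harmless, since the upper-bound half of Lemma~\ref{lemma:count_latticepoints} is valid for all $r>0$, and in the paper's application one always has $r>\rcov(\Lc)/q$.
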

\begin{proof}
Since $ \bfG_\lin\in\curbrkt{0,1,\cdots,q-1}^{n\times\kappa} $ is a uniformly random matrix, $ [\bfG_\lin m]\bmod(q\bZ^n) $ is uniformly distributed in $ \curbrkt{0,1,\cdots,q-1}^n $. 
Consequently $ \psi(m) $ is uniformly distributed in $ q^{-1}\Lc\cap\cV(\Lc) $.
	Since the codeword $\psi(m)$ is guaranteed to be in $\cV(\Lc)$, we have
	\begin{align}
	\Pr[\psi(m)\in \cB(\vy,r)] &= \Pr\left[ \psi(m)\in [\cB(\vy,r)]\bmod\Lc \right]&\notag\\
	&= \frac{1}{q^n}\left| \frac{1}{q}\Lc\cap [\cB(\vy,r)]\bmod\Lc \right|&\notag\\
	&= \frac{1}{q^n}\left| \frac{1}{q}\Lc\cap \cB(\vy,r) \right|&\notag\\
	&\leq \frac{V_n}{\vol(\cV(\Lc))} \left( r+\frac{\rcov(\Lc)}{q} \right)^n&\notag
	\end{align}
	using Lemma~\ref{lemma:count_latticepoints}. Simplifying this, we get
	\begin{align}
	\Pr[\psi(m)\in \cB(\vy,r)] &\leq \frac{r^n}{(\reff(\Lc))^n}  \left( 1+\frac{\rcov(\Lc)}{qr} \right)^n &\notag \\
	& \leq \frac{r^n}{(\rcov(\Lc))^n} 2^{n\delta/8}\left( 1+\frac{\rcov(\Lc)}{qr} \right)^n &\notag \\
	& = \left(\frac{r}{\sqrt{nP}} 2^{\delta/8}\left( 1+\frac{\sqrt{nP}}{qr} \right)\right)^n, &\notag
	\end{align}
	where we have used Eqn.~\eqref{eq:rcov_condition} in the second step.
\end{proof}

\subsection{Proof of Theorem~\ref{thm:nestedlattice_listdecoding}}\label{sec:prf_constructiona_list}
If $ m_1,\ldots, m_\ell$ are linearly independent vectors in $\bFq^\kappa$ and $ \bfG_\lin $ is uniform, then $\psi( m_1),\ldots,\psi( m_\ell)$ are statistically independent and hence,
\begin{equation}
\Pr[\psi( m_1),\ldots,\psi( m_\ell)\in \cB(\vy,r)] = \left( \Pr[\psi(m)\in \cB(\vy,r)] \right)^\ell.
\label{eq:prf2}
\end{equation}
Every set of $L+1$ distinct vectors $ m_1,\ldots m_{L+1}$ in $\bFq^\kappa$ contains a subset of $\ell\coloneqq\log_q(L+1)$ linearly independent vectors.
\begin{align}
&\Pr[\psi( m_1),\ldots,\psi( m_{L+1})\in \cB(\vy,r)\text{ for some distinct } m_1,\ldots m_{L+1}]&\notag\\
\leq& \Pr[\psi( m_1),\ldots,\psi( m_{\ell})\in \cB(\vy,r)\text{ for some linearly independent } m_1,\ldots m_{L+1}]&\notag\\
\leq& \begin{pmatrix}2^{nR}\\ \ell\end{pmatrix} \Pr[\psi( m_1),\ldots,\psi( m_\ell)\in \cB(\vy,r)] &\label{eq:prf3}\\
\leq& 2^{nR\ell} \Pr[\psi( m_1),\ldots,\psi( m_\ell)\in \cB(\vy,r)],&\notag
\end{align}
where in Eqn.~\eqref{eq:prf3}, $ m_1,\ldots, m_\ell$ is a fixed (but arbitrary) set of linearly independent vectors in $ \bFq^\kappa $. Using Eqn.~\eqref{eq:prf1} and~\eqref{eq:prf2} in the above, we get
\begin{align}
&\Pr[\psi( m_1),\ldots,\psi( m_{L+1})\in \cB(\vy,r)\text{ for some distinct } m_1,\ldots m_{L+1}]&\notag\\
\leq& 2^{nR\ell }\left(\frac{r}{\sqrt{nP}} 2^{\delta/8}\left( 1+\frac{\sqrt{nP}}{qr} \right)\right)^{n\ell},&\notag
\end{align}
and hence,
\begin{align}
&\frac{1}{n}\log_2\Pr[\psi( m_1),\ldots,\psi( m_{L+1})\in \cB(\vy,r)\text{ for some distinct } m_1,\ldots m_{L+1}]&\notag\\
\leq&   \ell \left(R -\log_2\left(\frac{\sqrt{nP}}{r}\right) +\frac{\delta}{8}+\log_2\left( 1+\frac{\sqrt{nP}}{qr} \right)\right).&\label{eq:prf4}
\end{align}
This suggests that if $R$ and $r$ are not too large, then for any fixed but arbitrary $\vy$, the probability that there are more than $L$ lattice points within distance $r$ of $\vy$ is small. 
We want to show that this happens for every $\vy\in\bR^n$. First, observe that if $\vy\notin \cV(\Lc)+\cB(0,\sqrt{nN})$, then all codewords are at least $\sqrt{nN}$-away from $\vy$.
Therefore, it is enough to consider only those $\vy$ in $\cV(\Lc)+\cB(0,\sqrt{nN})$. A second observation is that if (for a positive integer $\alpha$) $Q(\vy)$ denotes the closest point in $\frac{1}{\alpha}\Lc$ to $\vy$,  
then 
\begin{align}
&\Pr[\psi( m_1),\ldots,\psi( m_{L+1})\in \cB(\vy,\sqrt{nN})\text{ for some distinct } m_1,\ldots m_{L+1}] &\notag\\
\leq& \Pr\left[\psi( m_1),\ldots,\psi( m_{L+1})\in \cB\left(Q(\vy),\sqrt{nN}+\frac{\rcov(\Lc)}{\alpha}\right)\text{ for some distinct } m_1,\ldots m_{L+1}\right].&\notag
\end{align}
The idea here is to quantize the $\vy$'s using $\frac{1}{\alpha}\Lc$ and then use a union bound. We want to make sure that $\alpha$ is sufficiently large, but not too large.
Specifically, $\alpha$ is the smallest integer greater than $\sqrt{P/N}/(2^{\delta/8}-1)$. Therefore, $\alpha$ satisfies
\begin{equation}
1+\frac{1}{\alpha}\sqrt{\frac{P}{N}} < 2^{\delta/8},
\label{eq:alphacondition_1}
\end{equation}
and 
\begin{equation}
\alpha < \frac{\sqrt{P/N}}{(2^{\delta/8}-1)} + 2.
\label{eq:alphacondition_2}
\end{equation}
Note that $ \alpha = \Theta(1/\delta) $.

Letting $r=\sqrt{nN}+\frac{\rcov(\Lc)}{\alpha} = \sqrt{nN}+\frac{\sqrt{nP}}{\alpha} $, we have
\begin{align}
&\Pr[\psi( m_1),\ldots,\psi( m_{L+1})\in \cB(\vy,\sqrt{nN})\text{ for some distinct } m_1,\ldots m_{L+1} \text{ and } \vy\in\bR^n] &\notag\\
=&\Pr[\psi( m_1),\ldots,\psi( m_{L+1})\in \cB(\vy,\sqrt{nN})\text{ for some distinct } m_1,\ldots m_{L+1} \text{ and } \vy\in\cV(\Lc)+\cB(0,\sqrt{nN})] &\notag\\
\leq& \Pr\left[\psi( m_1),\ldots,\psi( m_{L+1})\in \cB(\vy,r)\text{ for some distinct } m_1,\ldots m_{L+1} \text{ and } \vy\in\frac{1}{\alpha}\Lc\cap (\cV(\Lc)+\cB(0,\sqrt{nN}))\right]. &\notag
\end{align}
From Eqn.~\eqref{eq:rcov_condition}, \eqref{eq:rpack_condition} and the fact that $P>N$, we have $\sqrt{nN}\leq \sqrt{nP}=\rcov(\Lc) \le2^{\delta/8}\reff(\Lc) \leq 4\cdot2^{\delta/8}\rpack(\Lc)\le4\cdot2^{0.9/8}\rpack(\Lc)<4.4\rpack(\Lc) $. Therefore, $\cB(0,\sqrt{nN})\subset4.4\cB(0,\reff(\Lc))\subset 4.4\cV(\Lc)$.
We can therefore take a union bound over $\frac{1}{\alpha}\Lc\cap(5.4\cV(\Lc))$ which gives us\footnote{We would like to remark that we have not optimized these constants, and the bounds obtained may be loose. However, this would not change the overall result.}
\begin{align}
&\Pr[\Lf\cap\cV(\Lc)\text{ is not }(P,N,L)\text{-list decodable}] &\notag \\
=&\Pr[\psi( m_1),\ldots,\psi( m_{L+1})\in \cB(\vy,\sqrt{nN})\text{ for some distinct } m_1,\ldots m_{L+1} \text{ and } \vy\in\bR^n] &\notag\\
\leq& \Pr\left[\psi( m_1),\ldots,\psi( m_{L+1})\in \cB(\vy,r)\text{ for some distinct } m_1,\ldots m_{L+1} \text{ and } \vy\in\frac{1}{\alpha}\Lc\cap (5.4\cV(\Lc))\right]. &\notag
\end{align}
Using Eqn.~\eqref{eq:prf4} and applying the union bound over $\vy$'s, we have
\begin{align}
&\frac{1}{n}\log_2\Pr[\Lf\cap\cV(\Lc)\text{ is not }(P,N,L)\text{-list decodable}] &\notag \\
\leq& \frac{1}{n}\log_2\left| \frac{1}{\alpha}\Lc\cap (5.4\cV(\Lc))\right|  +   \ell \left(R -\log_2\left(\frac{\sqrt{nP}}{r}\right) +\frac{\delta}{8}+\log_2\left( 1+\frac{\sqrt{nP}}{qr} \right)\right).&\notag
\end{align}
Since $\left| \frac{1}{\alpha}\Lc\cap (5.4\cV(\Lc))\right|=(5.4\alpha)^n$ and $r\ge\sqrt{nN}$, we get
\begin{align}
&\frac{1}{n}\log_2\Pr[\Lf\cap\cV(\Lc)\text{ is not }(P,N,L)\text{-list decodable}] &\notag \\
\leq& \log_2(5.4\alpha)  +\ell \left(R-\log_2\left(\frac{\sqrt{nP}}{\sqrt{nN}+\sqrt{nP}/\alpha}\right) +\frac{\delta}{8}+\log_2\left( 1+\frac{\sqrt{P}}{q\sqrt{N}} \right)\right)&\notag\\
\leq& \log_2(5.4\alpha)  +\ell \left(R-\frac{1}{2}\log_2\left( \frac{P}{N} \right)+\log_2\left(1+\frac{\sqrt{P}}{\alpha\sqrt{N}}\right) +\frac{\delta}{8}+\log_2\left( 1+\frac{\sqrt{P}}{q\sqrt{N}} \right)\right)&\notag\\
\leq& \log_2(5.4\alpha) + \ell \left( -\delta +\log_2\left(1+\frac{\sqrt{P}}{\alpha\sqrt{N}}\right) +\frac{\delta}{8}+\log_2\left( 1+\frac{\sqrt{P}}{q\sqrt{N}} \right)\right).&\notag
\end{align}
Using Eqn.~\eqref{eq:alphacondition_1} and~\eqref{eq:prime_condition}, 
\begin{align}
&\frac{1}{n}\log_2\Pr[\Lf\cap\cV(\Lc)\text{ is not }(P,N,L)\text{-list decodable}] &\notag \\
\leq& \log_2(5.4\alpha) + \ell \left( -\delta +\frac{\delta}{8} +\frac{\delta}{8}+\frac{\delta}{8} \right)&\notag\\
=& \log_2(5.4\alpha) - 5\ell\frac{\delta}{8}. &\label{eq:prf5}
\end{align}
From Eqn.~\eqref{eq:prf5}, we can say that if $\ell>c_1\log_2(\alpha)/\delta$, the probability that a random lattice code is not list decodable goes to zero exponentially in $n$.
For $0<\delta<0.9$, there exist positive constants  $c_2,c_3$ (that could depend on $P,N$ but not on $\delta$) so that $c_2\delta<2^{\delta/8}-1<c_3\delta$ and using Eqn.~\eqref{eq:alphacondition_2}, we can see that $\log_2(\alpha)\leq c_4\log_2\frac{1}{\delta}$
for some positive $c_4$. Likewise, using Eqn.~\eqref{eq:primecondition_2}, we can show that there exist $c_5,c_6$ such that $ c_5\log_2\frac{1}{\delta}\leq \log_2q\leq c_6\log_2\frac{1}{\delta}$ 
for $\delta\in (0,0.9)$. This implies that we can choose $L=q^{\ell}-1=2^{\ell\log_2q}-1$ to be less than $2^{\frac{c}{\delta}\log_2^2\frac{1}{\delta}}$ for a sufficiently large constant $c$, so that
the probability that a random nested lattice code is not $(P,N,L)$-list decodable goes to zero as $2^{-\Omega(n)}$. This concludes the proof of Theorem~\ref{thm:nestedlattice_listdecoding}. \qed

\section{List decodability of infinite lattices}\label{sec:ic}

We now direct our attention to infinite constellations. Recall that an infinite constellation $ \cC $ is a countably infinite subset of $ \bR^n $.
We start by looking at infinite lattices.

\subsection{List size upper bound for infinite lattices}~\label{sec:listdecoding_infiniteconstA}
We claim that list decodability of nested Construction-A lattice codes implies a list decoding result for infinite Construction-A lattices. 
\begin{lemma}
	Let $ (\Lc,\Lf) $ be a pair of nested lattices with $ \Lc\subset\Lf $. Suppose that the nested lattice code $ \Lf\cap\cV(\Lc) $  is $ (P,N,L) $-list decodable. Then, the infinite lattice $ \Lf $ is $ (N,L) $-list decodable.
	\label{lemma:nestedld_implies_ldlattice}
\end{lemma}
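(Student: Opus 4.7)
The strategy is to exploit the $\Lc$-periodicity of $\Lf$ to reduce the infinite-lattice count $|\Lf \cap \cB(\vy, \sqrt{nN})|$ to a count on the codebook $\cC = \Lf \cap \cV(\Lc)$, which the hypothesis bounds directly. For an arbitrary $\vy \in \bR^n$ with $\Lf \cap \cB(\vy, \sqrt{nN}) \neq \emptyset$, I would pick $\vx^* \in \Lf \cap \cB(\vy, \sqrt{nN})$, set $\vc^* := Q_{\Lc}(\vx^*) \in \Lc$, and translate by $-\vc^*$. Since $\Lc \subseteq \Lf$, translation by any element of $\Lc$ is an automorphism of $\Lf$, so the count is preserved, and the translated ball $\cB(\vy - \vc^*, \sqrt{nN})$ now contains the codeword $\vx_0 := \vx^* - \vc^* = [\vx^*]\bmod\Lc \in \cC$, placing us in the setting where the hypothesis applies directly.

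Decomposing $\Lf = \bigsqcup_{\vc \in \Lc}(\cC + \vc)$, I would then write
\[ |\Lf \cap \cB(\vy - \vc^*, \sqrt{nN})| \;=\; \sum_{\vc \in \Lc} |\cC \cap \cB(\vy - \vc^* - \vc, \sqrt{nN})|, \]
where each nonzero summand is at most $L$ by the $(P, N, L)$-list decodability of $\cC$: a nonempty intersection of $\cC$ with a radius-$\sqrt{nN}$ ball automatically places the center within $\sqrt{nN}$ of some codeword, so the list-decoding bound applies. To complete the proof I would collapse this sum to a single term via a packing argument. Namely, if two $\Lf$-points in the shifted ball lay in the same $\Lc$-coset, their difference would be a $\Lc$-element of norm at most $2\sqrt{nN}$, which under the packing inequality $\rpack(\Lc) > \sqrt{nN}$ forces the two points to coincide; each coset therefore contributes at most one point. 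Combining this with the observation that any such coset representative projects under $[\,\cdot\,]\bmod\Lc$ to a codeword within $\sqrt{nN}$ of $\vx_0$, the sum reduces to the single term indexed by $\vc = 0$, which is bounded by $L$.

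\textbf{Main obstacle.} The delicate step is the coset-collapse argument, which rests on the packing inequality $\rpack(\Lc) > \sqrt{nN}$. In the preceding Construction-A analysis this is automatic because $\Lc$ is chosen with $\rcov(\Lc) = \sqrt{nP}$ together with the goodness ratios $\rcov(\Lc)/\reff(\Lc) \le 2^{\delta/8}$ and $\rpack(\Lc)/\reff(\Lc) > 1/4$, so $\rpack(\Lc)$ is of order $\sqrt{nP}$, comfortably exceeding $\sqrt{nN}$ whenever $P > N$. In a fully abstract statement of the lemma one would either impose this packing hypothesis explicitly or weaken the conclusion to a bound of the form $L$ times the number of $\Lc$-cosets whose $\sqrt{nN}$-thickening meets $\cV(\Lc)$; once the packing inequality is in hand, the periodicity reduction and the sum-collapse yield the conclusion immediately.
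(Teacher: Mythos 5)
Your instinct that the lemma implicitly requires a packing hypothesis on $\Lc$ is sound --- the paper's own proof simply reduces to $\vy \in \cV(\Lf)$ and then asserts that $(P,N,L)$-list decodability of $\Lf\cap\cV(\Lc)$ already bounds $|\Lf\cap\cB(\vy,\sqrt{nN})|$, tacitly assuming every point of $\Lf$ in that ball lies inside $\cV(\Lc)$, which is not guaranteed without such a condition. Your normalization by $-\vc^*=-Q_{\Lc}(\vx^*)$ so that $\vx_0 \in \cC$ is also fine, and your injectivity observation is correct: under $\rpack(\Lc) > \sqrt{nN}$, two points of $\Lf$ in $\cB(\vy',\sqrt{nN})$ lying in the same $\Lc$-coset would differ by a nonzero $\Lc$-vector of norm at most $2\sqrt{nN} < 2\rpack(\Lc)$, a contradiction.

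The gap is in the final collapse of the sum to the single $\vc=0$ term. You assert that every $\vz\in\Lf\cap\cB(\vy',\sqrt{nN})$ projects under $[\,\cdot\,]\bmod\Lc$ to a codeword within $\sqrt{nN}$ of $\vx_0$; this is false. If $\vz$ lies in the Voronoi cell $\vc + \cV(\Lc)$ for some $\vc \neq 0$ --- which is possible whenever $\cB(\vy',\sqrt{nN})$ straddles a Voronoi boundary of $\Lc$, i.e.\ whenever $\sqrt{nN}$ is not much smaller than $\rpack(\Lc)$ --- then $[\vz]\bmod\Lc = \vz - \vc$ lands on the opposite side of $\cV(\Lc)$, roughly a full cell-diameter from $\vx_0$, not within $\sqrt{nN}$ of it. A concrete one-dimensional counterexample: $\Lc = a\bZ$, $\Lf = (a/k)\bZ$, $\sqrt{nN}=a/2-\varepsilon$, and $\vy'$ just inside $\cV(\Lc)$ near $a/2$; the ball then meets both the $\vc=0$ and $\vc=a$ cells, both terms in your sum are nonzero, and the $\Lf$-points in the $\vc = a$ cell project to codewords near $-a/2$, far from $\vx_0 \approx a/2$. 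So the sum does not collapse, and closing the argument would require either strengthening the hypothesis to something like $\rpack(\Lc) \geq 2\sqrt{nN}$ (which puts the whole ball in one cell and recovers the paper's intended reasoning) or separately bounding the number of contributing cosets, which your proposal does not do.
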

\begin{proof}
	The infinite lattice $ \Lf $ is $ (N,L) $-list decodable if for every $ \vy\in\bR^n $, we have $ \card{\Lf\cap\cB(\vy,\sqrt{nN})}\leq L $. Due to the periodic structure of $ \Lf $, we have the property that $\{\vx+\cV(\Lf):\vx\in\Lf\} $ forms a partition of $ \bR^n $. Therefore, it suffices to show that  $ \card{\Lf\cap\cB(\vy,\sqrt{nN})}\leq L $ for all $ \vy\in\cV(\Lf) $ in order to prove $ (N,L) $-list decodability of $ \Lf $. 
	
	But we already have this from the list decodability of $ \Lf\cap\cV(\Lc) $. Since $ \cV(\Lf)\subset\cV(\Lc) $, $(P,N,L)$-list decodability of $ \Lf\cap\cV(\Lc) $ implies $ (N,L) $-list decodability of $ \Lf $.
\end{proof}

Using Lemma~\ref{lemma:nestedld_implies_ldlattice} and Theorem~\ref{thm:nestedconstructiona_listsize}, we have
\begin{theorem}
	For any constant $ 0<\delta<0.9 $, let $ \Lf $ be a random Construction-A lattice drawn from the ensemble of Sec.~\ref{sec:nestedlatticeensemble} with $ \Lc $ having covering radius $ 2\sqrt{nN} $ and $ \rcov(\Lc),\rpack(\Lc),q,\kappa $ satisfying~\eqref{eq:rcov_condition}, \eqref{eq:rpack_condition}, \eqref{eq:prime_condition}, \eqref{eqn:cond_kappa}, respectively. 
	Then, 
	\begin{enumerate}
		\item the normalized logarithmic density of $ \Lf $ is $ R(\Lf) \ge \frac{1}{2}\log\frac{1}{2\pi eN} - \delta $; 
		\item 
		there exists a constant $ c>0 $ independent of $ n,\delta $ such that 
		\[
		\Pr[\Lf \text{ is not }(N,2^{c\frac{1}{\delta}\log^2\frac{1}{\delta}})\text{-list decodable}] = o(1).
		\]
	\end{enumerate}
\end{theorem}
\begin{proof}
	Fix $ P=4N$. Since $ \Lc $ is good for covering, we have $~\reff(\Lc)\geq \rcov(\Lc)2^{-\delta/8}=2^{1-\delta/8}\sqrt{nN} $.
	We know that with high probability $ \Lf\cap\cV(\Lc) $ is $ (P,N,2^{\cO(\frac{1}{\delta}\log^2\frac{1}{\delta})}) $-list decodable, where the implied constant can only depend on $ N $. From Lemma~\ref{lemma:nestedld_implies_ldlattice}, we know that $ \Lf $ is also $ (N,2^{\cO(\frac{1}{\delta}\log^2\frac{1}{\delta})}) $-list decodable with high probability. To complete the proof, it suffices to compute the NLD of $ \Lf $.
	
	The rate of the nested lattice code 
	\begin{align*}
	R(\Lf\cap\cV(\Lc))=& \frac{1}{2}\log\frac{P}{N} - \delta= \frac{1}{2}\log\frac{4N}{N} - \delta = 1-\delta. \notag 
	\end{align*}
	The NLD of the infinite lattice can hence be bounded as follows,
	\begin{align}
	R(\Lf) =& \frac{1}{n}\log\frac{2^{nR}}{\card{\cV(\Lc)}} \notag \\
	=& \frac{1}{n}\log\frac{2^{n(1-\delta)}}{\card{\cB^n\paren{0, \reff(\Lc)}}}  \notag \\
	=& \log\frac{2^{1-\delta}}{V_n^{1/n}\reff(\Lc)} \notag \\
	\asymp& \log\frac{2^{1-\delta}}{\sqrt{\pi n}^{-1/n}\sqrt{2\pi e/n}\reff(\Lc)} \notag \\
	\ge& \log\frac{2^{1-\delta}}{\sqrt{2\pi e/n}\sqrt{4nN}} + o(1) \notag \\
	=& \frac{1}{2}\log\frac{1}{2\pi eN} - \delta+o(1). \notag 
	\end{align}
	This completes the proof.
\end{proof}

\subsection{Remark}
We proved the above theorem for the random infinite lattice $ \bfG_{\Lc}\Lf'(\cC_\lin) $, where $ \Lf'(\cC_\lin) = \Phi(\cC_\lin)+q\bZ^n $ is the ``standard'' Construction-A lattice obtained from a random linear code $ \cC_\lin $, and $ \bfG_{\Lc} $ is a generator matrix of the coarse lattice. We could have instead proved a similar list decoding result for $ \Lf'(\cC_\lin) $ by following the same approach as in Sec.~\ref{sec:ld_constr_a}, but instead taking a union bound on $ \vy $'s within $ [0,q)^n $. Doing so would also give a list size of $ 2^{\cO(\frac{1}{\delta}\log^2\frac{1}{\delta})} $ for all NLD satisfying $ R(\Lf'(\cC_\lin))\le\frac{1}{2}\log\frac{1}{2\pi eN} - \delta $.

Similarly, for lattice codes, via essentially the same arguments, it can be shown that nested random Construction-A lattice codes ${\frac{1}{\alpha}\Lambda'(\cC_\lin)\cap\Lambda'(\cC_\lin)}$  and random Construction-A lattices with ball shaping $\frac{1}{\alpha}\Lambda'(\cC_\lin)\cap\cB(0,\sqrt{nP})$ for proper scaling $1/\alpha$ so as to achieve rate $\frac{1}{2}\log\frac{P}{N}-\delta$ are also $(P,N,2^{\cO\paren{\frac{1}{\delta}\log^2\frac{1}{\delta}}})$-list decodable whp.

\section{List decodability of regular infinite constellations}~\label{sec:regular_ic}

Having established a list decoding result for infinite lattices, we now turn to the problem of determining optimal list sizes for infinite constellations. Do there exist ICs $ \cC $ for which the list size is at most $ \cO(\poly(1/\delta)) $ for all $ R(\cC) \le \frac{1}{2}\log\frac{1}{2\pi eN}-\delta $?


To study this, we define an ensemble of periodic infinite constellations. 
We call this a $ (\Lc, q, M) $ infinite constellation (IC) which is defined as follows. 
Let $ \Lc $ be a (full rank) lattice satisfying Eqn.~\eqref{eq:rcov_condition} and \eqref{eq:rpack_condition}. 
Let $ q $ be a prime. 
An $(\Lc,q,M)$ random IC $ \cC $ is obtained by selecting $ M $ points $ \cC'=\{\vbfx_1,\ldots,\vbfx_M\} $ independently and uniformly at random from $ \cV(\Lc)\cap\frac{1}{q}\Lc $ and then tiling. Therefore, $ \cC = \cC'+\Lc $.
See Fig.~\ref{fig:ic-latticeshaping} for a pictorial illustration of the construction of such an IC ensemble.

\begin{figure}
	\begin{center}
		\includegraphics[width=9cm]{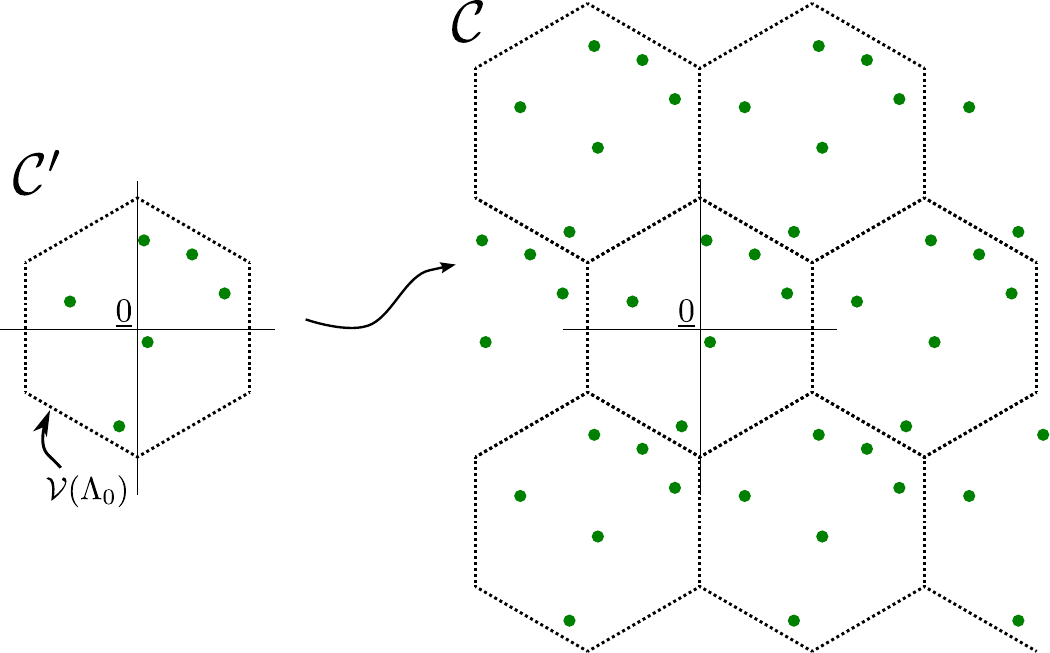}
		\caption{Illustration of the class of infinite constellations studied in Sec.~\ref{sec:regular_ic}. The finite set $\cC'$ is tessellated using $\Lambda_0$ to obtain $\cC=\cC'+\Lambda_0$}
		\label{fig:ic-latticeshaping}
	\end{center}
\end{figure}

The reason why we introduce this new class of ICs is because it is very simple to work with. We can very easily prove several nice properties that would be otherwise complicated for random lattices. We feel that this is a natural counterpart of uniformly random codes over finite fields. Moreover, we can obtain a code for the power-constrained channel by taking the intersection of the IC with $ \cB(0,\sqrt{nP}) $.

\begin{remark}
In fact, one can define a hierarchy of more and more ``uniform'' ICs in a similar manner. 
Under the same construction $ \cC = \cC' + \Lc $ as above, we can choose $\cC'$ such that
\begin{enumerate}
	\item each point in $\cC'$ is independent and uniformly distributed in $ \cV(\Lc) $;
	\item each point in $\cC'$ is independent and uniformly distributed in $ \cV(\Lc)\cap\frac{1}{q}\Lc $; (This choice is the same as that in the above paragraph.)
	\item $\cC'$ is a random subset of $ \cV(\Lc)\cap\frac{1}{q}\Lc $ that forms a group under addition modulo $ \Lc $. 
\end{enumerate}
The above three constructions are decreasingly ``uniform''. 
In particular, the last construction of $ \cC $ forms a lattice. 

We will study the list decodability property of the second construction. 
The same quantitative results in this section hold for the first construction as well since the latter is more uniform. 
In Appendix~\ref{sec:ic_goodness}, we study other goodness properties of the first construction. 
Properties are easiest to prove under this construction. 
\end{remark}

Before presenting our formal results, we need one more definition. 
The effective radius of an infinite constellation is defined as the radius of the $n$-dimensional ball having volume equal to $ 1/\Delta(\cC) $, i.e., 
\[~\reff(\cC) \coloneqq \left(\frac{1}{V_n\Delta(\cC)}\right)^{1/n} \]
where $ V_n $ denotes the volume of a unit $n$-ball.

The following is a simple application of Chernoff bound, and can be used to tightly concentrate the rate of a power-constrained code.
\begin{lemma}
	Let $ \alpha>2\sqrt{nP} $. Let $ \cC $ be an $ (\alpha, M) $ random IC, where $ M $ is chosen so that $~\reff(\cC)<\sqrt{nP} $. Then, for every $ \delta>0 $,
	\[
	\Pr\left[|\cC\cap \cB(0,\sqrt{nP})|<\frac{V_n\sqrt{nP}^nM}{\alpha^n}(1-\delta)\right] < 2^{-\Omega(\delta^2 2^n)}. 
	\]
	\label{lemma:randomic_ball}
\end{lemma}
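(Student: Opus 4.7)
The plan is to write the count as a sum of $M$ iid nonnegative random variables and then apply a Chernoff-style tail bound. Setting $N_i \coloneqq |(\vbfx_i + \alpha\bZ^n) \cap \cB(0, \sqrt{nP})|$, the quantity of interest is $Y \coloneqq \sum_{i=1}^M N_i$, since the points $\vbfx_i + \alpha\vk$ are distinct across all pairs $(i,\vk)$ almost surely (any collision would force $\vbfx_j-\vbfx_i \in \alpha\bZ^n \cap (-\alpha,\alpha)^n = \{0\}$). The $N_i$ are iid by construction.

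I would compute the mean using the fact that $\cA = [0,\alpha)^n$ is a fundamental domain for $\alpha\bZ^n$: as $(\vbfx_i,\vk)$ ranges over $\cA \times \bZ^n$, the point $\vbfx_i + \alpha\vk$ sweeps $\bR^n$ under a measure-preserving bijection (up to the $\alpha^n$ normalization), giving $\bE[N_i] = \vol(\cB(0,\sqrt{nP}))/\alpha^n = V_n(\sqrt{nP})^n/\alpha^n$. Combined with $\reff(\cC)^n = \alpha^n/(MV_n)$, this yields $\bE[Y] = (\sqrt{nP}/\reff(\cC))^n$, the target mean. For a pointwise upper bound I would use the cube-packing observation that every translate $\vbfx_i + \alpha\vk$ hitting $\cB(0,\sqrt{nP})$ carries its own disjoint $\alpha$-cube contained in $\cB(0,\sqrt{nP}+\alpha\sqrt{n})$, so $N_i \leq K \coloneqq V_n(\sqrt{nP}/\alpha + \sqrt{n})^n$. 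The multiplicative Chernoff inequality for bounded nonnegative iid variables then produces
\[
\Pr\bigl[Y<(1-\delta)\bE[Y]\bigr] \;\leq\; \exp\!\left(-\frac{\delta^2 \bE[Y]}{2K}\right),
\]
and it remains to verify $\bE[Y]/K = \Omega(n)$. This follows from $\alpha > 2\sqrt{nN}$, $\reff(\cC) = \Theta(\sqrt{nN})$, and the asymptotic $V_n = \frac{1}{\sqrt{\pi n}}(2\pi e/n)^{n/2}(1+o(1))$, after a short case split on whether $\sqrt{P/N}$ is smaller or larger than $\Theta(\sqrt{n})$.

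The main obstacle is the narrow parameter regime in which $\sqrt{P/N}$ is small while $\alpha/\sqrt{P}$ stays bounded away from zero: here the crude cube-packing bound $K$ can overshoot the typical $N_i$ by a factor exponential in $n$, namely $(1+\alpha/\sqrt{P})^n$, which would swamp the Chernoff exponent. To handle this one would replace the pointwise $K$ by a sharper second-moment bound: a direct computation gives $\bE[N_i^2] = \sum_{\vd \in \bZ^n}\vol(\cB\cap(\cB+\alpha\vd))/\alpha^n$, and the high-dimensional two-ball intersection estimate $\vol(\cB(0,r)\cap\cB(\vu,r)) \leq V_n(r^2-|\vu|^2/4)_+^{n/2}$ collapses the sum to $\bE[N_i]\cdot\theta(\alpha^2/(8P))^n$ with $\theta(x) = \sum_{d\in\bZ}e^{-xd^2}$ the Jacobi theta. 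Under the hypotheses of the lemma $\theta(\alpha^2/(8P))^n = 1+o(1)$, so a Bernstein-type inequality built from this variance bound recovers the $2^{-\Omega(\delta^2 n)}$ tail in the problematic regime as well.
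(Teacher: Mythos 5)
The paper does not actually give a proof of Lemma~\ref{lemma:randomic_ball}; it only remarks that it ``is a simple application of Chernoff bound.'' So there is no proof to compare against, and your proposal has to stand on its own.

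Your decomposition $Y=\sum_i N_i$, the a.s.\ distinctness argument, the mean computation $\bE[N_i]=\vol(\cB)/\alpha^n$ via Fubini, and the identity $\bE[Y]=(\sqrt{nP}/\reff)^n$ are all correct. More importantly, you spotted a genuine obstruction that the paper glosses over: the crude pointwise bound $K=V_n(\sqrt{nP}/\alpha+\sqrt{n})^n$ makes the Chernoff exponent $\bE[Y]/K\asymp \sqrt{\pi n}\,\bigl(\sqrt{P/N}/(c_0\sqrt{2\pi e})\bigr)^n$ where $\reff=c_0\sqrt{nN}$, which \emph{tends to zero} whenever $P/N<2\pi e\,c_0^2$. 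Since the hypotheses place no lower bound on $P/N$ beyond what makes $\bE[Y]$ exponentially large, the naive Chernoff is not, in fact, sufficient in full generality, and the second-moment refinement you describe (with $\bE[N_i^2]\le\bE[N_i]\cdot\theta(\alpha^2/(8P))^n$ and $\theta(\alpha^2/(8P))^n=1+o(1)$ because $\alpha^2/(8P)\ge nN/(2P)=\Theta(n)$) is the right repair and is computed correctly.

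The one imprecision is the final appeal to ``a Bernstein-type inequality built from this variance bound'': Bernstein's inequality still requires an almost-sure bound on the summands, so on its own it lands you back at the same $K$-dependence you were trying to avoid. Two clean alternatives close the gap. (i) One-sided Chebyshev/Cantelli: $\Pr[Y<(1-\delta)\bE Y]\le \var(Y)/(\delta^2\bE[Y]^2)\le (1+o(1))/(\delta^2\bE[Y])$, which is $\le 2^{-\Omega(\delta^2 n)}$ for any fixed $\delta\in(0,1]$ because $\bE[Y]=2^{\Omega(n)}$. (ii) Truncate: set $N_i'=\one{N_i\ge1}$; then $\sum_i N_i'\le Y$, the $N_i'$ are iid Bernoulli, and Bonferroni plus your second-moment bound gives $\bE[N_i']=\Pr[N_i\ge1]\ge\bE[N_i]-\tfrac{1}{2}(\bE[N_i^2]-\bE[N_i])=(1-o(1))\bE[N_i]$, so the standard multiplicative Chernoff bound for Bernoulli sums yields a tail of order $\exp(-\Omega(\delta^2\bE[Y]))$, which is in fact far stronger than the $2^{-\Omega(\delta^2 n)}$ claimed by the lemma. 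With either substitution your argument is complete.
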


\begin{proof}
The lemma follows from 
\begin{align}
\expt{\card{\cC\cap\cB(0, \sqrt{nP})}} =& \frac{\vol(\cB(0,\sqrt{nP}))}{\vol(\cA)}M = \frac{V_n\sqrt{nP}^n}{\alpha^n}M, \label{eqn:code_size} 
\end{align}
and the standard Chernoff bound (Lemma~\ref{lem:chernoff}). 
To ensure that the bound is decaying in $n$, we had better make $ \expt{\card{\cC\cap\cB(0, \sqrt{nP})}} $ exponentially large. 
By the relations in Eqn.~\eqref{eqn:param_ic}, the expression \eqref{eqn:code_size} can also be written as $ \paren{\frac{\sqrt{nP}}{\reff(\cC)}}^n $. 
The condition $ \reff(\cC)<\sqrt{nP} $ hence guarantees that $ \expt{\card{\cC\cap\cB(0, \sqrt{nP})}} = 2^{\Theta(n)} $. 
\end{proof}
\begin{remark}
The condition $ \alpha>2\sqrt{nP} $ is to ensure $ \cB(0,\sqrt{nP})\subseteq\cA $ which in turn guarantees that every codeword in $ \cB(0,\sqrt{nP}) $ is independent. 
\end{remark}

\subsection{List size upper bound}

We now study list decodability properties of $ (\Lc,q, M) $ random ICs.

An infinite constellation $\cC$ is $ (N,L) $-list decodable if for every $ \vy\in\bR^n $, we have $ |\cC\cap\cB(\vy,\sqrt{nN})|\leq L $. 


\begin{proposition}
\label{prop:ic_listupperbound}
Fix a small $ \delta>0 $. Let $ C\coloneqq\frac{1}{2}\log\frac{1}{2\pi eN} $. For any $ N>0$, there is an ensemble of random $ (\Lc,q, M) $ ICs with $ M = 2^{n(C-\delta)}|\cV(\Lc)| $ chosen so as to satisfy $~R(\cC)=C-\delta $ is $ (N,\cO(\frac{1}{\delta}\log\frac{1}{\delta})) $-list decodable with probability at least $ 1-2^{-\Theta(n)} $.
\end{proposition}

\begin{proof}
We choose $ \Lc $ such that $ \rpack(\Lc) = 2\sqrt{nN} $. 
Let $ q $ be the smallest integer such that 
\begin{align}
\log\paren{1+\frac{\rcov(\Lc)}{q\sqrt{nN}}} &< 2^{\delta/8}. \label{eqn:prime-condition-ic}
\end{align}
This implies the following lower and upper bounds on $ q $:
\begin{align}
q &\ge \frac{\rcov(\Lc)/\sqrt{nN}}{2^{\delta/8} - 1} 
\ge \frac{\rpack(\Lc)/\sqrt{nN}}{2^{\delta/8} - 1} 
= \frac{2}{2^{\delta/8} - 1}; \label{eqn:prime-ub-ic} \\
q &\le \frac{\rcov(\Lc)/\sqrt{nN}}{2^{\delta/8} - 1}+2
\le \frac{2^{\delta/8}\cdot\reff(\Lc)/\sqrt{nN}}{2^{\delta/8} - 1}+2
\le \frac{2^{\delta/8}\cdot4\cdot\rpack(\Lc)/\sqrt{nN}}{2^{\delta/8} - 1}+2
= \frac{2^{3+\delta/8}}{2^{\delta/8} - 1}+2. \label{eqn:prime-lb-ic} 
\end{align}
Using the elementary inequality 
\begin{align}
\frac{11}{\delta} &< \frac{1}{2^{\delta/8} - 1} < \frac{12}{\delta}, \label{eqn:elementary-ineq}
\end{align}
we get the following looser bounds:
\begin{align}
q &\ge \frac{22}{\delta}, \quad
q \le 2^{4+1/8}\cdot\frac{12}{\delta} + \frac{2}{\delta} \le \frac{107}{\delta}. \label{eqn:prime-ic-loose}
\end{align}
Fix a $ \delta>0 $, and choose $ M $ such that $R(\cC) = C-\delta $. 
We will show that such random $ (\Lc,q,M) $ ICs are list decodable with constant list sizes. 
The proof is quite standard so we only give a brief outline. 
Using $ q$, we define a net for $ \cV(\Lc) + \cB(0,\sqrt{nN}) $ as follows: $ \cY \coloneqq \frac{1}{ q}\Lc\cap(\cV(\Lc)+\cB(0,\sqrt{nN})) $. 
By the same argument as in the proof of Theorem~\ref{thm:nestedlattice_listdecoding}, we have
\begin{align}
\cY\subset\frac{1}{ q}\Lc\cap(5.4\cV(\Lc)), \notag
\end{align}
and $ |\cY| \le (5.4 q)^n $. 
Let $ r\coloneqq\sqrt{nN} +  q^{-1}\rcov(\Lc) $. 
For a set $ \cA\subset\bR^n $, define the shorthand notation $\cA^*\coloneqq[\cA]\bmod\Lc$.
For any $ \vbfx\in\cC' $, by Lemma~\ref{lemma:count_latticepoints},
\begin{align}
\prob{\vbfx\in\cB^*(\vy,r)} &\le \paren{\frac{r}{\rcov(\Lc)}2^{\delta/8}\paren{1+\frac{\rcov(\Lc)}{qr}}}^n. \notag
\end{align}
For any $ \vy $, since each point in $\cC'$ is independent, for any $ 1\le i_1<\cdots<i_{L+1}\le M $, 
\begin{align}
\prob{\forall j\in[L+1],\;\vbfx_{i_j}\in\cB^*(\vy,r)}
&= \prod_{j\in[L]}\prob{\vbfx_{i_j}\in\cB^*(\vy,r)} 
\le \paren{\frac{r}{\rcov(\Lc)}2^{\delta/8}\paren{1+\frac{\rcov(\Lc)}{qr}}}^{n(L+1)}. \notag
\end{align}
By union bound, 
\begin{align}
\prob{\exists L+1\text{ codewords in }\cB^*(\vy,r)}
&\le \binom{M}{L+1} \paren{\frac{r}{\rcov(\Lc)}2^{\delta/8}\paren{1+\frac{\rcov(\Lc)}{qr}}}^{n(L+1)}. \notag 
\end{align}
Finally, 
\begin{align}
\prob{\text{The IC is not }(N,L)\text{ list decodable}} 
&\le \prob{\exists \vy\in\cV(\Lc) + \cB(0,\sqrt{nN}),\;\exists L+1\text{ distinct codewords in }\cB(\vy,\sqrt{nN})} \notag \\
&\le \prob{\exists \vy\in\cY,\;\exists L+1\text{ distinct codewords in }\cB(\vy,\sqrt{nN} + \rcov( q^{-1}\Lc))} \notag \\
&\le \sum_{\vy\in\cY} \prob{\exists L+1\text{ codewords in }\cB^*(\vy,\sqrt{nN} +  q^{-1}\rcov(\Lc))} \notag \\
&\le |\cY| \binom{M}{L+1} \paren{\frac{r}{\rcov(\Lc)}2^{\delta/8}\paren{1+\frac{\rcov(\Lc)}{qr}}}^{n(L+1)}. \notag 
\end{align}
We then work with the exponent.
\begin{align}
& \frac{1}{n}\log\prob{\text{The IC is not }(N,L)\text{ list decodable}} \notag \\
&\le \log(5.4 q) + (L+1)\sqrbrkt{R + \frac{1}{n}\log\paren{V_n\reff(\Lc)^n} + \log\frac{r}{\rcov(\Lc)} + \frac{\delta}{8} + \log\paren{1+\frac{\rcov(\Lc)}{qr}}}. \notag 
\end{align}
We note that
\begin{align}
\frac{1}{n}\log\paren{V_n\reff(\Lc)^n}
&\le \frac{1}{n}\log\paren{V_n\rcov(\Lc)^n} \notag \\
&= \frac{1}{n}\log V_n + \log\rcov(\Lc)\notag \\
&\stackrel{n\to\infty}{\asymp} \frac{1}{n}\log\sqrbrkt{\frac{1}{\sqrt{\pi n}}\paren{\frac{2\pi e}{n}}^{n/2}} + \log\rcov(\Lc)\notag \\
&\stackrel{n\to\infty}{\asymp} \frac{1}{2}\log\frac{2\pi e}{n} + \log\rcov(\Lc) \notag \\
&= \frac{1}{2}\log\paren{\frac{2\pi e}{n}\rcov(\Lc)^2}. \notag 
\end{align}
Therefore 
\begin{align}
& \frac{1}{n}\log\prob{\text{The IC is not }(N,L)\text{ list decodable}} \notag \\
&\le \log(5.4 q) + (L+1)\sqrbrkt{R + \frac{1}{2}\log\paren{\frac{2\pi e}{n}\rcov(\Lc)^2} + \log\frac{r}{\rcov(\Lc)} + \frac{\delta}{8} + \log\paren{1+\frac{\rcov(\Lc)}{qr}}} \notag \\
&= \log(5.4 q) + (L+1)\sqrbrkt{R + \frac{1}{2}\log\paren{\frac{2\pi er^2}{n}} + \frac{\delta}{8} + \log\paren{1+\frac{\rcov(\Lc)}{qr}}}. \notag 
\end{align}
By the choice of $r$, we know
\begin{align}
\frac{r^2}{n} &= \frac{1}{n}\paren{\sqrt{nN} + \frac{\rcov(\Lc)}{ q}}^2
= N\paren{1+\frac{\rcov(\Lc)}{ q\sqrt{nN}}}^2. \notag 
\end{align}
The exponent can be simplified to
\begin{align}
& \log(5.4 q) + (L+1)\sqrbrkt{R + \frac{1}{2}\log(2\pi eN) + \log\paren{1+\frac{\rcov(\Lc)}{ q\sqrt{nN}}} + \frac{\delta}{8} + \log\paren{1+\frac{\rcov(\Lc)}{q\sqrt{nN}}}} \notag \\
&\le \log(5.4 q) + (L+1)\paren{-\delta + \frac{\delta}{8} + \frac{\delta}{8} + \frac{\delta}{8}} \notag \\
&= \log(5.4 q) - \frac{5}{8}\delta(L+1). \notag 
\end{align}
Observe that the above exponent is negative if $ L>\frac{c_1}{\delta}\log q $ for some constant $ c_1>0 $. 
Using the condition on $ q$ (Eqn.~\eqref{eqn:prime-ic-loose}) and following similar calculations that appear at the end of the proof of Theorem~\ref{thm:nestedlattice_listdecoding}, we conclude that there exists a constant $c>0$ such that the probability that a random IC is not list decodable is exponentially small in $n$ if $ L>c\frac{1}{\delta}\log\frac{1}{\delta} $.
This completes the proof.
\end{proof}

\subsection{List decoding capacity theorem for infinite constellations}
\label{sec:list-dec-cap-ic}

We have shown that a random IC of NLD $ \frac{1}{2}\log\frac{1}{2\pi eN} - \delta $ as defined in the current section is $ \paren{N,\cO\paren{\frac{1}{\delta}\log\frac{1}{\delta}}} $-list decodable. 
In fact, there is a converse argument showing that \emph{any} IC of NLD $ \frac{1}{2}\log\frac{1}{2\pi eN} + \delta $ must not be $ \paren{N,2^{\cO(\delta n)}} $-list decodable. 
This is proved in Appendix~\ref{app:converse-ic}. 
Therefore, combining Proposition~\ref{prop:ic_listupperbound} and Proposition~\ref{prop:converse-ic}, we get the following list decoding capacity theorem for ICs. 
\begin{definition}[List decoding capacity]
	Fix any $ N>0 $.
	We say that $ C(N) $ is the \emph{list decoding capacity} if for every $ \delta>0 $, there exists a $ \gamma>0 $ such that $ C(N)-\delta $ is achievable for $ (N,\cO(n^{\gamma})) $-list decoding, and for every $ \delta>0 $, there exist no codes of rate  $ C(N)+\delta $ which are $ (N,2^{o(n)}) $-list decodable.
\end{definition}

\begin{theorem}
For any $ N>0 $, $ C(N) = \frac{1}{2}\log\frac{1}{2\pi eN} $. 
\end{theorem}

\subsection{List size lower bound}

\begin{lemma}
	Let $ \cC $ be an $(\alpha, M)$ random IC chosen so as to satisfy $ \alpha=4\sqrt{nN} $ and $~R(\cC) = C-\delta $ where $ C=\frac{1}{2}\log\frac{1}{2\pi eN} $. Then,
	\[
	\prob{\cC\text{ is }\paren{N,\cO\paren{\frac{1}{\delta}\log\frac{1}{\delta}}}\text{-list decodable}} = o(1).
	\] 
	\label{lemma:randomic_listlowerbound}
\end{lemma}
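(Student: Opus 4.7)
The plan is to mirror the second-moment argument of Proposition~\ref{prop:lbound_randomspherical}, but to exploit the much larger epsilon-net available here in order to push the list-size lower bound from $\Omega(1/\delta)$ up to $\Omega(\frac{1}{\delta}\log\frac{1}{\delta})$. Fix $L = c\cdot\frac{1}{\delta}\log\frac{1}{\delta}$ for a sufficiently small absolute constant $c>0$, work with $\alpha=4\sqrt{nN}$ as in Proposition~\ref{prop:ic_listupperbound}, and let $\cY\subset\cA$ be the cubic $\eps$-grid of spacing $\eps=\sqrt{nN}\delta/3$, so $\card{\cY}=(\alpha/\eps)^n=(12/\delta)^n$. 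Define
\[
W \;\coloneqq\; \sum_{\vy\in\cY}\sum_{T\in\binom{[M]}{L}}\one{\forall i\in T,\;\vbfx_i\in\cB^*(\vy,\sqrt{nN})},
\]
where $\cB^*(\vy,r)\coloneqq[\cB(\vy,r)]\bmod\alpha\bZ^n$; the event $\{W>0\}$ directly witnesses that the list size of $\cC$ is at least $L$, so it suffices to show $\prob{W>0}\to 1$.

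First I would lower-bound $\expt{W}\geq\card{\cY}(M\mu/L)^L$, where $\mu\coloneqq\vol(\cB(0,\sqrt{nN}))/\alpha^n$. Because $\reff(\cC)=\sqrt{nN}2^\delta$ implies $M\mu=2^{-n\delta}$, this gives
\[
\log_2\expt{W}\;\geq\;n\log(12/\delta)-n\delta L-L\log L,
\]
which is $\Omega(n)$ provided $L\leq c\,\frac{1}{\delta}\log\frac{1}{\delta}$ with $c$ small enough. Notably, it is the large net factor $(12/\delta)^n$ that lets the first moment remain exponentially large all the way up to $L\sim\frac{1}{\delta}\log\frac{1}{\delta}$; in the spherical-code setting $\card{\cY}$ is only simply exponential, which is precisely what limits the previous lower bound to $\Omega(1/\delta)$.

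Next I would control $\var{W}$ by expanding $\expt{W^2}$ over pairs $(T_1,T_2,\vy_1,\vy_2)$ and splitting by $k\coloneqq\card{T_1\cap T_2}$. The $k=0$ terms contribute at most $\expt{W}^2$. For $k\geq 1$, the number of tuple-pairs is at most $M^{2L-k}$, and the key geometric fact is that $\mu_{12}(\vy_1,\vy_2)\coloneqq\vol(\cB^*(\vy_1,\sqrt{nN})\cap\cB^*(\vy_2,\sqrt{nN}))/\alpha^n$ vanishes whenever $\|\vy_1-\vy_2\|>2\sqrt{nN}$, yielding the localization
\[
\sum_{\vy_1,\vy_2\in\cY}\mu_{12}(\vy_1,\vy_2)^k\;\leq\;\card{\cY}\cdot\card{\cY\cap\cB(\vy_1,2\sqrt{nN})}\cdot\mu^k\;\leq\;\card{\cY}\cdot V_n(6/\delta)^n\cdot\mu^k.
\]
Summing over $k$ (the sum is dominated by $k=L$ because $M\mu<1$) then yields
\[
\frac{\var{W}}{\expt{W}^2}\;\leq\;V_n\cdot 2^{-n}\cdot L^{2L+1}\cdot (M\mu)^{-L}\;=\;V_n\cdot 2^{-n}\cdot L^{2L+1}\cdot 2^{n\delta L}.
\]
By Fact~\ref{fact:vol_ball}, $\log_2 V_n=-(n/2)\log_2 n+\cO(n)$, so for any fixed $L,\delta$ the $-(n/2)\log_2 n$ term dominates and this ratio vanishes. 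The Paley--Zygmund inequality $\prob{W>0}\geq\expt{W}^2/\expt{W^2}$ then delivers $\prob{W>0}\to 1$, which is the claim.

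The hard part will be calibrating the net spacing $\eps$: it must be fine enough (scaling with $\delta$ rather than being a constant) that the first-moment bound can reach $L\sim\frac{1}{\delta}\log\frac{1}{\delta}$, yet the geometric-localization estimate must still leave a $V_n\cdot 2^{-n}$ saving in the variance ratio capable of absorbing the $L^{2L+1}\cdot 2^{n\delta L}$ growth. The super-exponential decay of $V_n$ in $n$—absent from the spherical-code setting where the effective net is only simply exponential—is precisely what lets the IC lower bound match the $\frac{1}{\delta}\log\frac{1}{\delta}$ upper bound of Proposition~\ref{prop:ic_listupperbound}.
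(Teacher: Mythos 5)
Your approach is broadly the same as the paper's: second-moment method over the grid $\cY=\cA\cap\eps\bZ^n$, exploiting geometric localization, with the same Ordentlich--Erez lattice-point-counting lemma doing the work. Your variance decomposition (via the pair-overlap volume $\mu_{12}$) is actually a slightly tighter organization than the paper's $\cE_1,\cE_2,\cE_3$ decomposition through a common codeword, and your choice $\eps=\sqrt{nN}\delta/3$ is the same as the one used in Proposition~\ref{prop:ic_listupperbound} (the paper's lower-bound proof uses $\eps=\sqrt{cN}n^{1/4}$ instead). Both are fine in principle.

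However, there is a concrete error in the localization step, and it propagates to everything downstream. The second inequality
\[
\card{\cY\cap\cB(\vy_1,2\sqrt{nN})}\;\le\;V_n(6/\delta)^n
\]
is false for the grid $\eps\bZ^n$. The Ordentlich--Erez lemma gives $\card{\cB(\vy_1,2\sqrt{nN})\cap\eps\bZ^n}\le(2\sqrt{nN}/\eps+\sqrt{n}/2)^nV_n=(6/\delta+\sqrt{n}/2)^nV_n$, and with your $\eps$ the additive $\sqrt{n}/2$ term (the covering radius of $\bZ^n$ after rescaling) is the dominant one for large $n$; the quantity you wrote, $V_n(6/\delta)^n$, tends to zero super-exponentially, while the actual count is at least polynomial in $n$ (indeed, even for $\vy_1=0$ the ball $\cB(0,6/\delta)\cap\bZ^n$ contains $\binom{n}{\lfloor(6/\delta)^2\rfloor}$-order many points). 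So the claimed variance bound $\var{W}/\expt{W}^2\le V_n\cdot 2^{-n}\cdot L^{2L+1}\cdot 2^{n\delta L}$ is not valid, and the subsequent conclusion that this ratio vanishes for \emph{any} fixed $L,\delta$ cannot stand; if it did, the ensemble would fail to be $(N,L)$-list decodable for every constant $L$, directly contradicting the matching achievability bound (Proposition~\ref{prop:ic_listupperbound}).

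The final paragraph's story is also off: the super-exponential decay of $V_n$ is not what rescues the bound, because the $\sqrt{n}/2$ correction in the lattice-point count exactly cancels it. Once the correct count is used, the variance ratio reads $\poly(n)\cdot L^{2L+1}\cdot(12/\delta)^{-n}\cdot 2^{n\delta L}$ (up to constants), and the threshold is $\delta L<\log(12/\delta)$, i.e.\ $L\lesssim\frac{1}{\delta}\log\frac{1}{\delta}$ -- which is the paper's result. So the approach is salvageable: keep your decomposition and your $\eps$, just replace the erroneous $V_n(6/\delta)^n$ by $(6/\delta+\sqrt{n}/2)^nV_n$, track the $V_n$-vs-$n^{n/2}$ cancellation carefully, and you will land on the paper's threshold rather than the (spurious and inconsistent) conclusion that the list size is $\omega(1)$ for every $L$.
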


\begin{proof}

	The proof is almost identical to that of Proposition~\ref{prop:lbound_randomspherical}. We only highlight the main differences here. 
	

	Let $q$ be the smallest integer satisfying Eqn.~\eqref{eqn:prime-condition-ic} (recall that this implies Eqn.~\eqref{eqn:prime-ub-ic} and \eqref{eqn:prime-lb-ic}, or more loosely, Eqn.~\eqref{eqn:prime-ic-loose}).
	Define $ \cC \coloneqq\cC' + \frac{1}{q}\Lc $, where $ \Lc $ is simultaneously good for covering and packing, i.e., it satisfies both Eqn.~\eqref{eq:rcov_condition} and \eqref{eq:rpack_condition}; $ \cC'\subseteq\frac{1}{q}\Lc\cap\cV(\Lc) $ is a set of $M$ uniformly random and independent points $ \vbfx_1,\cdots,\vbfx_M $ in $ \frac{1}{q}\Lc\cap\cV(\Lc) $. 
	Scale $ \Lc $ such that $ \rpack(\Lc) = 2\sqrt{nN} $. 
	Let $ M = 2^{n(C - \delta)}|\cV(\Lc)| $ for some $ 0<\delta<0.9 $. 

	Let $ \cM \coloneqq [M] $ and $ \cY \coloneqq \frac{1}{q}\Lc\cap(\cV(\Lc)+\cB(0,\sqrt{nN})) $. 
	Define the random variable 
	\begin{align}
	W &\coloneqq \sum_{\vy\in\cY} \sum_{\{m_1,\cdots,m_L\}\in\binom{\cM}{L}} \indicator{\vbfx_{m_1},\cdots,\vbfx_{m_L}\in\cB^*(\vy,\sqrt{nN})}, 
	\label{eqn:w_ic}
	\end{align}
	where we use the following notation 
	$\cA^*\coloneqq\sqrbrkt{\cA}\bmod\Lc$ for any $ \cA\subset\bR^n $.
	

	We will upper bound
	\begin{align}
	\prob{\cC\text{ is }(N,L)\text{-list decodable}}=&\prob{\bigcap_{\vy\in\bR^n}\curbrkt{\card{\cC\cap\cB\paren{\vy,\sqrt{nN}}}\le L}}\notag\\
	\le&\prob{\bigcap_{\vy\in\cY}\curbrkt{\card{\cC'\cap\cB^*\paren{\vy,\sqrt{nN}}}\le L}},\notag\\
	\le&\var{W}/\expt{W}^2.\notag
	\end{align}

	\subsubsection{Lower bounding $\expt{W}$}

	It turns out that the expectation
	\begin{align}
	\expt{W} &= \sum_{\vy\in\cY}\sum_{\cL\in\binom{\cM}{L}}\prob{\vbfx_{\cL}\subset\cB^*\paren{\vy,\sqrt{nN}}},\label{eqn:expectation_ic}
	\end{align}
	where $\vbfx_\cL\coloneqq\curbrkt{\vbfx_m\colon m\in\cL}$, can be computed precisely.

	The probability in the summand of the right-hand side (RHS) of Eqn.~\eqref{eqn:expectation_ic} is 
	\begin{align}
	\mu^L &\coloneqq \paren{\frac{\card{\frac{1}{q}\Lc\cap\cB^*(\vy,\sqrt{nN})}}{\card{\frac{1}{q}\Lc\cap\cV(\Lc)}}}^L .
	\notag 
	\end{align}
	Overall the expectation in Eqn.~\eqref{eqn:expectation_ic} equals
	\begin{align}
	\expt{W} &= |\cY|\binom{M}{L}^L\mu^L. \notag 
	\end{align}
	
	
	\subsubsection{Upper bounding $\var{W}$}

	As in the proof of Proposition~\ref{prop:lbound_randomspherical}, $\cE_1,\cE_2$ and $\cE_3$ are similarly defined and their probabilities are similarly bounded. 
	\begin{align}
	\prob{\cE_1}&=\paren{\frac{\card{\cB\paren{\sqrt{nN}}\cap\frac{1}{q}\Lc}}{\card{\cY}}}^2\eqqcolon\eta^2,\notag\\
	\prob{\cE_2\cap\cE_3|\cE_1}&\le \mu^{L - 1}\mu^{L-\ell} = \mu^{2L-\ell-1}.\notag
	\end{align}

	Overall we have
	\begin{align}
	\var{W} &\le |\cY|^2 \sum_{\ell = 1}^LM^{2L-\ell} \eta^2\mu^{2L-\ell-1} 
	\le |\cY|^2LM^L\eta^2\mu^{L-1}. \notag 
	\end{align}
	
	
	
	\subsubsection{Wrapping things up}

	The probability that a random infinite constellation is list decodable is at most
	\begin{align}
	\frac{\var{W}}{\expt{W}^2} &\le 
	\frac{\card{\cY}^2LM^L\eta^2\mu^{L-1}}{\card{\cY}^2(M/L)^{2L}\mu^{2L}} 
	= L^{2L+1}M^{-L}\eta^2\mu^{-L-1}. \notag 
	\end{align}
	We shall upper bound $\eta$ and lower bound $\mu$. 

	For $\eta$, we have
	\begin{align}
	\eta &= \frac{\card{\cB(\sqrt{nN})\cap\frac{1}{ q}\Lc}}{\card{\cY}} 
	\le \frac{\card{\cB(\sqrt{nN})\cap\frac{1}{ q}\Lc}}{ q^n} 
	\le \sqrbrkt{\frac{\sqrt{nN}}{\rcov(\Lc)}2^{\delta/8}\paren{1+\frac{\rcov(\Lc)}{ q\sqrt{nN}}}}^n, \notag 
	\end{align}
	where the first inequality is because
	\begin{align}
	\cY\supset\frac{1}{ q}\Lc\cap\cV(\Lc) \implies |\cY|\ge  q^n. \notag
	\end{align}

	For $ \mu $, we have
	\begin{align}
	\mu &= \frac{\card{\frac{1}{q}\Lc\cap\cB^*(\vy,\sqrt{nN})}}{\card{\frac{1}{q}\Lc\cap\cV(\Lc)}}
	= \frac{\card{\frac{1}{q}\Lc\cap\cB(\vy,\sqrt{nN})}}{q^n}
	\ge \sqrbrkt{\frac{\sqrt{nN}}{\rcov(\Lc)}2^{\delta/8}\paren{1-\frac{\rcov(\Lc)}{q\sqrt{nN}}}}^n. \notag 
	\end{align}

	Therefore,
	\begin{align}
	\frac{1}{n}\log\frac{\var{W}}{\paren{\expt{W}}^2}
	&\le -L\sqrbrkt{R + \frac{1}{2}\log\paren{\frac{2\pi e}{n}\rcov(\Lc)^2}}
	+2\sqrbrkt{\log\frac{\sqrt{nN}}{\rcov(\Lc)} + \frac{\delta}{8} + \log\paren{1+\frac{\rcov(\Lc)}{ q\sqrt{nN}}}} \notag \\
	&\quad-(L+1)\sqrbrkt{\log\frac{\sqrt{nN}}{\rcov(\Lc)} + \frac{\delta}{8} + \log\paren{1-\frac{\rcov(\Lc)}{q\sqrt{nN}}}} \notag \\
	&= -L\sqrbrkt{R + \frac{1}{2}\log\paren{\frac{2\pi e}{n}\rcov(\Lc)^2} + \log\frac{\sqrt{nN}}{\rcov(\Lc)} + \frac{\delta}{8} + \log\paren{1-\frac{\rcov(\Lc)}{q\sqrt{nN}}}} \notag \\
	&\quad + 2\sqrbrkt{\log\frac{\sqrt{nN}}{\rcov(\Lc)} + \frac{\delta}{8} + \log\paren{1+\frac{\rcov(\Lc)}{ q\sqrt{nN}}}} 
	-\sqrbrkt{\log\frac{\sqrt{nN}}{\rcov(\Lc)} + \frac{\delta}{8} + \log\paren{1-\frac{\rcov(\Lc)}{q\sqrt{nN}}}} \notag \\
	&= -L\sqrbrkt{R + \frac{1}{2}\log(2\pi eN) + \frac{\delta}{8} + \log\paren{1-\frac{\rcov(\Lc)}{q\sqrt{nN}}}} 
	+ \log\frac{\sqrt{nN}}{\rcov(\Lc)} + \frac{\delta}{8} \notag \\
	&\quad + 2\log\paren{1+\frac{\rcov(\Lc)}{ q\sqrt{nN}}} - \log\paren{1-\frac{\rcov(\Lc)}{q\sqrt{nN}}}. \notag 
	\end{align}
	Recall that $q$ satisfies Eqn.~\eqref{eqn:prime-condition-ic} which implies
	\begin{align}
	\log\paren{1 - \frac{\rcov(\Lc)/\sqrt{nN}}{q}} 
	&> \log\sqrbrkt{1 - \frac{\rcov(\Lc)/\sqrt{nN}}{(\rcov(\Lc)/\sqrt{nN})/(2^{\delta/8} - 1)}}
	= \log(2-2^{\delta/8}) 
	\ge -\frac{\delta}{7}, \notag
	\end{align}
	where the last inequality is true for any $ \delta\in(0,1) $. 
	Using the bounds on $ q $ and $ q$, we get
	\begin{align}
	\frac{1}{n}\log\frac{\var{W}}{\paren{\expt{W}}^2}
	&\le -L\paren{-\delta + \frac{\delta}{8} - \frac{\delta}{7}} + \log\frac{\sqrt{nN}}{\rcov(\Lc)} + \frac{\delta}{8} + \frac{\delta}{4} + \frac{\delta}{7}. \notag 
	\end{align}
	Recall the relation $ \rcov(\Lc)\ge\rpack(\Lc) = 2\sqrt{nN} $. 
	Then
	\begin{align}
	\frac{1}{n}\log\frac{\var{W}}{\paren{\expt{W}}^2}
	&\le \frac{57}{56} \delta L-1+\frac{29}{56}\delta. \notag 
	\end{align}
	This exponent is negative if $ L<\frac{1-\frac{29}{56}\delta}{\frac{57}{56}\delta} $, or more loosely, $ L<\frac{9}{19\delta} $. 
\end{proof}

\subsection{Other goodness properties}
The random ICs defined in this section have other interesting geometric properties which are much harder to prove for lattices~\cite{erez2005lattices}, for instance, packing goodness, AWGN goodness and covering goodness. See Appendix~\ref{sec:ic_goodness} for statements and proofs.


\section{Haar measure on $\sL_n$}\label{sec:haar_meas_on_lattices}
Let us first ask ourselves: how do we define a \emph{random} lattice? To sample a random lattice from a certain ensemble, we need to define a distribution on the set of all lattices. As we know, a lattice is specified by its generator matrix and thus it suffices to define a distribution over matrices.\footnote{Strictly speaking, each lattice is \emph{non-uniquely} identified with a generator matrix. Those matrices giving rise to the same lattice should be quotiented out when one wants to define a distribution on lattices by defining it on matrices. See Sec.~\ref{sec:haarlattice_introduction} for more formalisms.} There are several ensembles of matrices that are extensively studied in the literature of random matrix theory. Such ensembles, including the Gaussian ensemble, the Bernoulli ensemble, etc.,~\cite{tao-2012-rand-mat-thy}  are mostly defined by sampling entries iid from  simple distributions. However, we believe that such ensembles will \emph{not} give rise to interesting lattices, in the sense that the resulting lattices are not likely to have nontrivial packing and covering efficiencies simultaneously. 

We give a heuristic argument to justify the above statement.
Suppose that we sample an $n$ by $n$ random matrix $\bfG$ over $\bR$ by sampling each entry iid according to $\cN(0,\sigma^2)$ for some fixed constant deviation $\sigma>0$. By the high-dimensional geometry of Gaussian random vectors, each column of $\bfG$ has $L^2$-norm highly concentrated around $\sqrt{n\sigma^2}$ and is approximately orthogonal to other columns. That is to say, the columns $\{\vbfg_1,\cdots,\vbfg_n\}$ of $\bfG$ are basically a mildly perturbed version of the standard orthonormal basis $\{\ve_1,\cdots,\ve_n\}$ scaled by $\sqrt{n\sigma^2}$ (and potentially also rotated, which does not affect most goodness properties of the resulting lattices we are interested in). The lattice $\bfG\bZ^n$ cannot be good for  packing whp since $\bZ^n$ (and also its scaling and rotation) has vanishing packing efficiency $\asymp\sqrt{\frac{\pi e}{2n}}$ as the dimension $n$ tends to infinity. Indeed, there is not much study on lattices resulting from those canonical matrix ensembles. One can find related results along this direction in~\cite{neelamani-dash-sanjeeb-baraniuk-2007-nearly-ortho-rand-lattice}. Not surprisingly, it boils down to understanding the singular value spectrum of $\bfG$.

In the case of finite fields, it is known that a \emph{uniformly random} linear code has good list decoding properties. It would therefore be a natural choice to study a random lattice drawn uniformly over the set of all lattices of a fixed determinant. 
Unfortunately, the space of such lattices is unbounded\footnote{The unboundedness (wrt $L^2 $-distance) of the set of determinant-1 lattices can be seen from the following example. The matrix $\begin{bmatrix}
K^{-\frac{1}{n-1}}&&&\\
&\ddots&&\\
&&K^{-\frac{1}{n-1}}&\\
&&&K
\end{bmatrix}_{n\times n} $ generates a determinant-1 lattice for every value of $K\ge0$. However, the $ L^2 $-norm of (the vectorization of) the matrix is $\sqrt{\paren{K^{-\frac{1}{n-1}}}^2(n-1)+K^2}\to\infty$ as $K$ approaches infinity.} 
and hence it does not make sense to talk about uniform distribution on it. 

In the following section, we introduce the  Haar distribution over lattices, and survey some of the important results pertaining to our discussions. For small enough subsets $ \cB $ of $ \bR^n $, we conjecture that the distribution of the number of lattice points (which is a random variable if the lattice is drawn according to the Haar distribution)  in $ \cB $ looks like a Poisson distribution. Expressions for the first $ o(n) $ moments of the number of lattice points have been derived in the literature. Encouraged by these results, we make a conjecture about the first $\cO(n)$ moments. We then show that if this conjecture is true, then Haar lattices achieve $ \poly(1/\delta) $ list sizes.   

\section{Prior work on Haar lattices}\label{sec:haarlattice_introduction}

Let us first introduce the Haar distribution on the set of all lattices. 
A more detailed exposition can be found in the thesis of Kim~\cite{kim2015thesis}. 

For the convenience of illustration, let us collect all rank-$n$ lattices $\Lambda\subset\bR^n$ with covolume one\footnote{This is without loss of generality since normalization does not affect goodness properties.} into a set $\sL_n$:
\[\sL_n\coloneqq\curbrkt{\Lambda\le\bR^n\text{ lattice}\colon \det(\Lambda)=1}.\]
A lattice in $\sL_n$ is specified by its generator matrix $\bfG\in\speclin(n,\bR)$. However, one lattice $\Lambda$ can have multiple different generator matrices. Indeed, two matrices $\bfG$ and $\wt \bfG$ give rise to the same lattice (i.e., $ \bfG\bZ^n = \wt\bfG\bZ^n $) iff they differ by an $\speclin(n,\bZ)$  matrix, i.e.,  $\bfG \bfG'=\wt \bfG$ where $\bfG'\in\speclin(n,\bZ)$. Hence $\sL_n$ can be identified with the quotient space
\[\sL_n=\speclin(n,R)/\speclin(n,\bZ).\]
Crucial to us is Haar's seminal result on the existence of \emph{Haar measure} on any locally compact topological group. Specialized to our setting, it was shown by Siegel~\cite{siegel-1945-first-order-avg-formula-haar} the existence and finiteness of a certain nicely-behaved distribution on $\sL_n$.
\begin{theorem}[\cite{siegel-1945-first-order-avg-formula-haar}]\label{thm:haar_slnr}
	There is a unique (up to a multiplicative constant factor) measure $\mu$ (called the {Haar} measure or the Haar--Siegel measure) on $\speclin(n,\bR)$ which satisfies the following properties:
	\begin{enumerate}
		\item\label{itm:haar_inv} $\mu$ is left-$\speclin(n,\bR)$-invariant, i.e., for any Borel subset $\cK\subseteq\speclin(n,\bR)$ and any $\bfG\in\speclin(n,\bR)$, $\mu(\cK)=\mu(\bfG\cK)$; 
		\item\label{itm:haar_finite} $\mu$ is finite, i.e., for any compact subset $\cK\subseteq\speclin(n,\bR)$, $\mu(\cK)<\infty$.
	\end{enumerate}
\end{theorem}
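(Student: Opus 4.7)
The plan is to appeal to the general existence and uniqueness theorem for left-invariant Radon measures on locally compact Hausdorff topological groups, i.e., the classical theorem of Haar. The only work specific to $\speclin(n,\bR)$ is to verify the hypotheses of that theorem; the two properties listed in the statement then fall out of the general construction and the standard uniqueness argument.

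First I would equip $\speclin(n,\bR)$ with the subspace topology inherited from $\bR^{n\times n}\cong\bR^{n^2}$ and check that it is a topological group. Matrix multiplication $(\bfA,\bfB)\mapsto\bfA\bfB$ is polynomial in the entries and hence continuous. For inversion, Cramer's rule expresses $\bfA^{-1}=\det(\bfA)^{-1}\,\mathrm{adj}(\bfA)$ as a rational function in the entries whose denominator is $\det(\bfA)$; on $\speclin(n,\bR)$ this determinant equals $1$, so inversion is continuous as well. Hausdorffness is inherited from $\bR^{n^2}$. For local compactness, observe that $\speclin(n,\bR)=\det^{-1}(\{1\})$ is closed in $\genlin(n,\bR)=\det^{-1}(\bR\setminus\{0\})$, which is itself open in $\bR^{n^2}$; a closed subset of a locally compact Hausdorff space is locally compact.

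With the topological group structure in hand, Haar's theorem delivers a left-invariant Radon measure $\mu$ on $\speclin(n,\bR)$, unique up to a positive scalar. Item \ref{itm:haar_inv} is precisely left-invariance, and item \ref{itm:haar_finite} is built into the definition of a Radon measure (locally finite and inner regular) on a locally compact space. For uniqueness, the classical argument compares two candidate left-invariant Radon measures $\mu_1,\mu_2$: pick a nonnegative continuous compactly supported $f$ with $\int f\,d\mu_1=1$, and use Fubini together with left-invariance to derive $\int g\,d\mu_2=\paren{\int g\,d\mu_1}\paren{\int f\,d\mu_2}$ for every continuous compactly supported $g$, whence $\mu_2=c\,\mu_1$ for the constant $c=\int f\,d\mu_2$.

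I do not anticipate any genuine obstacle: the theorem is a transparent specialization of Haar's theorem, and the write-up would be almost entirely a citation plus the routine topological verification above. The substantive mathematics of this section of the paper begins only with the next ingredient, namely Siegel's mean value formula on the quotient $\sL_n=\speclin(n,\bR)/\speclin(n,\bZ)$, for which the mere existence of Haar measure is not enough: one additionally needs finiteness of the induced measure on the noncompact quotient, and that is where Siegel's own contribution actually lies.
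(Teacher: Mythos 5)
Your proposal is correct, and it takes the standard route: $\speclin(n,\bR)$ is a locally compact Hausdorff topological group (closed in the open set $\genlin(n,\bR)\subset\bR^{n^2}$, with polynomial multiplication and, by Cramer's rule with $\det=1$, polynomial inversion), so the general Haar existence–uniqueness theorem applies verbatim; items \ref{itm:haar_inv} and \ref{itm:haar_finite} are then definitional. The paper itself gives no proof of this theorem --- it is stated as a citation to Siegel --- so there is nothing to compare against, but the specialization you carry out is exactly what that citation elides. Your closing observation is also on point and worth keeping in mind when reading the rest of Sec.~\ref{sec:haarlattice_introduction}: the stated theorem is the easy, group-theoretic part, while Siegel's genuine contribution is the \emph{finiteness} of the induced measure on the noncompact quotient $\sL_n=\speclin(n,\bR)/\speclin(n,\bZ)$, which is what makes the normalization to a probability measure (implicitly invoked in the paper right after the theorem) legitimate, and which does not follow from general Haar theory. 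One small nit: the Fubini-based uniqueness sketch as you wrote it tacitly uses unimodularity (true for $\speclin(n,\bR)$, as a connected semisimple Lie group), so if you were to write this out fully you would either cite unimodularity or run the slightly longer argument valid for arbitrary locally compact groups.
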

Note that we can  normalize the Haar measure $\mu$ to make it a probability distribution, i.e., $\mu(\speclin(n,\bR))=1$. In this paper we always refer to the normalized version when talking about $\mu$ or Haar measure. The Haar distribution on $\sL_n $ naturally inherits that on $\speclin(n,\bR)$. We do not specify measure-theoretic details which can be found in, e.g.,~\cite{quint-2013-dynamical-sys-on-homo-sp}. With abuse of notation, we use the same notation $\mu$ for the Haar measure on $\speclin(n,\bR)$ and the induced Haar measure on $ \sL_n $. Most of the time we refer to the former one which will be clear from the context.

The above result only provides the existence and properties of the Haar measure but does not provide an explicit form of this measure. What does the Haar measure $\mu$ on $\speclin(n,\bR)$ look like?  It can be checked that the Lebesgue measure on $\bR^{n^2}$ satisfies the properties~\ref{itm:haar_inv} and~\ref{itm:haar_finite} required in Theorem~\ref{thm:haar_slnr}, and hence is the Haar measure. Given any measure, besides that we can use it to measure a compact subset of the space, we can also integrate functions on the same space against this measure. Since Haar measure is unique, we know that for $\bfG\in\speclin(n,\bR)$, \[\diff\mu(\bfG)=\diff\vect{\bfG},\] where $ \vect{\bfG}\in\bR^{n^2} $ denotes the vectorization of $\bfG$. Namely, the Haar measure of a matrix in $\speclin(n,\bR)$ is equal to the Lebesgue measure of it when viewed as a vector in $\bR^{n^2}$. 



As a byproduct of the above reasoning, we also know that the Haar measure on $\genlin\paren{n,\bR}$ is just the normalized Lebesgue measure on $\bR^{n^2}$. For $\bfG\in\genlin\paren{n,\bR}$,
\[\diff\mu\paren{\bfG}=\frac{\diff {\vect{\bfG}}}{\det\paren{\bfG}^{1/n}}.\]
It is a valid definition since 
\[\det\paren{\frac{\bfG}{\det(\bfG)^{1/n}}}=\paren{\frac{1}{\det(\bfG)^{1/n}}}^n\det(\bfG)=1,\]
and the definition is reduced to the one on  $\speclin(n,\bR)$.
Here again with abuse of notation, we use the same notation for Haar measure on $\speclin(n,\bR)$ and $\genlin(n,\bR)$.

One may resort to  Iwasawa (KAN) decomposition~\cite{quint-2013-dynamical-sys-on-homo-sp} for a more explicit characterization of the Haar measure.


\subsection{Siegel's and Rogers's averaging formulas}\label{sec:siegel_rogers_avg_formulas}

Let us first recall two fundamental averaging formulas which are heavily used in the literature for understanding the distribution of short vectors of a random lattice drawn from the Haar distribution. 

In the same seminal paper~\cite{siegel-1945-first-order-avg-formula-haar} in which Siegel showed the existence and uniqueness of Haar distribution on the space of unit-covolume lattices, he also proved the following averaging formula.
\begin{theorem}[\cite{siegel-1945-first-order-avg-formula-haar}]\label{thm:first-order-avg-formula}
	Let $\rho:\bR^n\to\bR$ be a bounded, measurable, compactly supported function. Then
	\begin{equation}
	\exptover{\Lambda\sim\mu}{\sum_{\vx\in\Lambda\setminus\{0\}}\rho(\vx)}=\int_{\sL_n}\sum_{\vx\in\Lambda\setminus\{0\}}\rho(\vx)\diff \mu(\Lambda)=\int_{\bR^n}\rho(\vx)\diff \vx.
	\label{eqn:first-order-avg-formula}
	\end{equation}
\end{theorem}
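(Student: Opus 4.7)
My plan is to recognize the left-hand side as a linear, positive, $\speclin(n,\bR)$-invariant Radon functional on compactly supported test functions on $\bR^n\setminus\{0\}$, and then use uniqueness of such functionals to reduce the claim to fixing a single multiplicative constant.

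First I would define $\Phi(\rho)\coloneqq\int_{\sL_n}\sum_{\vx\in\Lambda\setminus\{0\}}\rho(\vx)\,\diff\mu(\Lambda)$; linearity and positivity are immediate, and the inner sum is finite because $\rho$ is bounded with compact support and $\Lambda$ is discrete. To check $\speclin(n,\bR)$-invariance, fix $T\in\speclin(n,\bR)$ and observe
\[
\sum_{\vx\in\Lambda\setminus\{0\}}\rho(T\vx)=\sum_{\vx\in T\Lambda\setminus\{0\}}\rho(\vx).
\]
The map $\Lambda\mapsto T\Lambda$ preserves $\sL_n$ (since $\det T=1$) and pushes $\mu$ forward to itself by the left-invariance of Haar measure passed to the quotient $\sL_n=\speclin(n,\bR)/\speclin(n,\bZ)$, so $\Phi(\rho\circ T)=\Phi(\rho)$.

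Next, I would invoke the classical fact that, for $n\ge 2$, $\speclin(n,\bR)$ acts transitively on $\bR^n\setminus\{0\}$, with unimodular stabilizer of $\ve_1$ isomorphic to $\speclin(n-1,\bR)\ltimes\bR^{n-1}$. Hence, up to a positive scalar, there is a unique $\speclin(n,\bR)$-invariant Radon measure on $\bR^n\setminus\{0\}$, and the Lebesgue measure is one such. Applying Riesz representation to the positive invariant functional $\Phi$ then forces $\Phi(\rho)=c\int_{\bR^n}\rho(\vx)\,\diff\vx$ for some constant $c\ge 0$ depending only on the chosen normalization of $\mu$.

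Finally, I would pin down $c=1$ by an unfolding argument. Writing $\sL_n=\speclin(n,\bR)/\speclin(n,\bZ)$, I would decompose $\bZ^n\setminus\{0\}$ into $\speclin(n,\bZ)$-orbits by factoring each nonzero integer vector uniquely as a positive integer multiple of a primitive vector, and recall that $\speclin(n,\bZ)$ acts transitively on the set $\Pi$ of primitive vectors with stabilizer $H_\bZ$ equal to the integer points of the stabilizer $H$ of $\ve_1$ in $\speclin(n,\bR)$. Unfolding the sum/integral against $H_\bZ$ together with the scaling $\int_{\bR^n}\rho(k\vx)\,\diff\vx=k^{-n}\int\rho\,\diff\vx$ rewrites $c$ as $\zeta(n)$ times a ratio of the covolume of $H_\bR/H_\bZ$ to the covolume of $\sL_n$, and the Minkowski--Siegel evaluation $\vol(\speclin(n,\bR)/\speclin(n,\bZ))=\prod_{j=2}^n\zeta(j)$ together with the paper's normalization $\mu(\sL_n)=1$ collapses this product to $1$. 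The main obstacle is precisely this last step: threading normalization constants consistently through the chain of quotients and accurately evaluating the Siegel covolume by induction on $n$. Everything upstream of the constant is clean and essentially forced by invariance, but the calibration genuinely requires the nontrivial covolume computation (or, alternatively, a direct evaluation on one well-understood test family such as Gaussians combined with a Poisson-summation identity).
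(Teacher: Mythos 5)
The paper does not supply its own proof of this statement: it is quoted directly from Siegel's 1945 paper and used as a black box, so there is nothing in the text to compare your argument against. I will therefore evaluate the proposal on its own terms.

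Your reduction is structurally the classical one and the first two stages are correct. The inner sum is finite by discreteness of $\Lambda$ and compact support of $\rho$, so $\Phi$ is a well-defined positive linear functional; $\speclin(n,\bR)$-invariance of $\Phi$ follows from left-invariance of the Haar measure exactly as you state. The uniqueness claim is also sound for $n\ge 2$: $\speclin(n,\bR)$ acts transitively on $\bR^n\setminus\{0\}$ with stabilizer $H_\bR\cong\speclin(n-1,\bR)\ltimes\bR^{n-1}$, and both $\speclin(n,\bR)$ and $H_\bR$ are unimodular (the adjoint action of $\speclin(n-1,\bR)$ on $\bR^{n-1}$ has unit determinant), so the quotient carries an essentially unique invariant Radon measure, which must be Lebesgue. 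Hence $\Phi(\rho)=c\int\rho$.

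The genuine gap is the calibration $c=1$, and it is a bit larger than you indicate. Your primary route invokes the Minkowski--Siegel covolume evaluation $\vol\bigl(\speclin(n,\bR)/\speclin(n,\bZ)\bigr)=\prod_{j=2}^n\zeta(j)$. In most modern treatments that identity is itself \emph{derived} from the mean value theorem you are proving (take $\rho$ to be the indicator of a large ball, or unfold against primitive vectors and solve for the covolume), so appealing to it here is circular unless you explicitly anchor it to an independent proof (e.g.\ direct Iwasawa-coordinate integration, or a Tamagawa-number computation). Moreover, keeping the constants $c'$, $\vol(H_\bR/H_\bZ)$ and $\vol(\sL_n)$ consistent through the inductive quotient chain is exactly the delicate bookkeeping that makes Siegel's original argument notoriously fiddly; a one-line ``collapses to $1$'' elides where the work actually lives. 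Your parenthetical alternative is in fact the cleaner way out: once $\Phi(\rho)=c\int\rho$ is established, apply Poisson summation with a Gaussian $f$ and use that $\Lambda\mapsto\Lambda^*$ is a $\mu$-preserving involution of $\sL_n$ (since $g\mapsto(g^\top)^{-1}$ preserves Haar measure on the unimodular group $\speclin(n,\bR)$ and descends to the quotient). This yields $f(0)+c\,\widehat f(0)=\widehat f(0)+c\,f(0)$, and choosing $f$ with $f(0)\ne\widehat f(0)$ forces $c=1$ without any covolume input. I would recommend promoting that route from a parenthetical remark to the actual last step, together with a short approximation argument to pass from compactly supported $\rho$ to Schwartz $f$.
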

\begin{remark}
	The requirement that we are allowed to evaluate the function $\rho$ only at nonzero lattice points could be potentially inconvenient in applications. One can drop this condition by paying an extra term, i.e., the value of $\rho$ at the origin, on the RHS of Eqn.~\eqref{eqn:first-order-avg-formula} and the formula becomes 
	\begin{equation}
	\exptover{\Lambda\sim\mu}{\sum_{\vx\in\Lambda}\rho(\vx)}=\int_{\sL_n}\sum_{\vx\in\Lambda}\rho(\vx)\diff \mu(\Lambda)=\rho(0)+\int_{\bR^n}\rho(\vx)\diff \vx.
	\label{eqn:first-order-avg-formula-alternate}
	\end{equation}
	These two forms are completely equivalent and we will state only one of them but potentially use any of them without further explanation depending on whichever is convenient. 
\end{remark}
The identity holds in large generality for any reasonably nice function $\rho$. Perhaps the most important consequence of this formula is that it gives a way to estimate the number of lattice points in a measurable set, which is in turn an ubiquitous primitive in applications. Specifically, for our list decoding purposes, essentially the only thing we need to control is the number of lattice points in a ball. If we take 
\[\rho(\vx)\coloneqq\indicator{\vx\in\cB(\vy,r)}\] to be the indicator function of an Euclidean ball centered at $\vy$ of radius $r$ (which obviously satisfies the conditions required by Theorem~\ref{thm:first-order-avg-formula}), then the left-hand side (LHS) of~\eqref{eqn:first-order-avg-formula} is nothing but the expected number of nonzero Haar lattice points in the ball. Siegel's formula tells us that this is equal to the RHS of~\eqref{eqn:first-order-avg-formula} which is actually the volume of the ball. This matches our intuition that the number of lattice points in any measurable set $\cB$ should be roughly the ratio between the volume of $\cB$ and the volume of a Voronoi cell of the lattice, i.e., $ \card{\Lambda\cap\cB} \approx \vol(\cB)/\det(\Lambda)=\vol(\cB)$ since we consider normalized lattices. Siegel's formula indicates that the Haar distribution on $\sL_n$ behaves typically in a sense that such intuition is indeed true in expectation.

One simple application of Theorem~\ref{thm:first-order-avg-formula} is that it allows us to control the rate of a Haar lattice code. 
For a lattice $ \Lf\sim\sL_n $, if we define the lattice code $ \cC $ to be $ \Lf\cap \cB(0,\sqrt{nP}) $, then Theorem~\ref{thm:first-order-avg-formula} lets us conclude that 
\[
\frac{1}{n}\log\bE_{\Lf}\left[|\cC|\right] = \frac{1}{2}\log P +o(1).
\]

It turns out there is a  higher-order generalization of Siegel's formula due to Rogers~\cite{rogers-1955-haar-equiv-ensemble} which we introduce below. 
\begin{theorem}[\cite{rogers-1955-haar-equiv-ensemble}, Theorem 4]\label{thm:higher-order-avg-formula}
	Let $k<n$ be a positive integer. Let
	\begin{equation*}
	\begin{array}{rlll}
	\rho\colon & (\bR^n)^k & \to & \bR
	\end{array}
	\end{equation*}
	be a {bounded Borel measurable} function {with compact support}. Then
	\begin{align}\label{eqn:higher-order-avg-formula}
	\exptover{\Lambda\sim\mu}{\sum_{\vx_1,\cdots,\vx_k\in\Lambda}\rho(\vx_1,\cdots,\vx_k)}=&\int_{\sL_n}\sum_{\vx_1,\cdots,\vx_k\in\Lambda}\rho\paren{\vx_1,\cdots,\vx_k}\diff\mu\paren{\Lambda}\\\notag
	=&\rho\paren{0,\cdots,0}+\int_{\bR^n}\cdots\int_{\bR^n}\rho\paren{\vx_1,\cdots,\vx_k}\diff \vx_1\cdots\diff \vx_k+\cE,\notag
	\end{align}
	where $\cE$ is some crazy-looking error term: 
	\begin{align*}
	\cE\coloneqq\sum_{(\vec\alpha,\vec\beta)}\sum_{\ell=1}^\infty\sum_{\bfD}\paren{\frac{e_1}{\ell}\cdots\frac{e_m}{\ell}}^n\int_{\bR^n}\cdots\int_{\bR^n}\rho\paren{\sum_{i=1}^m\frac{\bfD_{i1}}{\ell}\vx_i,\cdots,\sum_{i=1}^m\frac{\bfD_{ik}}{\ell}\vx_i}\diff \vx_1\cdots\diff\vx_m.
	\end{align*}
	Here the first sum is over all divisions $(\vec\alpha,\vec\beta)=(\alpha_1,\cdots,\alpha_m;\beta_1,\cdots,\beta_{k-m})$ of the numbers $1,\cdots,k$ into two sequences $1\le\alpha_1<\cdots<\alpha_m\le k$ and $1\le\beta_1<\cdots<\beta_{k-m}\le k$ with $1\le m\le k-1$ and $\alpha_i\ne\beta_j$ for any $i,j$. The third sum is taken over all integral $m\times k$ matrices $\bfD\in\bZ^{m\times k}$ such that
	\begin{enumerate}
		\item no column of $\bfD$ vanishes;
		\item the greatest common divisor of all entries is 1;
		\item for all $i\in[m],s\in[m],t\in[k-m]$, $\bfD_{i\alpha_s}=\ell\indicator{i=s}$ and $\bfD_{i\beta_t}=0$ if $\beta_t<\alpha_i$.
	\end{enumerate}
	Finally, $e_i=(\gamma_i,\ell)$, where $\gamma_1,\cdots,\gamma_m$ are the elementary divisors of $\bfD$. 
\end{theorem}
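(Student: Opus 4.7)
The plan is to prove Rogers' formula by an unfolding argument on the quotient $\sL_n=\speclin(n,\bR)/\speclin(n,\bZ)$, leveraging the $\speclin(n,\bR)$-invariance of Haar measure from Theorem~\ref{thm:haar_slnr}. Writing $\Lambda=\bfg\bZ^n$, the lattice sum becomes $\sum_{\vm_1,\ldots,\vm_k\in\bZ^n}\rho(\bfg\vm_1,\ldots,\bfg\vm_k)$; integrating over $\bfg\Gamma\in G/\Gamma$ (with $G=\speclin(n,\bR)$, $\Gamma=\speclin(n,\bZ)$) and grouping the inner $\bZ^n$-sum into $\Gamma$-orbits unfolds each orbit into a single integral over $G/\Gamma_{\vm}$, where $\Gamma_{\vm}$ is the stabilizer of a chosen orbit representative. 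The all-zero orbit contributes $\rho(0,\ldots,0)$; the full-rank orbits (after a zeta-style collapse of integer scalings) collectively contribute the main term $\int_{(\bR^n)^k}\rho$; and the rank-deficient orbits assemble into $\cE$.

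I would classify the rank-deficient $\Gamma$-orbits via the canonical data $(\alpha,\beta,\ell,\bfD)$ of the theorem. For a tuple of $\bQ$-rank $m<k$, let $\alpha_1<\cdots<\alpha_m$ be the lex-earliest positions giving a $\bQ$-basis of the span and $\beta_1<\cdots<\beta_{k-m}$ the remaining positions; clearing denominators in the rational relations $\vm_{\beta_t}=\sum_i c_{i,\beta_t}\vm_{\alpha_i}$, while also requiring $(\vm_{\alpha_i})$ to form a primitive $m$-tuple in the saturation of the span, produces a unique minimal $\ell\in\bZ_{>0}$ and matrix $\bfD\in\bZ^{m\times k}$. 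The theorem's conditions ($\bfD_{i,\alpha_s}=\ell\,\one{i=s}$, $\bfD_{i,\beta_t}=0$ whenever $\beta_t<\alpha_i$, no vanishing column, and $\gcd$ of all entries together with $\ell$ equal to $1$) then select exactly one canonical representative per orbit. For each such label, a homogeneous-space unfolding combined with the transitivity of $\speclin(n,\bR)$ on linearly independent $m$-tuples in $\bR^n$ (for $m<n$) yields
\[\int_{\sL_n}\sum_{\substack{(\vm_1,\ldots,\vm_k)\\\text{in this orbit}}}\rho(\bfg\vm_1,\ldots,\bfg\vm_k)\,\diff\mu=C_{\alpha,\beta,\ell,\bfD}\int_{(\bR^n)^m}\rho\Bigl(\ldots,\sum_i\tfrac{\bfD_{i,j}}{\ell}\vy_i,\ldots\Bigr)\,\diff\vy_1\cdots\diff\vy_m\]
for an explicit constant $C_{\alpha,\beta,\ell,\bfD}$ to be identified.

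The main technical obstacle is computing $C_{\alpha,\beta,\ell,\bfD}=(e_1\cdots e_m/\ell^m)^n$. Writing $\bfD$ in Smith normal form $\bfU\bfD\bfV=\mathrm{diag}(\gamma_1,\ldots,\gamma_m)\oplus 0$ with unimodular $\bfU,\bfV$, one recognizes $C_{\alpha,\beta,\ell,\bfD}$ as the $n$-th power (one factor per ambient coordinate of $\bR^n$) of the inverse index $[\ell\bZ^k:\bfD\bZ^m\cap\ell\bZ^k]^{-1}=\prod_i\gcd(\gamma_i,\ell)/\ell^m=\prod_i e_i/\ell^m$, via a standard but delicate elementary-divisor computation. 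One must additionally justify interchanging the infinite sum over admissible $(\alpha,\beta,\ell,\bfD)$ with the $\sL_n$-integration; this reduces to absolute convergence of the double sum, controlled by the compact support of $\rho$ (which forces $\bfD$ and $\ell$ contributing nontrivially to lie in a bounded set). The remaining ingredients --- unfolding, transitivity, Haar invariance, and the zeta-style collapse that reconstitutes the full-rank contribution into $\int_{(\bR^n)^k}\rho$ --- are standard homogeneous-space machinery built directly on Theorems~\ref{thm:haar_slnr} and \ref{thm:first-order-avg-formula}.
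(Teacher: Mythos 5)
The paper does not prove this theorem; it quotes it verbatim from Rogers and, in Sec.~\ref{sec:siegel_rogers_avg_formulas}, sketches Rogers' own three-step route: (i) the identity $\int_{\sL_n}f\,\diff\mu=M(f)$, where $M(f)$ is the $\omega\to0^+$ limit of averages over the explicit ensemble $\bfTheta(\vec\theta,\omega)\bZ^n$ of Eqn.~\ref{eqn:rogers_ensemble} (Theorem \ref{thm:rogers_equality}); (ii) an exact evaluation of $M$ for the sum restricted to full-rank $k$-tuples (Theorem \ref{thm:higher-order-avg-formula-lin-indep}); (iii) lifting to all tuples, the rank-deficient ones producing $\cE$. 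Your route is genuinely different: you unfold directly on $\speclin(n,\bR)/\speclin(n,\bZ)$, classify the diagonal $\speclin(n,\bZ)$-orbits of $k$-tuples in $(\bZ^n)^k$ by $\bQ$-rank and by the canonical data $(\alpha,\beta,\ell,\bfD)$, and evaluate each orbit by transitivity plus Haar invariance. This is the standard homogeneous-space proof; what Rogers' detour through $\bfTheta$ buys is that the orbit-by-orbit measure computations become elementary torus integrals over $\theta\in[0,1]^{n-1}$, sidestepping the normalization of Haar measure on the stabilizer quotients, which is precisely the ``delicate elementary-divisor computation'' you defer. Your orbit taxonomy, the zeta-style collapse recovering the main term, and the shape of the coefficient are all consistent with Rogers' statement, so the skeleton is sound.

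Two concrete gaps remain. First, your justification for interchanging the sum over $(\alpha,\beta,\ell,\bfD)$ with the integration --- that compact support of $\rho$ forces $\ell$ and $\bfD$ into a bounded set --- is false: the sum over $\ell$ is genuinely infinite (the columns indexed by $\alpha$ are forced to equal $\ell$ times standard basis vectors, so the corresponding arguments of $\rho$ are just $\vx_1,\cdots,\vx_m$ and never leave the support), and for each $\ell$ the number of admissible $\bfD$ with nonvanishing integral grows polynomially in $\ell$. Absolute convergence must instead be extracted from the decay of $(e_1\cdots e_m/\ell^m)^n$ (note $e_1=1$ because the gcd of the entries of $\bfD$ is $1$) against that growth; this is where $k<n$ enters and is the genuinely hard part of Rogers' argument --- he devoted a separate follow-up paper to controlling $\cE$. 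Second, your identification of the constant as the $n$-th power of $[\ell\bZ^k:\bfD\bZ^m\cap\ell\bZ^k]^{-1}$ does not typecheck: $\bfD$ is $m\times k$, so $\bfD\bZ^m$ is not defined, and any rank-$m$ subgroup of $\ell\bZ^k$ with $m<k$ has infinite index, so the intended index must be taken inside the saturation; as written this step would not close.
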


If we take
\[\rho\paren{\vx_1,\cdots,\vx_k}\coloneqq\indicator{\vx_1\in\cB}\cdots\indicator{\vx_k\in\cB},\]
where $\cB\coloneqq\cB(\vy,r)$ is a ball, then Rogers's formula is precisely computing \[\exptover{\Lambda\sim\mu}{\card{\Lambda\cap\cB}^k}\] for $1\le k\le n-1$. 

The proof of Rogers's averaging formula is highly nontrivial and can be divided into three steps. Since the proof contains several ingenious ideas and can be instructive for other purposes, we sketch it below.

\noindent\textbf{Step I.} Consider any real-valued bounded Borel measurable function of bounded support on unit-covolume lattices, 
\[\begin{array}{rlll}
f:&\sL_n&\to &\bR.
\end{array}\]
We will interchangeably think of $f$ as a function on $\speclin\paren{n,\bR}$,
\[\begin{array}{rlll}
f:&\speclin\paren{n,\bR}&\to &\bR
\end{array}\]
by interchangeably thinking of $\Lambda$ as a lattice or its generator matrix. The function $f$ can be naturally extended from $\speclin\paren{n,\bR}$ to $\genlin\paren{n,\bR}$ by defining, for $\Lambda\in\genlin\paren{n,\bR}$,
\begin{equation}
f\paren{\Lambda}\coloneqq f|_{\speclin(n,\bR)}\paren{\det\paren{\Lambda}^{-1/n}\Lambda}.
\label{eqn:def_extension}
\end{equation}
Note that $\det\paren{\Lambda}^{-1/n}\Lambda$ always has determinant one.

Fix $\omega\in\bR_{>0}$. Let $\bfTheta=\bfTheta\paren{\theta_1,\cdots,\theta_{n-1},\omega}\in \bR^{n\times n}$  be drawn from the following ensemble
\begin{equation}
\begin{bmatrix}
\omega&&&&\\
&\omega&&&\\
&&\ddots&&\\
&&&\omega&\\
\omega^{-(n-1)}\theta_1&\omega^{-(n-1)}\theta_2&\cdots&\omega^{-(n-1)}\theta_{n-1}&\omega^{-(n-1)}
\end{bmatrix},
\label{eqn:rogers_ensemble}
\end{equation}
where each ${\theta}_i\sim \cU\paren{[0,1]}$. Note that any matrix of the above form has determinant one. 
\begin{remark}
	The reason behind the choice of this ensemble has connections to number theory. This is well beyond the scope of this paper and we refer interested readers to~\cite{einsiedler-ward-ergodic-thy-number-thy, goldstein-mayer-2003-equidist-hecke-pts} for relevant background. 
\end{remark}
Let $\vec\theta\coloneqq(\theta_1,\cdots,\theta_{n-1})$. The average of $f$ wrt such an ensemble can be written as
\[\exptover{\theta\sim\cU([0,1])}{f\paren{\bfTheta(\vec\theta,\omega)\bZ^n}}=\int_0^1\cdots\int_0^1f\paren{\bfTheta(\vec\theta,\omega)\bZ^n}\diff\theta_1\cdots\diff\theta_{n-1}.\]
Let \[M\paren{f}\coloneqq\lim_{\omega\to0+}\exptover{\theta\sim\cU([0,1])}{f\paren{\bfTheta(\vec\theta,\omega)\bZ^n}}.\]

Rogers~\cite{rogers-1955-haar-equiv-ensemble} showed the following (perhaps surprising) identity.
\begin{theorem}[\cite{rogers-1955-haar-equiv-ensemble}, Theorem 1]\label{thm:rogers_equality}
	Let $\rho:\sL_n\to\bR$ be a bounded, measurable, compactly supported function. Suppose that the limit $M\paren{f}$ exists. 
	Then
	\[\exptover{\Lambda\sim\mu}{f(\Lambda)}=\int_{\sL_n}f\paren{\Lambda}\diff\mu\paren{\Lambda}=M\paren{f}.\]
\end{theorem}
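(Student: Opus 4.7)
The plan is to prove Rogers' equality by invoking uniqueness of Haar measure. Specifically, I would show that $f \mapsto M(f)$ is a positive linear functional whose corresponding Radon measure $\nu$ on $\sL_n$ is left-$\speclin(n,\bR)$-invariant. Theorem~\ref{thm:haar_slnr} then forces $\nu = c \cdot \mu$ for a constant $c \ge 0$, and $c=1$ can be pinned down by matching on a concrete test function, for instance via Siegel's formula (Theorem~\ref{thm:first-order-avg-formula}).

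The easy preliminary is that for each fixed $\omega > 0$, the map $f \mapsto \exptover{\vec\theta}{f(\bfTheta(\vec\theta,\omega)\bZ^n)}$ is a bounded positive linear functional (being an average over the compact cube $[0,1]^{n-1}$); passing to the $\omega\to 0^+$ limit and applying Riesz--Markov produces a Radon measure $\nu$. The crux --- and the main obstacle --- is establishing left-$\speclin(n,\bR)$-invariance $M(f\circ L_g)=M(f)$ for every $g\in\speclin(n,\bR)$, where $L_g\Lambda \coloneqq g\Lambda$. The ensemble in~\ref{eqn:rogers_ensemble} has an Iwasawa-type factorization $\bfTheta(\vec\theta,\omega) = A_\omega U_{\vec\theta}$, where $A_\omega$ is the diagonal matrix with entries $(\omega,\dots,\omega,\omega^{-(n-1)})$ and $U_{\vec\theta}$ is the unipotent with $\vec\theta$ in the last row. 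Multiplying on the left by $g$ breaks this normal form, so the idea is to rewrite $g\,\bfTheta(\vec\theta,\omega) = \bfTheta(\vec\theta',\omega')\,\gamma$ for some $\gamma\in\speclin(n,\bZ)$ (which leaves $\bfTheta\bZ^n$ unchanged) modulo errors that vanish as $\omega \to 0^+$. Because the last diagonal entry $\omega^{-(n-1)}$ dominates, $g\,\bfTheta$ asymptotically looks like a matrix in the same normal form, and the induced reparametrization $\vec\theta \mapsto \vec\theta'$ is measure preserving on the torus $[0,1]^{n-1}/\bZ^{n-1}$ in the limit. This is essentially an equidistribution statement for expanding horospheres in $\sL_n$; it is where the bulk of the technical effort lies, and Rogers established it by direct analytic manipulation of the matrix entries rather than by modern dynamical methods.

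Once invariance is in hand, $\nu = c\mu$ and it remains to compute $c$. A clean route is to apply both sides to $f(\Lambda) = \sum_{\vx \in \Lambda\setminus\{0\}} \rho(\vx)$ for a fixed bounded, compactly supported $\rho \colon \bR^n \to \bR$ (although such $f$ is not literally compactly supported on $\sL_n$, a standard truncation argument near the cusp lets us apply both formulas on an increasing exhaustion and pass to the limit). Siegel's formula (Theorem~\ref{thm:first-order-avg-formula}) evaluates the Haar average to $\int_{\bR^n}\rho(\vx)\,d\vx$. On the other hand, $M(f)$ can be computed directly by expanding $\bfTheta(\vec\theta,\omega)\bZ^n$ coordinate-wise, averaging over $\vec\theta\in[0,1]^{n-1}$, and analyzing which lattice points survive the $\omega\to 0^+$ limit; the averaging over $\vec\theta$ converts discrete lattice sums into Lebesgue integrals in a Riemann-sum fashion and also yields $\int_{\bR^n}\rho(\vx)\,d\vx$. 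Matching forces $c=1$, completing the argument.
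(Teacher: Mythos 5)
First, a point of reference: the paper does not prove Theorem~\ref{thm:rogers_equality} at all — it is imported verbatim from Rogers' 1955 paper and used as a black box — so there is no in-paper proof to compare yours against. What you have written is best read as a proof plan, and while the architecture (limit functional, invariance, uniqueness of Haar measure, normalization against a test function) is a legitimate modern route to equidistribution statements of this kind, the plan has genuine gaps at exactly the points where the theorem is hard.

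The central gap is the invariance step. You assert that $g\,\bfTheta(\vec\theta,\omega)=\bfTheta(\vec\theta',\omega')\,\gamma$ with $\gamma\in\speclin(n,\bZ)$ ``modulo errors that vanish as $\omega\to0^+$,'' but for a general $g\in\speclin(n,\bR)$ (a rotation, say, or a unipotent acting within the first $n-1$ coordinates) the product $g\,\bfTheta(\vec\theta,\omega)$ is nowhere near the normal form \ref{eqn:rogers_ensemble}, and recovering that form with controlled errors is essentially the equidistribution statement you are trying to prove — the argument is close to circular. Even granting invariance, two further issues are unaddressed. (i) $\sL_n$ is noncompact, so Riesz--Markov (or weak-$*$ compactness of subsequential limits) only yields a Radon measure of total mass at most one; you must rule out escape of mass to the cusp before concluding $\nu=c\mu$ with $c>0$, and this nondivergence is a separate quantitative estimate. (ii) Your normalization uses the Siegel transform $f(\Lambda)=\sum_{\vx\in\Lambda\setminus\{0\}}\rho(\vx)$, which is unbounded near the cusp and not compactly supported on $\sL_n$, so weak convergence against compactly supported test functions does not let you evaluate $M(f)$; the ``standard truncation'' requires a uniform-in-$\omega$ integrability bound on the number of short vectors of $\bfTheta(\vec\theta,\omega)\bZ^n$ averaged over $\vec\theta$. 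Rogers' own proof avoids this architecture: he works directly with the explicit Iwasawa-coordinate density of the Haar measure and shows by computation that the $\omega\to0^+$ limit of the $\vec\theta$-average reproduces the Haar integral, with the hypothesis that $M(f)$ exists doing real work. As written, your proposal identifies the right landmarks but defers all of the actual difficulty to steps labeled as technical or standard.
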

A similar averaging result holds for Construction-A lattices. See~\cite{loeliger-1997-avg-formula-constr-a-lin-codes}.

\noindent\textbf{Step II.} Equipped with the powerful Theorem~\ref{thm:rogers_equality}, computation regarding expectations wrt Haar distribution can be turned into computation wrt the aforedefined concrete ensemble. Rogers then gave a formula for the expectation  of functions of a particular form by computing $M(\cdot)$. It can be shown that Eqn.~\eqref{eqn:higher-order-avg-formula} holds exactly true without the error term if we only sum over \emph{linearly independent/full-rank} $k$-tuples.
\begin{theorem}[\cite{rogers-1955-haar-equiv-ensemble}, Theorem 2, Lemma 1 and Theorem 3]\label{thm:higher-order-avg-formula-lin-indep}
	Let $k$ and $\rho$ be as in the setting of Theorem~\ref{thm:higher-order-avg-formula}. Let 
	\[f'(\Lambda)\coloneqq\sum_{\substack{\vx_1,\cdots,\vx_k\in\Lambda\\\rk\curbrkt{\vx_1,\cdots,\vx_k}=k}}\rho(\vx_1,\cdots,\vx_k).\]
	Then
	\begin{align}
	M(f')=&\rho\paren{0,\cdots,0}+\int_{\bR^n}\cdots\int_{\bR^n}\rho\paren{\vx_1,\cdots,\vx_k}\diff \vx_1\cdots\diff \vx_k.
	\label{eqn:higher-order-avg-formula-lin-indep}
	\end{align}
\end{theorem}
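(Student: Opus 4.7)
The plan is to compute $M(f')$ directly using the explicit structure of $\bfTheta(\vec\theta,\omega)$, exploiting that restricting to \emph{linearly independent} $k$-tuples eliminates the error term $\cE$ of Theorem~\ref{thm:higher-order-avg-formula}. Unpacking the definition,
\[f'(\bfTheta\bZ^n)=\sum_{\bfM\in\bZ^{n\times k},\;\rk\bfM=k}\rho(\bfTheta\vm_1,\ldots,\bfTheta\vm_k),\]
where each column of $\bfTheta\bfM$ reads $\paren{\omega m_{1,j},\ldots,\omega m_{n-1,j},\,\omega^{-(n-1)}\paren{\sum_{i=1}^{n-1}\theta_i m_{i,j}+m_{n,j}}}^\top$. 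I would split this sum according to the rank of the top $(n-1)\times k$ block $\bfM_{[1:n-1],*}$, separating the \emph{generic} case $\rk\bfM_{[1:n-1],*}=k$ from the \emph{degenerate} case $\rk\bfM_{[1:n-1],*}=k-1$ (the only other possibility, since $\rk\bfM=k$).

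\textbf{Generic case.} I would pick a subset $S\subseteq[n-1]$ with $\card{S}=k$ such that $\bfM_{S,*}$ is invertible. The affine map $\paren{\theta_i}_{i\in S}\mapsto\paren{x_{n,j}}_{j=1}^k$ then has Jacobian $\omega^{-k(n-1)}\card{\det\bfM_{S,*}}$. Changing variables and summing $\vm_n\in\bZ^k$ converts the $\paren{\theta_i}_{i\in S}$-integration into an integration of $\paren{x_{n,j}}_{j=1}^k$ over $\bR^k$: the translates of the parallelepiped $\omega^{-(n-1)}\bfM_{S,*}[0,1]^k$ along $\omega^{-(n-1)}\bZ^k$ tile $\bR^k$ with multiplicity exactly $\card{\det\bfM_{S,*}}$, which cancels the Jacobian and produces a clean $\omega^{k(n-1)}\int_{\bR^k}\diff x_{n,*}$ factor (this is the $k$-fold analogue of the toy $k=1$ identity $\sum_{m_n}\int_0^1\rho\,\diff\theta_{i_0}=\omega^{n-1}\int_\bR\rho\,\diff u$ valid whenever some $m_{i_0}\ne 0$). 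The remaining $\paren{\theta_i}_{i\notin S}$ integrals are trivial, and the residual sum over the free entries of $\bfM_{[1:n-1],*}$ is a Riemann sum on the scaled lattice $\omega\bZ^{(n-1)k}$; the extra $\omega^{k(n-1)}$ prefactor matches the cell volume of this Riemann sum, so compact support of $\rho$ together with dominated convergence yields, in the limit $\omega\to 0^+$, the full integral $\int_{\bR^n}\cdots\int_{\bR^n}\rho(\vx_1,\ldots,\vx_k)\diff\vx_1\cdots\diff\vx_k$.

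\textbf{Degenerate case.} When $\rk\bfM_{[1:n-1],*}=k-1$, its right null space is spanned by a primitive $\vw\in\bZ^k$, and $\bfM\vw=c\,\ve_n$ for some nonzero integer $c$ (otherwise $\rk\bfM<k$). Consequently $\bfTheta\bfM\vw=\omega^{-(n-1)}c\,\ve_n$ carries no $\vec\theta$ dependence, and $\rho(\bfTheta\bfM)\ne 0$ forces $\card{c}\lesssim\omega^{n-1}\|\vw\|$. For small $\omega$ this is impossible with $c\in\bZ\setminus\{0\}$ unless $\|\vw\|$ is correspondingly large; a counting argument bounding the number of rank-$(k-1)$ integer matrices $\bfM_{[1:n-1],*}$ whose kernel contains such a large primitive $\vw$, paired with the measure of the $\vec\theta$-fiber on which the compact-support constraint is met, shows that the degenerate contribution vanishes as $\omega\to 0^+$. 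The $\rho(0,\ldots,0)$ term on the right-hand side arises as the boundary contribution of the all-zero $\bfM$, included by the convention that extends $f$ from $\speclin(n,\bR)$ to $\genlin(n,\bR)$ via Equation~\ref{eqn:def_extension}.

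\textbf{Main obstacle.} The principal difficulty is the bookkeeping in the generic case: verifying the Cauchy--Binet-style cancellation of Jacobians uniformly over choices of $S$, handling the overlap when several $S$'s work for the same $\bfM$, and correctly accounting for the $\genlin(k,\bZ)$-action on ordered bases of the same rank-$k$ sublattice so that no sublattice is over- or under-counted. A secondary challenge is making the degenerate-case vanishing rigorous: one has to control the growth of ``near-degenerate'' $\bfM$'s (polynomial in $\omega^{-1}$) against the shrinking $\vec\theta$-measure (polynomial in $\omega^{n-1}$) and confirm the dimension count comes out favourably across all $(k-1)$-dimensional kernel directions in $\bQ^k$.
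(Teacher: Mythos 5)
The paper offers no proof of this statement at all: it is quoted verbatim from Rogers' 1955 paper (his Theorem 2, Lemma 1 and Theorem 3), so there is nothing internal to compare against. Your strategy --- evaluating $M(f')$ directly on the ensemble $\bfTheta(\vec\theta,\omega)\bZ^n$, identifying $k$-tuples of independent lattice vectors with rank-$k$ integer matrices $\bfD\in\bZ^{n\times k}$, splitting according to whether the top $(n-1)\times k$ block has rank $k$ or $k-1$, unfolding the $\vec\theta$-integral against the sum over the last row, and reading the residual sum as a Riemann sum on $\omega\bZ^{(n-1)k}$ --- is exactly Rogers' argument, and your generic case is sound. Two of the obstacles you flag are not real. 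First, the sum defining $f'$ is over \emph{ordered} $k$-tuples, i.e.\ over integer matrices, so no quotient by $\genlin(k,\bZ)$ and no multiplicity over choices of $S$ ever arises: fix one invertible row-set $S$ per matrix and decompose over matrices, not pairs. Second, the degenerate case needs no counting argument: if $\rho$ is supported in a ball of radius $T$, the compact-support constraint forces the top block to have entries $O(T/\omega)$, whence its primitive kernel vector $w$ satisfies $\|w\|_{1}=O((T/\omega)^{k-1})$ by a minor (Siegel's lemma) bound, while your relation $\omega^{-(n-1)}|c|\le T\|w\|_{1}$ with $c$ a nonzero integer forces $\|w\|_{1}\ge\omega^{-(n-1)}/T$; since $k-1<n-1$ these are incompatible once $\omega$ is small, so the degenerate contribution is identically zero, not merely vanishing.

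The genuine gap is your treatment of the $\rho(0,\cdots,0)$ term. The all-zero matrix has rank $0\ne k$, so the tuple $(0,\cdots,0)$ is excluded from $f'$ by definition, and the convention of Eqn.~\ref{eqn:def_extension} --- which merely rescales a lattice to determinant one --- cannot reintroduce it; there is also no concentration of mass at the origin in the limit $\omega\to0^{+}$ that could generate such a term. Your own computation therefore yields $M(f')=\int_{\bR^n}\cdots\int_{\bR^n}\rho(\vx_1,\cdots,\vx_k)\diff\vx_1\cdots\diff\vx_k$ with \emph{no} $\rho(0,\cdots,0)$ term, which is what Rogers actually proves; the extra term in the statement as transcribed in this paper appears to have been carried over from the all-tuples version (Theorem~\ref{thm:higher-order-avg-formula}), where the all-zero tuple genuinely contributes. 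You should say this explicitly rather than invent a justification for the term --- as written, that step is wrong. A secondary point for a full write-up: $\rho$ is only assumed bounded, Borel measurable and compactly supported, and Riemann sums of such functions need not converge to the Lebesgue integral, so the final limit requires sandwiching $\rho$ between semicontinuous (or Riemann-integrable) approximants rather than a bare appeal to dominated convergence.
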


\noindent\textbf{Step III.} Rogers finally completed the proof of Theorem~\ref{thm:higher-order-avg-formula} by dropping the linear independence condition and lifting Theorem~\ref{thm:higher-order-avg-formula-lin-indep} from $f'$ to 
\[f(\Lambda)\coloneqq\sum_{{\vx_1,\cdots,\vx_k\in\Lambda}}\rho(\vx_1,\cdots,\vx_k)\]
as promised in Theorem~\ref{thm:higher-order-avg-formula} at the cost of an extremely complicated error term $\cE$.

\subsection{Improvement on Rogers's formula}\label{sec:improvement_on_rogers}
Although we have Rogers's higher-order averaging formula, it turns out that the error term $\cE$ is very tricky to control even if we just plug in simple product functions. In the original paper by Rogers~\cite{rogers-1955-mmts-of-number-of-lattice-pts, rogers-1956-mmts-of-number-of-lattice-pts-error-term}, he was only able to show convergence of the first \emph{few} moments of number of random lattice points in a \emph{symmetric} set of \emph{fixed} volume. Nevertheless, an intriguing Poisson behaviour was discovered and has been pushed to a greater generality in recent years.\footnote{Actually, Rogers showed that, asymptotically in the number of dimensions $n$, the first $\cO(\sqrt{n})$ moments of the number of random lattices points in a set $\cS$ which is centrally symmetric wrt the origin exhibit the same behaviour as a Poisson moment of the same degree with mean $V/2$, where $V\coloneqq\vol(\cS)$ is a constant independent of $n$. As we will see later, this is too weak for our purpose of list decoding. However, it is the earliest result which kicks off a fantastic adventure towards understanding the statistics of random lattices.} We state below, as far as we know, the strongest results along this direction. 

Let $Y\sim\pois(V/2)$ be a Poisson random variable of mean $V/2$ for some $V$ to be specified later.

Kim showed the following improvement upon Rogers results.
\begin{theorem}[Proposition 3.3 of~\cite{kim2016random}]\label{thm:kim-improvement}
	Let $\cB$ be a centrally symmetric set in $\bR^n$ of volume $V$. There exists constants $C,c>0$ such that, if $n$ is sufficiently large and $V,k\le Cn$, then
	\begin{align*}
	\prob{Y\ge k}-e^{-cn}\le&\prob{\frac{1}{2}\card{(\Lambda\setminus\{0\})\cap\cB}\ge k}\le\prob{Y\ge k}+e^{-cn}.
	\end{align*}
\end{theorem}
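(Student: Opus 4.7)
The plan is to apply the method of factorial moments. Set $N \coloneqq \frac{1}{2}\card{(\Lambda\setminus\{0\})\cap\cB}$, which is integer-valued because central symmetry of $\cB$ pairs nonzero lattice vectors into antipodal pairs $\{\pm\vx\}$. The target is to establish multiplicative factorial-moment agreement
\[
\exptover{\Lambda\sim\mu}{\binom{N}{k}} = \frac{(V/2)^k}{k!}\paren{1+\eta_k},\quad \card{\eta_k}\le e^{-c_1 n},\qquad 1\le k\le Cn,
\]
which matches the binomial moments $\bE\binom{Y}{k}=\lambda^k/k!$ of $Y\sim\pois(\lambda)$ with $\lambda=V/2$. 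Standard inclusion-exclusion then converts this into the stated two-sided tail comparison.

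First I would express the $k$-th factorial moment using Rogers' averaging formula. Counting ordered $k$-tuples of nonzero lattice points in $\cB$ that are pairwise non-antipodal,
\[
2^{k}\,k!\cdot\exptover{\Lambda\sim\mu}{\binom{N}{k}} = \exptover{\Lambda\sim\mu}{\sum_{(\vx_1,\ldots,\vx_k)\in(\Lambda\setminus\{0\})^k}\rho(\vx_1,\ldots,\vx_k)},
\]
where $\rho$ is the indicator of $\cB^k$ intersected with the pairwise non-antipodality constraint. This is precisely the setting of Theorem~\ref{thm:higher-order-avg-formula}. Splitting the inner sum by the rank of the $k$-tuple, the full-rank contribution is handled exactly by Theorem~\ref{thm:higher-order-avg-formula-lin-indep} and produces the clean main term $\int_{\bR^n}\cdots\int_{\bR^n}\rho\,\diff\vx_1\cdots\diff\vx_k = V^k$ (the pairwise non-antipodality is a measure-zero constraint on the continuous integral). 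Dividing by $2^k k!$ yields the desired $(V/2)^k/k!$.

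The main obstacle is bounding the contribution of linearly dependent $k$-tuples, which is exactly Rogers' error $\cE$. Each summand is indexed by a partition $(\alpha,\beta)$ of $\{1,\ldots,k\}$ with $1\le m\le k-1$, a positive integer $\ell$, and an integer matrix $\bfD\in\bZ^{m\times k}$ with gcd $1$ and no zero column; its size is controlled by $(e_1\cdots e_m/\ell)^n$ times an $m$-dimensional integral of $\rho$ which is at most $V^m \le (Cn)^m$. The crucial feature is that in every nontrivial term either $\max_i e_i < \ell$, forcing the geometric prefactor to decay as $e^{-\Omega(n)}$, or the dependence structure itself imposes a linear constraint whose density contribution shrinks by a volumetric factor of order $V/\ell^n$. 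A careful union bound over admissible $(m,\ell,\bfD)$, enumerating matrices by their elementary divisors and using $k,V\le Cn$ with $C$ sufficiently small, yields $\card{\cE}\le (V/2)^k\cdot e^{-c_1 n}/k!$. This is the hardest step: Rogers' original treatment only controlled the first $O(\sqrt n)$ moments for sets of constant volume, and Kim's contribution is precisely this sharper matrix enumeration that extends the control to $k, V = \Theta(n)$.

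Finally, to convert multiplicative factorial-moment agreement into the claimed tail comparison, I would use the inclusion-exclusion identity
\[
\prob{N\ge k}=\sum_{m\ge k}(-1)^{m-k}\binom{m-1}{k-1}\exptover{\Lambda\sim\mu}{\binom{N}{m}},
\]
which holds identically for integer-valued random variables, and truncate at $m\le k+J$ for $J=\Theta(n)$ still within the regime where the moment bound applies. The truncation remainder is dominated by the Poisson tail $\prob{Y\ge k+J}\le e^{-\Omega(n)}$ via Lemma~\ref{lem:pois_tail} (using $V\le Cn$), together with the corresponding bound on $N$'s controlled moments. The discrepancy between the truncated sums for $N$ and $Y$ is bounded term-by-term by $e^{-c_1 n}$ times the corresponding Poisson Bonferroni sum, which in turn is at most $\prob{Y\ge k}\le 1$. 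Combining both contributions gives $\card{\prob{N\ge k}-\prob{Y\ge k}}\le e^{-c n}$ after adjusting $c$, completing the proof.
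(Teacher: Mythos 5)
This is a result that the paper itself does not prove: it is quoted verbatim as Proposition~3.3 of Kim's paper \cite{kim2016random}, cited as a black box in the survey of Haar-lattice facts. So there is no in-paper proof to compare against; I can only assess your sketch on its own terms and against what Kim's proof is understood to do.

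Your high-level plan — compute factorial (binomial) moments of the halved point count $N$ via Rogers' averaging formulas, identify the full-rank contribution as the clean Poisson main term, treat Rogers' error term $\cE$ as the technical core, and then convert moment agreement into a tail comparison — is indeed the standard moment-method framework used by Rogers and sharpened by Kim, and you correctly isolate the matrix enumeration inside $\cE$ as the step where Kim's work actually advances the range of $k$ and $V$ from $O(\sqrt{n})$ and $O(1)$ to $\Theta(n)$. A few small set-up remarks: your $\rho$ must also exclude repeated entries (i.e.\ require $\vx_i \ne \pm\vx_j$ for $i\ne j$, not merely $\vx_i\ne -\vx_j$) for the identity $2^k k!\binom{N}{k} = \#\{\text{ordered admissible }k\text{-tuples}\}$ to hold, and the normalization in your bound on $\cE$ should be $|\cE| \le V^k e^{-c_1 n}$ rather than $(V/2)^k e^{-c_1 n}/k!$ since the multiplicative relative error is $\eta_k = \cE/V^k$; but these are cosmetic.

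The genuine gap is in the final moment-to-tail conversion. You write that ``the discrepancy between the truncated sums for $N$ and $Y$ is bounded term-by-term by $e^{-c_1 n}$ times the corresponding Poisson Bonferroni sum, which in turn is at most $\prob{Y\ge k}\le 1$.'' This conflates the \emph{alternating} Bonferroni series, which does equal $\prob{Y\ge k}\le 1$, with the \emph{absolute} series $\sum_{m\ge k}\binom{m-1}{k-1}\lambda^m/m!$, which is what the term-by-term triangle inequality actually produces. The latter equals $\frac{\lambda^k}{(k-1)!}\sum_{j\ge 0}\frac{\lambda^j}{j!(k+j)}\le \frac{\lambda^k}{k!}e^\lambda$, and for $\lambda=V/2=\Theta(n)$ this is of order $e^{2\lambda}/\sqrt{\lambda}$, i.e.\ exponentially \emph{large} in $n$. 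Your multiplicative moment error $e^{-c_1 n}$ therefore does not obviously beat this factor; whether it does depends on the coordination between $c_1$ and the constant $C$ constraining $V$, and that coordination is precisely the content that needs to be made explicit. Relatedly, the Bonferroni truncation error at $m = k+J+1$ is controlled by the first omitted term $\binom{k+J}{k-1}\bE\binom{N}{k+J+1}$, and for $k+J+1\le Cn$ and $\lambda$ close to $Cn/2$ this term need not be small: the Poisson binomial-moment summands only start decaying once $m-k$ exceeds roughly $e\lambda$, which is incompatible with $m\le Cn$ when $k$ is small and $\lambda$ is near its allowed maximum. So as written, the last paragraph of your sketch asserts two bounds (the term-by-term comparison and the truncation error) that both fail in the regime $V=\Theta(n)$ unless the constant $C$ is tuned carefully against $c_1$, and that tuning — which is where the quantifier structure ``there exist $C,c>0$'' in the theorem statement lives — is the part you would need to spell out to make the argument go through.
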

Note that the number of \emph{pairs} of lattice points is considered since if $\vx\in\Lambda$ then so is $-\vx$. That is why there is a normalization factor $1/2$ in front of the number of nonzero lattice points in $\cB$.

Str\"ombergsson and S\"odergren provided another improvement on the distribution of short vectors in a random lattice.
\begin{theorem}[Theorem 1.2 of~\cite{ss-2016-gen-gauss-circle}]\label{thm:ss-improvement}
	Let $\cB$ be an $n$-dimensional Euclidean ball centered at the origin of volume $V$. For any $\eps>0$,
	\begin{align*}
	\prob{\frac{1}{2}\card{(\Lambda\setminus\{0\})\cap\cB}\le k}-\prob{Y\le k}\stackrel{n\to\infty}{\to}0,
	\end{align*}
	uniformly wrt all $k,V\ge0$ satisfying $\min\{k,V\}\le\cO_\eps(e^{\eps n})$.
\end{theorem}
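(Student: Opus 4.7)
The plan is to establish Theorem~\ref{thm:ss-improvement} by the method of moments, applied to Rogers' higher-order averaging formula (Theorem~\ref{thm:higher-order-avg-formula}) specialized to indicator functions of the ball $\cB$. Let $Z \coloneqq \frac{1}{2}\card{(\Lambda \setminus \{0\}) \cap \cB}$; since $\cB$ is origin-centered and hence centrally symmetric, the nonzero lattice points come in antipodal pairs $\pm \vx$, so $Z$ is a nonnegative integer. Because $Y \sim \pois(V/2)$ is determined by its moments and has an entire moment generating function, uniform CDF convergence in the regime $\min\{k, V\} \le \cO_\eps(e^{\eps n})$ would follow from quantitative convergence of the factorial moments $\exptover{\Lambda}{Z(Z-1)\cdots(Z-j+1)} \to (V/2)^j$ for all $j$ up to order $e^{\eps n}$, with the additive error decaying faster than any $(V/2)^{-j}$ can blow up.

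First I would convert factorial moments of $Z$ into sums over ordered $j$-tuples: $\exptover{\Lambda}{Z(Z-1)\cdots(Z-j+1)}$ equals $2^{-j}$ times the expectation of the number of ordered $j$-tuples of nonzero, pairwise non-antipodal lattice points in $\cB$. Using Theorem~\ref{thm:higher-order-avg-formula} with $\rho(\vx_1,\ldots,\vx_j) = \prod_{i=1}^j \one{\vx_i \in \cB}$ yields
\[
\exptover{\Lambda \sim \mu}{\sum_{\vx_1, \ldots, \vx_j \in \Lambda} \prod_{i=1}^j \one{\vx_i \in \cB}} \;=\; 1 + V^j + \cE_j,
\]
where $V^j$ comes from the $j$-fold product integral and $\cE_j$ is Rogers' error. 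Inclusion-exclusion removes the contribution of tuples containing zero or an antipodal collision (each collision gives a lower-rank integral, structurally subsumed by $\cE_j$), and the leading term $V^j$ divided by $2^j$ produces the target $(V/2)^j$. The matching of the main term with the Poisson factorial moment is then essentially combinatorial bookkeeping.

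The hard part, and the main obstacle, is bounding $\cE_j$ uniformly for $j$ up to $e^{\eps n}$. Each summand of $\cE_j$ is an integral $\int_{(\bR^n)^m} \prod_{i=1}^j \one{\sum_{s=1}^m \frac{\mathbf{D}_{si}}{\ell} \vx_s \in \cB}\, d\vx_1\cdots d\vx_m$ over some $m < j$, indexed by a partition of $[j]$ and an integer matrix $\mathbf{D}$. The key leverage is that $\cB$ is a Euclidean \emph{ball}: the $j$ constraints force the tuple into the intersection of $j$ ellipsoidal slabs, and by Hadamard/Minkowski-type volume bounds together with the factor $(e_1 \cdots e_m / \ell^m)^n$ in Rogers' formula, each nontrivial $\mathbf{D}$ contributes a volume at most $V^m \cdot \gamma_\mathbf{D}^n$ with $\gamma_\mathbf{D} < 1$. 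One then needs to show that the combinatorial sum over partitions and matrices $\mathbf{D}$ grows no faster than $j^{\cO(j)}$ times a subexponential-in-$n$ factor, so that for $j \le e^{\eps n}$ with $\eps$ small, the exponential-in-$n$ geometric savings dominate. This step is essentially a substantial sharpening of Kim's argument (Theorem~\ref{thm:kim-improvement}) from $V,k \le Cn$ to $\min\{k,V\} \le e^{\eps n}$, and it is plausible that additional spectral input (equidistribution on $\speclin(n,\bR)/\speclin(n,\bZ)$ or Eisenstein series estimates) is required to squeeze the error uniformly in this much larger range.

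Finally, once the moment estimates $\card{\exptover{\Lambda}{Z(Z-1)\cdots(Z-j+1)} - (V/2)^j} \le \eta_n \cdot (V/2)^j$ with $\eta_n \to 0$ are available for all $j \le e^{\eps n}$, I would convert them into uniform CDF bounds by combining the tail bound $\prob{Z \ge k} \le (V/2)^{-k}\exptover{\Lambda}{Z(Z-1)\cdots(Z-k+1)}$ with the analogous expansion of $\prob{Y \le k}$, using the rapid decay of Poisson tails past $k \gg V$ to truncate the series and match both distributions term by term. This truncation is exactly where the hypothesis $\min\{k, V\} \le e^{\eps n}$ becomes essential: beyond this scale, the number of moments one would need exceeds what the analysis of $\cE_j$ can control.
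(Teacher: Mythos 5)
This theorem is cited from Str\"ombergsson and S\"odergren (\cite{ss-2016-gen-gauss-circle}) and is not proved in the paper, so there is no internal proof to compare against; I can only assess your outline on its own merits. Your high-level plan --- compute factorial moments of $Z$ via Rogers' formula, match them to Poisson moments, and convert moment control into CDF control --- is a reasonable and historically motivated starting point, and you are candid that the central step (bounding $\cE_j$ for $j$ up to $e^{\eps n}$) is left unresolved. That candor is appropriate: that error-term analysis is precisely the technical heart of the Str\"ombergsson--S\"odergren paper, so the proposal is an outline rather than a proof.

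Beyond the acknowledged gap, there are two concrete issues. First, the Markov-type inequality you write, $\prob{Z \ge k}\le (V/2)^{-k}\exptover{\Lambda}{Z(Z-1)\cdots(Z-k+1)}$, is not correct; the valid bound from the $k$-th factorial moment is $\prob{Z\ge k}\le \exptover{\Lambda}{Z^{\underline{k}}}/k!$, and with your normalization $\exptover{\Lambda}{Z^{\underline{k}}}\approx (V/2)^k$ your stated inequality would yield only $\prob{Z\ge k}\lesssim 1$, which is vacuous. Second, and more fundamentally, the regime allowed by the hypothesis $\min\{k,V\}\le\cO_\eps(e^{\eps n})$ includes the case where $k$ is modest but $V$ is arbitrarily (even exponentially) large. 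In that case both $\prob{Z\le k}$ and $\prob{Y\le k}$ are close to $0$, and you need a \emph{lower-tail} (concentration) estimate on $Z$; upper-tail Markov bounds from high factorial moments address the opposite direction and do not yield it. One would need at minimum a variance bound (second-moment/Rogers with $j=2$) and Chebyshev, or a genuinely two-sided concentration argument, and your sketch does not include this branch of the case analysis. So even granting the moment estimates you hope for, the truncation-and-matching step in your last paragraph does not close the theorem as stated.
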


We remark that though both results by Kim and Str\"ombergsson--S\"odergren are great extensions of Rogers's results to higher-order averaging formulas, they are not directly comparable. In Kim's Theorem~\ref{thm:kim-improvement}, the set $\cB$ can be any symmetric body, not necessarily convex. This is a good news since in list decoding we care about the number of lattice points in $\cB(\vy,r)$ for any possible received vector $\vy\in\cB(0,\sqrt{nP}+\sqrt{nN})\setminus\cB(0,\sqrt{nP}-\sqrt{nN})$. Kim's result allows us to control that by taking $\cB=\cB(\vy,r)\sqcup\cB(-\vy,r)$ (assuming $ \cB(\vy,r)\cap\cB(-\vy,r)=\emptyset $). Obviously the configuration of lattice points are symmetric in $\cB(\vy,r)$ and $\cB(-\vy,r)$. Hence $|\Lambda\cap\cB|=2|\Lambda\cap\cB(\vy,r)|$. Also, Kim's result holds for $k=\cO(n)$ which is also sufficient in our application, as we will see. Kim also quantified an exponential convergence rate. Unfortunately, his result requires $V$ to be $\cO(n)$, which is not enough for us.
On the other hand, Str\"ombergsson--S\"odergren's result pushed the volume $V$ to exponentially large in $n$ but insists on $\cB$ being a ball centered at the origin.

It should be intuitively clear that the Poissonianity behaviour of the moments will not hold for arbitrarily large degrees and for sets of arbitrarily large volume. The dimension that the lattice is living in is only $n$. If we compute the moments of very high degrees, we should expect to encounter some nontrivial correlation which makes the moments tricky to understand. Moreover, if we compute the moments of the number of lattice points in a very large set, it should not be surprising that at some point linearity of the lattices will kick in and dominate the behaviour of the moments. 

\section{List decodability of Haar lattices}\label{sec:ld_haar}

Given the state of the art of bounds on moments of the number of Haar lattice points, we pose the following conjecture and use it to show conditional results on list decodability of Haar lattices in the next section. The known properties of the Haar distribution that we have outlined previously should hopefully provide reasonable justification for why we believe that our conjectures are true.
\begin{conjecture}[Poisson moment assumption]\label{conj:poissonianity}
	Let $\cB$ be any symmetric set in $\bR^n$ of volume $V=2^{\cO(n)}$ and $k=cn$ for some constant $0<c<1$. Then 
	\[\exptover{Y\sim\pois(V/2)}{Y^k}\le\exptover{\Lambda\sim\mu}{\paren{\frac{\card{\Lambda\cap\cB}}{2}}^k}\le\exptover{Y\sim\pois(V/2)}{Y^k}+o(1).\]
	Recall that the $k$-th moment of a Poisson random variable (Fact~\ref{fact:pois_mmt}) is
	\[\exptover{Y\sim\pois(V/2)}{Y^k}=e^{-V/2}\sum_{i=0}^\infty\frac{i^k}{i!}(V/2)^i.\]
\end{conjecture}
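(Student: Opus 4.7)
The plan is to attack the conjecture by applying Rogers' higher-order averaging formula (Theorem~\ref{thm:higher-order-avg-formula}) to $\rho(\vx_1,\ldots,\vx_k) = \prod_{i=1}^k \one{\vx_i \in \cB}$, which directly computes $\bE_{\Lambda \sim \mu}[|\Lambda \cap \cB|^k]$. Using the central symmetry of $\cB$ to pair $\vx$ with $-\vx$ converts this into the moment of $|\Lambda \cap \cB|/2$ up to combinatorial bookkeeping. The first conceptual step is to match the Poisson moment expansion $\bE[Y^k] = \sum_{j=1}^k S(k,j) (V/2)^j$ by decomposing the lattice sum $\sum_{\vx_1,\ldots,\vx_k \in \Lambda} \prod_i \one{\vx_i \in \cB}$ according to the partition of the index set $[k]$ into blocks where the coordinates coincide. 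Each partition class with $j$ distinct values reduces to a sum over ordered $j$-tuples of \emph{distinct} points of $\Lambda \cap \cB$, weighted by a Stirling factor; this is the combinatorial skeleton that must align with the Poisson structure.

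Within each such $j$-block, I would further split the contribution according to the rank of the $j$-tuple in $\mathrm{span}_\bR(\Lambda)$. The \emph{linearly independent} contribution is governed by Theorem~\ref{thm:higher-order-avg-formula-lin-indep}, which gives exactly $V^j$ (plus the origin term $\rho(0,\ldots,0)$ that is harmless here). Summing over partitions reassembles the Poisson main term $\sum_j S(k,j)(V/2)^j$ after incorporating the factor of $1/2$ from symmetry. Because this main term arises as a sum of nonnegative indicator contributions and any remaining dependence contributions are also nonnegative, the lower bound of the conjecture should follow essentially for free, reproducing $\bE[Y^k]$ exactly as a lower bound on the lattice moment.

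The upper bound is the real work: it asks that the \emph{total} additional mass contributed by linearly dependent $j$-tuples --- i.e., Rogers' error term $\cE$ --- is only $o(1)$ when $V = 2^{\cO(n)}$ and $k = cn$. To control $\cE$, I would stratify the infinite sum over integer matrices $\bfD$ first by the dependence rank $m < j$, then by the denominators $\ell$ and elementary divisors $\gamma_i$. For each stratum the integral $\int \rho(\sum_i \bfD_{i\cdot}\vx_i/\ell)\,d\vx$ is bounded by an affine-image volume estimate of the form $V^m / \det$, while the prefactor $\prod (e_i/\ell)^n$ is exponentially small unless $\ell = 1$ and the $e_i$ are maximal. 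I would try to show that, after combining these two exponentially sized factors, only a polynomial-in-$n$ number of ``primitive'' dependence patterns contribute nontrivially, and that each is dominated by $V^j / n^{\omega(1)}$. Kim's argument in~\cite{kim2016random} executes a version of this program when $V = \cO(n)$; extending it to exponential $V$ requires exploiting the central symmetry of $\cB$ to extract cancellations between $\vx$ and $-\vx$ inside the integrals indexed by $\bfD$.

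The main obstacle is precisely this error-term control in the regime $V = 2^{\Theta(n)}$ and $k = \Theta(n)$ simultaneously. In this joint regime the expected number of lattice points in $\cB$ is already exponential in $n$, so ``bad'' dependence patterns are themselves numerous and individually large, and only delicate cancellations can leave an $o(1)$ residual. Kim's direct volumetric estimates and Strömbergsson--Södergren's Fourier-analytic argument~\cite{ss-2016-gen-gauss-circle} each break in complementary ways: the former needs $V = \cO(n)$, the latter is distributional and restricted to balls centered at the origin. A genuine proof likely requires either an effective equidistribution statement for Hecke-type orbits on $\speclin(n,\bR)/\speclin(n,\bZ)$ with error bounds uniform in $n$, or a new combinatorial identity that collapses large blocks of the $\bfD$-sum before estimating; neither seems within reach of existing techniques, which is exactly why the statement is posed as a conjecture rather than a theorem.
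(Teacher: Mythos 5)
The statement you are asked to prove is posed in the paper as a \emph{conjecture}, and the paper offers no proof of it: it is used only as a hypothesis for the conditional list-decoding results of Sec.~\ref{sec:cond_ld_haar}. So the right benchmark here is not "does your argument match the paper's" but "does your argument actually close the statement" — and it does not, as you yourself concede in your final paragraph. Your roadmap is the natural one (expand $\card{\Lambda\cap\cB}^k$ via set partitions and Stirling numbers $S(k,j)$ into falling-factorial moments, identify the linearly independent $j$-tuples with the main term of Rogers' formula via Theorem~\ref{thm:higher-order-avg-formula-lin-indep}, and try to kill the error term $\cE$), and the lower-bound half is indeed essentially within reach: since Theorem~\ref{thm:higher-order-avg-formula-lin-indep} holds for all $j\le k<n$, the linearly independent ordered $j$-tuples contribute exactly $V^j$, every such tuple selects $j$ distinct $\pm$-pairs with a sign, so $V^j\le 2^j\,\expt{T(T-1)\cdots(T-j+1)}$ with $T=\card{(\Lambda\setminus\{0\})\cap\cB}/2$, and summing $S(k,j)(V/2)^j$ recovers $\expt{Y^k}$ as a lower bound. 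Even here you should be more careful than "essentially for free": the identification of distinct-point tuples with linearly independent tuples is lossy in both directions (distinct points can be collinear; $\vx$ and $-\vx$ are distinct but dependent), and the origin term needs separate handling, so only the inequality survives, not an identity.

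The genuine gap is exactly where you locate it, and it is fatal to the proposal as a proof: the upper bound requires showing that the total contribution of linearly \emph{dependent} tuples — Rogers' error term $\cE$, summed over all partitions $(\alpha,\beta)$, all $\ell\ge1$, and all admissible integer matrices $\bfD$ — is $o(1)$ when $V=2^{\Theta(n)}$ and $k=\Theta(n)$ simultaneously. Your proposed stratification by rank $m$, denominator $\ell$, and elementary divisors is the same skeleton as Rogers' and Kim's analyses, but no step of your write-up actually bounds the number of contributing $\bfD$-strata or the size of each stratum in this joint regime; you assert that "only a polynomial-in-$n$ number of primitive dependence patterns contribute" and that "delicate cancellations" must be extracted, without exhibiting either. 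Since Kim's estimate (Theorem~\ref{thm:kim-improvement}) genuinely needs $V=\cO(n)$ and Str\"ombergsson--S\"odergren (Theorem~\ref{thm:ss-improvement}) is restricted to origin-centered balls and is distributional rather than moment-wise, neither can be cited to fill the hole. The proposal is therefore a reasonable research program and a correct diagnosis of why the statement is a conjecture, but it is not a proof, and no reader should mistake the partition/Stirling bookkeeping for progress on the part that matters.
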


Note that results/conjectures phrased using tail bounds or moment bounds are essentially equivalent since one can be converted to another using the well-known relation between tails and moments. For any (continuous) random variable $X$ with known tails, we can estimate its moment via
\[\expt{|X|^k}=\int_0^\infty kt^{k-1}\prob{|X|>t}\diff t.\]
For any (continuous) random variable $X$ with known moments, we can bound its tail via the Chernoff-type inequality,
\[\prob{|X|>t}\le\frac{\expt{|X|^k}}{t^k}.\]




Previously, we showed that lattices and nested lattice codes can achieve $ 2^{\cO(\frac{1}{\delta}\log^2\frac{1}{\delta})} $ list sizes whereas random spherical codes and periodic ICs achieve list sizes that grow as $ \cO(\frac{1}{\delta}\log\frac{1}{\delta}) $. This leads to the natural question: Do there exist lattices/nested lattice codes that achieve $ \cO(\poly(1/\delta)) $ list sizes? Are the exponential growth of the list sizes a consequence of structural regularity (i.e., linearity of the lattices) or is it an artifact of our proof?  We conjecture that  lattices can indeed achieve $ \cO(\poly(1/\delta)) $ although we are unable to supply a complete proof at present. However, based on some heuristic assumptions, we can ``prove'' that a different ensemble of lattice codes (based on Haar lattices) achieve $ \cO(\poly(1/\delta)) $ list sizes.

\subsection{Conditional list decodability of Haar lattices}\label{sec:cond_ld_haar}
\subsubsection{Codebook construction}\label{sec:cond_ld_haar_cb}
Let $R=\frac{1}{2}\log\frac{P}{N}-\delta$ for some small constant $\delta>0$. Sample a lattice $\Lambda$ from the Haar distribution on $\sL_n$. The lattice codebook is nothing but $\cC\coloneqq \alpha\Lambda\cap\cB(0,\sqrt{nP})$. 
Note that
\begin{align*}
\card{\cC}=&\card{\alpha\Lambda\cap\cB(0,\sqrt{nP})}=\card{\Lambda\cap \alpha^{-1}\cB(0,\sqrt{nP})}.
\end{align*}
By Siegel's formula (Theorem~\ref{thm:first-order-avg-formula}), the expected number of codewords in the codebook is
\begin{align*}
\expt{\card{\cC}}=&\frac{\vol\paren{\cB^n\paren{0,\sqrt{nP}}}}{\alpha^n}=\frac{\sqrt{nP}^nV_n}{\alpha^n}.
\end{align*}
Equaling it $2^{nR}$, we have
\begin{align*}
\alpha=&\frac{\sqrt{nP}V_n^{1/n}}{2^R}\asymp\frac{\sqrt{2\pi eP}}{2^R}.
\end{align*}
This coupled with the proceeding computation will provide the (conditional) \emph{existence} of a $(P,N,\poly(1/\delta))$-list decodable lattice code.

\subsubsection{Under distribution assumption}\label{sec:cond_ld_haar_dist}
Heuristically and  unrealistically, we first assume that the number of lattice points follows exactly a Poisson distribution, i.e., \emph{every} moment of it is Poissonian.

\begin{conjecture}[Poisson distribution assumption]
	\label{conj:pois_dist}
	Let $ \Lf $ be a random lattice drawn from the Haar distribution on $ \sL_n $. If $ \cB $ is any centrally symmetric measurable set with nonempty interior, then 
	$ {|\Lf\cap\cB|}/{2}\sim\pois(\vol(\cB)/2) $. 
\end{conjecture}

This assumption is \emph{not} believed to be true. As we mentioned before, at some point the linearity of the lattice should kick in and the moments are expected to diverge from Poissons as the order of the moments grows. Nevertheless, in this section we still conduct computation under this assumption that seems too good to be true. 
The result sets the bar of the ``best" list decoding performance one can hope for, though it may never be reached in reality. 

Another motivation for doing these calculations is that the same quantitative results under the the distributional assumption can be viewed as \emph{rigorous} results for another code ensemble, that is, a Poisson point process (PPP) restricted to a ball. 
A homogeneous PPP has the property that the number of points in any compact set $ \cB $ with nonempty interior is distributed according to $ \pois(\vol(\cB)) $. 
One subtle difference between this and the distribution assumption is that for lattices we need to normalize the number of lattice points in $ \cB $ by $ 1/2 $. 
This is due to the linear structure of $\Lf$ -- if $ \vx\in\Lf $, then $ -\vx\in\Lf $ with probability 1. 
Therefore, for any conjecture of this kind to make sense, the normalization factor $ 1/2 $ is necessary.

Under the construction in Sec.\ \ref{sec:cond_ld_haar_dist}, invoking Conjecture~\ref{conj:pois_dist}, we can get a high-probability guarantee on the size of the codebook. First note that
\begin{align*}
\vol\paren{{{\alpha}}^{-1}\cB\paren{0,{\sqrt{nP}}}}=&\expt{|\cC|}
=2^{nR}.
\end{align*}
By the Poisson tail bound (Lemma~\ref{lem:pois-tail-clement}), 
\begin{align}
\prob{\abs{\frac{|\cC|}{2} - 2^{nR}}\ge\frac{1}{2}2^{nR}}
&\le 2\exp\paren{-\frac{(0.5\cdot2^{nR})^2}{2\cdot(2^{nR} + 0.5\cdot2^{nR})}}
= 2\exp\paren{-\frac{2^{nR}}{24}}. \notag 
\end{align}
That is to say, with probability at least $1-e^{-\Omega(2^{nR})}$, $ 0.5\cdot2^{nR}< |\cC|/2<1.5\cdot2^{nR} $, i.e., $ 2^{nR}<|\cC|<3\cdot2^{nR} $. 
Therefore, the rate $R(\cC)$ of the code is $\frac{1}{2}\log\frac{P}{N}-\delta+o(1)$.

We then upper bound the following probability of failure of list decoding:
\begin{equation}
\prob{ \exists \vy\in\cB^n\paren{ 0,\sqrt{nP}+\sqrt{nN} },\;\card{ \alpha\Lambda\cap\cB^n\paren{\vy,\sqrt{nN}} }> L }.
\label{eqn:ld_pe}
\end{equation}
Take an optimal $\sqrt{n\eps}$-covering $\cY$ of $\cB^n\paren{ 0,\sqrt{nP}+\sqrt{nN} }$. It can be achieved that
\[\card{\cY}=\paren{\frac{\vol\paren{\cB^n\paren{ 0,\sqrt{nP}+\sqrt{nN} + \sqrt{n\eps} }}}{\vol\paren{\cB^n\paren{0,\sqrt{n\eps}}}}}^{1+o(1)}=\paren{\frac{\sqrt{P}+\sqrt{N}+\sqrt{\eps}}{\sqrt{\eps}}}^{(1+o(1))n}\le\paren{\frac{c_2}{\delta}}^n,\]
where in the last step we set $\eps\coloneqq c_1\delta^2$. 
Then the probability~\eqref{eqn:ld_pe} is upper bounded by
\begin{align}
\prob{ \exists \vy\in\cY,\;\card{ \alpha\Lambda\cap\cB^n\paren{\vy,\sqrt{nN}+\sqrt{n\eps}} }> L }
\le&\sum_{\vy\in\cY}\prob{ \card{\alpha\Lambda\cap\cB^n\paren{\vy,\sqrt{nN}+\sqrt{n\eps}}}>L }.\label{eqn:heuristic}
\end{align}


Let 
\begin{align*}
\cB_1\coloneqq & \alpha^{-1}\cB^n\paren{\vy,\sqrt{nN}+\sqrt{n\eps}} \cup \alpha^{-1}\cB^n\paren{-\vy,\sqrt{nN}+\sqrt{n\eps}} ,\\
\cB_2\coloneqq & \alpha^{-1}\cB^n\paren{\vy,\sqrt{nN}+\sqrt{n\eps}} \cap \alpha^{-1}\cB^n\paren{-\vy,\sqrt{nN}+\sqrt{n\eps}} .
\end{align*}
Note that 
\begin{equation}
\vol(\cB_1) + \vol(\cB_2)=2\vol\paren{\alpha^{-1}\cB^n\paren{\sqrt{nN}+\sqrt{n\eps}}}.
\label{eqn:vol_identity}
\end{equation}
By our assumption (Conjecture~\ref{conj:pois_dist}) in this section,
\begin{align}
\frac{1}{2}\card{\Lambda\cap \alpha^{-1}\cB^n\paren{\vy,\sqrt{nN}+\sqrt{n\eps}}} 
\le& \frac{1}{2}\card{\Lf\cap\cB_1} \notag \\
\sim& \pois(\vol(\cB_1)/2)\notag \\
\le& \pois(\vol(\cB_1)/2) + 2\pois(\vol(\cB_2)/2)\notag \\
= & \pois\paren{\frac{\vol(\cB_1)+\vol(\cB_2)}{2}} \label{eqn:app_pois_add}\\
= & \pois\paren{\vol\paren{\alpha^{-1}\cB^n\paren{\sqrt{nN}+\sqrt{n\eps}}}}. \label{eqn:app_vol_identity}
\end{align}
Eqn.~\eqref{eqn:app_pois_add} and~\eqref{eqn:app_vol_identity} follow from Fact~\ref{fact:pois_add} and Eqn.~\eqref{eqn:vol_identity}, respectively.
Plugging the parameters into the bound in Lemma~\ref{lem:pois_tail}, we can upper bound the probability in Eqn.~\eqref{eqn:heuristic} by
\begin{equation}
\prob{ \card{\Lambda\cap\cB^n\paren{\alpha^{-1}\vy,\frac{\sqrt{nN}+\sqrt{n\eps}}{{\alpha}}}}>L }<\frac{e^{-V}\paren{eV}^{L/2}}{(L/2)^{L/2}},
\label{eqn:pois_tail}
\end{equation}
where 
\begin{equation}
V\coloneqq\vol\paren{\cB^n\paren{\frac{\sqrt{nN}+\sqrt{n\eps}}{{\alpha}}}}=\paren{\frac{\sqrt{nN}+\sqrt{n\eps}}{{\alpha}}}^nV_n\asymp 2^{n\paren{R-\frac{1}{2}\log\frac{P}{N+2\sqrt{N\eps}+\eps}}}\approx 2^{-c_3n\sqrt{\eps}} <L.
\label{eqn:pois_param}
\end{equation}
In the last step of the above chain of equalities, we set $c_3\approx1/\sqrt{c_1} - (\log e)/\sqrt{N}$ and use that $R=\frac{1}{2}\log\frac{P}{N}-\delta$, $\eps=c_1\delta^2$ and $\log(1+x)\approx (\log e)x$. Hence the RHS of the tail~\eqref{eqn:pois_tail} is 
\[\exp\paren{-2^{-c_3n\sqrt{\eps}}}\paren{e2^{-c_3n\sqrt{\eps}}}^{L/2}/(L/2)^{L/2}\asymp\paren{\frac{e}{L/2}}^{L/2}2^{-c_3n\sqrt{\eps}L/2}.\]
Taken a union bound over $\cY$, the overall probability of failure of list decoding (Eqn.~\eqref{eqn:ld_pe}) is at most
\[\paren{\frac{e}{L/2}}^{L/2}2^{-c_3n\sqrt{\eps}L/2}\paren{\frac{c_2}{\delta}}^n=\paren{\frac{e}{L/2}}^{L/2}2^{-n\paren{{c_3\sqrt{c_1}}\delta L/2-\log\frac{c_2}{\delta}}}.\]
The multiplicative factor $\paren{\frac{e}{L/2}}^{L/2}$ is going to be negligible once $n$ is sent to infinity. The exponent is negative if we set $L$ to be $c'\frac{1}{\delta}\log\frac{1}{\delta}$ for some appropriate constant $c'$. 

The above calculations indicate that, under the Poisson distributional assumption of the number of lattice points in a set, a random lattice (appropriately scaled) drawn from the Haar measure performs as well as uniformly random spherical codes.
We therefore have the following result:
\begin{lemma}
	If Conjecture~\ref{conj:pois_dist} is true, then there exists a lattice $ \Lf $ such that $ \Lf\cap\cB(0,\sqrt{nP}) $ has rate $ C(P,N)-\delta $ and is $ \left(P,N ,\cO(\frac{1}{\delta}\log\frac{1}{\delta})\right)$-list decodable.
\end{lemma}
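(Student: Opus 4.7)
The plan is to make rigorous the heuristic derivation laid out in Sec.~\ref{sec:cond_ld_haar_dist}, invoking Conjecture~\ref{conj:pois_dist} at every step where the paper freely asserted Poissonianity. I would argue via the probabilistic method: sample $\Lambda$ from the Haar distribution on $\sL_n$ and show that with probability $1-o(1)$ the scaled lattice code $\cC \coloneqq \alpha\Lambda \cap \cB(0,\sqrt{nP})$ has the advertised rate \emph{and} the advertised list size, forcing the existence of at least one good lattice. The scale $\alpha \asymp \sqrt{2\pi e P}\,2^{-R}$ is pinned down by Siegel's formula (Theorem~\ref{thm:first-order-avg-formula}) so that $\bE[|\cC|] = 2^{nR}$, and Conjecture~\ref{conj:pois_dist} applied to the symmetric set $\alpha^{-1}\cB(0,\sqrt{nP})$ then yields, via the Poisson tail bound (Lemma~\ref{lem:pois_tail}), $|\cC| \in [2^{n(R-\delta/2)},2^{n(R+\delta/2)}]$ except with doubly-exponentially small probability.

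Next I would reduce list-decodability to a finite union bound. Build an $\sqrt{n\eps}$-net $\cY$ of $\cB(0,\sqrt{nP}+\sqrt{nN})$ with $\eps = c_1\delta^2$, so that $|\cY| \le (c_2/\delta)^n$. A standard quantization argument shows that if $|\Lambda \cap \alpha^{-1}\cB(\vy,\sqrt{nN}+\sqrt{n\eps})| \le L$ for every $\vy \in \cY$, then $|\cC \cap \cB(\vy',\sqrt{nN})| \le L$ for every $\vy' \in \bR^n$ worth worrying about.

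For each fixed $\vy \in \cY$, to apply the conjecture (which is most naturally phrased for origin-symmetric sets) I would symmetrize via $\cB_\cup \coloneqq \alpha^{-1}\bigl(\cB(\vy,r) \cup \cB(-\vy,r)\bigr)$ with $r = \sqrt{nN}+\sqrt{n\eps}$. Since $\alpha^{-1}\cB(\vy,r) \subseteq \cB_\cup$, the bad event $\{|\Lambda \cap \alpha^{-1}\cB(\vy,r)| > L\}$ is contained in $\{|\Lambda \cap \cB_\cup| > L\}$. As $\cB_\cup$ is origin-symmetric and $\vol(\cB_\cup) \le 2\vol(\alpha^{-1}\cB(0,r))$, Conjecture~\ref{conj:pois_dist} makes $|\Lambda \cap \cB_\cup|$ stochastically dominated by $\mathrm{Pois}(V)$ where $V \coloneqq \vol(\alpha^{-1}\cB(0,r))$. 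A direct computation paralleling Eqn.~\ref{eqn:pois_param} gives $V = 2^{-\Theta(n\delta)}$, exponentially small in $n$.

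Finally, the Poisson tail bound (Lemma~\ref{lem:pois_tail}) yields $\Pr[|\Lambda \cap \cB_\cup| > L] \le (eV/L)^L$ for $L>V$; a union bound over $\cY$ contributes the factor $(c_2/\delta)^n$, so the overall failure probability is at most $(c_2/\delta)^n (eV/L)^L \le 2^{n(\log(c_2/\delta) - L\cdot \Theta(\delta))}$ after absorbing subdominant terms. Choosing $L = c'\,\tfrac{1}{\delta}\log\tfrac{1}{\delta}$ for a sufficiently large absolute constant $c'$ balances these exponents and drives the failure probability to $o(1)$, matching the random-spherical-code list size up to constants. The main obstacle is really a definitional one: Conjecture~\ref{conj:pois_dist} as stated is slightly ambiguous between origin-symmetric and general sets (its Poisson mean $V/2$ is consistent with Siegel's formula only for the symmetric case, where one naturally counts pairs $\{\vx,-\vx\}$), so one must be disciplined and apply it only to symmetric sets like $\cB_\cup$, using containment to recover bounds on the non-symmetric ball of interest.
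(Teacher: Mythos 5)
Your proposal follows the same route as the paper's derivation in Sec.~\ref{sec:cond_ld_haar_dist}: pin down $\alpha$ via Siegel's formula, concentrate the rate with the Poisson tail bound, build an $\sqrt{n\eps}$-net with $\eps=c_1\delta^2$ of size $(c_2/\delta)^n$, and close with a union bound that forces $L=c'\tfrac{1}{\delta}\log\tfrac{1}{\delta}$. The one place you genuinely improve on the paper is the symmetrization step: the paper writes the count in $\alpha^{-1}\cB(\vy,r)$ as $\tfrac{1}{2}|\Lambda\cap\cB_1|+\tfrac{1}{2}|\Lambda\cap\cB_2|$ and then adds the two Poisson variables as if they were independent (Eqn.~\ref{eqn:app_pois_add}), an extra unjustified assumption even granting Conjecture~\ref{conj:pois_dist}, whereas your containment of the asymmetric ball in the origin-symmetric union $\cB_\cup$ plus monotonicity of the Poisson tail in its mean needs nothing beyond the conjecture applied to a symmetric set (modulo the harmless constant-factor issue of whether one counts lattice points or $\pm$-pairs).
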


\subsubsection{Under moment assumption}\label{sec:cond_ld_haar_mmt}
Now instead of assuming that the number of lattice points in any symmetric body has Poisson distribution, we only assume that its first $k$ moments match Poisson moments. 


First note that the rate of the code is still well concentrated: 
\begin{align}
\prob{\abs{\frac{|\cC|}{2} - 2^{nR}}\ge 2^{n(R+\delta/2)}} 
&\le \frac{\var{{|\cC|}/{2}}}{\paren{2^{n(R+\delta/2)}}^2}
\le \frac{2^{nR}}{2^{n(2R + \delta)}}
= 2^{-n(R+\delta)}. \notag 
\end{align}
The second inequality follows since the first and second moments of $ |\cC|/2 $ are the same as those of $ \pois(2^{nR}) $. 
Therefore $ R(\cC) = R+\delta/2+o(1) = \frac{1}{2}\log\frac{P}{N} - \delta/2 + o(1) $. 

Let 
\begin{equation}
    X=\frac{1}{2}\card{\Lambda\cap\cB^n\paren{\frac{\sqrt{nN}+\sqrt{n\eps}}{{\alpha}}}}.
    \label{eqn:x_def}
\end{equation}
By Conjecture~\ref{conj:poissonianity} and Fact~\ref{fact:pois_mmt}, for any $0\le m\le k$,
\[\expt{X^m}\asymp\expt{Y^m}=e^{{-\lambda}}\sum _{{j=0}}^{\infty }{\frac  {\lambda^{j}j^{m}}{j!}},\]
where $Y\sim\pois\paren{\lambda}$, $\lambda=V/2$ and $V$ is given by formula~\eqref{eqn:pois_param}. Indeed,
\begin{equation}
\lambda\approx\frac{1}{2}\cdot2^{-c_3n\sqrt{\eps}}
\stackrel{n\to\infty}{\to}0.
\label{eqn:pois_mean}
\end{equation}
Then the probability in Eqn.~\eqref{eqn:heuristic} can be upper bounded by
\begin{align}
\prob{X^k>(L/2)^k}<{\expt{X^k}}/{(L/2)^k}=\expt{X^k}e^{-k\ln(L/2)}.\notag
\end{align}
Let $k\eqqcolon cn$ where $0<c<1$ is a constant. If $\expt{X^k}\le e^{-nD}$ for some $D>0$, then after taking a union bound over $\vy\in\cY$, we are in good shape if
\begin{equation}
e^{-n\paren{D+c\ln\frac{L}{2}-\ln\frac{c_2}{\delta}}}\stackrel{n\to\infty}{\to}0.
\label{eqn:mmt_bd}
\end{equation}
Now let us compute the $k$-th moment.
\begin{align}
\expt{X^k}=&e^{{-\lambda}}\sum _{{j=0}}^{\infty }{\frac  {\lambda^{j}j^{k}}{j!}}\notag\\
\asymp&\sum_{j\ge0}\frac{\lambda^jj^k}{\sqrt{2\pi j}\paren{j/e}^j}\label{eqn:app_stirling}\\
=&\sum_{j\ge0}\exp\paren{j\ln\lambda+k\ln j-j\ln j+j-\frac{1}{2}\ln\paren{2\pi j}},\notag
\end{align}
where in Eqn.~\eqref{eqn:app_stirling} we use Stirling's approximation (Lemma~\ref{lem:stirling}).
As we know, a sum of exponentials is dominated by the largest term. Let us compute the largest one. 
Define function
\[f\paren{j}\coloneqq-j\ln j+\paren{\ln\lambda+1}j+\paren{k-1/2}\ln j-\frac{1}{2}\ln\paren{2\pi}.\]
Its first derivative is given by
\[\frac{\diff f}{\diff j}=\ln\lambda+\frac{k-1/2}{j}-\ln j.\]
Equaling it zero and solving the equation, we get the critical point
\[j^*\coloneqq\frac{k-1/2}{W\paren{\frac{k-1/2}{\lambda}}},\]
where $W(\cdot)$ is the \emph{Lambert $W$ function} which is the inverse of $ g(x) = xe^x $. The function $ W(\cdot) $ satisfies the following estimate for sufficiently large $x$,
\[W(x)=\ln x-\ln \ln x+o(1).\]
Note that, by Eqn.~\eqref{eqn:pois_mean},
\[\frac{k-1/2}{\lambda}=\paren{cn-1/2}\cdot2\cdot2^{c_3n\sqrt{\eps}}\stackrel{n\to\infty}{\to}\infty.\]
Hence
\begin{align}
W\paren{\frac{k-1/2}{\lambda}}\asymp&\ln\paren{\paren{cn-1/2}\cdot2} + c_3n\sqrt{\eps}\ln 2 + \ln\paren{\ln\paren{\paren{cn-1/2}\cdot2} + c_3n\sqrt{\eps}\ln 2 }\notag\\
=&\ln2\cdot c_3\sqrt{\eps}\cdot n(1+o(1)).\notag
\end{align}
We thus have
\[j^*\asymp\frac{c}{\ln2\cdot c_3\sqrt{\eps}}.\]
Plug this into $f$,
\begin{align}
f(j^*)=&-j^*\ln j^*+\paren{\ln\lambda+1}j^*+\paren{cn-1/2}\ln j^*-\frac{1}{2}\ln\paren{2\pi}\notag\\
=&-\frac{c}{\ln2\cdot c_3\sqrt{\eps}}\ln\paren{\frac{c}{\ln2\cdot c_3\sqrt{\eps}}}+\paren{-\ln2\cdot c_3\sqrt{\eps}\cdot n+1}\frac{c}{\ln2\cdot c_3\sqrt{\eps}}\notag\\
&+\paren{cn-1/2}\ln\paren{\frac{c}{\ln2\cdot c_3\sqrt{\eps}}}-\frac{1}{2}\ln\paren{2\pi}\notag\\
=& -n\paren{1+o(1)}\paren{c-c\ln\frac{c}{\ln2\cdot c_3\sqrt{\eps}}}. \notag
\end{align}
Finally,  the exponent of expression~\eqref{eqn:mmt_bd} is
\begin{align}
D+c\ln\frac{L}{2}-\ln\frac{c_2}{\delta}\approx&c-c\ln\frac{c}{\ln2\cdot c_3\sqrt{\eps}}+c\ln\frac{L}{2}-\ln\frac{c_2}{\delta}\notag\\
=&c\ln L-\paren{c+1}\ln\frac{1}{\delta}+c-c\ln\frac{2c}{\ln2\cdot c_3\sqrt{c_1}}-\ln c_2.\notag
\end{align}
In order for it to be positive as $\delta\to0$, we had better set $L=\paren{1/\delta}^a$, where $ac>c+1$, i.e., $a>1+1/c$. If we only assume the first $k=cn<n$ moments are Poissonian for some $c=\frac{1}{1+\gamma}<1$ where $\gamma>0$ is a some small positive constant, then we need to take $a>2+\gamma$.

Using arguments similar to those in the previous section, we get:
\begin{lemma}
	If Conjecture~\ref{conj:poissonianity} is true, then there exists a lattice $ \Lf $ such that $ \Lf\cap\cB(0,\sqrt{nP}) $ has rate $ C(P,N)-\delta $ and is $ \left(P,N ,\cO(1/\delta^{1+1/c})\right)$-list decodable.
	\label{lemma:poissonmoment_listsize}
\end{lemma}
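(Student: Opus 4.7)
\textbf{Proof proposal for Lemma~\ref{lemma:poissonmoment_listsize}.}
The plan is to mirror the heuristic in Sec.~\ref{sec:cond_ld_haar_cb}--\ref{sec:cond_ld_haar_mmt} but to replace the (unavailable) full Poisson distributional assumption by a Markov bound on the $k$th moment, where $k = cn$ is exactly the degree up to which Conjecture~\ref{conj:poissonianity} gives us control. First I would sample $\Lambda\sim\mu$ from the Haar distribution on $\sL_n$ and set the scaling $\alpha = \sqrt{nP}\,V_n^{1/n}/2^R$ so that, by Siegel's formula (Theorem~\ref{thm:first-order-avg-formula}), the codebook $\cC \coloneqq \alpha\Lambda \cap \cB(0,\sqrt{nP})$ has expected size $2^{nR}$ with $R = \tfrac12\log(P/N)-\delta$. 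Concentration of $|\cC|$ around its expectation can then be extracted from the same Poisson moment assumption applied to the \emph{large} symmetric set $\cB(0,\sqrt{nP})$: a Markov/Chebyshev-type inequality on a small moment is enough to ensure that $|\cC|\geq 2^{n(R-\delta/2)}$ with high probability.

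Next, to show list decodability, I would discretize $\cB(0,\sqrt{nP}+\sqrt{nN})$ by an $\sqrt{n\eps}$-net $\cY$ with $\eps = c_1\delta^2$, so $|\cY|\leq(c_2/\delta)^n$, and reduce to the event $|\alpha\Lambda\cap\cB(\vy,\sqrt{nN}+\sqrt{n\eps})|>L$ for some fixed $\vy\in\cY$. Since Conjecture~\ref{conj:poissonianity} is stated only for centrally symmetric sets, the crucial reduction is the symmetrization
\[
\bigl|\Lambda\cap \alpha^{-1}\cB(\vy,r)\bigr| \;=\; \tfrac12\bigl|\Lambda\cap \cB_1\bigr| + \tfrac12\bigl|\Lambda\cap \cB_2\bigr|,
\]
where $\cB_1$ and $\cB_2$ are the symmetric union and symmetric intersection of $\alpha^{-1}\cB(\vy,r)$ with its reflection through the origin. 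Both are symmetric and of volume at most $V = 2^{\cO(n)}$, so the hypothesis of Conjecture~\ref{conj:poissonianity} applies to each summand.

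Setting $X \coloneqq |\Lambda\cap \alpha^{-1}\cB(\vy,\sqrt{nN}+\sqrt{n\eps})|$ as in~\eqref{eqn:x_def} and $k\coloneqq cn$, the Poisson moment assumption gives $\E[X^k]\le \E[Y^k]+o(1)$ with $Y\sim\pois(V/2)$ and $V$ as computed in~\eqref{eqn:pois_param}, namely $V\asymp n^{-1/2}2^{-c_3 n\sqrt{\eps}}$. By Markov,
\[
\Pr[X>L]\;\leq\;\E[X^k]/L^k,
\]
and the central computation is to evaluate $\E[Y^k]=e^{-V/2}\sum_{j\ge0}(V/2)^j j^k/j!$ (Fact~\ref{fact:pois_mmt}) asymptotically. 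I would use Stirling (Lemma~\ref{lem:stirling}) to write each summand as $\exp(f(j))$ and maximize $f$ by differentiation, locating the critical point $j^\star \asymp c/(\ln 2\cdot c_3\sqrt{\eps})$ via the Lambert $W$ function and the asymptotic $W(x)\sim \ln x$. Substituting yields $\E[Y^k]\leq \exp\!\bigl(-n(1+o(1))[c-c\ln(c/(\ln 2\cdot c_3\sqrt{\eps}))]\bigr)$, which I would denote $e^{-nD}$.

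Finally, a union bound over $\vy\in\cY$ produces a failure probability at most $\exp\!\bigl(-n(D+c\ln(L/2)-\ln(c_2/\delta))\bigr)$. Choosing $L=(1/\delta)^a$ and simplifying, the exponent becomes (up to lower-order constants) $c\ln L - (c+1)\ln(1/\delta) + \Theta(1) = (ac-c-1)\ln(1/\delta) + \Theta(1)$, which is strictly positive for small $\delta$ whenever $a>1+1/c$. This shows the failure probability is $o(1)$, hence a random Haar lattice is $(P,N,\cO(\delta^{-(1+1/c)}))$-list decodable whp, proving the existence of the desired $\Lf$. The main obstacle I anticipate is the careful justification that symmetrizing to $\cB_1$ and $\cB_2$ preserves the volume regime $V=2^{\cO(n)}$ required by Conjecture~\ref{conj:poissonianity} for \emph{every} $\vy$ in the net (including those far from the origin, where $\vol(\cB_1)\approx 2\vol(\alpha^{-1}\cB(\vy,r))$), and the bookkeeping needed to ensure the Stirling-based saddle-point estimate for $\E[Y^k]$ is valid uniformly in $\eps,n,k$ as $\delta\to 0$; the rest is an application of Markov's inequality and a geometric union bound.
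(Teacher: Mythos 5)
Your proposal follows essentially the same route as the paper's own argument in Sec.~\ref{sec:cond_ld_haar}: the same Haar-sampled scaled lattice with $\alpha$ fixed by Siegel's formula, the same $\sqrt{n\eps}$-net with $\eps=c_1\delta^2$, the same symmetrization into $\cB_1$ and $\cB_2$ to fit the hypothesis of Conjecture~\ref{conj:poissonianity}, Markov on the $k$-th moment with $k=cn$, the Stirling/Lambert-$W$ saddle-point evaluation of the Poisson moment, and the final union bound forcing $L=(1/\delta)^a$ with $a>1+1/c$. The argument is correct (modulo the conditional nature of the statement), and if anything you are slightly more careful than the paper in carrying the symmetrization explicitly into the moment computation for off-center $\vy$.
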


\subsection{Remark}
Careful readers might have observed that in order to prove Lemma~\ref{lemma:poissonmoment_listsize}, we do not really need the first $ cn $ moments to be Poisson. It suffices to show that the first  $ cn $ moments are bounded from above by a quantity that is subexponential in $ n $. However, we are optimistic that a result similar to Conjecture~\ref{conj:poissonianity} can indeed be proved for the Haar distribution on $ \sL_n $.

\section{Concluding remarks and open problems}\label{sec:rk_open_prob}
In this paper we initiate a systematic study of the list size problem for codes over $\bR$. In particular, upper bounds on list sizes of nested Construction-A lattice codes and infinite Construction-A lattices are exhibited. Similar upper bounds are also obtained for an ensemble of regular infinite constellations. Matching lower bounds for such an ensemble are provided. Other coding-theoretic properties are studied by the way. Our lower bound for random spherical codes also matches the upper bound in previous work. A caveat is that all of our bounds are concerned with \emph{typical} scaling of the list sizes of \emph{random} codes sampled from the ensembles of interest. The extremal list sizes \emph{may} be smaller than our lower bounds. We conclude the paper with several open questions.

\begin{enumerate}
	\item  Careful readers might have already noted that a missing piece in this work is a list size lower bound for random Construction-A lattice codes. We had trouble replicating the arguments in~\cite{guruswami2013combinatorial}. We leave it as an open question to get a $\poly(1/\delta)$ list size lower bound.
	\item Can one sample efficiently from the Haar distribution on the spaces of our interest? In particular, can one sample efficiently a generator matrix $\bfG$ from the Haar distribution $\mu$ on $\speclin(n,\bR)$? Can one sample eficiently a lattice $\Lambda$ from the Haar distribution $\mu$ on $\speclin(n,\bR)/\speclin(n,\bZ)$? To this end, we can think of $\speclin(n,\bR)$ as a codimensional-one hypersurface in $\bR^{n^2}$ cut off by the equation $\det(\bfG)=1$. Readers from the Monte Carlo Markov Chain (MCMC) community may be interested in such problems.
	\item A very intriguing question which we are unable to resolve in this work is to bring down the exponential list size of random Construction-A lattice codes. We do not believe that our upper bound is tight. A starting step towards this goal is probably to  obtain an averaging formula custom tailored for Construction-A lattices. Indeed, Loeliger~\cite{loeliger-1997-avg-formula-constr-a-lin-codes} has proved a first-order averaging formula for (appropriately scaled) Construction-A lattices as an analog of Siegel's formula for Haar lattices. Specifically, consider an ensemble of Construction-A-type lattices $\Lambda\coloneqq\frac{1}{\alpha}(\cC+q\bZ^n)$ where $\cC\sim\gr(\kappa,\bF_q^n)$ is a uniformly random $\kappa$-dimensional subspace of $\bF_q^n$. Then for any bounded measurable   compactly supported function $\rho\colon\bR^n\to\bR$,  it holds that
	\begin{equation*}
	\exptover{\cC\sim\sC_{n,\kappa}}{\sum_{\vx\in\Lambda\setminus\curbrkt{0}}\rho(\vx)}=\frac{1}{\card{\sC_{n,\kappa}}}\sum_{\cC\in\sC_{n,\kappa}}\sum_{\vx\in\Lambda\setminus\curbrkt{0}}\rho(\vx)\xrightarrow{\alpha\to\infty,q/\alpha\to\infty}\det(\Lambda)^{-1}\int_{\bR^n}\rho(\vx)\diff\vx,
	\end{equation*}
	where $\sC_{n,\kappa}\coloneqq\gr(\kappa,\bF_q^n)$ and the covolume
	\[\det(\Lambda)=\paren{\frac{1}{\alpha}}^n[\bZ^n\colon(\cC+q\bZ^n)]=\card{\bZ^n/(\cC+q\bZ^n)}/\alpha^{n}=q^{n-\kappa}/\alpha^n,\]
	is kept fixed. 
	Can one lift Loeliger's formula to $k$-variate functions $\rho\colon(\bR^n)^k\to\bR$ and get a higher-order averaging formula for Construction-A lattices as an analog of Rogers's formula for Haar lattices?
	\item Can one compute similar moments for random Construction-A lattices? Given a random Construction-A lattice $\Lambda=q^{-1}\cC+\bZ^n$ where $\cC$ is a $\kappa$-dimensional random linear code in $\bF_q^n$, compute the $k$-th moment $\expt{\card{\Lambda\cap\cB_2^n(\vy,\sqrt{nN})}^k}$ for any $\vy\in\bR^n$ and for $k$ as large as possible.  For \emph{random binary linear code} over $\bF_2^n$ of rate $1-H(p)+\delta$,\footnote{Note that such a code operates at a rate \emph{above} capacity and the corresponding moments they are interested in are exponentially large. Indeed, they instead consider \emph{centered} moments $\expt{(X-\expt{X})^k}$.} Linial and Mosheiff~\cite{linial-mosheiff-2018-wt-dist-rand-bin-lin} recently managed to \emph{characterize} the first $\cO(n/\log n)$ moments of the number of codewords in a Hamming \emph{sphere} of radius $np$. It turns out that the normalized centered moment \[\frac{\expt{(\card{\cC\cap\sham(0,np)}-\expt{\card{\cC\cap\sham(0,np)}})^k}}{\var{\card{\cC\cap\sham(0,np)}}^{k/2}}\] behaves like the moment of a Gaussian (recall Fact~\ref{fact:gaussian_mmt}) up to some threshold $k<k_0$, where $k_0$ is 3 or 4 for $\delta$ not too small. From $k_0$ on, linearity quickly kicks in and dominates the behaviour of the moments. 
	\item We showed that  Haar lattices of rate $\frac{1}{2}\log\frac{P}{N}-\delta$ are $(P,N,\poly(1/\delta))$ whp conditioned on Conjecture~\ref{conj:poissonianity}. Can one show other coding-theoretic goodness properties under the conjecture?
	It is known that Haar lattices are good for packing \cite{shlosman-tsfasman-2000-random-packing}. 
	Are they also good for covering, AWGN, quantization, etc.?
	\item In this paper, the list decodability of two ensembles (Construction-A and Haar) of lattices are considered. The ultimate goal is to find an \emph{explicit} $(P,N,\poly(1/\delta))$-list decodable lattice code of rate $\frac{1}{2}\log\frac{P}{N}-\delta$. Recently Kaufman and Mass~\cite{kaufman-mass-2018-lattices-from-hdx} constructed explicit lattices of good \emph{distance}  from high dimensional expanders. 
	However, there is no explicit bound on the covolume of the lattice. 
	Therefore, it is unclear whether their construction is good for packing. 
	Moreover, their results are conditioned on the  conjecture that the cohomology group of Ramanujan complexes with \emph{integer} coefficients is large. 
	\item  Our lower bounds on list sizes only indicate  typical behaviours of ensembles of random lattices. This does not exclude the existence of codes with smaller list sizes. Can one prove a lower bound on list sizes of \emph{general} codes over reals?
	Namely, for any  $2^{nR}$ points on $\cS^{n-1}\paren{0,\sqrt{nP}}$, how large can $L$ be such that
	one can always find a position $\vy$ to which there are at least $L$ points that are  $\sqrt{nN}$-close?

\end{enumerate}

\section*{Acknowledgement}
YZ thanks Noah Stephens-Davidowitz for sharing his expertise on lattices, in particular, introducing him the Poisson heuristics as a prediction of the behaviors of random lattices when he was visiting MIT and for exchanging multiple informative emails afterwards. YZ thanks Mary Wootters for clarifying the state of the art of list decodability of random linear codes. YZ also wants to thank Boris Bukh, Chris Cox, Sidharth Jaggi, Nicolas Resch and Tomasz Tkocz for several inspiring discussions, respectively. Part of this work was done when YZ was visiting CMU under the mentorship of Venkatesan Guruswami who listened to the progress and provided generous encouragement at the early stage. Part of this work was done when SV was a postdoc at CUHK.

\appendices

\section{Table of notation}\label{sec:tab_notation}
\begin{center}
	\begin{longtable}{|p{0.091\textwidth}|p{0.168\textwidth}|p{0.25\textwidth}|p{0.414\textwidth}|}
		\hline
		\textbf{Symbol} & \textbf{Section} & \textbf{Description} & \textbf{Definition/Value/Range} \\ \hline
		
		$A_{n-1}$  & Throughout the paper & Area of an $(n-1)$-dimensioinal unit sphere & $A_{n-1}\coloneqq\area(\cS_2^{n-1})$ \\\hline
		$\cA$ & Sec.~\ref{sec:ic},~\ref{sec:regular_ic} & Cube of side length $\alpha$ & $\cA\coloneqq[-\alpha/2,\alpha/2)^n$ \\\hline
		\multirow{3}{0.091\textwidth}{$C$} & \multirow{2}{0.168\textwidth}{Sec.~\ref{sec:finitefield_prior_work}} & List decoding capacity & $C\coloneqq1-H_q(p)\in[0,1]$ \\\cline{3-4}
		 &  & List decoding capacity & $C\coloneqq1-p\in[0,1]$ \\\cline{2-4}
		 & Throughout  the paper & List decoding capacity & $C\coloneqq\frac{1}{2}\log\frac{P}{N}\in\bR_{\ge0}$ \\\hline
		\multirow{3}{0.091\textwidth}{$\cC$} & Sec.~\ref{sec:finitefield_prior_work} & Code & $\cC\in\binom{\bF_q^n}{q^{nR}}$   \\\cline{2-4}
		 & Throughout the paper & Code & $\cC\subset\bR^n$ of size $2^{nR}$ \\\cline{2-4}
		 & Sec.~\ref{sec:results},~\ref{sec:ic},~\ref{sec:regular_ic} & IC & $\cC\subset\bR^n$ \\\hline
		$C(L)$  & Sec.~\ref{sec:finitefield_prior_work} &  List-$L$ capacity & See Eqn.~\eqref{eqn:list_l_cap} \\\hline
		$C_{\rand}(\cW)$ & Sec.~\ref{sec:finitefield_prior_work} & Random code capacity   & $C_{\text{rand}}(\cW)\coloneqq\max_P\min_{P_{\bfx\bfs\bfy}=P_\bfx P_\bfs W_{\bfy|\bfx\bfs}\colon P_\bfx=P}I(\bfx;\bfy)$ \\\hline
		\multirow{2}{0.091\textwidth}{$\bfG$} & \multirow{2}{0.168\textwidth}{Throughout the paper} &  Generator matrix of a linear code   & $\bfG\in\bF_q^{n\times \kappa}$ \\\cline{3-4}
		 &  & Generator matrix of a lattice & $\bfG\in\bR^{n\times \kappa}$ \\\hline
		\multirow{2}{0.091\textwidth}{$k$} & Sec.~\ref{sec:haarlattice_introduction},~\ref{sec:ld_haar} & Degree of moments & $k=cn$ \\\cline{2-4}
		 & Sec.~\ref{sec:rk_open_prob} & Degree of moments & $k=\cO(n/\log n)$ \\\hline
		$\ell$ & Sec.~\ref{sec:ld_constr_a} &  Log of list size  & $\ell\coloneqq\log_q(L+1)$ \\\hline
		$L$ & Throughout the paper & List size & $ L\in[q^{nR}]$   \\\hline
		$\cL$ & Throughout the paper & List & $\cL\in\binom{\cC}{\le L}$   \\\hline
		
		$\sL_n$ & Sec.~\ref{sec:haarlattice_introduction},~\ref{sec:ld_haar} & Space of determinant-1 lattices  & $\sL_n\coloneqq\curbrkt{\Lambda\le\bR^n\text{ lattice}\colon \det(\Lambda)=1}$  \\\hline
		$m$ & Sec.~\ref{sec:lb_ls_spherical},~\ref{sec:ld_constr_a} & Message & $m\in[q^{nR}]$   \\\hline
		\multirow{2}{0.091\textwidth}{$M$} & Sec.~\ref{sec:lb_ls_spherical},~\ref{sec:ld_constr_a} &  Number of messages/size of codebook  & $M\coloneqq|\cM|=q^{nR}$ \\\cline{2-4}
		& Sec.~\ref{sec:finitefield_prior_work} &  Symmetrizability  & See Eqn.~\eqref{eqn:sym} \\\hline
		
		$\cM$ & Sec.~\ref{sec:lb_ls_spherical},~\ref{sec:ld_constr_a} & Set of messages & $\cM\coloneqq\curbrkt{0,1,\cdots,2^{nR}-1}$  \\\hline
		$n$ & Throughout the paper &  Blocklength & $n\in\bZ_{>0}$   \\\hline
		$N$ & Throughout the paper & Adversary's power constraint & $N\in\bR_{>0}$   \\\hline
		$p$ & Sec.~\ref{sec:finitefield_prior_work} & Adversary's power constraint & $p\in[0,1]$   \\\hline
		$P$ & Throughout the paper & Transmit power constraint & $ P\in\bR_{>0} $   \\\hline
		$\cP(\Lambda)$ &  Sec.~\ref{sec:ld_constr_a} & Fundamental parallelepiped & $\cP(\Lf)\coloneqq \{ \bfG\vx:\vx\in [0,1 )^n \}$  \\\hline
		
		$q$ & Throughout the paper &  Characteristic of finite field & Prime number  \\\hline
		$Q_\Lambda(\cdot)$ & Sec.~\ref{sec:ld_constr_a}  & Lattice quantizer & See Eqn.~\eqref{eqn:lattice_quant}  \\\hline
		
        $\rcov$ & Sec.~\ref{sec:ld_constr_a} & Covering radius of a lattice & See  Appendix~\ref{sec:primer_lattices}  \\\hline
		
		\multirow{2}{0.091\textwidth}{$\reff$} & Sec.~\ref{sec:ld_constr_a} & Effective radius of a lattice & See Appendix~\ref{sec:primer_lattices}  \\\cline{2-4}
		& Sec.~\ref{sec:results},~\ref{sec:ic},~\ref{sec:regular_ic} & Effective radius of an infinite constellation & See Def.~\ref{def:ic} \\\hline
		$\rpack$ & Sec.~\ref{sec:ld_constr_a} & Packing radius of a lattice & See Appendix~\ref{sec:primer_lattices}  \\\hline
		$R$ & Throughout the paper & Rate of a code & $\frac{\log|\cC|}{n}\in\bR_{>0} $  \\ \hline
		
		$\vs$ & Throughout the paper  & Jamming vector & $\vs\in\cB(0,\sqrt{nN})$ \\\hline
		$V(\cC)$ & Sec.~\ref{sec:results},~\ref{sec:ic},~\ref{sec:regular_ic} & Effective volume of an IC & $V(\cC)=1/\Delta(\cC)$ \\\hline
		$V_n$   & Throughout the paper & Volume of an $n$-dimensioinal unit ball & $V_n\coloneqq\vol(\cB_2^n)$ \\\hline
		$\cV(\Lambda)$ & Sec.~\ref{sec:ld_constr_a},~\ref{sec:ic}  & Fundamental Voronoi region & $\cV(\Lf) \coloneqq \{ \vx\in\bR^n: Q_{\Lf}(\vx) = 0 \}$ \\\hline
		$W$ & Sec.~\ref{sec:lb_ls_spherical},~\ref{sec:regular_ic} & Witness of list decodability & See Eqn.~\eqref{eqn:w_spherical},~\eqref{eqn:w_ic}   \\\hline
		
		$W(\cdot|\cdot,\cdot)$ & Sec.~\ref{sec:finitefield_prior_work}  & Transition probability of an AVC & $W\colon\cY\times\cX\times\cS\to[0,1]$ \\\hline
		$\cW$ & Sec.~\ref{sec:finitefield_prior_work} & AVC & $\cW\coloneqq\curbrkt{W(\cdot|\cdot,s),\vs\in\cS}$    \\\hline
		$\vx$ & Throughout the paper & Transmitted codeword  & $\vbfx\in\cC$  \\\hline
		$X$ & Sec.~\ref{sec:ld_haar} & Number of lattice points in a ball & See Eqn.~\eqref{eqn:x_def}   \\\hline
		$\vy$ & Throughout the paper &  Received word & $\vy=\vx+\vs\in\cB(0,\sqrt{nP}+\sqrt{nN})\setminus\cB(0,\sqrt{nP}-\sqrt{nN})$ \\\hline
		$Y$ & Sec.~\ref{sec:haarlattice_introduction},~\ref{sec:ld_haar} & Poisson random variable &  $Y\sim\pois(V/2)$ \\\hline
		$\cY$ & Throughout the paper & Net for $\vy$'s & See specific definitions  \\\hline
		$\vbfz$ & Sec.~\ref{sec:regular_ic} & AWGN  & $\bR^n\ni\vbfz\sim\cN(0,\sigma^2\bfI)$ \\\hline
		
		$\alpha$  & Throughout the paper & Side length of $\cA$  & $\alpha\in\bR_{>0}$ \\\hline
		\multirow{2}{0.091\textwidth}{$\delta$} & Throughout the paper &  Gap to capacity & $\delta\coloneqq C-R\in\bR_{>0} $   \\\cline{2-4}
		& Sec.~\ref{sec:results},~\ref{sec:ic},~\ref{sec:regular_ic} & Gap between $\reff(\cC)$ and $\sqrt{nN}$ & $\delta\coloneqq\log\frac{\reff(\cC)}{\sqrt{nN}}$ \\\hline
		$\Delta(\cC)$ & Sec.~\ref{sec:results},~\ref{sec:ic},~\ref{sec:regular_ic}  & Density of an IC  & $\Delta(\cC)\coloneqq \limsup_{a\to\infty}\frac{|\cC\cap [0,a]^n|}{a^n}$ \\\hline
		
		$\eps$ & Throughout the paper & Parameter of a net & See specific definitions \\\hline
		$\bfTheta$  & Sec.~\ref{sec:haarlattice_introduction}  & Rogers's ensemble & $\bfTheta=\bfTheta(\theta_1,\cdots,\theta_{n-1},\omega)$ (Eqn.~\eqref{eqn:rogers_ensemble})  \\\hline
		$\kappa$  & Throughout the paper & Dimension of a linear code or a lattice & $\kappa\in\curbrkt{0,1,\cdots,n}$ \\\hline
		$\Lf$ & Throughout the paper & Lattice & $ \Lf\le \bR^n $   \\\hline
		$\mu$ & Sec.~\ref{sec:haarlattice_introduction},~\ref{sec:ld_haar}  & Haar measure on $\speclin(n,\bR)$, $\sL_n$ or $\genlin(n,\bR)$  & See Theorem~\ref{thm:haar_slnr} \\\hline
		$\tau$ & Sec.~\ref{sec:finitefield_prior_work} &  Gap to list decoding radius & $\tau\coloneqq 1-1/q-p\in\bR_{>0} $   \\\hline
		$\Phi$ & Sec.~\ref{sec:ld_constr_a} &  Natural embedding   & $\Phi\colon \bF_q\to\bZ$ \\\hline
		$\psi$ & Sec.~\ref{sec:lb_ls_spherical},~\ref{sec:ld_constr_a},~\ref{sec:regular_ic} & Encoding function   & $\psi\colon\cM\to\cC$ \\\hline
		
		$[\cdot]\mod\cA$ & Sec.~\ref{sec:ic},~\ref{sec:regular_ic} & Quantization error wrt $\alpha\bZ^n$ & $[\cdot]\mod\cA\coloneqq\cdot\mod\alpha\bZ^n$ \\\hline
		$[\cdot]\mod\Lambda$ & Sec.~\ref{sec:ld_constr_a}  & Lattice quantization error & $[\cdot]\mod\Lambda\coloneqq\cdot- Q_\Lambda(\cdot)$ \\\hline
		
		$(\cdot)^*$ & Sec.~\ref{sec:ic},~\ref{sec:regular_ic}  & Set modulo $\alpha\bZ^n$ & $(\cdot)^*\coloneqq\cdot\mod\cA$ \\\hline
		
	\end{longtable}
\end{center}

\section{Proof of Eqn.~\eqref{eqn:expt-toshow} and~\eqref{eqn:var-toshow}}\label{sec:calc_prop_lbound_randomspherical}

\subsection{A covering lemma}
\label{sec:cov-lemma}
Before proving Eqn.~\eqref{eqn:expt-toshow} and~\eqref{eqn:var-toshow}, we need the following lemma. 
It guarantees the existence of a covering of a sphere which is sufficiently spread out in the sense that the fraction of points in any spherical cap does not deviate much from the corresponding volume ratio. 
\begin{lemma}
\label{lem:covering-sphere}
Let $ r>0 $ and $ \eps>0 $ sufficiently small. 
There exists a subset $ \cY $ of the sphere $ \cS^{n-1}(0,\sqrt{nr}) $ such that
\begin{enumerate}
	\item \label{itm:cov-1} for every $ \vy\in\cS^{n-1}(0,\sqrt{nr}) $, there exists $ \vy'\in\cY $ with $ \|\vy - \vy'\|\le\sqrt{n\eps} $;
	\item \label{itm:cov-2} $ |\cY|=(c/\sqrt{\eps})^{n(1+o(1))} $ for some $c>0$ that is independent of $n$ and $ \eps $ but depends on $ r $;
	\item \label{itm:cov-3} for every $ \vy\in\cS^{n-1}(0,\sqrt{nr}) $ and every $ 0<\rho<r $, 
	\begin{align}
	\frac{\card{\cY\cap\C^{n-1}(\vy,\sqrt{n\rho})}}{|\cY|} &\ge \frac{1}{2}\frac{\area\paren{\C^{n-1}(\vz,\sqrt{n}(\sqrt{\rho} - \sqrt{\eps_\ell}))}}{\area\paren{\cS^{n-1}(\sqrt{nr})}}  , \notag \\
	\frac{\card{\cY\cap\C^{n-1}(\vy,\sqrt{n\rho})}}{|\cY|} &\le \frac{3}{2}\frac{\area\paren{\C^{n-1}(\vz,\sqrt{n}(\sqrt{\rho} + \sqrt{\eps_u}))}}{\area\paren{\cS^{n-1}(\sqrt{nr})}} , \notag 
	\end{align}
	where
	\begin{align}
	\eps_\ell &\coloneqq \paren{\frac{1}{2\sqrt{r}} + \frac{3}{2}}^2{\eps}, \quad 
	\eps_u \coloneqq \frac{9}{4}{\eps}. \notag 
	\end{align}
\end{enumerate}
\end{lemma}

\begin{proof}
Let $ \cY $ be a set of $ M = (1+o(1))\sqrt{2\pi n}\paren{4\sqrt{\frac{r}{\eps}}}^{n-1} $ points $ \vbfy_1,\cdots,\vbfy_M $ each independent and uniformly distributed on $ \cS^{n-1}(0,\sqrt{nr}) $. 
Note that $ M = (c/\sqrt{\eps})^{n+o(n)} $ for some $ c $ independent of $ n $ and $ \eps $ (but dependent on $ r $), which satisfies property~\ref{itm:cov-2}. 
We will show that such a $\cY$ satisfies all properties in Lemma~\ref{lem:covering-sphere} with high probability. 

By a standard volume argument, there exists a $ \sqrt{n\eps_1} $-net $\cZ$ of $ \cS^{n-1}(0,\sqrt{n\eps_1}) $ satisfying properties~\ref{itm:cov-1} and~\ref{itm:cov-2} with $ \eps $ replaced with $ \eps_1=\eps/4 $ (and the constant $c$ needs to be adjusted accordingly). 
\begin{align}
& \prob{\exists \vy\in\cS^{n-1}(0,\sqrt{nr}),\;\forall i\in[M],\;\|\vy - \vbfy_i\|>\sqrt{n\eps}} \notag \\
&\le \prob{\exists \vz\in\cZ,\;\forall i\in[M],\;\|\vz - \vbfy_i\|>\sqrt{n\eps} - \sqrt{n\eps_1}} \notag \\
&\le \sum_{\vz\in\cZ} \prod_{i\in[M]} \prob{\|\vz - \vbfy_i\|>\sqrt{n\eps/4}} \notag \\
&\le \paren{\frac{c}{\sqrt{\eps_1}}}^{n+o(n)} \paren{1 - \frac{\area\paren{\C^{n-1}(\sqrt{n\eps/8})}}{\area\paren{\cS^{n-1}(\sqrt{nr})}}}^M \label{eqn:rad-geom} \\
&\le \paren{\frac{c}{\sqrt{\eps_1}}}^{n+o(n)} \paren{1 - \frac{\vol\paren{\cB^{n-1}(\sqrt{n\eps/8})}}{\area\paren{\cS^{n-1}(\sqrt{nr})}}}^M \label{eqn:bound-cap-by-ball} \\
&= \paren{\frac{c}{\sqrt{\eps_1}}}^{n+o(n)} \paren{1 - \frac{V_{n-1}}{A_{n-1}}\sqrt{\frac{\eps}{8r}}^{n-1}}^M \notag \\
&= \paren{\frac{c}{\sqrt{\eps_1}}}^{n+o(n)} \paren{1 - \frac{1+o(1)}{\sqrt{2\pi n}}\sqrt{\frac{\eps}{8r}}^{n-1}}^M \label{eqn:area-vol-ratio} \\
&= \paren{\frac{c}{\sqrt{\eps_1}}}^{n+o(n)} \paren{1 - \frac{1+o(1)}{\sqrt{2\pi n}}\sqrt{\frac{\eps}{8r}}^{n-1}}^{{(1+o(1))}{\sqrt{2\pi n}}\sqrt{\frac{8r}{\eps}}^{n-1}\sqrt{2}^{n-1}} \label{eqn:choice-m} \\
&\le \paren{\frac{c}{\sqrt{\eps_1}}}^{n+o(n)}e^{-\sqrt{2}^{n-1}}. \label{eqn:ineq-e}
\end{align}
Eqn.~\eqref{eqn:rad-geom} follows since the set $ \curbrkt{\vy\in\cS^{n-1}(\sqrt{nr}):\|\vz - \vy\|\le\sqrt{n\eps/4}} $ forms a cap of radius $ \sqrt{n\eps'} $ where $ \eps' $ can be determined by inspecting the geometry. 
Specifically, 
$\sqrt{\eps'} = \sqrt{r}\sin\theta$, 
where $ \theta $ satisfies
\begin{align}
\cos\theta = \frac{r+r - \eps/4}{2r} = 1 - \frac{\eps}{8r}. \notag 
\end{align}
Therefore, 
\begin{align}
\eps' = \sqrt{r}\sqrt{1 - \paren{1 - \frac{\eps}{8r}}^2} 
\ge \sqrt{r}\sqrt{\frac{\eps}{8r}} 
= \sqrt{\eps/8}, \notag 
\end{align}
where the inequality follows since $ 1 - (1 - x)^2 \ge x $ for any $ 0\le x\le 1 $. 
Eqn.~\eqref{eqn:bound-cap-by-ball} follows since $ \area\paren{\C^{n-1}(\gamma)}\ge\vol\paren{\cB^{n-1}(\gamma)} $ for any $ \gamma>0 $.
In Eqn.~\eqref{eqn:area-vol-ratio}, the ratio $ V_{n-1}/A_{n-1} $ is given by
\begin{align}
\frac{V_{n-1}}{A_{n-1}}
&= \frac{\frac{1}{\sqrt{\pi(n-1)}}\paren{\frac{2\pi e}{n-1}}^{\frac{n-1}{2}}}{\sqrt{\frac{n}{\pi}}\paren{\frac{2\pi e}{n}}^{\frac{n}{2}}}(1+o(1))
= \frac{1}{\sqrt{2\pi en}}\paren{\frac{n}{n-1}}^{n/2} (1+o(1))
\to \frac{1}{\sqrt{2\pi n}}(1+o(1)). \notag 
\end{align}
Eqn.~\eqref{eqn:choice-m} is by the choice of $M$. 
Eqn.~\eqref{eqn:ineq-e} follows from the inequality $ (1-1/x)^x\le 1/e $ for $ x\ge1 $. 
Therefore, property~\ref{itm:cov-1} holds with probability at least $ 1 - e^{-e^{\Omega(n)}} $.

To show property~\ref{itm:cov-3}, we quantize the interval $ [0,r] $ using an $ \eps_2 $-net $ \{0,\eps_2,2\eps_2,\cdots,\floor{r/\eps_2}\eps_2\} $. 
Such a net satisfies that for any $ \rho\in[0,r] $, there exists $ \tau\in\cI $ with $ |\tau - \rho|\le\eps_2 $. 
Let
\begin{align}
\gamma_\ell &\coloneqq \frac{\area\paren{\C^{n-1}(\sqrt{n}(\sqrt{\rho} - \sqrt{\eps_\ell}))}}{\area\paren{\cS^{n-1}(\sqrt{nr})}}, \quad 
\gamma_u \coloneqq \frac{\area\paren{\C^{n-1}(\sqrt{n}(\sqrt{\rho} + \sqrt{\eps_u}))}}{\area\paren{\cS^{n-1}(\sqrt{nr})}}. \notag 
\end{align}
We then bound the probability that property~\ref{itm:cov-3} is violated. 
\begin{align}
& \prob{\exists\vy\in\cS^{n-1}(0,\sqrt{nr}),\;\exists\rho\in[0,r],\; \frac{\card{\cY\cap\C^{n-1}(\vy, \sqrt{n\rho})}}{M} \notin \left[\frac{1}{2}\gamma_\ell,\frac{3}{2}\gamma_u\right]} \label{eqn:tobound-cov3} \\
&\le \prob{\exists \vz\in\cZ,\;\exists\tau\in\cI,\;
\|\vz - \vy\|\le\sqrt{n\eps_1},\;
|\tau - \rho|\le\eps_2,\;
\frac{\card{\cY\cap\C^{n-1}(\vz,\sqrt{n\tau})}}{M} \notin \left[\frac{1}{2}\gamma_\ell,\frac{3}{2}\gamma_u\right]} \notag \\
&\le \sum_{\vz\in\cZ}\sum_{\tau\in\cI} \left(\prob{\frac{\card{\cY\cap\C^{n-1}(\vz,\sqrt{n\tau_\ell})}}{M} < \frac{1}{2}\gamma_\ell} 
+ \prob{\frac{\card{\cY\cap\C^{n-1}(\vz,\sqrt{n\tau_u})}}{M} > \frac{3}{2}\gamma_u}\right). \label{eqn:errorbound-cov3} 
\end{align}
The radii $ \tau_\ell $ and $ \tau_u $ are such that
\begin{align}
\C^{n-1}(\vz,\sqrt{n\tau_\ell})
\subset \C^{n-1}(\vy,\sqrt{n\rho})
\subset \C^{n-1}(\vz,\sqrt{n\tau_u}). \notag 
\end{align}
In what follows, we derive lower and upper bounds on $ \tau_\ell $ and $ \tau_u $, respectively. 
The geometry is depicted in Fig.~\ref{fig:geom-cov}. 
\begin{figure}[htbp]
	\centering
	\begin{subfigure}[t]{0.45\textwidth}
		\centering
		\includegraphics[width=0.95\linewidth]{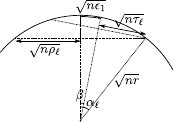}
		\caption{}
		\label{fig:geom-cov1}
	\end{subfigure}
	~
	\begin{subfigure}[t]{0.45\textwidth}
		\centering
		\includegraphics[width=0.95\linewidth]{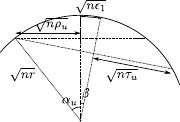}
		\caption{}
		\label{fig:geom-cov2}
	\end{subfigure}
	\caption{We consider a cap of radius $ \sqrt{n\rho} $ centered at $ \vy $ on the sphere $ \cS^{n-1}(\sqrt{nr}) $. Due to quantization error, the center of the cap is distorted to $ \vz $ which is at most $ \sqrt{n\eps_1} $ from $ \vy $ and the radius of the cap is distorted by $ \eps_2 $. 
	Let $ \rho_\ell \coloneqq \rho - \eps_1 $ and $ \rho_u \coloneqq \rho + \eps_2 $. 
	The radii $ \tau_\ell $ and $ \tau_u $ are such that the original cap is sandwiched between two tilted caps: $ \C^{n-1}(\vz,\sqrt{n\tau_\ell})\subset\C^{n-1}(\vy,\sqrt{n\rho})\subset\C^{n-1}(\vz,\sqrt{n\tau_u}) $. }
	\label{fig:geom-cov}
\end{figure}

For notational convenience, let $ \rho_\ell \coloneqq \rho - \eps_2 $ and $ \rho_u \coloneqq \rho + \eps_2 $. 
Define $ \alpha_\ell,\alpha_u,\beta $ be defined as:
\begin{align}
\sin\alpha_\ell = \sqrt{\frac{\rho_\ell}{r}}, \quad 
\sin\alpha_u = \sqrt{\frac{\rho_u}{r}}, \quad 
\cos\beta = \frac{r + r - \eps_1}{2r} = 1 - \frac{\eps_1}{2r}. \notag 
\end{align}
For $ \tau_\ell $, we have
\begin{align}
\sqrt{\tau_\ell} 
&= \sqrt{r}\sin(\alpha_\ell - \beta) \notag \\
&= \sqrt{r} (\sin\alpha_\ell\cos\beta - \cos\alpha_\ell\sin\beta) \notag \\
&= \sqrt{r}\paren{\sqrt{\frac{\rho_\ell}{r}}\paren{1 - \frac{\eps_1}{2r}} - \sqrt{1 - \frac{\rho_\ell}{r}}\sqrt{1 - \paren{1 - \frac{\eps_1}{2r}}^2}} \notag \\
&\ge \sqrt{\rho_\ell}\paren{1 - \frac{\eps_1}{2r}} - \sqrt{r - \rho_\ell}\sqrt{\frac{\eps_1}{r}} \label{eqn:elem-ineq} \\
&\ge \sqrt{\rho - \eps_2}\left(1 - \frac{\eps_1}{2r}\right) - \sqrt{\eps_1} \label{eqn:rho-lessthan-r} \\
&\ge \paren{\sqrt{\rho} - \sqrt{\eps_2}}\paren{1 - \frac{\eps_1}{2r}} - \sqrt{\eps_1} \label{eqn:sqrt-ineq} \\
&\ge \sqrt{\rho} - \frac{\sqrt{\rho}\eps_1}{2r} - \sqrt{\eps_2} - \sqrt{\eps_1} \label{eqn:drop-positive} \\
&\ge \sqrt{\rho} - \paren{\frac{1}{2\sqrt{r}} + 1}\sqrt{\eps_1} - \sqrt{\eps_2}. \label{eqn:rho-lessthan-r-again} 
\end{align}
Eqn.~\eqref{eqn:elem-ineq} follows from the elementary inequality: $ 1 - (1 - x)^2\le2x $ for any $ x\ge0 $. 
Eqn.~\eqref{eqn:rho-lessthan-r} is by the assumption $ \rho< r $. 
Eqn.~\eqref{eqn:sqrt-ineq} follows from the fact that $ \sqrt{x - y}\ge\sqrt{x} - \sqrt{y} $ for any $ x\ge y\ge0 $. 
In Eqn.~\eqref{eqn:drop-positive}, we drop the term $ \frac{\sqrt{\eps_2}\eps_1}{2r} $. 
Eqn.~\eqref{eqn:rho-lessthan-r-again} follows since $ \rho< r $ and $ \eps_1\le\sqrt{\eps_1} $ for $ 0\le\eps_1\le1 $. 

For $ \tau_u $, we have
\begin{align}
\sqrt{\tau_u} 
&= \sqrt{r}\sin(\alpha_u + \beta) \notag \\
&= \sqrt{r}(\sin\alpha_u\cos\beta + \cos\alpha_u\sin\beta) \notag \\
&= \sqrt{r}\paren{\sqrt{\frac{\rho_u}{r}}\paren{1 - \frac{\eps_1}{2r}} + \sqrt{1 - \frac{\rho_u}{r}}\sqrt{1 - \paren{1 - \frac{\eps_1}{2r}}^2}} \notag \\
&\le \sqrt{\rho_u} + \sqrt{r}\sqrt{\frac{\eps_1}{r}} \notag \\
&= \sqrt{\rho + \eps_2} + \sqrt{\eps_1} \notag \\
&\le \sqrt{\rho} + \sqrt{\eps_2} + \sqrt{\eps_1}. \label{eqn:sqrt-ineq-another} 
\end{align}
Eqn.~\eqref{eqn:sqrt-ineq-another} follows from the elementary inequality: $ \sqrt{x + y}\le\sqrt{x} + \sqrt{y} $ for any $ x,y\ge0 $.

Set $ \eps_2 = \eps $ and 
\begin{align}
\sqrt{\eps_\ell} &\coloneqq \paren{\frac{1}{2\sqrt{r}} + 1}\sqrt{\eps_1} + \sqrt{\eps_2}
= \paren{\frac{1}{2\sqrt{r}} + 1}\sqrt{\eps} + \sqrt{\eps/4}
= \paren{\frac{1}{2\sqrt{r}} + \frac{3}{2}}\sqrt{\eps}, \notag \\
\sqrt{\eps_u} &\coloneqq \sqrt{\eps_1} + \sqrt{\eps_2}
= \sqrt{\eps} + \sqrt{\eps/4}
= \frac{3}{2}\sqrt{\eps}. \notag 
\end{align}
Then the first term of the summand in Eqn.~\eqref{eqn:errorbound-cov3} is at most 
\begin{align}
\prob{\frac{\card{\cY\cap\C^{n-1}(\vz,\sqrt{n\tau_\ell})}}{M} < \frac{1}{2}\gamma_\ell}
&\le \prob{\card{\cY\cap\C^{n-1}(\vz,\sqrt{n}(\sqrt{\rho} - \sqrt{\eps_\ell}))} < \frac{1}{2}\gamma_\ell M}. \notag 
\end{align}
Note that 
\begin{align}
\expt{\card{\cY\cap\C^{n-1}(\vz,\sqrt{n}(\sqrt{\rho} - \sqrt{\eps_\ell}))}} 
= \frac{\area\paren{\C^{n-1}(\sqrt{n}(\sqrt{\rho} - \sqrt{\eps_\ell}))}}{\area\paren{\cS^{n-1}(\sqrt{nr})}}M
&= \gamma_\ell M. \notag 
\end{align}
Hence by the Chernoff bound, 
the above probability is at most 
\begin{align}
\exp\paren{-\frac{(1/2)^2}{3}\gamma_\ell M}
&= e^{-e^{\Omega(n)}}. \notag 
\end{align}
By similar reasoning, the second term of the summand in Eqn.~\eqref{eqn:errorbound-cov3} is at most $ \exp\paren{-\frac{(1/2)^2}{3}\gamma_u M} = e^{-e^{\Omega(n)}} $. 
Since the concentration bounds are doubly exponentially small, the probability in Eqn.~\eqref{eqn:tobound-cov3} is still $ e^{-e^{\Omega(n)}} $ once we take union bounds over $ \vz\in\cZ $ and $ \tau\in\cI $ which are (singly) exponential in total. 

Finally, a union bound shows that with probability doubly exponentially close to $1$, $\cY$ of size $ M = (c/\sqrt{\eps})^{n-o(n)} $ satisfies properties~\ref{itm:cov-1} and~\ref{itm:cov-3} simultaneously. 
This completes the proof. 
\end{proof}

\subsection{Proof of Eqn.~\eqref{eqn:expt-toshow}}\label{sec:lbound_EW}
\begin{align}
\expt{W}=&\sum_{\cL\in\binom{\cM}{L}}\sum_{\vy\in\cY}\prob{\psi\paren{\cL}\subset\cB^n\paren{\vy,\sqrt{nN}}}\label{eqn:enc_notation}\\
=&\binom{M}{L}\card{\cY}\mu^L\label{eqn:app-mu-def}\\
\ge&\paren{M/L}^L\card{\cY}\mu^L,\notag
\end{align}
where in Eqn.~\eqref{eqn:enc_notation} we use the shorthand notation
\[\psi(\cL)\coloneqq\curbrkt{\psi(m)\colon m\in\cL},\] 
and in Eqn.~\eqref{eqn:app-mu-def}, $\mu$ is defined as follows,
\[\mu\coloneqq\frac{\area\paren{\C^{n-1}\paren{\sqrt{nN}}}}{\area\paren{\cS^{n-1}\paren{\sqrt{nP}}}}.\]

\subsection{Proof of Eqn.~\eqref{eqn:var-toshow}}\label{sec:ubound_varW}
For $\cL=\curbrkt{m_1,\cdots,m_L}$ and $\vy\in\cY$, define
\[\bI\paren{\vy,\cL}\coloneqq\indicator{\psi\paren{\cL}\subset\cB^n\paren{\vy,\sqrt{nN}}}.\]

Now the variance of $W$ can be bounded from above as follows,
\begin{align}
\var{W}=&\expt{W^2}-\expt{W}^2\notag\\
=&\sum_{\vy_1,\vy_2\in\cY}\sum_{\cL_1,\cL_2\in\binom{\cM}{L}}\expt{\bI\paren{\vy_1,\cL_1}\bI\paren{\vy_2,\cL_2}}-\expt{\bI\paren{\vy_1,\cL_1}}\expt{\bI\paren{\vy_2,\cL_2}}\notag\\
\le&\sum_{\substack{\cL_1,\cL_2\in\binom{\cM}{L} \\ \cL\cap\cL_2\ne\emptyset}}\sum_{\vy_1,\vy_2\in\cY}\expt{\bI\paren{\vy_1,\cL_1}\bI\paren{\vy_2,\cL_2}}\label{eqn:indep}\\
=&\card{\cY}^2\sum_{\ell=1}^L\sum_{\substack{\cL_1,\cL_2\in\binom{\cM}{L} \\ \card{\cL_1\cap\cL_2}=\ell}}\probover{\vbfy_1,\vbfy_2,\cC}{\cE(\cL_1,\cL_2,\vbfy_1,\vbfy_2)},\label{eqn:rewrite}
\end{align}
where
\begin{enumerate}
	\item Eqn.~\eqref{eqn:indep} follows since for disjoint $\cL_1$ and $\cL_2$, $\bI\paren{y_1,\cL_1}$ and $\bI\paren{y_2,\bL_2}$ are independent and hence the corresponding summand vanishes; we upper bound the variance by dropping the negative term;
	\item in Eqn.~\eqref{eqn:rewrite} the probability is taken over the code construction and the pair $\vbfy_1,\vbfy_2$ sampled independently and uniformly from $\cY$; the event $\cE(\cL_1,\cL_2)$ is defined as
	\[\cE(\cL_1,\cL_2)\coloneqq\curbrkt{\psi\paren{\cL_1}\subset\cB^n\paren{\vbfy_1,\sqrt{nN}},\;\psi\paren{\cL_2}\subset\cB^n\paren{\vbfy_2,\sqrt{nN}}}.\]
\end{enumerate}
It is easy to verify that for any $m\in\cL_1\cap\cL_2$, 
\begin{align}
\cE(\cL_1,\cL_2)\subset\cE_1(\cL_1,\cL_2)\cap\cE_2(\cL_1,\cL_2)\cap\cE_3(\cL_1,\cL_2), \notag
\end{align}
where
\begin{align}
\cE_1(\cL_1,\cL_2)\coloneqq&\curbrkt{\vbfy_1\in\cB^n\paren{\psi\paren{m},\sqrt{nN}},\;\vbfy_2\in\cB^n\paren{\psi\paren{m},\sqrt{nN}}} ,\notag\\
\cE_2(\cL_1,\cL_2)\coloneqq&\curbrkt{\forall m_1\in\cL_1\setminus\curbrkt{m},\;\psi\paren{m_1}\in\cB^n\paren{\vbfy_1,\sqrt{nN}}} ,\notag\\
\cE_3(\cL_1,\cL_2)\coloneqq&\curbrkt{\forall m_2\in\cL_2\setminus\cL_1,\;\psi\paren{m_2}\in\cB^n\paren{\vbfy_2,\sqrt{nN}}} .\notag
\end{align}
Note that conditioned on $\cE_1$, $\cE_2$ and $\cE_3$ are independent, and
\begin{align}
\prob{\cE_1}=&\paren{\frac{\card{\cY\cap{\C^{n-1}\paren{\sqrt{n\rho}}}}}{\card{\cY}}}^2\eqqcolon\nu^2,\notag\\
\prob{\cE_2\cap\cE_3|\cE_1}=&\prob{\cE_2|\cE_1}\prob{\cE_3|\cE_1}=\mu^{L-1}\mu^{L-\ell}=\mu^{2L-\ell-1},\notag
\end{align}
where
$\rho \coloneqq N\paren{P-N}/P$ as shown in Fig.~\ref{fig:geom1}.
\begin{figure}[htbp]
 	\centering
 	\includegraphics[width=0.5\textwidth]{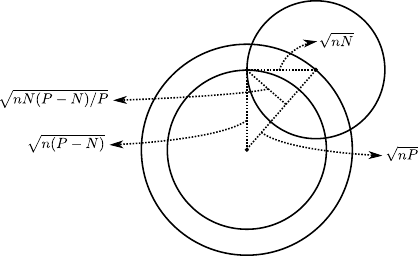}
 	\caption{If $ \psi(m) $ is a codeword on $ \cS^{n-1}(\sqrt{nP}) $, then a ball of radius $ \sqrt{nN} $ around $\psi(m)$ induces a cap of radius $ \sqrt{nN(P-N)/P} $ on the sphere $ \cS^{n-1}(\sqrt{n(P-N)}) $ on which $ \cY $ lives. }
 	\label{fig:geom1}
 \end{figure}

Let us upper bound $ \nu $ and $ \mu $. 
For $\mu$, we have
\begin{align}
\mu\le\frac{\area\paren{\cS^{n-1}\paren{\sqrt{nN}}}}{\area\paren{\cS^{n-1}\paren{\sqrt{nP}}}}= c_12^{-n\frac{1}{2}\log\frac{P}{N}},
\label{eqn:mu-bound}
\end{align}
where $c_1\coloneqq\sqrt{P/N}$.

For $\nu$, we choose $ \cY $ to be a $\sqrt{n\eps}$-covering of $ \cS^{n-1}(\sqrt{nP}) $ for some $ \eps>0 $ to be determined as given by Lemma~\ref{lem:covering-sphere} (with the choice $ r = P-N $). 
Then
\begin{align}
\nu &\le \frac{3}{2}\frac{\area\paren{\C^{n-1}(\sqrt{n}(\sqrt{\rho} + 3\sqrt{\eps}/2))}}{\area\paren{\cS^{n-1}(\sqrt{n(P-N)})}}  
\le \frac{3}{2}\frac{\area\paren{\cS^{n-1}(\sqrt{n}(\sqrt{\rho} + 3\sqrt{\eps}/2))}}{\area\paren{\cS^{n-1}(\sqrt{n(P-N)})}}
= c_2\paren{\frac{\sqrt{\rho} + 3\sqrt{\eps}/2}{\sqrt{P-N}}}^{n}, \label{eqn:nu-bound} 
\end{align}
where
$c_2 \coloneqq \frac{3\sqrt{P}}{2(\sqrt{\rho} + 3\sqrt{\eps}/2)}$.

Note that the number of pairs $\paren{\cL_1,\cL_2}$ with intersection size $\ell$ is
\[K_\ell\coloneqq\binom{M}{\ell}\binom{M-\ell}{L-\ell}\binom{M-L}{L-\ell}\le M^{2L-\ell}.\]
Hence overall we have
\begin{align}
\var{W}\le&\card{\cY}^2\sum_{\ell=1}^LK_\ell\nu^2\mu^{2L-\ell-1}\notag\\
\le&\card{\cY}^2\nu^2\mu^{-1}\sum_{\ell=1}^L\paren{M\mu}^{2L-\ell}\notag\\
\le&\card{\cY}^2\nu^2\mu^{-1} L\paren{M\mu}^{L}\label{eqn:take_l}\\
=& |\cY|^2 L\nu^2M^L\mu^{L-1}, \notag 
\end{align}
where Eqn.~\eqref{eqn:take_l} is obtained by noting that $M\mu\le2^{nR}c_12^{-n\frac{1}{2}\log\frac{P}{N}}= c_12^{-\delta n}$ and taking the dominating term corresponding to $\ell=L$.


\section{A primer on lattices and nested lattice codes}\label{sec:primer_lattices}

For a tutorial introduction to lattices and their applications, see the book by Zamir~\cite{zamir2014latticebook} or the notes by Barvinok~\cite{barvinok2013math669}.

If $ \vv_1,\ldots,\vv_\kappa $  are linearly independent vectors in $ \bR^n $, then the set of all integer linear combinations of $ \vv_1,\ldots,\vv_\kappa $ is called the lattice generated by the vectors $ \vv_1,\ldots,\vv_\kappa $, i.e.,
\[
\Lf \coloneqq \curbrkt{\sum_{i=1}^{\kappa}a_i\vv_i: a_i\in\bZ }.
\]
If  $ \bfG=[\vv_1\cdots \vv_\kappa] $, then we can write $ \Lf = \bfG\bZ^\kappa $. The matrix $ \bfG $ is called a generator matrix for $ \Lf $. The generator matrix of a lattice is not unique.
The integer $ \kappa $ is invariant for a lattice and is called the rank of $ \Lf $. In this  paper, we only consider lattices in $ \bR^n $ having rank $ n $.
It is obvious that $ \Lf $ is a discrete subgroup of $ \bR^n $ under vector addition. It is also a fact that every discrete subgroup of $ \bR^n $ is a lattice~\cite{barvinok2013math669}.

For any lattice $ \Lf $, it is natural to define the quantizer $ Q_{\Lf} $ which maps every point in $ \bR^n $ to the closest lattice point, i.e., for every $ \vx\in\bR^n $,
\begin{equation}
    \label{eqn:lattice_quant}
    Q_{\Lf}(\vx) \coloneqq \argmin{\vy\in\Lf}\Vert \vy-\vx \Vert,
\end{equation}
where we assume that ties (in computing the closest lattice point) are resolved according to some arbitrary but fixed rule. Associated with the quantizer is the quantization error
\[
[\vx]\bmod \Lf \coloneqq \vx - Q_{\Lf}(\vx).
\]

For every lattice $ \Lf $, we define the following parameters:
\begin{itemize}
	\item The set $$ \cP(\Lf)\coloneqq \{ \bfG\vx:\vx\in [0,1 )^n \}, $$
	where $ \bfG $ is a generator matrix of $ \Lf $, is called the fundamental parallelepiped of $ \Lf $.
	\item The fundamental Voronoi region $ \cV(\Lf) $ is the set of all points in $ \bR^n $ which are closest to the zero lattice point. In other words,
	\[
	\cV(\Lf) \coloneqq \{ \vx\in\bR^n: Q_{\Lf}(\vx) = 0 \}.
	\]
	Any set $ \cS\subset  \bR^n  $ such that the set of translates of $ \cS $ by lattice points, i.e., $ \{\cS+\vx :\vx\in\Lf \} $ form a partition of $ \bR^n $, is called a fundamental region of $ \Lf$. It is a fact that every fundamental region of $ \Lf $ has the same volume equal to $\det\Lf\coloneqq |\det(\bfG)| $, where $ \bfG $ is any generator matrix of $ \Lf $. The quantity $\det \Lf $ is called the determinant or covolume of $ \Lf $.
	\item The covering radius $ \rcov(\Lf) $ is the radius of the smallest closed ball in $ \bR^n $ which contains $ \cV(\Lf) $. It is also equal to the length of the largest vector within $ \cV(\Lf) $.
	\item The packing radius $ \rpack(\Lf) $ is the radius of the largest open ball which is contained within $ \cV(\Lf) $. Equivalently, it is half the minimum distance between two lattice points.
	\item The effective radius $~\reff(\Lf) $ is equal to the radius of a ball having volume equal to $ \vol(\cV(\Lf)) $.
\end{itemize}
Clearly, we have $ \rpack(\Lf)\leq~\reff(\Lf)\leq \rcov(\Lf) $.

In the context of power-constrained communication over Gaussian channels, a nested lattice code is typically the set of all lattice points within a convex compact subset of $ \bR^n $, i.e., $ \cC = \Lf\cap\cB $ for some set $ \cB\subset \bR^n $. Usually $ \cB $ is taken to be $ \cB(0,\sqrt{nP}) $ or $ \cV(\Lc) $ for some lattice $ \Lc $ constructed so as to satisfy the power constraint.

If $ \Lc,\Lf $ are two lattices in $ \bR^n $ with the property that $ \Lc\subsetneq\Lf $, then $ \Lc $ is said to be nested within (or, a sublattice of) $ \Lf $. A nested lattice code with a fine lattice $ \Lf $ and coarse lattice $ \Lc\subsetneq \Lf $ is the lattice code $ \Lf\cap\cV(\Lc) $. 

Lattices have been extensively used for problems of packing, covering and communication over Gaussian channels. For many problems of interest, we want to construct high-dimensional lattices $ \Lf $ such that $ \rpack(\Lf)/\reff(\Lf) $ is as large as possible, and $ \rcov(\Lf)/\reff(\Lf) $ is as small as possible. A class of lattices that has these properties is the class of Construction-A lattices, which we describe next.

Let $ q $ be a prime number, and $ \cC_{\mathrm{lin}} $ be an $ (n,\kappa) $ linear code over $ \bFq $. The Construction-A lattice obtained from $ \cC_{\mathrm{lin}} $ is defined to be
\[
\Lf(\cC_{\mathrm{lin}}) \coloneqq \{ \vv\in\bZ^n: [\vv]\bmod (q\bZ^n)\in\Phi(\cC) \},
\] 
where $ \Phi $ denotes the natural embedding of $ \bFq^n $ in $ \bR^n $.
An equivalent definition is that $ \Lf(\cC_{\mathrm{lin}}) = \Phi(\cC_{\mathrm{lin}})+q\bZ^n $.
We make use of the following result to choose our coarse lattices:
\begin{theorem}[\cite{erez2005lattices}]
	For every $ \delta>0 $,
	there exist sequences of prime numbers $ q_n $ and positive integers $ \kappa_n $ such that if $ \cC_{\mathrm{lin}} $ is a randomly chosen linear code\footnote{The $ (n,\kappa_n) $ random code is obtained by choosing an $ n\times \kappa_n $ generator matrix uniformly at random over $ \bFq $.} over $ \bF_{q_n} $, then
	\[
	\Pr\left[ \frac{\rpack(\Lf(\cC_{\mathrm{lin}}))}{\reff(\Lf(\cC_{\mathrm{lin}}))}<\frac{1}{2}-\delta \text{ or } \frac{\rcov(\Lf(\cC_{\mathrm{lin}}))}{\reff(\Lf(\cC_{\mathrm{lin}}))}>1+\delta\right] = o(1).
	\] 
\end{theorem}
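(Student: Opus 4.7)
I would prove the two inequalities — packing goodness $\rpack\ge(1/2-\delta)\reff$ and covering goodness $\rcov\le(1+\delta)\reff$ — independently. Throughout, I work with the parity-check ensemble in which $\cC_{\mathrm{lin}}$ is the right kernel of a uniformly random matrix $\mathbf{H}\in\bF_{q_n}^{n\times(n-\kappa_n)}$. This ensemble has two clean features: for each fixed nonzero $u\in\bF_{q_n}^n$, $\Pr[u\in\cC_{\mathrm{lin}}]=q_n^{\kappa_n-n}$; and for any $u_1,\dots,u_t$ that are linearly independent in $\bF_{q_n}^n$, the indicators $\one{u_i\in\cC_{\mathrm{lin}}}$ are mutually independent. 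With probability $1-o(1)$ the code has dimension exactly $\kappa_n$, so $\det\Lf(\cC_{\mathrm{lin}})=q_n^{n-\kappa_n}$ and $\reff\coloneqq\bigl(q_n^{n-\kappa_n}/V_n\bigr)^{1/n}$ is effectively a deterministic function of the parameters; $(q_n,\kappa_n)$ will be chosen so that $\reff$ sits at a fixed scale while both $q_n$ and $n-\kappa_n$ tend to infinity.

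Packing goodness follows from a first-moment (Minkowski--Hlawka-type) bound applied to $\cB(0,(1-2\delta)\reff)$. The $q_n\bZ^n$-part contributes no nonzero lattice point once $q_n>(1-2\delta)\reff$, which the choice of parameters arranges. For $\vv\in\bZ^n$ with $[\vv]_{q_n}\neq 0$, the expected number of such $\vv$ that also lie in $\Lf\cap\cB(0,(1-2\delta)\reff)$ equals $q_n^{\kappa_n-n}\cdot(1+o(1))V_n[(1-2\delta)\reff]^n=(1-2\delta)^n(1+o(1))=o(1)$. Markov then delivers the packing half with probability $1-o(1)$.

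Covering goodness is the harder half. The plan is to take a sufficiently fine net $\cY$ of the fundamental region $[0,q_n)^n$ of $q_n\bZ^n$, show at each $\vy\in\cY$ that $N(\vy)\coloneqq|\Lf\cap\cB(\vy,(1+\delta/2)\reff)|$ is nonzero with high probability, and union-bound over $\cY$. The packing computation gives $\expt{N(\vy)}=(1+\delta/2)^{n(1+o(1))}$, which is exponentially large. Partitioning $\bF_{q_n}^n\setminus\{0\}$ into projective lines, on each line the $q_n-1$ indicators $\one{u\in\cC_{\mathrm{lin}}}$ coincide and indicators on distinct lines are pairwise independent, so a direct second-moment computation yields $\var N(\vy)/\expt{N(\vy)}^2=O(q_n^{1-\kappa_n})$, and Chebyshev then controls $\Pr[N(\vy)=0]$ per net point.

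The main obstacle is that this polynomial-in-$q_n$ decay from Chebyshev is far too weak to absorb the union bound over the exponentially large net $\cY$. The way I would close the gap is to upgrade the second-moment estimate to a higher-moment Chernoff-style inequality: for any $k$ projective lines spanning a $k$-dimensional subspace of $\bF_{q_n}^n$ the indicators $\{\one{\ell_i\subset\cC_{\mathrm{lin}}}\}_{i=1}^k$ are \emph{fully} mutually independent, so computing the $k$th moment (or the moment generating function) of $N(\vy)$ for $k=\Theta(n)$ — while carefully accounting for the contribution of tuples whose lines fail to span full rank — should deliver $\Pr[N(\vy)=0]\le e^{-\Omega(n)}$. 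Balancing $(q_n,\kappa_n)$ so that $q_n$ grows sub-exponentially while $n-\kappa_n$ grows just fast enough to keep $\reff$ on the desired scale then makes this exponential decay beat the net, and this quantitative balance is the most delicate point of the proof.
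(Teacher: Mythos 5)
The theorem is not proved in the paper at all: it is stated and attributed to \cite{erez2005lattices}, and used as a black box to select the coarse lattice $\Lc$. So there is no internal proof to compare your proposal against; the relevant comparison is to the Erez--Litsyn--Zamir argument itself.

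Your packing half is fine and standard. Fix a nonzero $v\in\bZ^n$; $v$ lies in $\Lambda$ either because $v\in q\bZ^n$ (excluded by $q>(1-2\delta)\reff$, which your parameter choice arranges) or because $[v]_q\ne 0$ lies in $\cC_{\lin}$, which in the parity-check ensemble happens with probability $q^{\kappa-n}$. Summing over the integer points of $\cB(0,(1-2\delta)\reff)$ gives an expected count $(1-2\delta)^n(1+o(1))$ and Markov finishes. The minor point you should flag is that you are using the parity-check ensemble while the footnote of the theorem defines the ensemble via a uniformly random $n\times\kappa_n$ generator matrix; these are equivalent up to a $(1-o(1))$-probability full-rank event, but it deserves one line.

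The covering half has a genuine gap. You correctly identify that Chebyshev on $N(\vy)$ gives a per-point failure probability that decays only as $\Pr[N(\vy)=0]\le q(1+\delta)^{-n}$ (your stated estimate $O(q^{1-\kappa})$ does not look right; the ratio $\var N/(\expt N)^2$ is $q/\expt N$ up to the factor $\sum a_\ell^2/(q\sum a_\ell)$, and $\expt N=(1+\delta)^n$), whereas the $\Theta(\delta\reff)$-net of $[0,q)^n$ has size $(c/\delta)^n$ with $c$ a geometric constant forced by $q/\reff>2(1+\delta)$; since $\ln(c/\delta)\gg\delta$ as $\delta\to 0$, second moment plus a union bound cannot close. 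Your proposed fix --- exploit the $n$-wise independence of the coset indicators $\one{\ell\subset\cC_{\lin}}$ on full-rank tuples to get a Chernoff-type lower tail --- is the right direction, but it is left entirely as a plan. The pieces that are missing, and that are not obviously routine, are: (i) a concrete inequality of the form $\Pr[N=0]\le\expt{(N-\mu)^{2m}}/\mu^{2m}$ with a bound on the centered $2m$-th moment that tracks both the Gaussian-like part $(m\,\var N)^m$ and the heavy-coefficient part coming from lines with $a_\ell$ as large as $q-1$; (ii) the bookkeeping for tuples whose lines span a subspace of rank $r'<r$, where $\Pr[\ell_1,\dots,\ell_r\subset\cC_{\lin}]=q^{r'(\kappa-n)}$, not $q^{r(\kappa-n)}$; and (iii) a verification that with $m=\Theta(n)$ these bounds produce a per-point failure probability smaller than $(\delta/c)^n$ \emph{uniformly} in $\delta$, which is exactly the ``quantitative balance'' you flag as delicate. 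Until (i)--(iii) are written out, the covering goodness claim is a sketch rather than a proof; as written, the argument would not be accepted as establishing the theorem.
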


\section{Other goodness properties of regular infinite constellations}\label{sec:ic_goodness}

In Sec.~\ref{sec:regular_ic}, for technical reasons\footnote{Specifically, in the proof, we need to take a union bound over centers in a Voronoi region. }, we require the lattice $ \Lc $ to be simultaneously good for packing and covering. 
However, to prove other properties in this section, we can use a simpler construction of ICs which we describe below.


Let $ \alpha>0 $. We allow $ \alpha $ to be a function of $ n $. Define $ \cA\coloneqq [-\alpha/2,\alpha/2)^n $.
We will study infinite constellations of the form $ \cC = \cC'+\alpha\bZ^n $ for finite sets $ \cC'\subset \cA $.
We assume that each point in $ \cC' $ is independent and uniformly distributed in $\cA$. 
In other words, $ \cC $ is obtained by tiling a finite subset of random points from within a cube. See Fig.~\ref{fig:ic} for a pictorial illustration of the construction of such an IC ensemble. Since the IC is a tiling, it suffices to study finite sets of points in the space $ \bR^n\bmod \cA $.~\footnote{Since $ \alpha \bZ^n $ is a lattice, we slightly abuse notation and define $ [\cdot]\bmod\cA \coloneqq [\cdot]\bmod\alpha\bZ^n $.}

Note that if $ \cC' $ forms a group with respect to addition modulo $ \cA $, then the resulting IC is a lattice. Construction-A lattices are essentially obtained by taking $ \cC' $ as an embedding of a linear code over a finite field into $ \cA $.

The density, NLD, effective volume and effective radius of an $(\alpha,M)$ IC are given by
\begin{align}
\Delta(\cC) =& M/\alpha^n, \quad
R(\cC) = \frac{1}{n}\log\frac{M}{\alpha^n}, \quad
V(\cC) = \frac{n}{\log\nicefrac{M}{\alpha^n}}, \quad
\reff(\cC) = \paren{\frac{\alpha^n}{V_nM}}^{1/n}. \label{eqn:param_ic} 
\end{align}
respectively.

For any set $ \cD\subset\bR^n $, define $ \cD^*\coloneqq \cD\bmod\cA  $.

\begin{figure}
	\begin{center}
		\includegraphics[width=9cm]{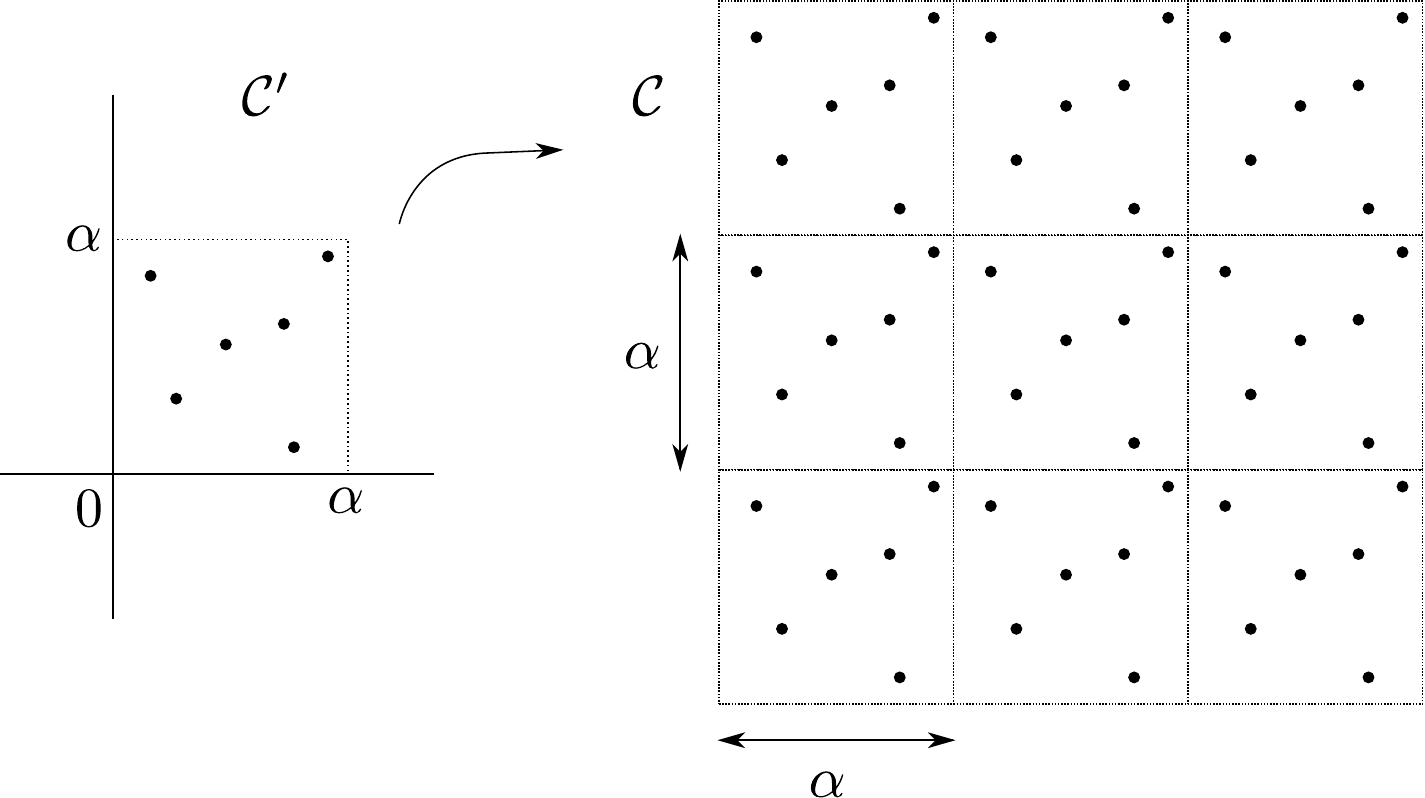}
		\caption{Illustration of the class of infinite constellations studied in Appendix~\ref{sec:ic_goodness}. }
		\label{fig:ic}
	\end{center}
\end{figure}

\subsection{Packing goodness}\label{sec:ic_packing}

The packing radius of an IC $ \rpack(\cC) $ is defined to be half the minimum distance between two points.
We say that an infinite constellation is good for packing if $ \rpack(\cC)/\reff(\cC)\geq 0.5-o(1) $.

As a warm-up, we give a greedy  construction which is good for packing.

Choose $ \alpha $ to be some constant larger than $ 4 $. We will construct an infinite constellation with packing radius at least $ 1 $. 

The IC is constructed iteratively as follows: 
Start with an arbitrary point $ \vx_1 $.  At the $ i $-th step, choose $ \vx_i $ to be an arbitrary point from $ \cA\backslash \cup_{j=1}^{i-1}\cB^*(\vx_j,2) $. We repeat this till the $ \cB^*(\vx_j,2) $'s cover $ \cA $.  Suppose that the algorithm terminates at the $ M $-th step.

The construction ensures that the packing radius  is at least $ 1 $. Moreover,
\[ M\geq \frac{\alpha^n}{\vol(\cB(0,2))}. \]
However, $ \alpha^n/M=V_n\reff^n $. Using this in the above gives $ \rpack/\reff\geq 0.5 $.

\subsection{AWGN goodness}\label{sec:ic_awgn}

We say that an $(\alpha, M)$ infinite constellation $ \cC $ is good for AWGN (Additive White Gaussian Noise)~\cite{erez2005lattices} if for $ \vbfz\sim \cN(0,\sigma^2\bfI) $ and $ \vbfx\sim\cU(\cC\cap \cA) $, we have
\[
\Pr[\Vert \vbfz \Vert>\Vert \vbfx +\vbfz-\vx_j \Vert \text{ for some }\vx_j\in \cC] =2^{-\Theta(n)}
\]
where the probability is over the random choice of the codeword $ \vbfx $ and the noise $ \vbfz $. This is equal to the probability that a codeword different from the transmitted one is closer to the received vector when a random codeword is transmitted through an AWGN channel.

The following proposition recovers the achievability part of Poltyrev's~\cite{poltyrev1994coding} result:
\begin{proposition}
	Fix $ \delta>0 $ and $ N>0 $. A random $ (4\sqrt{n\sigma^2}, M) $ constellation with $ M $ chosen so as to satisfy $~\reff/\sqrt{n\sigma^2}>2^{\delta} $ is good for AWGN with probability  $ 1-2^{-\Theta(n)} $.
\end{proposition}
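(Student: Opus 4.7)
The plan is to mimic the standard Poltyrev-style error-exponent analysis, separately bounding the probability that the Gaussian noise has atypical magnitude and the probability (conditioned on a typical noise vector) that any codeword other than the transmitted one is at least as close as the transmitted one to the received word. The whole argument collapses to essentially the same expected-count calculation that powered Proposition~\ref{prop:ic_listupperbound}, just specialized to $L=1$ and with the deterministic radius $\sqrt{nN}$ replaced by the random radius $\|\vbfz\|$.

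Concretely, I would first fix a small $\eps>0$ (to be chosen strictly smaller than $2^{\delta}-1$) and invoke chi-squared concentration to get
$$\Pr\!\left[\|\vbfz\|>\sqrt{n\sigma^2}\,(1+\eps)\right] \leq 2^{-\Theta(\eps^2 n)}.$$
Call this typical-noise event $E_1$. On $E_1$, the decoding ball $\cB(\vbfy,\|\vbfz\|)$ has diameter strictly smaller than $\alpha=4\sqrt{n\sigma^2}$, so its image $\cB^{*}(\vbfy,\|\vbfz\|)$ after reduction modulo $\alpha\bZ^n$ does not self-intersect; in particular $\vol(\cB^{*})=V_n\|\vbfz\|^n$. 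Also on $E_1$, no nonzero integer shift of the transmitted codeword itself enters the ball, since any such shift is at distance at least $\alpha>2\|\vbfz\|$ from $\vbfy$, so the only potential confusions come from the other $M-1$ independent codewords.

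Next I would condition on $\vbfx$ and $\vbfz$ and take a union bound over those $M-1$ other codewords $\vbfx_j$. Since each $\vbfx_j$ is uniform over $\cA$, the probability that some periodic shift of $\vbfx_j$ lands in $\cB^{*}(\vbfy,\|\vbfz\|)$ equals $V_n\|\vbfz\|^n/\alpha^n$; using the identity $MV_n\reff^n=\alpha^n$ this yields
$$\Pr\!\left[\text{error}\,\middle|\,\vbfx,\vbfz,E_1\right]\;\leq\; (M-1)\frac{V_n\|\vbfz\|^n}{\alpha^n}\;\leq\;\left(\frac{\|\vbfz\|}{\reff}\right)^{\!n}\;\leq\;\left(\frac{1+\eps}{2^{\delta}}\right)^{\!n},$$
which is $2^{-\Theta(n)}$ by our choice $1+\eps<2^{\delta}$. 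Combining with the bound on $\Pr[E_1^{c}]$, the averaged error probability $\bE_{\cC,\vbfx,\vbfz}[\mathbf 1_{\text{error}}]$ is at most $2^{-\Theta(n)}$.

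Finally, applying Markov's inequality to the random variable $P_e(\cC)\coloneqq\Pr_{\vbfx,\vbfz}[\text{error}\mid\cC]$ lifts the expectation bound to the required ``with probability at least $1-2^{-\Theta(n)}$ over $\cC$, one has $P_e(\cC)\le 2^{-\Theta(n)}$'' statement, at the cost of halving the exponent. I do not expect any genuine obstacle here: the only point requiring care is verifying that $\alpha=4\sqrt{n\sigma^2}$ is large enough to rule out both wraparound of $\cB^{*}$ and contributions from nonzero $\alpha\bZ^n$-shifts of the transmitted codeword, and the inequality $\alpha>2(1+\eps)\sqrt{n\sigma^2}$ on $E_1$ handles both simultaneously for any $\eps<1$.
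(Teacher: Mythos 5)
Your proof follows essentially the same route as the paper's: split on a typical-noise event for $\|\vbfz\|$, union-bound the other codewords' probability of landing in the (reduced-modulo-$\alpha\bZ^n$) noise ball, and convert the volume ratio $\|\vbfz\|^n/\reff^n$ into an exponentially small bound using $\reff>\sqrt{n\sigma^2}\,2^{\delta}$. You are in fact slightly more careful than the paper on two points it glosses over: explicitly verifying that $\alpha=4\sqrt{n\sigma^2}$ rules out both wraparound of $\cB^*$ and confusions with nonzero $\alpha\bZ^n$-shifts of the transmitted codeword, and explicitly invoking Markov's inequality to pass from the expected error probability to the ``with probability $1-2^{-\Theta(n)}$ over $\cC$'' conclusion.
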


\begin{proof}

	Since codewords are uniformly chose, it suffices to assume that the first codeword is transmitted. 
	\begin{align*}
	\bE_{\cC}\Pr[\Vert \vbfz \Vert>\Vert \vbfx +\vbfz-\vx_j \Vert \text{ for some }\vx_j\in \cC] &\leq \Pr[\Vert\vbfz\Vert^2>2^{\delta}n\sigma^2]+\Pr[\vx_j\in\cB^*(\vx_1+\vbfz,\sqrt{n\sigma^2}2^{\delta/2})\text{ for some }j\neq 1]\\
	&\leq \Pr[\Vert\vbfz\Vert^2>2^{\delta}n\sigma^2]+\sum_{j=2}^M\Pr[\vx_j\in\cB^*(\vx_1+\vbfz,\sqrt{n\sigma^2}2^{\delta/2})]\\
	&\leq 2^{-\Theta(n)}+ M\frac{\vol(\cB(0,\sqrt{n\sigma^2}2^{n\delta/2}))}{\alpha^n}\\
	&= 2^{-\Theta(n)}+ \frac{\vol(\cB(0,\sqrt{n\sigma^2}2^{n\delta/2}))}{\vol(\cB(0,\reff))}\\
	&=2^{-\Theta(n)}. \qedhere
	\end{align*}
	
\end{proof}

\subsection{Covering goodness}\label{sec:ic_covering}
We say that an infinite constellation $\cC$ is a $\beta$-covering if for every $\vy\in\bR^n,$
\[
\min_{\vx\in\cC}\Vert \vx-\vy \Vert \leq \beta.
\]
The \emph{covering radius} of an infinite constellation $\cC$, denoted $\rcov(\cC)$, is the largest $\beta>0$ such that $\cC$ is a $\beta$ covering.

We say that a sequence of $(\alpha,M)$ ICs $\cC$ is $(1+\delta)$-good for covering if 
\[
\lim\sup_{n\to\infty}\frac{\rcov(\cC)}{\reff(\cC)} \leq 1+\delta. 
\]

\begin{proposition}
	Fix any $N>0$, and define $C\coloneq \frac{1}{2}\log_2\frac{1}{2\pi e N}$. Also choose $\alpha = 2\sqrt{nN}$, $\epsilon_n = \frac{1}{\log n}$, and $M=2^{nC(1+\epsilon_n)}\alpha^n$.
	
	Then, a random $(\alpha,M)$ IC is $(1+\epsilon_n)2^{\epsilon_n C}$-good for covering\footnote{Essentially, $(1+o(1))$-good for covering.} with probability $1-2^{-2^{\Theta(n/\log n)}}$.
\end{proposition}

\begin{proof}
Define $\cQ = \frac{\epsilon_n\sqrt{N}}{4}\bZ^n$. We can conclude that $\cC$ is a $\sqrt{nN}(1+\epsilon_n/4)$ covering if for all $\vy\in\cQ$, we have
\[
\min_{\vx\in\cC}\Vert \vx-\vy \Vert\leq \sqrt{nN}.
\]
Since 
\[
\reff(\cC) =\left(\frac{\Gamma(n/2+1)}{\pi^{n/2}}\times\frac{\alpha^n}{M}\right)^{1/n} = \left(\frac{\sqrt{n}2^{-R}}{\sqrt{2\pi e}}\right)(1+o(1)) = \sqrt{nN}\times 2^{-\epsilon_n C}, 
\]
this ensures that 
\[
\frac{\rcov(\cC)}{\reff(\cC)} \leq (1+\epsilon_n)2^{\epsilon_n C}. 
\]
It is therefore sufficient to show that 
\[
\Pr\left[\max_{\vy\in\cQ}\min_{\vx\in\cC}\Vert\vx-\vy\Vert>\sqrt{nN}\right] = 2^{-2^{\Theta(n)}}.
\]

To prove the rest, we simply find the probability that there is no point that is $\sqrt{nN}$-close to $\vy\in\cQ$, and then take a union bound over $\vy$. To compute the aforementioned probability, we use a Chernoff bound.

Fix any $\vy\in\cQ$. Suppose that $\cC=\{\vbfx_1,\vbfx_2,\ldots,\vbfx_M\}$, where each $\vbfx_i$ is is independent of the others and uniformly distributed over $ [0,\alpha]^n$. Define $\xi_i(\vy)$ to be the indicator random variable which is $1$ if $\Vert\vbfx_i-\vy\Vert\leq \sqrt{nN}$, and zero otherwise.

We then have
\begin{align*}
\mu &\coloneq \bE\left[\sum_{i=1}^M\xi_i(\vy)\right] \\
&= M\times \frac{\vol(\cB(\vy,\sqrt{nN}))}{\alpha^n} \\
                            &=2^{nR}\vol(\cB(\vy,\sqrt{nN}))\\
                            &=2^{nC\epsilon_n(1+o(1))}. 
\end{align*}
This implies, by the Chernoff bound, that
\begin{align*}
\Pr\left[\sum_{i=1}^M\xi_i(\vy)=0\right] &\leq \Pr\left[\sum_{i=1}^M\xi_i(\vy)\leq \mu/2\right] 
                              \leq e^{-\mu/12} = 2^{-2^{\Theta(n/\log n)}}. 
\end{align*}

Therefore, the probability that random IC is not a $\sqrt{nN}(1+\epsilon/4)$-covering is upper bounded by
\begin{align*}
\Pr\left[\max_{\vy\in\cQ}\min_{\vx\in\cC}\Vert\vx-\vy\Vert>\sqrt{nN}\right] &\leq \left(\frac{4\alpha}{\epsilon_n}\right)^n 2^{-2^{\Theta(n/\log n)}} 
                      =2^{O(n\log n)}2^{-2^{\Theta(n/\log n)}} = 2^{-2^{\Theta(n/\log n)}}. 
\end{align*}
This completes the proof.
\end{proof}

\section{Converse of list decoding capacity theorem for infinite constellations}
\label{app:converse-ic}

\begin{proposition}
\label{prop:converse-ic}
For any $N>0$ and $ \delta>0 $, let $ \cC $ be an arbitrary $ (N,L) $-list decodable IC of NLD $ \frac{1}{2}\log\frac{1}{2\pi eN} + \delta $. 
Then $ L\ge2^{\Omega(\delta n)} $. 
\end{proposition}

\begin{proof}
Let $ \cC\subset\bR^n $ be an arbitrary infinite constellation of NLD $ \frac{1}{2}\log\frac{1}{2\pi eN} + \delta $ for some constant $ \delta>0 $. 
Then there must exist a sufficiently large $P$ such that 
\begin{align}
\frac{1}{n}\log\frac{\card{\cC\cap\cB^n\paren{0,\sqrt{nP}}}}{\vol\paren{\cB^n\paren{0,\sqrt{nP}}}} &\ge \frac{1}{2}\log\frac{1}{2\pi eN} + \frac{\delta}{2}. \notag 
\end{align}
Therefore,
\begin{align}
\frac{1}{n}\log\card{\cC\cap\cB^n\paren{0,\sqrt{nP}}} &\ge \frac{1}{2}\log\frac{1}{2\pi eN} + \frac{1}{n}\log\vol\paren{\cB^n\paren{0,\sqrt{nP}}} + \frac{\delta}{2} \notag \\
&= \frac{1}{2}\log\frac{1}{2\pi eN} + \frac{1}{n}\log\paren{V_n\sqrt{nP}^n} + \frac{\delta}{2} \notag \\
&\asymp \frac{1}{2}\log\frac{1}{2\pi eN} + \frac{1}{2}\log(2\pi eP) + \frac{\delta}{2} \notag \\
&= \frac{1}{2}\log\frac{P}{N} + \frac{\delta}{2}. \notag 
\end{align}
By the list decoding converse for codes with power constraints \cite[Lemma 33]{zhang-quadratic-arxiv}, since the code $ \cC\cap\cB^n\paren{0,\sqrt{nP}} $ has rate larger than the list decoding capacity $ \frac{1}{2}\log\frac{P}{N} $, it must have exponential list sizes. 
That is, there must exist $ \vy\in\bR^n $ such that $ \card{\cC\cap\cB^n\paren{0,\sqrt{nN}}\cap\cB^n\paren{\vy,\sqrt{nN}}}\ge2^{\Omega(\delta n)} $. 
This implies the existence of $ \vy\in\bR^n $ such that $ \card{\cC\cap\cB^n\paren{\vy,\sqrt{nN}}}\ge2^{\Omega(\delta n)} $, which completes the proof.  
\end{proof}

\bibliographystyle{alpha}
\bibliography{ref} 

\end{document}